\documentclass{amsart}
\usepackage[lite]{amsrefs}
\usepackage{amssymb}
\usepackage[all,cmtip]{xy}
\usepackage{amsthm}
\usepackage{amsmath}
\usepackage{amsfonts}
\usepackage[hidelinks]{hyperref}
\usepackage{enumerate}
\usepackage{physics}
\usepackage[titletoc]{appendix}
\usepackage{fancyhdr}
\usepackage{xcolor}
\usepackage{comment}
\usepackage{tikz-cd}
\usepackage{tikzpagenodes}
\usepackage{caption}
\usepackage{float}
\usetikzlibrary{patterns}

\usepackage[margin = 1in]{geometry}
\usepackage{setspace}
\allowdisplaybreaks
\numberwithin{equation}{section}
\newtheorem{definition}{Definition}[section]

\newtheorem{theorem}[definition]{Theorem}
\newtheorem{lemma}[definition]{Lemma}
\newtheorem{corollary}[definition]{Corollary}
\newtheorem{proposition}[definition]{Proposition}

\newtheorem{assumption}[definition]{Assumption}

\begin{document}
\title[Supersymmetric Semi--infinite Cohomology and $N = 2$ Superconformal Structures]{Supersymmetric Semi--infinite Cohomology and $N = 2$ Superconformal Structures}

\author[T. Creutzig]{Thomas Creutzig}
\address{Department Mathematik, Friedrich--Alexander Universit{\"a}t Erlangen--N{\"u}rnberg, Erlangen 91058, Germany}
\email{thomas.creutzig@fau.de}

\author[U. R. Suh]{Uhi Rinn Suh}
\address{Department of Mathematical Sciences and Research institute of Mathematics, Seoul National University, Seoul 08826, Korea}
\email{uhrisu1@snu.ac.kr}

\author[S. Yoon]{Sangwon Yoon}
\address{Department of Mathematical Sciences, Seoul National University, Seoul 08826, Korea}
\email{ysw317@snu.ac.kr}

\thanks{U. R. Suh is supported by National Research Foundation of Korea (NRF) Grant No. 2022R1C1C1008698}

\begin{abstract}
We study supersymmetric semi-infinite cohomology and its $N = 2$ superconformal structure via the BRST reduction of an $N = {\frac{3}{2}}$ superconformal structure on the super-affine vertex algebra associated to a super Manin pair arising from the Kazama--Suzuki $N = 2$ condition. Moreover, we realize super coset models in terms of relative supersymmetric semi-infinite cohomology.
\end{abstract}

\maketitle

\vskip 4mm
\section{Introduction}\label{section: Intro}
\setcounter{equation}{0}
\vskip 2mm

Vertex algebras are connected to physics in that they are a rigorous formulation of the symmetry algebra of a two-dimensional conformal field theory. The latter plays a prominent role as the sigma models of string theory. Physics usually studies supersymmetric string theories and in that instance the underlying conformal field theory and hence its vertex algebra have an $N=2$ superconformal structure. This structure is then tied to various interesting problems. The target of a super string theory is a Calabi--Yau threefold times Minkowski spacetime. The first Chern class of a Calabi--Yau $3$-fold vanishes and this ensures that the $3$-fold is Ricci flat \cite{Yau78}. It is also precisely the property that guarantees that the chiral de Rham complex of the Calabi--Yau $3$-fold has $N=2$ superconformal symmetry \cite{BHS08}. The $N=2$ superconformal structure then has itself important properties. For example its chiral rings, that is certain natural cohomologies, describe de Rham and Dolbeault cohomology of the Calabi--Yau and give itself rise to topological field theories \cites{Witten88, LVW89}. The notion of sigma model and its cohomology has recently be translated to vertex algebras by Moriwaki \cite{Moriwaki25}. It is surely a rewarding task to translate some more of these ideas to vertex algebras with $N=2$ superconformal symmetry. Hence it is important to construct such vertex algebras in the first place. In physics one of the major constructions of $N=2$ superconformal field theories are cosets like the Kazama--Suzuki cosets \cite{KS89N} and the Gepner models \cite{Gepner87}. The coset construction is often realized in terms of gauged WZW-models, where the gauging is performed via some BRST-cohomology (a relative semi-infinite Lie algebra cohomology). While it is usually not difficult to find vertex algebras with $N=1$ superconformal symmetry it is often much more subtle to see if this extends to an $N=2$ superconformal structure. The purpose of this work is to study relative semi-infinite Lie algebra cohomology and cosets on $N=1$ supersymmetric vertex algebras with the particular aim of understanding when the cohomology (coset) has an $N=2$ superconformal structure.

\vskip 1mm

We will now start by explaining relevant previous works and our precise main questions before stating the main results.

\vskip 2mm
\subsection{Historical background and related work}
\vskip 2mm

Note that the Sugawara construction provides an affine vertex algebra with a conformal vector, and the corresponding conformal vector can be found, for example, in \cite{KZ84}. Moreover, when ${\mathfrak{h}}$ is a Lie subalgebra of ${\mathfrak{g}}$, the affine coset model for ${\mathfrak{h}} \subset {\mathfrak{g}}$ and its conformal vector are given in \cite{GKO85}, where the conformal vector is given by the difference of the conformal vectors associated with ${\mathfrak{g}}$ and ${\mathfrak{h}}$. On the other hand, semi-infinite cohomology and relative semi-infinite cohomology were introduced in \cite{Feigin84} and studied in \cites{Feigin84, FGZ86}, and these works also contain explicit computations of relative semi-infinite cohomology. The coset model was also studied from the viewpoint of semi-infinite cohomology in \cite{HR93}, where, in particular, it was shown that the conformal vector of the coset model and that of the semi-infinite complex coincide in cohomology. Furthermore, a cohomological realization of the coset vertex algebra follows from the computations of relative semi-infinite cohomology in Theorem 1.12 of \cite{FGZ86}, Section 2.5 of \cite{CFL22}, Proposition 4 of \cite{CGNS22}, and Theorem 3.3 of \cite{CLNS25}.

\vskip 1mm

In the supersymmetric setting, the Kac--Todorov superconformal vector on a super-affine vertex algebra was introduced in \cite{KT85}. Super coset models associated with super-affine vertex algebras and their $N = 1$ superconformal vectors have also been studied (see \cites{KS89N, KS89C} and \cite{FS97B}). On the other hand, supersymmetric BRST cohomology can be found in \cite{MR94}, \cite{Kuramoto94}, and \cite{MRS21}, while \cite{Rhedin96} and \cite{FS97B} study supersymmetric semi-infinite cohomology for super coset models and show that the $N = 1$ superconformal vectors of the super coset model and of the supersymmetric semi-infinite complex coincide in cohomology.

\vskip 1mm

Moreover, the $N = 2$ superconformal structure on a super-affine vertex algebra was introduced in \cite{STVPS88}, can also be found in \cites{Mohammedi94B, Figueroa-O'Farrill96}, and was formulated in the language of Manin triples in \cite{Parkhomenko92}. For a super-affine vertex algebra associated with a Lie superalgebra admitting a super Manin triple, the $N = 2$ superconformal construction can be found in \cites{Creutzig09, CRS09}. For $N = 2$ superconformal structures on super coset models, the conditions under which the $N = 1$ symmetry of a super coset model can be extended to an $N = 2$ symmetry were studied in \cites{KS89N, KS89C}, and related constructions can also be found in \cite{HT91} and \cite{FS97A}, as well as in \cite{Parkhomenko14} in the language of Manin triples. In particular, \cite{FS97A} realizes the $N = 2$ superconformal structure of the super coset model within supersymmetric semi-infinite cohomology. Finally, the $N = {\frac{3}{2}}$ superconformal algebra was introduced in \cite{Kazama91}, while $N = {\frac{3}{2}}$ superconformal structures on super-affine vertex algebras can be found in \cites{IR93, IR94} and, in the language of Manin pairs, \cite{Getzler95}.

\vskip 2mm
\subsection{Main questions}
\vskip 2mm

This paper has three main objectives. First, we study a cohomological realization of a coset vertex algebra of super-affine vertex algebras. Second, we investigate the origin of the $N = 2$ superconformal structure on the super coset model. More precisely, we determine the symmetry satisfied by a lift of this $N = 2$ superconformal structure on the super coset model to the supersymmetric semi-infinite complex and, in particular, the corresponding symmetry on the tensor product of super-affine vertex algebras appearing as the coefficient module. Third, we extend all these constructions to the case in which the input data are Lie superalgebras.

\vskip 1mm

To place these problems in context, consider the following coset vertex algebra, called the {\textit{super coset model}}:
\begin{align*}
{\text{Com}}({V}^{{\kappa}_{\mathfrak{h}}}_{{\,\,} {\text{s}}}({\mathfrak{h}}), {V}^{{\kappa}_{\mathfrak{g}}}_{{\,\,} {\text{s}}}({\mathfrak{g}})).
\end{align*}
In the previous literature, the following points are particularly notable. First, Kazama and Suzuki studied the $N = 2$ superconformal structure on this super coset model in \cites{KS89N, KS89C}. In particular, the $N = 1$ superconformal vector of the super coset model is given in (2.37) of \cite{KS89N} (also in (15) of \cite{KS89C}), and together with the other odd vector of conformal weight ${\frac{3}{2}}$ given in (3.10) of \cite{KS89N} and (16) of \cite{KS89C}, these two vectors form an $N = 2$ superconformal structure on the super coset model, provided that the $N = 2$ conditions given in (3.14), (3.16), (3.17), (3.19), and (3.20) of \cite{KS89N} (see also (17)--(20) of \cite{KS89C}) are satisfied.

\vskip 1mm

On the other hand, the gauged supersymmetric WZNW model was shown to provide a BRST realization of the coset SCFT. At the level of chiral algebras, the coefficient module of the corresponding BRST complex is the tensor product of two super-affine vertex algebras, and the $N = 1$ superconformal vector of the super coset model coincides in cohomology with the $N = 1$ superconformal vector of the supersymmetric semi-infinite complex (see \cite{Rhedin96}, Section 3 of \cite{FS97B}, and Section 4 of \cite{FS97A}). Moreover, in Section 7 of \cite{FS97A}, the $N = 2$ superconformal structure of the super coset model was shown to define an $N = 2$ superconformal structure on the supersymmetric semi-infinite cohomology. More precisely, together with the above $N = 1$ superconformal structure, the other two vectors given in (51) and (52) of \cite{FS97A}, which complete the $N = 2$ superconformal structure of the super coset model, form an $N = 2$ superconformal structure of the supersymmetric semi-infinite cohomology. In particular, when ${\mathfrak{h}}$ is abelian, a lift to the complex of the $N = 2$ superconformal generators of the super coset model in cohomology forms an $N = 2$ superconformal structure itself.

\vskip 1mm

For super-affine vertex algebras themselves, $N = 2$ superconformal structures were studied in the language of Manin triples for Lie algebras in \cite{Parkhomenko92}, while the Lie superalgebra version appears in Proposition 6.4.1 of \cite{Creutzig09} and (25) of \cite{CRS09}. In particular, \cite{CRS09} states that the total $N = 2$ superconformal generators obtained by adding ghost sector contributions to the $N = 2$ superconformal generators of the supersymmetric WZNW model on the product supergroup represent the same cohomology classes as the corresponding $N = 2$ generators of the super coset model. Finally, note that \cite{Getzler95} constructed an $N = {\frac{3}{2}}$ superconformal structure on the super-affine vertex algebra associated with a Lie algebra admitting a Manin pair structure.

\vskip 1mm

Taking these developments into account, we resolve the above problems using the language of super Manin pairs. We first introduce relative supersymmetric semi-infinite cohomology and prove that the super coset model is realized precisely as this cohomology. We then explain the lift problem of the $N = 2$ superconformal structure of the super coset model in terms of an $N = {\frac{3}{2}}$ superconformal symmetry. More precisely, starting from the Kazama--Suzuki $N = 2$ condition, we construct a particular super Manin pair. The associated super-affine vertex algebra, before BRST reduction, then carries an $N = {\frac{3}{2}}$ superconformal structure. We combine these $N = {\frac{3}{2}}$ superconformal generators with specific $N = 2$ superconformal generators on the ghost sector so that the resulting total generators on the supersymmetric semi-infinite complex are BRST-invariant. At this point, the additional fields that distinguish the $N = {\frac{3}{2}}$ superconformal symmetry from the $N = 2$ superconformal symmetry are BRST-trivial in cohomology. Consequently, the total generators on the complex induce an $N = 2$ superconformal structure on the relative supersymmetric semi-infinite cohomology. As a corollary, we obtain an $N = 2$ superconformal structure on the super coset model associated with Lie superalgebras.

\vskip 2mm
\subsection{Cohomological realization of super coset model}
\vskip 2mm

We first introduce supersymmetric semi-infinite cohomology and relative supersymmetric semi-infinite cohomology (Definition \ref{def: SUSY semi-inf coh} and Definition \ref{def: rel SUSY semi-inf coh}), which are the main objects of this paper.

\vskip 1mm

Let ${\mathfrak{h}}$ be a finite-dimensional Lie superalgebra, and let $\{ {u}_{i} \}_{i \in H}$ be a parity-homogeneous basis of ${\mathfrak{h}}$. We denote the structure constants of ${\mathfrak{h}}$ with respect to the basis $\{ {u}_{i} \}_{i \in H}$ by
\begin{align*}
[{u}_{i}, {u}_{j}] = {\sum}_{k \in H} {h}^{k}_{ij}{u}_{k},
\end{align*}
for $i, j \in H$. Let ${\textstyle{\bigwedge}}^{{\frac{\infty}{2}} + {\bullet}}_{{\,\,} {\text{s}}}({\mathfrak{h}})$ be the vertex algebra consisting of ${\text{dim}}{\,}{\mathfrak{h}}$ copies of the ${b}{c}$--${\beta}{\gamma}$ system, whose only nontrivial singular OPEs among its generators are
\begin{align*}
{D}{\phi}_{i}(z){\phi}^{j}(w) \sim {\frac{{\delta}_{ij}}{(z - w)}},
{\quad}
{D}{\phi}^{i}(z){\phi}_{j}(w) \sim {\frac{{\delta}_{ij}}{(z - w)}},
\end{align*}
for $i, j \in H$, where the parity of each generator is given by $p({\phi}_{i}) = p({D}{\phi}^{i}) = p({u}_{i})$ and $p({\phi}^{i}) = p({D}{\phi}_{i}) = p({u}_{i}) + 1$ for $i \in H$. Then, for a ${V}^{0}_{{\,\,} {\text{s}}}({\mathfrak{h}})$-module ${M}$, the complex for the {\textit{supersymmetric semi-infinite cohomology}} of ${\widehat{\mathfrak{h}}}$ with coefficients in ${M}$ is given by
\begin{align*}
{C}^{{\frac{\infty}{2}} + {\bullet}}_{{\,\,} {\text{s}}}({\mathfrak{h}}, {M}) := {M} \otimes {\textstyle{\bigwedge}}^{{\frac{\infty}{2}} + {\bullet}}_{{\,\,} {\text{s}}}({\mathfrak{h}}),
\end{align*}
where the differential is defined as follows:
\begin{align*}
{d}_{\text{s}} 
=& {\,} {\sum}_{i \in H}{\sum}_{m \in {\mathbb{Z}}} (-1)^{i} {u}^{M}_{i {\,} (m)} \otimes {\phi}^{i}_{(-m-1)} 
- {\sum}_{i \in H}{\sum}_{m \in {\mathbb{Z}}} {\overline{u}}^{M}_{i {\,} (m)} \otimes {D}{\phi}^{i}_{(-m-1)}
\\
&- {\frac{1}{2}}{\,}{\sum}_{i, j, k \in H} (-1)^{jk + j}{h}^{i}_{jk} {\,}{\text{id}}_{M} \otimes ({D}({\phi}_{i}{\phi}^{j}{\phi}^{k}))_{(0)}.
\end{align*}
Moreover, the subcomplex given by
\begin{align*}
{C}^{{\frac{\infty}{2}} + {\bullet}}_{{\,\,} {\text{s, rel}}}({\mathfrak{h}}, {M}) := {\Big(}{\bigcap}_{i \in H}{\,}{\text{ker}}{\,}({J}_{{u}_{i}})_{(0)}{\Big)} \cap {\Big(}{\bigcap}_{i \in H}{\,}{\text{ker}}{\,}({D}{\phi}_{i})_{(0)}{\Big)},
\end{align*}
defines the {\textit{relative supersymmetric semi-infinite cohomology}} of ${\widehat{\mathfrak{h}}}$ with coefficients in ${M}$:
\begin{align*}
{H}^{{\frac{\infty}{2}} + {\bullet}}_{{\,\,} {\text{s, rel}}}({\mathfrak{h}}, {M}).
\end{align*}
Here, we use the following notation:
\begin{align*}
{J}_{{u}_{i}} := {u}_{i} \otimes {\ket{0}} + {\ket{0}} \otimes {J}_{\bigwedge}({u}_{i}) \in {V}^{0}_{{\,\,} {\text{s}}}({\mathfrak{h}}) \otimes {\textstyle{\bigwedge}}^{{\frac{\infty}{2}} + {\bullet}}_{{\,\,} {\text{s}}}({\mathfrak{h}}),
\end{align*}
for $i \in H$, where ${J}_{\bigwedge}({u}_{i})$ is defined in Lemma \ref{lem: sec SUSY semi inf cohom J wedge homo} by
\begin{align*}
{J}_{\bigwedge}({u}_{i}) = {\sum}_{j, k \in H} (-1)^{jk}{\big(}{h}^{k}_{ij}({D}{\phi}^{j}){\phi}_{k} + (-1)^{j+1} {h}^{k}_{ij}{\phi}^{j}{D}{\phi}_{k}{\big)}.
\end{align*}

\vskip 1mm

In particular, let ${\mathfrak{g}}$ be a finite-dimensional Lie superalgebra, and let ${\mathfrak{h}}$ be a Lie subalgebra of ${\mathfrak{g}}$. Suppose that ${\mathfrak{g}}$ is equipped with a nondegenerate consistent supersymmetric invariant bilinear form ${\kappa}_{\mathfrak{g}}$, and that its restriction to the Lie superalgebra ${\mathfrak{h}}$, denoted by ${\kappa}_{\mathfrak{h}}$, is also nondegenerate (Assumption \ref{assum: first assumption}). We then consider the Lie superalgebra ${\mathfrak{g}} \oplus {\mathfrak{h}}$ with the bilinear form ${\kappa}_{{\mathfrak{g}} \oplus {\mathfrak{h}}} := {\kappa}_{\mathfrak{g}} \oplus -{\kappa}_{\mathfrak{h}}$ in \eqref{eq: sec sMP bilinear form on g+h}, and regard the corresponding super-affine vertex algebra
\begin{align*}
{V}^{{\kappa}_{{\mathfrak{g}} \oplus {\mathfrak{h}}}}_{{\,\,} {\text{s}}}({\mathfrak{g}} \oplus {\mathfrak{h}}) \cong {V}^{{\kappa}_{\mathfrak{g}}}_{{\,\,} {\text{s}}}({\mathfrak{g}}) \otimes {V}^{-{\kappa}_{\mathfrak{h}}}_{{\,\,} {\text{s}}}({\mathfrak{h}})
\end{align*}
as an ${\widehat{\mathfrak{h}}}$-module via the diagonal action given in \eqref{eq: sec N=2 of SUSY semi inf diag homo}. Then we prove a cohomological realization of the super coset model:

\vskip 1mm

\begin{theorem}[Theorem \ref{theorem: sec cohom desc cohom and supercoset}]\label{theorem: Intro cohom realization of supercoset}\rm
Let ${\mathfrak{h}}$ be a direct sum of a reductive Lie algebra and Lie superalgebras of type ${\mathfrak{osp}}_{1|2m}$. Suppose that the restriction of ${\kappa}_{\mathfrak{h}} - {\frac{1}{2}}{\overline{\kappa}}_{\mathfrak{h}}$ to each nonabelian simple summand of ${\mathfrak{h}}$ is an irrational level. Then we have
\begin{align*}
{H}^{{\frac{\infty}{2}} + {n}}_{{\,\,} {\text{s, rel}}}({\mathfrak{h}}, {V}^{{\kappa}_{\mathfrak{g}}}_{{\,\,} {\text{s}}}({\mathfrak{g}}) \otimes {V}^{-{\kappa}_{\mathfrak{h}}}_{{\,\,} {\text{s}}}({\mathfrak{h}})) \cong {\delta}_{n,0}{\,}{\text{Com}}({V}^{{\kappa}_{\mathfrak{h}}}_{{\,\,} {\text{s}}}({\mathfrak{h}}), {V}^{{\kappa}_{\mathfrak{g}}}_{{\,\,} {\text{s}}}({\mathfrak{g}})),
\end{align*}
for all $n \in {\mathbb{Z}}$.
\end{theorem}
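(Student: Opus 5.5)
The plan is to reduce the statement to the known non-supersymmetric result, relative semi-infinite cohomology of a Kac--Moody algebra with coefficients in ${V}^{k}(\mathfrak{g}') \otimes {V}^{-k-2h^\vee}(\mathfrak{h})$ (Theorem 1.12 of \cite{FGZ86}, Section 2.5 of \cite{CFL22}, Proposition 4 of \cite{CGNS22}, Theorem 3.3 of \cite{CLNS25}). We do this by decoupling the free fermions in each super-affine vertex algebra.

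\textbf{Step 1: decoupling.} Write ${V}^{\kappa}_{\,\,\text{s}}(\mathfrak{a}) \cong V^{\kappa - \frac{1}{2}\overline{\kappa}}(\mathfrak{a}) \otimes F(\Pi\mathfrak{a})$. Here $F(\Pi \mathfrak{a})$ is the neutral free fermion (or symplectic boson, in odd directions) vertex algebra on $\Pi\mathfrak{a}$, and the bosonic currents are shifted by fermion bilinears; this is the standard Kac--Todorov decoupling, used for $\mathfrak{g}$ and for $\mathfrak{h}$ with level $-\kappa_{\mathfrak{h}}$. Under this identification, I then split the complex. The supersymmetric ghosts $\phi_i, D\phi^i, \phi^i, D\phi_i$ contain an ordinary $bc$-system (or $\beta\gamma$ in odd directions) for $\widehat{\mathfrak h}$, given by $\phi^i, D\phi_i$, together with a second pair $\phi_i, D\phi^i$ of opposite statistics.

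\textbf{Step 2: spectral sequence and Koszul cancellation.} Filter the complex so that the leading part of $d_{\text{s}}$ is the term $-\sum \overline{u}_{i\,(m)} \otimes D\phi^i_{(-m-1)}$. This term pairs the fermionic partners $\overline{u}_i$ of the diagonal $\mathfrak h$-action (those from $\mathfrak{h}\subset\mathfrak g$ and those from $V^{-\kappa_{\mathfrak h}}_{\,\,\text{s}}(\mathfrak h)$) with the extra ghost pair. Because the diagonal $\overline{u}_i$ have total bilinear form $\kappa_{\mathfrak h}-\kappa_{\mathfrak h}=0$, one can change variables. Take the combination $\overline{u}_i^{\mathfrak g}+\overline{u}_i^{\mathfrak h}$, which is isotropic, together with its complement $\overline{u}_i^{\mathfrak g}-\overline{u}_i^{\mathfrak h}$ (suitably normalized), which pairs with it. The isotropic combination together with $\phi_i,D\phi^i$ then forms a contractible Koszul-type quartet. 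After this cancellation, the remaining complex is the ordinary relative semi-infinite complex of $\widehat{\mathfrak h}$ with coefficients in
\[
V^{\kappa_{\mathfrak g}-\frac12\overline\kappa_{\mathfrak g}}(\mathfrak g)\otimes F(\Pi\mathfrak g/\Pi\mathfrak h)\otimes V^{-\kappa_{\mathfrak h}-\frac12\overline\kappa_{\mathfrak h}}(\mathfrak h)\cdots
\]
The levels must be adjusted so that the total level is $-\overline\kappa_{\mathfrak h}$ (critical anomaly cancellation). I expect this to come out as the $\mathfrak h$-level on the first factor being $\kappa_{\mathfrak h}-\frac12\overline\kappa_{\mathfrak h}$ and on the second factor $-\kappa_{\mathfrak h}-\frac12\overline\kappa_{\mathfrak h}$. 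The relative conditions $(J_{u_i})_{(0)}=0$ and $(D\phi_i)_{(0)}=0$ match the ordinary relative condition plus killing the zero mode of the cancelled ghost.

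\textbf{Step 3: invoke the non-SUSY result.} The weight of $\mathfrak h$ acting on $V^{\kappa_{\mathfrak g}}_{\,\,\text{s}}(\mathfrak g)$ is a $V^{\kappa_{\mathfrak h}-\frac12\overline\kappa_{\mathfrak h}}(\mathfrak h)$-module. At irrational level, with $\mathfrak h$ reductive or of type $\mathfrak{osp}_{1|2m}$ (where semisimplicity of the Kazhdan--Lusztig category holds), it decomposes as $\bigoplus_\lambda V_\lambda \otimes C_\lambda$ with Weyl modules $V_\lambda$. Pair this with the contragredient decomposition of $V^{-\kappa_{\mathfrak h}-\frac12\overline\kappa_{\mathfrak h}}(\mathfrak h)$ as in \cite{CGNS22}/\cite{CLNS25}. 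The relative cohomology $H^{\frac\infty2+n}_{\text{rel}}(\widehat{\mathfrak h}, V_\lambda\otimes V_\mu^{*})$ then equals $\delta_{n,0}\delta_{\lambda\mu}\mathbb C$. This yields $\delta_{n,0}$ times the $\lambda=0$ multiplicity space, namely $\mathrm{Com}(V^{\kappa_{\mathfrak h}}_{\,\,\text{s}}(\mathfrak h),V^{\kappa_{\mathfrak g}}_{\,\,\text{s}}(\mathfrak g))$. Here the commutant of the super-affine subalgebra equals the commutant of its bosonic part intersected with that of the $\mathfrak h$-fermions, which is consistent with the fermions having been cancelled. Abelian summands are handled by Fock module decomposition.

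\textbf{Main obstacle.} I expect the main obstacle to be Step 2: making the change of variables compatible with the differential, so that after the decoupling the cubic ghost term and the shifted currents reorganize exactly into the ordinary BRST differential plus a contractible piece. This requires checking the level shifts by $\frac12\overline\kappa_{\mathfrak h}$ and the signs for odd basis elements, and confirming that the spectral sequence converges using the conformal weight grading.
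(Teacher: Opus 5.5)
Your reduction is the paper's: decouple the fermions, cancel the quartet $\phi_i, D\phi^i, \overline{u}_i^{\mathfrak g}\pm\overline{u}_i^{\mathfrak h}$, reach the ordinary relative complex with coefficients in $M_0=V^{\kappa-\frac12\overline\kappa_{\mathfrak g}}(\mathfrak g)\otimes\mathcal F(\overline{\mathfrak h}^\perp)\otimes V^{-\kappa-\frac12\overline\kappa_{\mathfrak h}}(\mathfrak h)$, apply the cited non-supersymmetric theorem, and identify commutants via $V^{\kappa}_{\,\,\text{s}}(\mathfrak h)\cong V^{\kappa-\frac12\overline\kappa_{\mathfrak h}}(\mathfrak h)\otimes\mathcal F(\overline{\mathfrak h})$. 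Your guessed levels are exactly what comes out.

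\textbf{How the cancellation differs.} The paper works at chain level, with no spectral sequence. It factors the complex as $C^\partial_1\otimes C^\partial_2\otimes C^\partial_3$: $C^\partial_1$ is generated by the pairs $(\phi_a, d_{\text{s}}\phi_a)$, $C^\partial_2$ by $\overline a\otimes|0\rangle-|0\rangle\otimes\overline a$ and its $d_{\text{s}}$-image, and $C^\partial_3$ by $\widetilde g$, $\overline h^\perp$, $\phi^i$, $D\phi_i$ and the \emph{dressed} currents $X(h)=|0\rangle\otimes h+X_1(h)+X_2(h)$. The dressing absorbs the level-zero $\mathfrak h$-current carried by the cancelled fields, so $d_{\text{s}}(D\phi_h)$ lands in $C^\partial_3$. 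Hence $d_{\text{s}}$ preserves each factor, $C^\partial_3\cong C^{\frac\infty2+\bullet}(\mathfrak h,M_0)$ via $\Psi$, and the explicit homotopy $\Theta$ gives a homotopy equivalence. This is exactly the ``exact reorganization'' you list as your main obstacle.

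\textbf{Your route also works, with undressed currents,} once you fix the filtration. Bigrade by the kept ghost number $\#\phi^i-\#D\phi_i$ and the cancelled ghost number $N'=\#D\phi^i-\#\phi_i$.
\begin{itemize}
\item The term $-\sum\overline u_{i(m)}\otimes D\phi^i_{(-m-1)}$ has bidegree $(0,1)$. It is the Koszul differential $\phi_i\mapsto-(\overline u^{\mathfrak g}_i+\overline u^{\mathfrak h}_i)$, $\overline u^{\mathfrak g}_i-\overline u^{\mathfrak h}_i\mapsto\pm2\kappa D\phi^{u_i}$, acting on a free-field factor that commutes with the decoupled currents, $\overline h^\perp$, $\phi^i$ and $D\phi_i$.
\item The rest of $d_{\text{s}}$, cubic ghost term included, has bidegree $(1,0)$.
\item Every cancelled field has conformal weight $\frac12$, all weights are nonnegative, and $d_{\text{s}}$ preserves weight. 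So $|N'|\le 2\Delta$ on the weight-$\Delta$ space, giving a double complex with finitely many rows, and convergence is automatic.
\item $E_1$ sits in the single row $N'=0$. The stray terms of $d_{\text{s}}(D\phi_r)$ (the cancelled-sector current) are $d_0$-closed in the acyclic augmentation ideal, hence exact. So $d_1$ is the ordinary BRST differential on $M_0$, and $E_2=E_\infty$ with no extension problem.
\end{itemize}
The paper's route gives a genuine chain homotopy equivalence and an explicit chain-level isomorphism with the non-SUSY complex. Yours is quicker to set up but needs these remarks written out.

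\textbf{Two corrections.}
\begin{itemize}
\item Both combinations $\overline u^{\mathfrak g}_i\pm\overline u^{\mathfrak h}_i$ are cancelled: the isotropic one against $\phi_i$, the other against $D\phi^i$. Only then do no $\mathfrak h$-fermions survive.
\item $D\phi_i$ is the kept $b$-ghost, as you say yourself in Step 1. So $(J_{u_i})_{(0)}=0$ and $(D\phi_i)_{(0)}=0$ are exactly the ordinary relative conditions, and there is no condition on the cancelled ghosts.
\end{itemize}

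\textbf{What you still need to supply.} You must show the cancellation is compatible with the relative subcomplex. The paper does this by checking that $P$ and $I$ restrict to the relative complexes and that $\Theta$ supercommutes with $(J_u)_{(0)}$ and $(D\phi_u)_{(0)}$. In your setting the argument runs as follows:
\begin{itemize}
\item $d_0$ supercommutes with both operators. For $(J_u)_{(0)}$ this is because it supercommutes with $d_{\text{s}}$ and preserves the bigrading.
\item The relative subspace is bigraded.
\item The Koszul homotopy on the cancelled factor is $\mathfrak h$-equivariant and commutes with $(D\phi_u)_{(0)}$, which acts only on $\phi^i$.
\end{itemize}
Hence $E_1$ of the relative complex is the ordinary relative complex of $M_0$.
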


\vskip 1mm

The proof of Theorem \ref{theorem: Intro cohom realization of supercoset} proceeds as follows. First, from the supersymmetric semi-infinite complex with coefficients in
\begin{align*}
{V}^{{\kappa}_{\mathfrak{g}}}_{{\,\,} {\text{s}}}({\mathfrak{g}}) \otimes {V}^{-{\kappa}_{\mathfrak{h}}}_{{\,\,} {\text{s}}}({\mathfrak{h}}),
\end{align*}
we extract a subcomplex that is isomorphic to the semi-infinite complex with coefficients in
\begin{align*}
{V}^{{\kappa}_{\mathfrak{g}} - {\frac{1}{2}}{\overline{\kappa}}_{\mathfrak{g}}}({\mathfrak{g}}) 
\otimes 
{\mathcal{F}}({\overline{\mathfrak{h}}}^{\perp}) 
\otimes 
{V}^{- {\kappa}_{\mathfrak{h}} - {\frac{1}{2}}{\overline{\kappa}}_{\mathfrak{h}}}({\mathfrak{h}}),
\end{align*}
where ${\overline{\kappa}}_{\mathfrak{g}}$ and ${\overline{\kappa}}_{\mathfrak{h}}$ denote the Killing forms of ${\mathfrak{g}}$ and ${\mathfrak{h}}$, respectively (see Lemma \ref{lem: sec cohom desc VA iso of C3} and Lemma \ref{lem: sec cohom desc SUSY decomposition of SUSY complex into subcomplexes}). We then show that these two complexes are homotopy equivalent (Proposition \ref{prop: sec cohom desc homotopy equiv for non-rel}). Consequently, the supersymmetric semi-infinite cohomology can be computed in terms of the semi-infinite cohomology (Proposition \ref{prop: sec cohom desc SUSY H(M) to non-SUSY H(M0)}). We further prove that the corresponding relative complexes are homotopy equivalent, which reduces the relative supersymmetric semi-infinite cohomology to the corresponding relative semi-infinite cohomology (Proposition \ref{prop: sec cohom desc rel SUSY H(M) to rel non-SUSY H(M0)}). Finally, by computing this relative semi-infinite cohomology, we obtain a cohomological description of the super coset model (Theorem \ref{theorem: sec cohom desc cohom and supercoset}).

\vskip 2mm
\subsection{Induced $N = 2$ superconformal structure on supersymmetric semi-infinite cohomology}
\vskip 2mm

The next result concerns the superconformal structure on supersymmetric semi-infinite cohomology. First, we can consider the following equivalent definitions of $N = 2$ and $N = {\frac{3}{2}}$ superconformal structures on a vertex algebra (Lemma \ref{lem: equiv def of N=2 sconf} and Proposition \ref{prop: equiv def of N=3/2 sconf}).

\vskip 1mm

Let $V$ be a vertex algebra with an $N = 1$ superconformal vector $G$ of central charge $c \in {\mathbb{C}}$. Suppose that $U \in V$ is an even primary vector of conformal weight $1$ satisfying the following OPEs:
\begin{align*}
{G}(z)U(w) \sim {\frac{({G}_{(0)}{U})(w)}{(z - w)}},
{\quad}
{U}(z){U}(w) \sim {\frac{c}{3(z - w)^{2}}},
{\quad}
{U}(z)({G}_{(0)}{U})(w) \sim {\frac{({U}_{(0)}({G}_{(0)}{U}))(w)}{(z - w)}}.
\end{align*}
If we set ${\Phi} := - {\frac{1}{8}}({G} + {U}_{(0)}({G}_{(0)}{U}))$ and the following additional OPEs hold:
\begin{gather*}
{G}(z){\Phi}(w) \sim {\frac{({G}_{(0)}{\Phi})(w)}{(z - w)}},
{\quad}
({G}_{(0)}{U})(z){\Phi}(w) \sim {\frac{-({G}_{(0)}{\Phi})(w)}{(z - w)}},
\\
{U}(z){\Phi}(w) \sim {\frac{-3{\Phi}(w)}{(z - w)}},
{\quad}
{\Phi}(z){\Phi}(w) \sim 0,
\end{gather*}
then (${G}$, ${U}$) is called an {\textit{$N = {\frac{3}{2}}$ superconformal structure}} on $V$. Moreover, if the following relation
\begin{align*}
{U}_{(0)}({G}_{(0)}{U}) = - {G}
\end{align*}
holds, then (${G}$, ${U}$) is called an {\textit{$N = 2$ superconformal structure}} on $V$.

\vskip 1mm

Now, suppose that the following assumption, which we refer to as the {\textit{Kazama--Suzuki $N = 2$ condition}}, holds.

\vskip 1mm

\begin{assumption}[Assumption \ref{assum: second assumption}]\label{assump: Intro KS condition}\rm
Let ${\mathfrak{g}}$ be a finite-dimensional Lie superalgebra with a nondegenerate consistent supersymmetric invariant bilinear form ${\kappa}_{\mathfrak{g}}$. Suppose that there exists an injective Lie superalgebra homomorphism ${\iota} : {\mathfrak{h}} \rightarrow {\mathfrak{g}}$, and that the bilinear form ${\kappa}_{\mathfrak{h}} := {\kappa}_{\mathfrak{g}} \circ ({\iota} \times {\iota})$ is nondegenerate on ${\mathfrak{h}}$. Assume further that the orthogonal complement of ${\iota}({\mathfrak{h}})$ in ${\mathfrak{g}}$ with respect to ${\kappa}_{\mathfrak{g}}$ admits a vector superspace decomposition ${\iota}({\mathfrak{h}})^{\perp} = {\mathfrak{t}}^{+} \oplus {\mathfrak{t}}^{-}$ such that:
\begin{align*}
{\kappa}_{\mathfrak{g}}({\mathfrak{t}}^{\pm}, {\,} {\mathfrak{t}}^{\pm}) = 0,
{\quad}
[{\mathfrak{t}}^{\pm}, {\iota}({\mathfrak{h}}) \oplus {\mathfrak{t}}^{\pm}] \subseteq {\mathfrak{t}}^{\pm}.
\end{align*}
\end{assumption}

\vskip 1mm

Then we construct the following super Manin pair structure, arising from the Kazama--Suzuki $N = 2$ condition, on the Lie superalgebra ${\mathfrak{g}} \oplus {\mathfrak{h}}$ equipped with the bilinear form ${\kappa}_{{\mathfrak{g}} \oplus {\mathfrak{h}}}$:
\begin{align*}
{\mathfrak{g}} \oplus {\mathfrak{h}} = {\mathfrak{a}}^{+}_{{\mathfrak{g}} \oplus {\mathfrak{h}}} \oplus {\mathfrak{a}}^{-}_{{\mathfrak{g}} \oplus {\mathfrak{h}}},
\end{align*}
where the isotropic subspaces ${\mathfrak{a}}^{+}_{{\mathfrak{g}} \oplus {\mathfrak{h}}}$ and ${\mathfrak{a}}^{-}_{{\mathfrak{g}} \oplus {\mathfrak{h}}}$ are given by:
\begin{align*}
{\mathfrak{a}}^{+}_{{\mathfrak{g}} \oplus {\mathfrak{h}}} = {\mathfrak{t}}^{+} \oplus {\{} {\iota}(h) + h {\,} | {\,} {h} \in {\mathfrak{h}}{\}},
{\quad}
{\mathfrak{a}}^{-}_{{\mathfrak{g}} \oplus {\mathfrak{h}}} = {\mathfrak{t}}^{-} \oplus {\{} {\iota}(h) - h {\,} | {\,} {h} \in {\mathfrak{h}}{\}}.
\end{align*}

\vskip 1mm

Thus, we obtain the following theorem. The super-affine vertex algebra associated with the above super Manin pair arising from the Kazama--Suzuki $N = 2$ condition admits an $N = {\frac{3}{2}}$ superconformal structure, which induces an $N = 2$ superconformal structure on the supersymmetric semi-infinite cohomology via supersymmetric quantum Hamiltonian reduction.

\vskip 1mm

\begin{theorem}[Theorem \ref{theorem: N=3/2 of saff} and Theorem \ref{theorem: sec N=2 of SUSY semi inf N=2 on cohom}]\label{theorem: Intro N=2 on SUSY cohom}\rm
Under Assumption \ref{assump: Intro KS condition}, suppose that ${\mathfrak{h}}$ is unimodular. Then ${G}_{{\mathfrak{g}} \oplus {\mathfrak{h}}}$ and ${U}_{{\mathfrak{g}} \oplus {\mathfrak{h}}}$ given in \eqref{eq: Intro N=3/2 on saff} form an $N = {\frac{3}{2}}$ superconformal structure on the super-affine vertex algebra ${V}^{{\kappa}_{{\mathfrak{g}} \oplus {\mathfrak{h}}}}_{{\,\,} {\text{s}}}({\mathfrak{g}} \oplus {\mathfrak{h}})$. Moreover, combined with the $N = 2$ superconformal structure ${G}_{\wedge}$ and ${U}_{\wedge}$ on ${\textstyle{\bigwedge}}^{{\frac{\infty}{2}} + {\bullet}}_{{\,\,} {\text{s}}}({\mathfrak{h}})$ given in \eqref{eq: Intro N=2 on ghost}, the following vectors on the supersymmetric semi-infinite complex
\begin{align*}
{G}_{\mathcal{C}} :=& {\,} {G}_{{\mathfrak{g}} \oplus {\mathfrak{h}}} \otimes {\ket{0}} + {\ket{0}} \otimes {G}_{\wedge},
\\
{U}_{\mathcal{C}} :=& {\,} {U}_{{\mathfrak{g}} \oplus {\mathfrak{h}}} \otimes {\ket{0}} + {\ket{0}} \otimes {U}_{\wedge},
\end{align*}
induce an $N = 2$ superconformal structure on the supersymmetric semi-infinite cohomology
\begin{align*}
{H}^{{\frac{\infty}{2}} + {\bullet}}_{{\,\,} {\text{s}}}({\mathfrak{h}}, {V}^{{\kappa}_{{\mathfrak{g}} \oplus {\mathfrak{h}}}}_{{\,\,} {\text{s}}}({\mathfrak{g}} \oplus {\mathfrak{h}})),
\end{align*}
with central charge ${\frac{3}{2}}{\,}{\text{sdim}}{\,}{\mathfrak{g}} - {\frac{3}{2}}{\,}{\text{sdim}}{\,}{\mathfrak{h}} + {3}{\,}{\text{str}}({\mathcal{D}}_{{\mathfrak{g}} \oplus {\mathfrak{h}}})$.
\end{theorem}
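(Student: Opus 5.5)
The argument has two stages. The first constructs the $N = \frac{3}{2}$ structure on ${V}^{{\kappa}_{{\mathfrak{g}} \oplus {\mathfrak{h}}}}_{{\,\,} {\text{s}}}({\mathfrak{g}} \oplus {\mathfrak{h}})$ by checking the equivalent characterization of $N = \frac{3}{2}$ structures recalled before Assumption \ref{assump: Intro KS condition}. The second passes to BRST cohomology, where the defect field $G + U_{(0)}(G_{(0)}U)$ becomes exact.

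\textbf{Stage one.} I take $G_{{\mathfrak{g}} \oplus {\mathfrak{h}}}$ to be the Kac--Todorov vector of ${\mathfrak{g}} \oplus {\mathfrak{h}}$ for the form ${\kappa}_{\mathfrak{g}} \oplus -{\kappa}_{\mathfrak{h}}$. I take $U_{{\mathfrak{g}} \oplus {\mathfrak{h}}}$ to be the Getzler-type Manin-pair current built from the isotropic splitting ${\mathfrak{a}}^{+} \oplus {\mathfrak{a}}^{-}$, namely
\[
\sum_{a} \bar{e}^{a} \bar{e}_{a} + D(\text{correction } \mathcal{D}_{{\mathfrak{g}} \oplus {\mathfrak{h}}}).
\]
Here $\{e_a\}$ and $\{e^a\}$ are dual bases of ${\mathfrak{a}}^{\pm}$, and the correction term involves the trace of ad restricted to ${\mathfrak{a}}^{\pm}$.

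The first thing to verify is that ${\mathfrak{a}}^{\pm}$ really are isotropic subalgebras. Isotropy of ${\mathfrak{a}}^{+}$ comes from two facts: ${\kappa}({\mathfrak{t}}^{+},{\mathfrak{t}}^{+}) = 0$, and ${\kappa}_{\mathfrak{g}}(\iota h, \iota h') - {\kappa}_{\mathfrak{h}}(h,h') = 0$. Closure under the bracket uses $[{\mathfrak{t}}^{+}, \iota({\mathfrak{h}}) \oplus {\mathfrak{t}}^{+}] \subseteq {\mathfrak{t}}^{+}$ together with the fact that the diagonal $\{\iota(h)+h\}$ is a subalgebra. The same arguments give ${\mathfrak{a}}^{-}$. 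Once this is in place, the OPE computations are the general super Manin pair computation, which I would prove with the $\Lambda$-bracket formalism of SUSY vertex algebras. In that computation the OPEs $U(z)U(w)$, $U(z)\Phi(w)$ and $\Phi(z)\Phi(w)$ reduce to Jacobi identities and isotropy. The double poles give the central charge, using $\text{str}$ of $\mathcal{D}$. Unimodularity of ${\mathfrak{h}}$ is needed so that the trace terms coming from the diagonal copy of ${\mathfrak{h}}$ vanish.

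\textbf{Stage two.} On the ghost system I write down the explicit $N = 2$ pair $G_\wedge, U_\wedge$, with $U_\wedge$ the ghost-number-type current, and check the $N = 2$ relations directly. The total $G_{\mathcal{C}}$ and $U_{\mathcal{C}}$ are then sums of commuting pieces, so the $N = \frac{3}{2}$ OPEs hold on the complex, with additive central charge. The ghost contribution is $-\frac{3}{2}\,\text{sdim}\,{\mathfrak{h}}$ relative to the $\frac{3}{2}\,\text{sdim}({\mathfrak{g}} \oplus {\mathfrak{h}})$ count, which gives the stated value.

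Next I show $d_{\text{s}}$-closedness. $G_{\mathcal{C}}$ should be $d_{\text{s}}$-exact up to a closed term, by the standard fact that the total superconformal vector equals $d_{\text{s}}$ of the $b$-type antighost field. For $U_{\mathcal{C}}$ I would compute $[d_{\text{s}}, U_{\mathcal{C}}]$ directly. The contributions from the $\iota(h) \pm h$ components of $U_{{\mathfrak{g}} \oplus {\mathfrak{h}}}$ should cancel against $d_{\text{s}}$ applied to $U_\wedge$, because ${\mathfrak{h}}$ acts diagonally and preserves ${\mathfrak{t}}^{\pm}$.

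Finally, the key step is to show that $\Psi := G_{\mathcal{C}} + U_{\mathcal{C}(0)}(G_{\mathcal{C}(0)}U_{\mathcal{C}})$ is $d_{\text{s}}$-exact. In the Manin-pair picture, this defect is a cubic term in the currents, produced by the brackets $[{\mathfrak{a}}^{\pm}, {\mathfrak{a}}^{\pm}]$ that fail to close inside ${\mathfrak{a}}^{\pm}$ in the Manin-triple sense. By Assumption \ref{assump: Intro KS condition}, the only failures involve the ${\mathfrak{h}}$-directions, that is, the components $\iota(h) - h$ of brackets. So I expect $\Psi$ to be a combination of the BRST currents $J_{u_i}$ contracted with other fields, and I would write it as
\[
\Psi = d_{\text{s}}\Big(\textstyle\sum_i (\text{field})\,\phi_i\Big).
\]
The $N = 2$ relation $U_{(0)}(G_{(0)}U) = -G$ then holds in cohomology. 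This exactness computation is the main obstacle. I would try first to match terms by ghost number and conformal weight, and to use the identity $J_{u_i} = [d_{\text{s}}, D\phi_i]$ together with its superpartner.
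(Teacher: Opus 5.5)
Your outline has the same architecture as the paper: Kac--Todorov $G$, $U=\sum_i(-1)^i\overline{y}^i\overline{y}_i+\rho$, a ghost $N=2$ pair, BRST-closedness, and removing the defect with an element of the form $d_{\text{s}}(\sum(\cdot)\otimes\phi_t)$. However, several steps as written would fail.

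\textbf{$\mathfrak{a}^-$ is not a subalgebra.} We have $[\iota(u^w)-u^w,\iota(u^x)-u^x]=\iota([u^w,u^x])+[u^w,u^x]$, which lands in the diagonal $\mathfrak{a}^+$. So you only have a super Manin pair. This matters: if it were a triple, then $\mathring{F}=0$ and $U_{(0)}(G_{(0)}U)=-G$ would already hold on $V^{\kappa_{\mathfrak{g}\oplus\mathfrak{h}}}_{\text{s}}(\mathfrak{g}\oplus\mathfrak{h})$, with no defect at all. The defect sits in exactly these $y_k=\iota(u_k)+u_k$ components, not in the components ``$\iota(h)-h$''. What $[\mathfrak{t}^\pm,\iota(\mathfrak{h})\oplus\mathfrak{t}^\pm]\subseteq\mathfrak{t}^\pm$ actually gives is $F^{ij}_w=0$ for $w\in H$, and $\mathring{F}^{ij}_k=0$ unless $i,j,k\in H$. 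Together with unimodularity this forces $\mathcal{E}^{ij}=0$. You then need the general Manin-pair formula $U_{(0)}(G_{(0)}U)=-G+\tfrac43\sum(-1)^{ij}\mathring{F}^{jk}_i\overline{y}_i(\overline{y}_j\overline{y}_k)+(\mathcal{E}\text{-term})$. It reduces to $-G-8\Phi$ with $\Phi=-\tfrac16\sum_{i,j,k\in H}(-1)^{ij}\mathring{F}^{jk}_i\overline{y}_i(\overline{y}_j\overline{y}_k)$. You also need the checks $[G\,{}_\lambda\,\Phi]=G_{(0)}\Phi$ and $[G_{(0)}U\,{}_\lambda\,\Phi]=-G_{(0)}\Phi$, both of which rely on $\mathcal{E}=0$. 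Your proposal mentions neither $\mathcal{E}$ nor these checks.

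\textbf{The closedness argument for $G_{\mathcal{C}}$ is wrong.} The identity ``total superconformal vector $=d_{\text{s}}(b)$'' belongs to Virasoro BRST. In semi-infinite cohomology of $\widehat{\mathfrak{h}}$ the exact fields are the total currents $J_{u_i}=d_{\text{s}}(D\phi_i)$ and their partners $d_{\text{s}}(\phi_i)$. If $G_{\mathcal{C}}$ were exact it would be zero in cohomology, which is incompatible with a nonzero central charge. Closedness of $G_{\mathcal{C}}$ and $U_{\mathcal{C}}$ must be checked directly. It depends on the precise ghost pair $G_\wedge=\sum_i((-1)^{i+1}\phi_i\partial\phi^i+D\phi_iD\phi^i)$, $U_\wedge=-\sum_i\phi_iD\phi^i$, i.e.\ the FGLS shift $s_i=-\tfrac12(1+(-1)^i)$, not a ghost-number current. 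For $U_{\mathcal{C}}$, for example:
\begin{itemize}
\item $[\overline{y}_i\,{}_\lambda\,U_{\mathfrak{g}\oplus\mathfrak{h}}]=\overline{y}_i$ cancels against $[D\phi^i\,{}_\lambda\,U_\wedge]=-D\phi^i$;
\item $[y_i\,{}_\lambda\,U_{\mathfrak{g}\oplus\mathfrak{h}}]=0=[\phi^i\,{}_\lambda\,U_\wedge]$ for $i\in H$;
\item the leftover $\sum_{i,j}(-1)^ih^i_{ji}\,\partial\phi^j$ vanishes by unimodularity.
\end{itemize}
This shift also gives the ghost central charge $-3\,\text{sdim}\,\mathfrak{h}$. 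Your value $-\tfrac32\,\text{sdim}\,\mathfrak{h}$ does not produce the stated total.

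\textbf{The final step.} Your ansatz is the right one, and the paper makes it explicit: $\Phi_{\mathcal{C}}=d_{\text{s}}\big(\tfrac16\sum_{r,s,t\in H}(-1)^{st}\mathring{F}^{rs}_t\overline{y}_r\overline{y}_s\otimes\phi_t\big)$. The $\overline{y}_t$-part of $d_{\text{s}}(\phi_t)$ produces the cubic term. The $\overline{y}\,\overline{y}\otimes\phi^i\phi_j$ terms cancel by the Jacobi identity of the subpair $\iota(\mathfrak{h})\oplus\mathfrak{h}$ with $F=0$, i.e.\ by $\mathfrak{h}$-invariance of $\mathring{F}$.
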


\vskip 1mm

The strategy for the proof of Theorem \ref{theorem: Intro N=2 on SUSY cohom} is as follows. First, we construct the particular super Manin pair structure arising from the Kazama--Suzuki $N = 2$ condition introduced above, and in the corresponding super-affine vertex algebra, we introduce a $U(1)$-current ${U}_{{\mathfrak{g}} \oplus {\mathfrak{h}}}$ compatible with the Kac--Todorov superconformal vector ${G}_{{\mathfrak{g}} \oplus {\mathfrak{h}}}$ as follows:
\begin{align}
{G}_{{\mathfrak{g}} \oplus {\mathfrak{h}}} 
=& {\,} {\sum}_{i \in I} {\big(}(-1)^{i}{y}^{i}{\overline{y}}_{i} + {y}_{i}{\overline{y}}^{i}{\big)} + {\sum}_{i,j,k \in I} (-1)^{ki} {C}^{i}_{jk}{\overline{y}}_{i}({\overline{y}}^{j}{\overline{y}}^{k})
\notag
\\
&+ {\sum}_{i,j,k \in I} (-1)^{ij} {F}^{jk}_{i}{\overline{y}}^{i}({\overline{y}}_{j}{\overline{y}}_{k}) + {\frac{1}{3}}{\,}{\sum}_{i,j,k \in I} (-1)^{ij}{\mathring{F}}^{jk}_{i}{\overline{y}}_{i}({\overline{y}}_{j}{\overline{y}}_{k}),
\label{eq: Intro N=3/2 on saff}
\\
{U}_{{\mathfrak{g}} \oplus {\mathfrak{h}}} 
=& {\,} {\sum}_{i \in I} (-1)^{i}{\overline{y}}^{i}{\overline{y}}_{i} + {\sum}_{i,j \in I} (-1)^{i}{F}^{ij}_{i}{y}_{j} + {\sum}_{i,j \in I} (-1)^{i}{C}^{i}_{ji}{y}^{j}.
\notag
\end{align}
Here, $\{ {y}_{i} \}_{i \in I}$ and $\{ {y}^{i} \}_{i \in I}$ are parity-homogeneous dual bases of ${\mathfrak{a}}^{+}_{{\mathfrak{g}} \oplus {\mathfrak{h}}}$ and ${\mathfrak{a}}^{-}_{{\mathfrak{g}} \oplus {\mathfrak{h}}}$, respectively (see \eqref{eq: sec sMP structure constant notation for sMP} for notation of structure constants associated with super Manin pair structures). Then, by Theorem \ref{theorem: N=3/2 of saff}, the fields in \eqref{eq: Intro N=3/2 on saff} form an $N = {\frac{3}{2}}$ superconformal structure on the super-affine vertex algebra. Combining this $N = {\frac{3}{2}}$ superconformal structure with the shifted $N = 2$ superconformal structure on the ${b}{c}$--${\beta}{\gamma}$ system 
(Proposition \ref{prop: sec N=2 of SUSY semi inf N=2 shift of ghost} and \eqref{eq: sec N=2 of SUSY semi inf def of G and U in ghost}) given by
\begin{align}
{G}_{\wedge} 
= {\,} {\sum}_{i \in {H}} {\big(}(-1)^{i + 1}{\phi}_{i}{\partial}{\phi}^{i} + {D}{\phi}_{i}{D}{\phi}^{i}{\big)},
{\quad}
{U}_{\wedge} 
= {\,} - {\sum}_{i \in {H}} {\phi}_{i}{D}{\phi}^{i},
\label{eq: Intro N=2 on ghost}
\end{align}
we can verify that the resulting fields are BRST-invariant in the supersymmetric semi-infinite complex (Proposition \ref{prop: sec N=2 of SUSY semi inf G and U in kernel}), and that they form an $N = {\frac{3}{2}}$ superconformal structure on the complex by Theorem \ref{theorem: N=3/2 of saff} and Proposition \ref{prop: sec N=2 of SUSY semi inf N=2 shift of ghost}. Moreover, Lemma \ref{lem: sec sMP F and ring F vanish}, Lemma \ref{lem: sec sMP coeff E vanish}, Proposition \ref{prop: sec N=3/2 computation of UGU}, and Proposition \ref{prop: sec N=2 of SUSY semi inf N=2 shift of ghost} yield the identity
\begin{align*}
{U}_{{\mathcal{C}} (0)}({G}_{{\mathcal{C}} (0)}{U}_{\mathcal{C}}) = -{G}_{\mathcal{C}} - {8}{\Phi}_{\mathcal{C}},
\end{align*}
where ${\Phi}_{\mathcal{C}}$ is given by
\begin{align*}
{\Phi}_{\mathcal{C}} = - {\frac{1}{6}}{\,}{\sum}_{i, j, k \in H} (-1)^{ij}{\mathring{F}}^{jk}_{i}{\overline{y}}_{i}({\overline{y}}_{j}{\overline{y}}_{k}) \otimes {\ket{0}}.
\end{align*}
Finally, to show that the above $N = {\frac{3}{2}}$ superconformal structure on the complex induces an $N = 2$ superconformal structure on the cohomology, it remains to deal with the term ${\Phi}_{\mathcal{C}}$. That is, the proof is completed by finding a preimage of ${\Phi}_{\mathcal{C}}$ under the BRST differential (Lemma \ref{lem: sec N=2 of SUSY semi inf BRST triv of Phi}).

\vskip 1mm

In particular, Theorem \ref{theorem: Intro N=2 on SUSY cohom} also holds for the relative supersymmetric semi-infinite cohomology (Corollary \ref{corollary: sec N=2 of SUSY semi inf N=2 on rel SUSY cohom}), and together with Theorem \ref{theorem: Intro cohom realization of supercoset}, this shows that we obtain a Lie superalgebra version of the Kazama--Suzuki $N = 2$ construction and, at the same time, that the $N = 2$ superconformal structure on the super coset model arises from the $N = {\frac{3}{2}}$ superconformal structure on the super-affine vertex algebra given in \eqref{eq: Intro N=3/2 on saff}.

\vskip 1mm

\begin{corollary}[Corollary \ref{corollary: sec cohom desc lift of N=2 in scoset to swzw}]\rm
Under Assumption \ref{assump: Intro KS condition}, suppose that ${\mathfrak{h}}$ is a direct sum of a reductive Lie algebra and Lie superalgebras of type ${\mathfrak{osp}}_{1|2m}$, and that the restriction of ${\kappa}_{\mathfrak{h}} - {\frac{1}{2}}{\overline{\kappa}}_{\mathfrak{h}}$ to each nonabelian simple summand of ${\mathfrak{h}}$ is an irrational level. Then the $N = {\frac{3}{2}}$ superconformal structure given in \eqref{eq: Intro N=3/2 on saff} on the product of super-affine vertex algebras
\begin{align*}
{V}^{{\kappa}_{\mathfrak{g}}}_{{\,\,} {\text{s}}}({\mathfrak{g}}) \otimes {V}^{-{\kappa}_{\mathfrak{h}}}_{{\,\,} {\text{s}}}({\mathfrak{h}})
\end{align*}
induces an $N = 2$ superconformal structure of ${\text{Com}}({V}^{{\kappa}_{\mathfrak{h}}}_{{\,\,} {\text{s}}}({\mathfrak{h}}), {V}^{{\kappa}_{\mathfrak{g}}}_{{\,\,} {\text{s}}}({\mathfrak{g}}))$ via the relative supersymmetric semi-infinite cohomology.
\end{corollary}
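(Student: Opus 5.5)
The plan is to combine the two main results stated above, Theorem \ref{theorem: Intro cohom realization of supercoset} and Theorem \ref{theorem: Intro N=2 on SUSY cohom}, passing through the relative complex. By Assumption \ref{assump: Intro KS condition}, the super Manin pair ${\mathfrak{g}} \oplus {\mathfrak{h}} = {\mathfrak{a}}^{+}_{{\mathfrak{g}} \oplus {\mathfrak{h}}} \oplus {\mathfrak{a}}^{-}_{{\mathfrak{g}} \oplus {\mathfrak{h}}}$ is available. Since ${\mathfrak{h}}$ is a direct sum of a reductive Lie algebra and copies of ${\mathfrak{osp}}_{1|2m}$, it is unimodular: the supertrace of ${\text{ad}}$ vanishes on each summand, because abelian and semisimple parts have trivial supertrace and ${\mathfrak{osp}}_{1|2m}$ is simple. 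Hence the hypotheses of Theorem \ref{theorem: Intro N=2 on SUSY cohom} hold. The vectors ${G}_{\mathcal{C}}$ and ${U}_{\mathcal{C}}$ are then BRST-closed, form an $N = {\frac{3}{2}}$ structure on the complex, and satisfy ${U}_{{\mathcal{C}} (0)}({G}_{{\mathcal{C}} (0)}{U}_{\mathcal{C}}) = -{G}_{\mathcal{C}} - 8{\Phi}_{\mathcal{C}}$ with ${\Phi}_{\mathcal{C}}$ exact.

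The first step is to transport this to the relative complex, following Corollary \ref{corollary: sec N=2 of SUSY semi inf N=2 on rel SUSY cohom}. I would check that ${G}_{\mathcal{C}}$, ${U}_{\mathcal{C}}$, and the primitive of ${\Phi}_{\mathcal{C}}$ lie in the relative subcomplex, i.e.\ are annihilated by $({J}_{{u}_{i}})_{(0)}$ and $({D}{\phi}_{i})_{(0)}$.

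\begin{itemize}
\item For ${G}_{\mathcal{C}}$ and ${U}_{\mathcal{C}}$ this is ${\mathfrak{h}}$-invariance of the Kac--Todorov vector and of the $U(1)$-current. It follows because the decomposition ${\mathfrak{t}}^{+} \oplus {\mathfrak{t}}^{-}$ and the diagonal/antidiagonal copies of ${\mathfrak{h}}$ are stable under the diagonal ${\mathfrak{h}}$-action (Assumption \ref{assump: Intro KS condition} gives $[{\iota}({\mathfrak{h}}), {\mathfrak{t}}^{\pm}] \subseteq {\mathfrak{t}}^{\pm}$), together with the fact that ${G}_{\wedge}$ and ${U}_{\wedge}$ contain no $\phi_i$-zero-mode obstruction.
\item For the primitive of ${\Phi}_{\mathcal{C}}$, one may need to average or project onto ${\mathfrak{h}}$-invariants. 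This is possible since the relevant weight spaces are finite dimensional and the zero modes act semisimply there, by reductivity of ${\mathfrak{h}}$.
\end{itemize}

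Because the $n$-th products of relative cocycles are again relative, the OPE relations descend. This gives an $N = 2$ structure on ${H}^{{\frac{\infty}{2}} + {\bullet}}_{{\,\,} {\text{s, rel}}}({\mathfrak{h}}, {V}^{{\kappa}_{\mathfrak{g}}}_{{\,\,} {\text{s}}}({\mathfrak{g}}) \otimes {V}^{-{\kappa}_{\mathfrak{h}}}_{{\,\,} {\text{s}}}({\mathfrak{h}}))$, where $\Phi$ becomes zero and the identity ${U}_{(0)}({G}_{(0)}{U}) = -{G}$ holds in cohomology.

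The second step is to apply Theorem \ref{theorem: Intro cohom realization of supercoset}, whose hypotheses coincide with those assumed here. It gives a vertex algebra isomorphism of this relative cohomology with ${\text{Com}}({V}^{{\kappa}_{\mathfrak{h}}}_{{\,\,} {\text{s}}}({\mathfrak{h}}), {V}^{{\kappa}_{\mathfrak{g}}}_{{\,\,} {\text{s}}}({\mathfrak{g}}))$, concentrated in degree $0$. I need this isomorphism to be one of vertex algebras, not merely of vector spaces. For this I would trace the proof through the homotopy equivalences of Proposition \ref{prop: sec cohom desc rel SUSY H(M) to rel non-SUSY H(M0)}: the degree-zero cohomology is represented by the coset elements, embedded as $v \otimes \ket{0} \otimes \ket{0}$. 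Such elements are $d_{\text{s}}$-closed and relative, and the map is compatible with normally ordered products. Transporting the classes of ${G}_{\mathcal{C}}$ and ${U}_{\mathcal{C}}$ then yields the $N = 2$ generators on the coset.

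The main obstacle I expect is making the vertex-algebra compatibility and the relative-subcomplex membership precise. In particular, it must be checked that the exact correction to ${\Phi}_{\mathcal{C}}$ can be chosen inside the relative complex, and that the identification of degree-zero classes with coset elements respects products. If averaging is problematic, I would first try to verify directly that the explicit preimage of Lemma \ref{lem: sec N=2 of SUSY semi inf BRST triv of Phi} is built from ${\mathfrak{h}}$-invariant tensors and involves no $\phi_i$ modes, so that it lies in the relative subcomplex automatically.
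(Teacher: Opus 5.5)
Your proposal follows the paper's route: the paper obtains the corollary by combining Theorem~\ref{theorem: N=3/2 of saff}, the relative statement Corollary~\ref{corollary: sec N=2 of SUSY semi inf N=2 on rel SUSY cohom}, and Theorem~\ref{theorem: sec cohom desc cohom and supercoset}. Your observation that a direct sum of a reductive Lie algebra and copies of $\mathfrak{osp}_{1|2m}$ is unimodular correctly supplies a hypothesis the paper leaves implicit.

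No averaging is needed for relative-complex membership. The paper checks directly that the explicit primitive $\frac{1}{6}\sum_{r,s,t\in H}(-1)^{st}\mathring{F}^{rs}_{t}\overline{y}_{r}\overline{y}_{s}\otimes\phi_{t}$ is annihilated by $(J_{u_i})_{(0)}$, using the Jacobi identity for the subpair on $\iota(\mathfrak{h})\oplus\mathfrak{h}$. It is also annihilated by $(D\phi_i)_{(0)}$ because it contains $\phi_t$ but no $\phi^{t}$, so your phrase ``no $\phi_i$ modes'' should read ``no $\phi^{i}$ modes''.
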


\vskip 2mm
\subsection{Organization of the paper}
\vskip 2mm

In Section \ref{section: sMP}, we study the basic properties of super Manin pairs and a particular class of super Manin pairs arising from the Kazama--Suzuki $N = 2$ condition. In Section \ref{section: N=2 and N=3/2 sconf VA}, we recall the definitions of $N = 2$ and $N = {\frac{3}{2}}$ superconformal structures on vertex algebras and present their equivalent formulations. In Section \ref{section: N=3/2 sconf of super affine VA}, we construct an $N = {\frac{3}{2}}$ superconformal structure on the super-affine vertex algebra associated with a super Manin pair structure arising from the Kazama--Suzuki $N = 2$ condition. In Section \ref{section: SUSY semi inf cohom}, we consider supersymmetric semi-infinite cohomology and introduce its relative version. In Section \ref{section: N=2 sconf of SUSY semi inf cohom}, we show that the $N = {\frac{3}{2}}$ superconformal structure constructed in Section \ref{section: N=3/2 sconf of super affine VA} induces an $N = 2$ superconformal structure on the supersymmetric semi-infinite cohomology. In Section \ref{section: cohom description of super coset model}, we prove that the super coset model can be realized as the relative supersymmetric semi-infinite cohomology.

\vskip 4mm
\section{Super Manin pairs}\label{section: sMP}
\vskip 2mm

In the first part of this section, we will study the basic properties of super Manin pair structures on Lie superalgebras, following the Lie algebra case in Section 2 of \cite{Getzler95}, and we refer to Section 6.3 of \cite{Creutzig09} for super Manin triples. In the second part, we will study a particular class of super Manin pairs arising from the Kazama--Suzuki $N = 2$ condition. For the basic notions of Lie superalgebras, we refer to \cite{Kac77}. Throughout this paper, Lie superalgebras and vertex algebras are defined over the base field ${\mathbb{C}}$. The parity of an element ${a}$ is denoted by $p(a)$, and we write $(-1)^{a}$ for $(-1)^{p(a)}$.

\vskip 1mm

\begin{definition}\label{def: sMP}\rm
Let ${\mathfrak{g}}$ be a finite-dimensional Lie superalgebra over ${\mathbb{C}}$ with a nondegenerate consistent supersymmetric invariant bilinear form ${\kappa}_{\mathfrak{g}}$. Suppose that there is a decomposition ${\mathfrak{g}} = {\mathfrak{a}}^{+} {\oplus} {\,} {\mathfrak{a}}^{-}$ as a vector superspace, where ${\mathfrak{a}}^{+}$ and ${\mathfrak{a}}^{-}$ are isotropic subspaces of ${\mathfrak{g}}$ with respect to ${\kappa}_{\mathfrak{g}}$, i.e., ${\kappa}_{\mathfrak{g}}({\mathfrak{a}}^{\pm}, {\,} {\mathfrak{a}}^{\pm}) = 0$, and ${\mathfrak{a}}^{+}$ is a Lie subalgebra of ${\mathfrak{g}}$. Then $({\mathfrak{g}}, {\mathfrak{a}}^{+}, {\mathfrak{a}}^{-})$ is called a {\textit{super Manin pair}} and we also simply say that ${\mathfrak{g}}$ admits a super Manin pair structure. Moreover, if ${\mathfrak{a}}^{-}$ is also a Lie subalgebra of ${\mathfrak{g}}$, then $({\mathfrak{g}}, {\mathfrak{a}}^{+}, {\mathfrak{a}}^{-})$ is called a {\textit{super Manin triple}}. We can also consider the notion of a {\textit{super Manin subpair}}.
\end{definition}

\vskip 1mm

For a super Manin pair $({\mathfrak{g}}, {\mathfrak{a}}^{+}, {\mathfrak{a}}^{-})$, let $\{ {y}_{i} \}_{i \in I}$ be a parity-homogeneous basis of ${\mathfrak{a}}^{+}$ and let $\{ {y}^{i} \}_{i \in I}$ be the basis of ${\mathfrak{a}}^{-}$ satisfying ${\kappa}_{\mathfrak{g}}({y}_{i}, {y}^{j}) = {\delta}_{ij}$ for $i,j \in I$. For the structure constants of ${\mathfrak{g}}$, we use the following notations:
\begin{align}
[{y}_{i}, {y}_{j}] = {\sum}_{k \in I} {C}^{k}_{ij}{y}_{k},
{\quad}
[{y}^{i}, {y}^{j}] = {\sum}_{k \in I} {F}^{ij}_{k}{y}^{k} + {\sum}_{k \in I} {\mathring{F}}^{ij}_{k}{y}_{k},
\label{eq: sec sMP structure constant notation for sMP}
\end{align}
for $i,j \in I$. Since ${\kappa}_{\mathfrak{g}}$ is a supersymmetric invariant bilinear form, the Lie bracket between ${\mathfrak{a}}^{+}$ and ${\mathfrak{a}}^{-}$ can be written as
\begin{align*}
[{y}_{i}, {y}^{j}] = {\sum}_{k \in I} {F}^{jk}_{i}{y}_{k} + {\sum}_{k \in I}{C}^{j}_{ki}{y}^{k},
\end{align*}
for $i,j \in I$. Moreover, the structure constants ${\mathring{F}}$ satisfy
\begin{align}
{\mathring{F}}^{ij}_{k} = (-1)^{i}{\mathring{F}}^{jk}_{i} = (-1)^{i+j}{\mathring{F}}^{ki}_{j},
\label{eq: cyclic str const MP}
\end{align}
which follows from the equality:
\begin{align*}
{\mathring{F}}^{ij}_{k} 
&= {\kappa}_{\mathfrak{g}}{\big(}{\sum}_{r \in I} {F}^{ij}_{r}{y}^{r} + {\sum}_{r \in I} {\mathring{F}}^{ij}_{r}{y}_{r}, {\,} {y}^{k}{\big)} 
= {\kappa}_{\mathfrak{g}}([{y}^{i}, {y}^{j}], {y}^{k}) 
\\
&= {\kappa}_{\mathfrak{g}}({y}^{i}, [{y}^{j}, {y}^{k}]) 
= {\kappa}_{\mathfrak{g}}{\big(}{y}^{i}, {\,} {\sum}_{r \in I} {F}^{jk}_{r}{y}^{r} + {\sum}_{r \in I} {\mathring{F}}^{jk}_{r}{y}_{r}{\big)} 
= (-1)^{i}{\mathring{F}}^{jk}_{i},
\end{align*}
for $i,j,k \in I$. The Jacobi identity for the Lie subalgebra ${\mathfrak{a}}^{+}$ in terms of structure constants is given by
\begin{align}
{\sum}_{l \in I} {\big(}(-1)^{ki}{C}^{m}_{il}{C}^{l}_{jk} + (-1)^{ij}{C}^{m}_{jl}{C}^{l}_{ki} + (-1)^{jk}{C}^{m}_{kl}{C}^{l}_{ij}{\big)} = 0,
\label{eq: JI 1 for MP}
\end{align}
for $i,j,k,m \in I$. Similarly, the Jacobi identity for the elements ${y}^{i}, {y}^{j}, {y}^{k}$ in the subspace ${\mathfrak{a}}^{-}$ corresponds to the following two equalities:
\begin{gather}
\begin{split}
{\sum}_{l \in I} \left(
\begin{aligned}
&(-1)^{ki}{F}^{jk}_{l}{F}^{il}_{m} + (-1)^{ij}{F}^{ki}_{l}{F}^{jl}_{m} + (-1)^{jk}{F}^{ij}_{l}{F}^{kl}_{m} \\
&+ (-1)^{ki + j + k}{C}^{i}_{lm}{\mathring{F}}^{jk}_{l} + (-1)^{ij + k + i}{C}^{j}_{lm}{\mathring{F}}^{ki}_{l} + (-1)^{jk + i + j}{C}^{k}_{lm}{\mathring{F}}^{ij}_{l}
\end{aligned}
\right) = 0,
\label{eq: JI B for MP}
\end{split}
\end{gather}

\begin{gather*}
\begin{split}
{\sum}_{l \in I} \left(
\begin{aligned}
&(-1)^{ki}{F}^{jk}_{l}{\mathring{F}}^{il}_{m} + (-1)^{ij}{F}^{ki}_{l}{\mathring{F}}^{jl}_{m} + (-1)^{jk}{F}^{ij}_{l}{\mathring{F}}^{kl}_{m} \\ 
&+ (-1)^{ki + i}{F}^{mi}_{l}{\mathring{F}}^{jk}_{l} + (-1)^{ij + j}{F}^{mj}_{l}{\mathring{F}}^{ki}_{l} + (-1)^{jk + k}{F}^{mk}_{l}{\mathring{F}}^{ij}_{l}
\end{aligned}
\right) = 0,
\end{split}
\end{gather*}
for $i,j,k,m \in I$. Now consider the Jacobi identity in the case where two elements lie in ${\mathfrak{a}}^{+}$ and the remaining one lies in ${\mathfrak{a}}^{-}$. In this case, the Jacobi identity for the elements ${y}_{i}, {y}_{j}, {y}^{k}$ yields equation \eqref{eq: JI 1 for MP} together with the following additional relation:
\begin{align}
{\sum}_{l \in I} {\big(}{C}^{m}_{il}{F}^{kl}_{j} + (-1)^{ij + 1}{C}^{k}_{li}{F}^{lm}_{j} + (-1)^{ij + 1}{C}^{m}_{jl}{F}^{kl}_{i} + {C}^{k}_{lj}{F}^{lm}_{i} - {C}^{l}_{ij}{F}^{km}_{l}{\big)} = 0,
\label{eq: JI 3 for MP}
\end{align}
for $i,j,k,m \in I$. Moreover, applying the Jacobi identity to the elements ${y}_{i}, {y}^{j}, {y}^{k}$, we obtain two relations, one of which coincides with the equation \eqref{eq: JI 3 for MP}, while the other can be shown to coincide with the equation \eqref{eq: JI B for MP} after using \eqref{eq: cyclic str const MP}. In particular, we have
\begin{align*}
{\sum}_{i,l \in I} (-1)^{ki}{C}^{i}_{il}{C}^{l}_{jk} = 0,
\end{align*}
for any $j,k \in I$, by taking $m = i$ in \eqref{eq: JI 1 for MP} and considering the sum over all $i$ in $I$. If we multiply \eqref{eq: JI 1 for MP} by $(-1)^{ik + i}$ and proceed in the same way, we obtain another identity, namely
\begin{align}
{\sum}_{i,l \in I} (-1)^{i}{C}^{i}_{il}{C}^{l}_{jk} = 0,
\label{eq: JI 1 modified for MP}
\end{align}
for any $j,k \in I$. Similarly, we have
\begin{gather}
{\sum}_{i,l \in I} {\big(}(-1)^{i}{C}^{i}_{li}{F}^{lk}_{j} + (-1)^{i}{C}^{k}_{lj}{F}^{li}_{i} + (-1)^{il + 1}{C}^{k}_{li}{F}^{li}_{j}{\big)} = 0,
\label{eq: JI 3 modified for MP}
\\
{\sum}_{i,l \in I} {\big(}(-1)^{i}{F}^{jk}_{l}{F}^{il}_{i} + (-1)^{i}{C}^{i}_{li}{\mathring{F}}^{jk}_{l} + (-1)^{j}{C}^{j}_{li}{\mathring{F}}^{ki}_{l} + (-1)^{ij}{C}^{k}_{li}{\mathring{F}}^{ij}_{l}{\big)} = 0,
\label{eq: JI B modified for MP}
\end{gather}
for $j,k \in I$, from \eqref{eq: JI 3 for MP} multiplied by $(-1)^{i}$ and from \eqref{eq: JI B for MP} multiplied by $(-1)^{ik + i}$, respectively.

\vskip 1mm

We now consider an even element
\begin{align}
{\rho} := {\sum}_{i \in I} (-1)^{i}[{y}_{i}, {y}^{i}] = {\sum}_{i,j \in I} (-1)^{i}{F}^{ij}_{i}{y}_{j} + {\sum}_{i,j \in I} (-1)^{i}{C}^{i}_{ji}{y}^{j},
\label{eq: def of the element rho}
\end{align}
and the operator ${\mathcal{D}} := -{\pi}_{{\mathfrak{a}}^{+}} \circ {\text{ad}}^{+}_{\rho}$ on ${\mathfrak{a}}^{+}$, where ${\pi}_{{\mathfrak{a}}^{+}} : {\mathfrak{g}} \rightarrow {\mathfrak{a}}^{+}$ is the projection map and ${\text{ad}}^{+}_{\rho} : {\mathfrak{a}}^{+} \rightarrow {\mathfrak{g}}$ is the map defined by ${\text{ad}}^{+}_{\rho}(x) := [{\rho}, x]$ for $x \in {\mathfrak{a}}^{+}$. Then, from the equality \eqref{eq: JI 3 modified for MP}, we have:
\begin{gather}
{\mathcal{D}}({y}_{i}) = {\sum}_{j \in I} {\mathcal{D}}^{j}_{i}{y}_{j},
{\quad}
{\kappa}_{\mathfrak{g}}({\rho}, {\rho}) = -{\sum}_{i \in I} (-1)^{i}{\mathcal{D}}^{i}_{i} = - {\text{str}}({\mathcal{D}}),
\label{eq: def of map D and D^ij}
\end{gather}
where ${\mathcal{D}}^{j}_{i} = {\sum}_{m,n \in I} (-1)^{mn}{C}^{j}_{mn}{F}^{mn}_{i}$ for $i,j \in I$ (see equations (15) and (16) of \cite{CRS09} and equation (6.3.12) of \cite{Creutzig09}. For the Lie algebra version, see Lemma 2.2 of \cite{Getzler95}).

\vskip 1mm

For simplicity of notation in Section \ref{section: N=3/2 sconf of super affine VA}, we define the map ${\mathcal{E}} := {\pi}_{{\mathfrak{a}}^{+}} \circ {\text{ad}}^{-}_{\rho}$ from ${\mathfrak{a}}^{-}$ to ${\mathfrak{a}}^{+}$, where the map ${\text{ad}}^{-}_{\rho} : {\mathfrak{a}}^{-} \rightarrow {\mathfrak{g}}$ is defined by ${\text{ad}}^{-}_{\rho}(x) := [{\rho}, x]$ for $x \in {\mathfrak{a}}^{-}$. Then, from direct computations, we have:
\begin{align*}
{\text{ad}}^{-}_{\rho}({y}^{i}) 
=& {\sum}_{n \in I} (-1)^{n}[[{y}_{n}, {y}^{n}], {y}^{i}] 
= {\sum}_{m,n \in I} (-1)^{n}[{F}^{nm}_{n}{y}_{m} + {C}^{n}_{mn}{y}^{m}, {y}^{i}] 
\\
=& {\sum}_{m,n \in I} (-1)^{n}{F}^{nm}_{n} [{y}_{m}, {y}^{i}] 
+ {\sum}_{m,n \in I} (-1)^{n}{C}^{n}_{mn} [{y}^{m}, {y}^{i}] 
\\
=& {\sum}_{m,n \in I} (-1)^{n}{F}^{nm}_{n} {\big(}{\sum}_{j \in I}{F}^{ij}_{m}{y}_{j} + {\sum}_{j \in I}{C}^{i}_{jm}{y}^{j}{\big)} 
\\
&+ {\sum}_{m,n \in I} (-1)^{n}{C}^{n}_{mn} {\big(}{\sum}_{j \in I}{F}^{mi}_{j}{y}^{j} + {\sum}_{j \in I}{\mathring{F}}^{mi}_{j}{y}_{j}{\big)},
\end{align*}
hence we obtain:
\begin{align}
{\mathcal{E}}({y}^{i}) = {\sum}_{j \in I} {\big(}{\sum}_{m,n \in I} (-1)^{n}{\big(}{F}^{nm}_{n}{F}^{ij}_{m} + {C}^{n}_{mn}{\mathring{F}}^{mi}_{j}{\big)}{\big)}{y}_{j}.
\label{eq: map E for MP}
\end{align}
If we denote the coefficients in the above equality \eqref{eq: map E for MP} by 
\begin{align}
{\mathcal{E}}^{ij} := {\sum}_{m,n \in I} (-1)^{n}{\big(}{F}^{nm}_{n}{F}^{ij}_{m} + {C}^{n}_{mn}{\mathring{F}}^{ij}_{m}{\big)},
\label{eq: def of E^ij}
\end{align}
then we know that ${\mathcal{E}}^{ij} = (-1)^{ij + 1}{\mathcal{E}}^{ji}$ for $i,j \in I$. Moreover, from \eqref{eq: JI B modified for MP}, we obtain the following equality:
\begin{align}
{\sum}_{m,n \in I} (-1)^{n}{\big(}{F}^{nm}_{n}{F}^{ij}_{m} + {C}^{n}_{mn}{\mathring{F}}^{ij}_{m}{\big)} = {\sum}_{m,n \in I} (-1)^{mn}{\big(}{C}^{i}_{mn}{\mathring{F}}^{mn}_{j} - (-1)^{m + n}{C}^{j}_{mn}{\mathring{F}}^{mn}_{i}{\big)},
\label{eq: sec sMP equality for map E}
\end{align}
which will also be used in Section \ref{section: N=3/2 sconf of super affine VA}.

\vskip 1mm

In the remainder of Section \ref{section: sMP}, we study a particular class of super Manin pairs arising from the condition in the theorem in Section 2 of \cite{Schweigert92} as well as the condition stated between equations (50) and (51) in Section 7.2 of \cite{FS97A}. This condition originates from the Kazama--Suzuki $N = 2$ condition (see also equations (3.14)--(3.20) of \cite{KS89N} and equations (17)--(20) of \cite{KS89C}). In this setting, we develop the required ingredients using the language of super Manin pairs. To this end, we begin with the following setting.

\vskip 1mm

\begin{assumption}\label{assum: first assumption}\rm
Let ${\mathfrak{g}}$ be a finite-dimensional Lie superalgebra with a nondegenerate consistent supersymmetric invariant bilinear form ${\kappa}_{\mathfrak{g}}$. Assume that there exists an injective Lie superalgebra homomorphism ${\iota} : {\mathfrak{h}} \rightarrow {\mathfrak{g}}$, and that the bilinear form ${\kappa}_{\mathfrak{h}} := {\kappa}_{\mathfrak{g}} \circ ({\iota} \times {\iota})$ is nondegenerate on ${\mathfrak{h}}$.
\end{assumption}

\vskip 1mm

Under Assumption \ref{assum: first assumption}, consider the Lie superalgebra ${\mathfrak{g}} \oplus {\mathfrak{h}}$ (i.e., a direct sum as Lie superalgebras) with a nondegenerate consistent supersymmetric invariant bilinear form ${\kappa}_{{\mathfrak{g}} \oplus {\mathfrak{h}}} := {\kappa}_{\mathfrak{g}} \oplus -{\kappa}_{\mathfrak{h}}$ given by
\begin{align}
{\kappa}_{{\mathfrak{g}} \oplus {\mathfrak{h}}}({g}_{1} + {h}_{1}, {g}_{2} + {h}_{2}) = {\kappa}_{\mathfrak{g}}({g}_{1}, {g}_{2}) - {\kappa}_{\mathfrak{g}}({\iota}({h}_{1}), {\iota}({h}_{2})),
\label{eq: sec sMP bilinear form on g+h}
\end{align}
for ${g}_{1}, {g}_{2} \in {\mathfrak{g}}$, ${h}_{1}, {h}_{2} \in {\mathfrak{h}}$. This Lie superalgebra will be our main object of interest. First, let $\{ {u}_{w} \}_{w \in H}$ be a parity-homogeneous basis of ${\mathfrak{h}}$, and let $\{ {u}^{w} \}_{w \in H}$ be the dual basis with respect to ${\kappa}_{\mathfrak{h}}$, satisfying ${\kappa}_{\mathfrak{h}}({u}_{w}, {u}^{z}) = {\delta}_{wz}$ for $w, z \in H$. Consider the elements
\begin{align}
{y}_{w} := {\iota}({u}_{w}) + {u}_{w},
{\quad}
{y}^{w} := {\frac{1}{2}}{\big(}{\iota}({u}^{w}) - {u}^{w}{\big)},
\label{eq: sec sMP dual bases y for h}
\end{align}
of the Lie superalgebra direct sum ${\iota}({\mathfrak{h}}) \oplus {\mathfrak{h}}$, and set:
\begin{align*}
{\mathfrak{a}}^{+}_{{\iota}({\mathfrak{h}}) \oplus {\mathfrak{h}}} 
:= {\text{span}}_{\mathbb{C}}\{ {y}_{w} \}_{w \in H},
{\quad}
{\mathfrak{a}}^{-}_{{\iota}({\mathfrak{h}}) \oplus {\mathfrak{h}}} 
:= {\text{span}}_{\mathbb{C}}\{ {y}^{w} \}_{w \in H}.
\end{align*}
Then we have:
\begin{align}
{\iota}({\mathfrak{h}}) \oplus {\mathfrak{h}} = {\mathfrak{a}}^{+}_{{\iota}({\mathfrak{h}}) \oplus {\mathfrak{h}}} \oplus {\mathfrak{a}}^{-}_{{\iota}({\mathfrak{h}}) \oplus {\mathfrak{h}}},
\label{eq: sec sMP decomposition of example sMP1}
\end{align}
as vector superspaces, and ${\mathfrak{a}}^{+}_{{\iota}({\mathfrak{h}}) \oplus {\mathfrak{h}}}$ is a Lie subalgebra of ${\iota}({\mathfrak{h}}) \oplus {\mathfrak{h}}$. Moreover, with respect to the nondegenerate consistent supersymmetric invariant bilinear form ${\kappa}_{{\iota}({\mathfrak{h}}) \oplus {\mathfrak{h}}} := {\kappa}_{\mathfrak{g}}|_{{\iota}({\mathfrak{h}}) \times {\iota}({\mathfrak{h}})} \oplus -{\kappa}_{\mathfrak{h}}$ on ${\iota}({\mathfrak{h}}) \oplus {\mathfrak{h}}$, the subspaces ${\mathfrak{a}}^{\pm}_{{\iota}({\mathfrak{h}}) \oplus {\mathfrak{h}}}$ are isotropic, and hence $({\iota}({\mathfrak{h}}) \oplus {\mathfrak{h}}, {\mathfrak{a}}^{+}_{{\iota}({\mathfrak{h}}) \oplus {\mathfrak{h}}}, {\mathfrak{a}}^{-}_{{\iota}({\mathfrak{h}}) \oplus {\mathfrak{h}}})$ is a super Manin pair (see also Section 2 of \cite{Getzler95}). Note that the bases $\{ {y}_{w} \}_{w \in H}$ of ${\mathfrak{a}}^{+}_{{\iota}({\mathfrak{h}}) \oplus {\mathfrak{h}}}$ and $\{ {y}^{w} \}_{w \in H}$ of ${\mathfrak{a}}^{-}_{{\iota}({\mathfrak{h}}) \oplus {\mathfrak{h}}}$ satisfy ${\kappa}_{{\iota}({\mathfrak{h}}) \oplus {\mathfrak{h}}}({y}_{w}, {y}^{z}) = {\delta}_{wz}$ for $w, z \in H$.

\vskip 1mm

Furthermore, since the bilinear form ${\kappa}_{\mathfrak{h}}$ is nondegenerate on ${\mathfrak{h}}$, we have:
\begin{align}
{\mathfrak{g}} = {\iota}({\mathfrak{h}}) \oplus {\iota}({\mathfrak{h}})^{\perp},
\label{eq: sec sMP orthogonal decomposition of g}
\end{align}
as vector superspaces, where ${\iota}({\mathfrak{h}})^{\perp}$ denotes the orthogonal complement of ${\iota}({\mathfrak{h}})$ in ${\mathfrak{g}}$ with respect to ${\kappa}_{\mathfrak{g}}$. Therefore, the restriction of ${\kappa}_{\mathfrak{g}}$ to ${\iota}({\mathfrak{h}})^{\perp}$ is also nondegenerate. Now, in addition to Assumption \ref{assum: first assumption}, suppose that the subspace ${\iota}({\mathfrak{h}})^{\perp}$ admits a decomposition
\begin{align}
{\iota}({\mathfrak{h}})^{\perp} = {\mathfrak{t}}^{+} \oplus {\mathfrak{t}}^{-},
\label{eq: sec sMP t decomposition assump of orthogonal complem}
\end{align}
as a vector superspace, satisfying the following conditions:
\begin{align}
{\kappa}_{\mathfrak{g}}({\mathfrak{t}}^{\pm}, {\,} {\mathfrak{t}}^{\pm}) = 0,
{\quad}
[{\mathfrak{t}}^{+}, {\iota}({\mathfrak{h}}) \oplus {\mathfrak{t}}^{+}] \subseteq {\mathfrak{t}}^{+}.
\label{eq: sec sMP additional assumption eq 1}
\end{align}
Then, from the first condition of \eqref{eq: sec sMP additional assumption eq 1} and the fact that the restriction of ${\kappa}_{\mathfrak{g}}$ to ${\iota}({\mathfrak{h}})^{\perp}$ is nondegenerate and consistent, we know that ${\mathfrak{t}}^{+}$ and ${\mathfrak{t}}^{-}$ are isomorphic as vector superspaces. In particular, for a parity-homogeneous basis $\{ {y}_{a} \}_{a \in A}$ of ${\mathfrak{t}}^{+}$, we can construct the basis $\{ {y}^{a} \}_{a \in A}$ of ${\mathfrak{t}}^{-}$ such that ${\kappa}_{\mathfrak{g}}({y}_{a}, {y}^{b}) = {\delta}_{ab}$ for $a, b \in A$. If we set ${\mathfrak{a}}^{\pm}_{{\mathfrak{g}} \oplus {\mathfrak{h}}} := {\mathfrak{t}}^{\pm} \oplus {\mathfrak{a}}^{\pm}_{{\iota}({\mathfrak{h}}) \oplus {\mathfrak{h}}}$, then, from \eqref{eq: sec sMP decomposition of example sMP1}, \eqref{eq: sec sMP orthogonal decomposition of g}, and \eqref{eq: sec sMP t decomposition assump of orthogonal complem}, the Lie superalgebra ${\mathfrak{g}} \oplus {\mathfrak{h}}$ admits the following decomposition as a vector superspace:
\begin{align}
{\mathfrak{g}} \oplus {\mathfrak{h}} = {\mathfrak{a}}^{+}_{{\mathfrak{g}} \oplus {\mathfrak{h}}} \oplus {\mathfrak{a}}^{-}_{{\mathfrak{g}} \oplus {\mathfrak{h}}}.
\label{eq: sec sMP decomposition of example sMP2}
\end{align}
Moreover, from \eqref{eq: sec sMP additional assumption eq 1}, the subspaces ${\mathfrak{a}}^{\pm}_{{\mathfrak{g}} \oplus {\mathfrak{h}}}$ are isotropic with respect to the bilinear form ${\kappa}_{{\mathfrak{g}} \oplus {\mathfrak{h}}}$ defined in \eqref{eq: sec sMP bilinear form on g+h}, and ${\mathfrak{a}}^{+}_{{\mathfrak{g}} \oplus {\mathfrak{h}}}$ is a Lie subalgebra of ${\mathfrak{g}} \oplus {\mathfrak{h}}$. Hence, we obtain the following super Manin pair:
\begin{align*}
({\mathfrak{g}} \oplus {\mathfrak{h}}, {\mathfrak{a}}^{+}_{{\mathfrak{g}} \oplus {\mathfrak{h}}}, {\mathfrak{a}}^{-}_{{\mathfrak{g}} \oplus {\mathfrak{h}}}),
\end{align*}
where $\{ {y}_{i} \}_{i \in I}$ and $\{ {y}^{i} \}_{i \in I}$ are parity-homogeneous bases of ${\mathfrak{a}}^{+}_{{\mathfrak{g}} \oplus {\mathfrak{h}}}$ and ${\mathfrak{a}}^{-}_{{\mathfrak{g}} \oplus {\mathfrak{h}}}$, respectively, satisfying
\begin{align*}
{\kappa}_{{\mathfrak{g}} \oplus {\mathfrak{h}}}({y}_{i}, {y}^{j}) = {\delta}_{ij},
\end{align*}
for $i, j \in I$. Here, the index set $I$ is defined by $I = A \cup H$.

\vskip 1mm

In this setting, we denote the structure constants of ${\mathfrak{h}}$ as
\begin{align*}
[{u}_{w}, {u}_{x}] = {\sum}_{z \in H} {h}^{z}_{wx}{u}_{z},
\end{align*}
for $w, x \in H$, and we denote the structure constants of ${\mathfrak{g}} \oplus {\mathfrak{h}}$ with respect to the super Manin pair structure $({\mathfrak{g}} \oplus {\mathfrak{h}}, {\mathfrak{a}}^{+}_{{\mathfrak{g}} \oplus {\mathfrak{h}}}, {\mathfrak{a}}^{-}_{{\mathfrak{g}} \oplus {\mathfrak{h}}})$ as in \eqref{eq: sec sMP structure constant notation for sMP}. Then, for $w, x \in H$, we have
\begin{align*}
[{y}_{w}, {y}_{x}] = [{\iota}({u}_{w}) + {u}_{w}, {\iota}({u}_{x}) + {u}_{x}] = {\iota}([{u}_{w}, {u}_{x}]) + [{u}_{w}, {u}_{x}] = {\sum}_{z \in H} {h}^{z}_{wx}{y}_{z},
\end{align*}
and hence
\begin{align}
{C}^{z}_{wx} = 
\begin{cases}
{h}^{z}_{wx} & \text{ if } {z} \in {H},
\\
0 & \text{ if } {z} \in {A}.
\end{cases}
\label{eq: sec sMP str const C}
\end{align}
Note that if ${\mathfrak{h}}$ is not abelian, then the super Manin pair structrue \eqref{eq: sec sMP decomposition of example sMP2} on ${\mathfrak{g}} \oplus {\mathfrak{h}}$, and the super Manin subpair structure \eqref{eq: sec sMP decomposition of example sMP1} on ${\iota}({\mathfrak{h}}) \oplus {\mathfrak{h}}$, are not super Manin triples, since
\begin{align}
[{y}^{w}, {y}^{x}] = {\frac{1}{4}}{\big(}{\iota}([{u}^{w}, {u}^{x}]) + [{u}^{w}, {u}^{x}]{\big)} \in {\mathfrak{a}}^{+}_{{\iota}({\mathfrak{h}}) \oplus {\mathfrak{h}}},
\label{eq: sec sMP bracket in a+}
\end{align}
for $w, x \in H$.

\vskip 1mm

We now collect all the assumptions, including Assumption \ref{assum: first assumption}, conditions in \eqref{eq: sec sMP t decomposition assump of orthogonal complem} and \eqref{eq: sec sMP additional assumption eq 1}, together with an additional condition on ${\mathfrak{t}}^{-}$, into the following assumption, which we refer to as the {\textit{Kazama--Suzuki $N = 2$ condition}}.

\vskip 1mm

\begin{assumption}[\cites{KS89N, KS89C, Schweigert92, FS97A}]\label{assum: second assumption}\rm
Let ${\mathfrak{g}}$ be a finite-dimensional Lie superalgebra with a nondegenerate consistent supersymmetric invariant bilinear form ${\kappa}_{\mathfrak{g}}$. Suppose that there exists an injective Lie superalgebra homomorphism ${\iota} : {\mathfrak{h}} \rightarrow {\mathfrak{g}}$, and that the bilinear form ${\kappa}_{\mathfrak{h}} := {\kappa}_{\mathfrak{g}} \circ ({\iota} \times {\iota})$ is nondegenerate on ${\mathfrak{h}}$. Assume further that the orthogonal complement of ${\iota}({\mathfrak{h}})$ in ${\mathfrak{g}}$ with respect to ${\kappa}_{\mathfrak{g}}$ admits a vector superspace decomposition ${\iota}({\mathfrak{h}})^{\perp} = {\mathfrak{t}}^{+} \oplus {\mathfrak{t}}^{-}$ such that:
\begin{align*}
{\kappa}_{\mathfrak{g}}({\mathfrak{t}}^{\pm}, {\,} {\mathfrak{t}}^{\pm}) = 0,
{\quad}
[{\mathfrak{t}}^{\pm}, {\iota}({\mathfrak{h}}) \oplus {\mathfrak{t}}^{\pm}] \subseteq {\mathfrak{t}}^{\pm}.
\end{align*}
\end{assumption}

\vskip 1mm

Then we have the following simple observation on the structure constants with respect to the super Manin pair structure on ${\mathfrak{g}} \oplus {\mathfrak{h}}$.

\vskip 1mm

\begin{lemma}\label{lem: sec sMP F and ring F vanish}\rm
Under Assumption \ref{assum: second assumption}, we obtain the following relation for the structure constants of ${\mathfrak{g}} \oplus {\mathfrak{h}}$ with respect to the super Manin pair structure in \eqref{eq: sec sMP decomposition of example sMP2}:
\begin{gather}
{F}^{ij}_{w} = 0
{\quad}
{\text{for $i, j \in I$, $w \in H$}},
\label{eq: sec sMP str const F zero}
\\
{\mathring{F}}^{ij}_{k} = 0
{\quad}
{\text{unless $i, j, k \in H$}}.
\label{eq: sec sMP str const ring F zero}
\end{gather}
\end{lemma}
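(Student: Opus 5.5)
The plan is to read off the structure constants by computing brackets $[y^i,y^j]$ for $y^i,y^j\in\mathfrak{a}^-_{\mathfrak{g}\oplus\mathfrak{h}}=\mathfrak{t}^-\oplus\mathfrak{a}^-_{\iota(\mathfrak{h})\oplus\mathfrak{h}}$, splitting into the cases $(a,b)$, $(a,w)$, $(w,x)$ with $a,b\in A$ and $w,x\in H$. Each case uses only the Kazama--Suzuki condition of Assumption \ref{assum: second assumption}. The only tool needed is the pairing: $F^{ij}_k$ is the $y^k$-coefficient of $[y^i,y^j]$, which is $\kappa_{\mathfrak{g}\oplus\mathfrak{h}}([y^i,y^j],y_k)$ up to sign. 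Likewise $\mathring{F}^{ij}_k$ is the $y_k$-coefficient, extracted by pairing with $y^k$. So we must show that no bracket of two elements of $\mathfrak{a}^-$ has a component along $y^w$ with $w\in H$, and that no $y_k$-component appears except when $i,j,k\in H$.

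\emph{Case $(a,b)$.} By the condition $[\mathfrak{t}^-,\mathfrak{t}^-]\subseteq\mathfrak{t}^-$, we have $[y^a,y^b]\in\mathfrak{t}^-$. Hence it is a combination of $y^c$ with $c\in A$ only. This gives $F^{ab}_w=0$ and $\mathring{F}^{ab}_k=0$ for all $k$.

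\emph{Case $(a,w)$.} Since $\mathfrak{t}^-\subset\mathfrak{g}$ commutes with the second summand $\mathfrak{h}$, we get
\[
[y^a,y^w]=\tfrac12[y^a,\iota(u^w)]\in[\mathfrak{t}^-,\iota(\mathfrak{h})]\subseteq\mathfrak{t}^-.
\]
Again this lies in the span of the $y^c$, so $F^{aw}_x=0$ and $\mathring{F}^{aw}_k=0$. The case $(w,a)$ follows by supersymmetry of the bracket.

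\emph{Case $(w,x)$.} By \eqref{eq: sec sMP bracket in a+}, $[y^w,y^x]\in\mathfrak{a}^+_{\iota(\mathfrak{h})\oplus\mathfrak{h}}=\mathrm{span}\{y_z\}_{z\in H}$. It therefore has no $y^k$-component at all, so $F^{wx}_k=0$ for every $k$, and $\mathring{F}^{wx}_a=0$ for $a\in A$.

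Combining the three cases yields $F^{ij}_w=0$ for all $i,j\in I$ and $w\in H$, which is \eqref{eq: sec sMP str const F zero}. It also yields $\mathring{F}^{ij}_k=0$ unless $i,j,k\in H$, which is \eqref{eq: sec sMP str const ring F zero}. The cyclic symmetry \eqref{eq: cyclic str const MP} gives a consistency check but is not needed.

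There is no genuine obstacle. The one point requiring care is the mixed case: the hypothesis $[\mathfrak{t}^-,\iota(\mathfrak{h})]\subseteq\mathfrak{t}^-$ is exactly what was added in Assumption \ref{assum: second assumption} beyond \eqref{eq: sec sMP additional assumption eq 1}, and one must use that $\mathfrak{g}$ and $\mathfrak{h}$ commute inside $\mathfrak{g}\oplus\mathfrak{h}$, so that the $-u^w$ part of $y^w$ drops out.
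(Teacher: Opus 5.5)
Your proposal is correct and is essentially the paper's proof: the same case split according to whether the indices of $y^i,y^j$ lie in $A$ or $H$. It uses \eqref{eq: sec sMP bracket in a+} when both lie in $H$, the vanishing of $[\mathfrak{g},\mathfrak{h}]$ in the direct sum together with $[\mathfrak{t}^-,\iota(\mathfrak{h})]\subseteq\mathfrak{t}^-$ in the mixed case, and $[\mathfrak{t}^-,\mathfrak{t}^-]\subseteq\mathfrak{t}^-$ when both lie in $A$. One small correction to your closing remark: the $(a,b)$ case also relies on the new part of Assumption \ref{assum: second assumption}, since $[\mathfrak{t}^-,\mathfrak{t}^-]\subseteq\mathfrak{t}^-$ is likewise not contained in \eqref{eq: sec sMP additional assumption eq 1}.
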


\begin{proof}
First, for $i, j \in H$, it follows from \eqref{eq: sec sMP bracket in a+} that $[{y}^{i}, {y}^{j}] \in {\mathfrak{a}}^{+}_{{\iota}({\mathfrak{h}}) \oplus {\mathfrak{h}}}$. If $i \in H$ and $j \in A$, then we have
\begin{align*}
[{y}^{i}, {y}^{j}] = {\frac{1}{2}}[{\iota}({u}^{i}) - {u}^{i}, {y}^{j}] = {\frac{1}{2}}[{\iota}({u}^{i}), {y}^{j}] \in {\mathfrak{t}}^{-},
\end{align*}
and if $i, j \in A$, then $[{y}^{i}, {y}^{j}] \in {\mathfrak{t}}^{-}$. Therefore, the desired result follows.
\end{proof}

\vskip 1mm

Moreover, if ${\mathfrak{h}}$ is a {\textit{unimodular}} Lie superalgebra, i.e., ${\sum}_{z \in H} (-1)^{z}{h}^{z}_{wz} = 0$ for all $w \in H$, then we have the following lemma for the coefficients ${\mathcal{E}}^{ij}$ defined in \eqref{eq: def of E^ij}.

\vskip 1mm

\begin{lemma}\label{lem: sec sMP coeff E vanish}\rm
Under Assumption \ref{assum: second assumption}, suppose that ${\mathfrak{h}}$ is unimodular. Then, for the coefficients ${\mathcal{E}}^{ij}$ of the super Manin pair structure \eqref{eq: sec sMP decomposition of example sMP2} on ${\mathfrak{g}} \oplus {\mathfrak{h}}$, we have ${\mathcal{E}}^{ij} = 0$ for $i, j \in I$.
\end{lemma}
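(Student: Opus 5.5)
The plan is to avoid computing $\mathcal{E}^{ij}$ directly from the definition \eqref{eq: def of E^ij}. Instead I will use the alternative expression \eqref{eq: sec sMP equality for map E}, which involves only the constants $C$ and $\mathring{F}$, together with Lemma \ref{lem: sec sMP F and ring F vanish}. The aim is to reduce everything to the super Manin subpair $(\iota(\mathfrak{h})\oplus\mathfrak{h}, \mathfrak{a}^{+}_{\iota(\mathfrak{h})\oplus\mathfrak{h}}, \mathfrak{a}^{-}_{\iota(\mathfrak{h})\oplus\mathfrak{h}})$, where unimodularity can be applied.

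\emph{Step 1: reduction to $i,j\in H$.} By \eqref{eq: sec sMP equality for map E},
\begin{align*}
\mathcal{E}^{ij} = \sum_{m,n\in I} (-1)^{mn}\big(C^{i}_{mn}\mathring{F}^{mn}_{j} - (-1)^{m+n}C^{j}_{mn}\mathring{F}^{mn}_{i}\big).
\end{align*}
By \eqref{eq: sec sMP str const ring F zero}, $\mathring{F}^{mn}_{j}\neq 0$ forces $m,n,j\in H$. In that case \eqref{eq: sec sMP str const C} gives $C^{i}_{mn}=0$ whenever $i\in A$, so the first summand vanishes unless $i,j\in H$. The second summand is handled the same way. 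Hence $\mathcal{E}^{ij}=0$ unless $i,j\in H$, and for $i,j\in H$ only indices $m,n\in H$ contribute.

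\emph{Step 2: the case $i,j\in H$.} For $i,j,m,n\in H$, the constants involved are $C^{i}_{mn}=h^{i}_{mn}$ and $\mathring{F}^{mn}_{j}$. The latter is determined by \eqref{eq: sec sMP bracket in a+}, which lies entirely inside $\iota(\mathfrak{h})\oplus\mathfrak{h}$. So the right-hand side of \eqref{eq: sec sMP equality for map E} coincides with the same expression computed for the subpair $\iota(\mathfrak{h})\oplus\mathfrak{h}$ alone. The identity \eqref{eq: sec sMP equality for map E} holds for any super Manin pair, since it was derived from \eqref{eq: JI B modified for MP}. I will therefore apply it to the subpair in reverse, rewriting this common value as
\begin{align*}
\sum_{m,n\in H}(-1)^{n}\big(F^{nm}_{n}F^{ij}_{m} + h^{n}_{mn}\mathring{F}^{ij}_{m}\big),
\end{align*}
where $F$ and $\mathring{F}$ now denote the structure constants of the subpair.

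\emph{Step 3: vanishing.} In the subpair, \eqref{eq: sec sMP bracket in a+} shows $[y^{w},y^{x}]\in\mathfrak{a}^{+}$, so all $F^{ij}_{m}$ vanish. Unimodularity gives $\sum_{n\in H}(-1)^{n}h^{n}_{mn}=0$ for each $m$. Hence both terms vanish and $\mathcal{E}^{ij}=0$.

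The main obstacle is a subtlety in Step 2. If one evaluates the original definition \eqref{eq: def of E^ij} directly in $\mathfrak{g}\oplus\mathfrak{h}$, then for $m\in H$ the sum $\sum_{n\in I}(-1)^{n}C^{n}_{mn}$ also contains the supertrace of $\mathrm{ad}\,\iota(u_m)$ on $\mathfrak{t}^{+}$. I do not expect to be able to show that this supertrace vanishes. Passing through \eqref{eq: sec sMP equality for map E} is exactly what avoids the $A$-indices. So the delicate point to check carefully is that the subpair's own structure constants reproduce the $H$-part of the right-hand side of \eqref{eq: sec sMP equality for map E} exactly, including the normalization $\tfrac12$ in $y^{w}$ and the signs.
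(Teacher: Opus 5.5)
Your proposal is correct and is essentially the paper's own proof. The steps match: restrict \eqref{eq: sec sMP equality for map E} to $m,n\in H$ via Lemma~\ref{lem: sec sMP F and ring F vanish}; dispose of $i\in A$ or $j\in A$ with \eqref{eq: sec sMP str const C}; and for $i,j\in H$ read \eqref{eq: sec sMP equality for map E} backwards on the subpair $\iota(\mathfrak{h})\oplus\mathfrak{h}$, where the $F$-terms vanish and unimodularity kills the rest. The normalization you worried about causes no trouble, since the structure constants with all indices in $H$ agree for the pair and the subpair.

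Incidentally, the direct route you avoided also works for $i,j\in H$. There the $\mathfrak{t}^{+}$-contribution enters only in the combination
\[
\sum_{m\in H}\mathring{F}^{ij}_{m}\,\mathrm{str}_{\mathfrak{t}^{+}}\mathrm{ad}\,\iota(u_m)=\tfrac14\,\mathrm{str}_{\mathfrak{t}^{+}}\mathrm{ad}\,\iota([u^i,u^j]),
\]
which is zero because it is the supertrace of a supercommutator of operators preserving $\mathfrak{t}^{+}$.
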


\begin{proof}
First, for $i, j \in I$, from \eqref{eq: sec sMP equality for map E} and \eqref{eq: sec sMP str const ring F zero}, we have
\begin{align}
\begin{split}
{\mathcal{E}}^{ij} 
&= {\sum}_{m,n \in I} (-1)^{mn}{\big(}{C}^{i}_{mn}{\mathring{F}}^{mn}_{j} - (-1)^{m + n}{C}^{j}_{mn}{\mathring{F}}^{mn}_{i}{\big)}
\\
&= {\sum}_{m,n \in H} (-1)^{mn}{\big(}{C}^{i}_{mn}{\mathring{F}}^{mn}_{j} - (-1)^{m + n}{C}^{j}_{mn}{\mathring{F}}^{mn}_{i}{\big)}.
\end{split}
\label{eq: sec sMP proof of lem for coeff E}
\end{align}

\vskip 1mm

Now, in the case where $i, j \in H$, since ${\iota}({\mathfrak{h}}) \oplus {\mathfrak{h}}$ itself forms a super Manin subpair of ${\mathfrak{g}} \oplus {\mathfrak{h}}$, it follows again from \eqref{eq: sec sMP equality for map E} that the last term in \eqref{eq: sec sMP proof of lem for coeff E} becomes:
\begin{align}
{\sum}_{m,n \in H} (-1)^{n}{\big(}{F}^{nm}_{n}{F}^{ij}_{m} + {C}^{n}_{mn}{\mathring{F}}^{ij}_{m}{\big)} = {\sum}_{m,n \in H} (-1)^{n}{h}^{n}_{mn}{\mathring{F}}^{ij}_{m} = 0,
\label{eq: sec sMP proof of lem for coeff E 2}
\end{align}
where the first equality in \eqref{eq: sec sMP proof of lem for coeff E 2} follows from \eqref{eq: sec sMP str const F zero} and \eqref{eq: sec sMP str const C}, while the last equality follows from the unimodularity of ${\mathfrak{h}}$. If $i \in A$ and $j \in I$, then we also have ${\mathcal{E}}^{ij} = 0$ from \eqref{eq: sec sMP proof of lem for coeff E}, \eqref{eq: sec sMP str const C}, and \eqref{eq: sec sMP str const ring F zero}.
\end{proof}

\vskip 4mm
\section{$N = 2$ and $N = {\frac{3}{2}}$ superconformal vertex algebras}\label{section: N=2 and N=3/2 sconf VA}
\vskip 2mm

In this section, we review the definition of an $N = 2$ superconformal vertex algebra, one of its equivalent formulations, and its shift. We also briefly recall the definition of an $N = {\frac{3}{2}}$ superconformal vertex algebra, and introduce an equivalent formulation. Note that the $N = 2$ superconformal algebra was introduced in \cites{Ademollo76, DVPYZ86}. Various explicit constructions of the $N = 2$ superconformal algebra and its representation theory have also been studied (see, for example, \cites{Adamovic99, FGLS92, OS90, Poletaeva01, STF01, Wakimoto98}).

\vskip 1mm

Following \cites{BK03, DSK06, Kac98}, we first recall the definition of a vertex algebra. Let ${V}$ be a vector superspace, and let ${z}$ be an even indeterminate. A {\textit{field}} is an ${\text{End}}(V)$-valued formal distribution of the form
\begin{align*}
a(z) = {\sum}_{{n} \in {\mathbb{Z}}}{a}_{(n)}{z}^{-n-1} \in {\text{End}}(V)[[{z}, {z}^{-1}]],
\end{align*}
such that ${a}_{(n)}{v} = 0$ for ${n}$ sufficiently large, for each ${v} \in {V}$.

\vskip 1mm

A {\textit{vertex algebra}} is a vector superspace ${V}$ equipped with an even vector ${\ket{0}}$ called the {\textit{vacuum vector}}, an even endomorphism ${\partial} \in {\text{End}}(V)$ called the {\textit{infinitesimal translation operator}}, and a parity-preserving linear map $Y({\,}{\cdot}{\,}, {z})$ from ${V}$ to the space of ${\text{End}}(V)$-valued fields, i.e., ${a} \mapsto Y(a, z) = a(z) = {\sum}_{{n} \in {\mathbb{Z}}}{a}_{(n)}{z}^{-n-1}$ such that
\begin{align*}
{\text{(i)}}&
{\quad}
{\partial}{\ket{0}} = 0,
{\quad}
a(z){\ket{0}}{|}_{z = 0} = a,
\\
{\text{(ii)}}&
{\quad}
[{\partial}, a(z)] = {\partial}_{z}a(z),
\\
{\text{(iii)}}&
{\quad}
{(z - w)}^{n}[a(z), b(w)] = 0,
{\quad}
{\text{for some ${n} \in {\mathbb{Z}}_{\geq 0}$}},
\end{align*}
for $a, b \in {V}$.

\vskip 1mm

Note that the bilinear product ${\cdot} {}_{(n)} {\cdot}$ is called the {\textit{${n}$-th product}} for ${n} \in {\mathbb{Z}}$. In particular, the $(-1)$-th product ${a}_{(-1)}{b}$ is denoted by $:ab:$ and is called the {\textit{normally ordered product}}. In addition, the equality ${a}_{(-n-1)}{b} = {\frac{1}{n!}}:({\partial}^{n}{a}){b}:$ holds for $n \in {\mathbb{Z}}_{\geq 0}$. We denote the normally ordered product $:ab:$ simply by $ab$, and the expression ${a}_{1}{a}_{2} \cdots {a}_{l-1}{a}_{l}$ denotes the element obtained by performing the normally ordered product from right to left. For the notions of a vertex algebra homomorphism (i.e., a parity-preserving linear map that sends the vacuum vector to the vacuum vector and preserves all ${n}$-th products), vertex subalgebra, ideal of a vertex algebra, and tensor product of vertex algebras, we refer to Section 4.3 of \cite{Kac98}. These notions can also be found in other literature, such as Section 3.9 and Section 3.12 of \cite{LL04}, and Section 2.4, Section 2.5, and Section 4.3 of \cite{FHL93}.

\vskip 1mm

Let $V$ be a vertex algebra. In order to encode the singular part of the operator product expansion (OPE) (see equations (2.3.7b), (2.3.9a), and (2.3.11) of \cite{Kac98}), we consider a parity-preserving ${\mathbb{C}}$-linear map, called the {\textit{${\lambda}$-bracket}}, ${V} \otimes {V} \to {\mathbb{C}}[{\lambda}] \otimes {V}$ defined by
\begin{align*}
{a} \otimes {b} {\,} \mapsto {\,} [{a} {\,} {}_{\lambda} {\,} {b}] = {\sum}_{n \in {\mathbb{Z}}_{\geq 0}} {\frac{{\lambda}^{n}}{n!}}{a}_{(n)}{b},
\end{align*}
where ${\lambda}$ is an even indeterminate. Then the ${\lambda}$-bracket satisfies the following properties:
\begin{enumerate}[]
\item (Sesquilinearity)
{\quad}
$
[{\partial}{a} {\,} {}_{\lambda} {\,} {b}] = - {\lambda}[{a} {\,} {}_{\lambda} {\,} {b}],
{\quad}
[{a} {\,} {}_{\lambda} {\,} {\partial}{b}] = ({\partial} + {\lambda})[{a} {\,} {}_{\lambda} {\,} {b}],
$
\item (Skew-symmetry)
{\quad}
$
[{b} {\,} {}_{\lambda} {\,} {a}] = (-1)^{ab + 1}[{a} {\,} {}_{-{\partial}-{\lambda}} {\,} {b}],
$
\item (Jacobi identity) 
{\quad}
$
[{a} {\,} {}_{\lambda} {\,} [{b} {\,} {}_{\gamma} {\,} {c}]] = [[{a} {\,} {}_{\lambda} {\,} {b}] {\,} {}_{{\lambda} + {\gamma}} {\,} {c}] + (-1)^{ab}[{b} {\,} {}_{\gamma} {\,} [{a} {\,} {}_{\lambda} {\,} {c}]],
$
\end{enumerate}
where $a, b, c$ are parity-homogeneous elements in $V$. More precisely, the right-hand side of the second equality in sesquilinearity can be computed by the relation ${\partial}{\lambda} = {\lambda}{\partial}$. The right-hand side of skew-symmetry can be interpreted as
\begin{align*}
[{a} {\,} {}_{-{\partial}-{\lambda}} {\,} {b}] =  {\sum}_{n \in {\mathbb{Z}}_{\geq 0}} {\frac{{(- {\partial} - {\lambda})}^{n}}{n!}} {a}_{(n)}{b},
\end{align*}
and the Jacobi identity is defined in ${\mathbb{C}}[{\lambda}, {\gamma}] \otimes {V}$. In addition, we assume that ${\lambda}{\gamma} = {\gamma}{\lambda}$, where ${\gamma}$ is also an even indeterminate. Moreover, the ${\lambda}$-bracket and the normally ordered product satisfy the following additional properties:
\begin{enumerate}[]
\item (Quasi-commutativity)
{\quad}
$
\displaystyle
{a}{b} - (-1)^{ab}{b}{a} = \int^{0}_{-{\partial}} [{a} {\,} {}_{\lambda} {\,} {b}] {\,} {d}{\lambda},
$
\item (Quasi-associativity)
\begin{equation*}
({a}{b}){c} - {a}({b}{c}) = {\Big(} \int^{\partial}_{0} {\,} {d}{\lambda} {\,} {a} {\Big)}[{b} {\,} {}_{\lambda} {\,} {c}] + (-1)^{ab}{\Big(} \int^{\partial}_{0} {\,} {d}{\lambda} {\,} {b} {\Big)}[{a} {\,} {}_{\lambda} {\,} {c}],
\end{equation*}
\item (Non-commutative Wick formula)
\begin{equation*}
[{a} {\,} {}_{\lambda} {\,} {b}{c}] = [{a} {\,} {}_{\lambda} {\,} {b}]{c} + (-1)^{ab}{b}[{a} {\,} {}_{\lambda} {\,} {c}] + \int^{\lambda}_{0} [[{a} {\,} {}_{\lambda} {\,} {b}] {\,} {}_{\gamma} {\,} {c}] {\,} {d}{\gamma},
\end{equation*}
\end{enumerate}
where $(\int^{\partial}_{0} {\,} {d}{\lambda} {\,} {a})[{b} {\,} {}_{\lambda} {\,} {c}] := {\sum}_{n \in {\mathbb{Z}}_{\geq 0}}{a}_{(-n-2)}({b}_{(n)}{c})$ in quasi-associativity. For more details, see \cite{BK03} and Section 1.5 of \cite{DSK06}.

\vskip 1mm

We now begin with the definition of a conformal vector (see Section 5 of \cite{Borcherds86} and Definition 4.10 of \cite{Kac98}). For a vertex algebra $V$, an even vector ${T} \in {V}$ is called a {\textit{conformal vector}} if its zero-mode coincides with the translation operator of the vertex algebra, i.e. ${T}_{(0)} = {\partial}$, if ${T}_{(1)}$ acts semisimply on $V$ and if it satisfies the following OPE relation:
\begin{align*}
[{T} {\,} {}_{\lambda} {\,} {T}] = ({\partial} + 2{\lambda}){T} + {\frac{c}{12}}{\lambda}^{3},
\end{align*}
for some ${c} \in {\mathbb{C}}$. In this case, the complex number ${c}$ is called the {\textit{central charge}} of ${T}$. For a conformal vector ${T}$ of a vertex algebra ${V}$, if a nonzero vector $v \in V$ satisfies
\begin{align}
[{T} {\,} {}_{\lambda} {\,} {v}] = ({\partial} + {\Delta}_{v}{\lambda}){v} + O({\lambda}^{2}),
\label{eq: conformal weight}
\end{align}
for some ${\Delta}_{v} \in {\mathbb{C}}$, then ${\Delta}_{v}$ is called the {\textit{conformal weight}} of ${v}$, where $O({\lambda}^{2})$ is a polynomial in ${\lambda}$ without constant or linear terms. Moreover, if $O({\lambda}^{2}) = 0$ in \eqref{eq: conformal weight}, then the vector ${v}$ is called {\textit{primary}} of conformal weight ${\Delta}_{v}$.

\vskip 1mm

Following Section 5.9 of \cite{Kac98}, we recall the definitions of superconformal structures of vertex algebras. An odd vector ${G}$ of a vertex algebra ${V}$ is called an {\textit{$N = 1$ superconformal vector}} if $T := {\frac{1}{2}}{G}_{(0)}{G}$ is a conformal vector with central charge ${c} \in {\mathbb{C}}$ and the following OPEs hold:
\begin{align*}
[{T} {\,} {}_{\lambda} {\,} {G}] = ({\partial} + {\frac{3}{2}}{\lambda}){G},
{\quad}
[{G} {\,} {}_{\lambda} {\,} {G}] = 2{T} + {\frac{c}{3}}{\lambda}^{2}.
\end{align*}
As in Section 1.2 of \cite{HK07}, an $N = 1$ superconformal vector $G$ of a vertex algebra $V$ gives rise to an $N = 1$ supersymmetric vertex algebra structure on $V$ via the odd derivation ${D} := {G}_{(0)}$.

\vskip 1mm

\begin{definition}[\cites{Kac98, KW04}]\rm
A vertex algebra $V$ is called an {\textit{$N = 2$ superconformal vertex algebra}} if it has two odd vectors ${G}^{+}$, ${G}^{-}$ such that ${T} := {\frac{1}{2}}({G}^{+}_{(0)}{G}^{-} + {G}^{-}_{(0)}{G}^{+})$ is a conformal vector with central charge ${c} \in {\mathbb{C}}$, and the following $N = 2$ super-Virasoro relations hold:
\begin{gather*}
[{T} {\,} {}_{\lambda} {\,} {G}^{\pm}] = ({\partial} + {\frac{3}{2}}{\lambda}){G}^{\pm},
{\quad}
[{G}^{\pm} {\,} {}_{\lambda} {\,} {G}^{\pm}] = 0,
{\quad}
[{G}^{+} {\,} {}_{\lambda} {\,} {G}^{-}] = {T} + {\frac{1}{2}}{\partial}{U} + {\lambda}{U} + {\frac{c}{6}}{\lambda}^{2},
\\
[{T} {\,} {}_{\lambda} {\,} {U}] = ({\partial} + {\lambda}){U},
{\quad}
[{G}^{\pm} {\,} {}_{\lambda} {\,} {U}] = {\mp}{G}^{\pm},
{\quad}
[{U} {\,} {}_{\lambda} {\,} {U}] = {\frac{c}{3}}{\lambda},
\end{gather*}
where ${U} := {G}^{+}_{(1)}{G}^{-}$. In this case, we call the collection ${T}, {G}^{\pm}, {U}$ (or simply ${G}^{\pm}$) an {\textit{$N = 2$ superconformal structure}} of $V$.
\label{def: N=2 sconf VA}
\end{definition}

\vskip 1mm

Note that, from Remark 2.8 of \cite{HK07}, an $N = 2$ superconformal vertex algebra admits an $N = 2$ supersymmetric vertex algebra structure via the two odd derivations $({\mu}{G}^{+} + {\mu}^{-1}{G}^{-})_{(0)}$ and ${\pm}{\sqrt{-1}}({\mu}{G}^{+} - {\mu}^{-1}{G}^{-})_{(0)}$ for ${\mu} \in {\mathbb{C}}^{*}$. This ${\mathbb{Z}} \slash {2}{\mathbb{Z}} \times {\textrm{SO}}_{2}(\mathbb{C})$ family of $N = 2$ supersymmetric vertex algebra structures arises from the automorphism group (see also \cite{SS87}) of the $N = 2$ super-Virasoro vertex algebra, which preserves the conformal vector.

\vskip 1mm

We now briefly recall the definition of the $N = {\frac{3}{2}}$ superconformal structure of vertex algebras, which is slightly different from the $N = 2$ superconformal structure. This algebra was introduced in \cite{Kazama91} and can also be found in \cites{IR93, IR94, Getzler95}. Here we present the OPEs of the $N = {\frac{3}{2}}$ superconformal algebra following the notation of \cite{Getzler95}.

\vskip 1mm

\begin{definition}[\cites{Getzler95}]\rm
Let $V$ be a vertex algebra with a conformal vector $T$ of central charge ${c} \in {\mathbb{C}}$. Suppose that ${G}^{+}$, ${G}^{-}$, and $\Phi$ are odd primary vectors of conformal weight $\frac{3}{2}$, and that $U$ and $F$ are even primary vectors of conformal weights $1$ and $2$, respectively. Then the remaining nonzero OPEs of the $N = {\frac{3}{2}}$ superconformal algebra are given as follows:
\begin{gather*}
[{G}^{+} {\,} {}_{\lambda} {\,} {G}^{-}] = {T} + {\frac{1}{2}}{\partial}{U} + {\lambda}{U} + {\frac{c}{6}}{\lambda}^{2},
{\quad}
[{G}^{+} {\,} {}_{\lambda} {\,} {\Phi}] = {F},
{\quad}
[{G}^{-} {\,} {}_{\lambda} {\,} {G}^{-}] = -2{F},
\\
[{G}^{-} {\,} {}_{\lambda} {\,} {F}] = {\partial}{\Phi} + 3{\lambda}{\Phi},
{\quad}
[{U} {\,} {}_{\lambda} {\,} {G}^{+}] = {G}^{+},
{\quad}
[{U} {\,} {}_{\lambda} {\,} {G}^{-}] = -{G}^{-},
\\
[{U} {\,} {}_{\lambda} {\,} {\Phi}] = -3{\Phi},
{\quad}
[{U} {\,} {}_{\lambda} {\,} {F}] = -2{F},
{\quad}
[{U} {\,} {}_{\lambda} {\,} {U}] = {\frac{c}{3}}{\lambda},
\end{gather*}
and we call a collection of vectors ${T}, {G}^{\pm}, {U}, {\Phi}, {F}$ (or simply ${G}^{\pm}$) satisfying the above OPEs of the $N = {\frac{3}{2}}$ superconformal algebra an {\textit{$N = {\frac{3}{2}}$ superconformal structure}} of ${V}$.
\label{def: N=3/2 sconf VA}
\end{definition}

\vskip 1mm

The following Lemma \ref{lem: shift of N=2 sconf} describes a {\textit{shift}} of an $N = 2$ superconformal structure. The shift of a conformal vector is discussed in Section 5 of \cite{Arakawa15} and Proposition 4.4 of \cite{CKLW18}. In particular, the shift of an $N = 1$ superconformal vector is given in Proposition 1.1 of \cite{Getzler95} (see also Proposition 4.3 of \cite{SY25}), and the shift of an $N = {\frac{3}{2}}$ superconformal structure appears in Proposition 1.4 of \cite{Getzler95}.

\vskip 1mm

\begin{lemma}[\cite{Getzler95}]\rm
Let $V$ be a vertex algebra with an $N = 2$ superconformal structure ${T}$, ${G}^{\pm}$, ${U}$ of central charge ${c} \in {\mathbb{C}}$. Suppose that ${v}_{+}, {v}_{-} \in {V}$ are odd primary vectors of conformal weight ${\frac{1}{2}}$ satisfying the following relations:
\begin{align*}
[{G}^{a} {\,} {}_{\lambda} {\,} {v}_{b}] = {\delta}_{ab}{\,}({G}^{a}_{(0)}{v}_{a} + {c}_{a}{\lambda}),
{\quad}
[{U} {\,} {}_{\lambda} {\,} {v}_{\pm}] = {\mp}{v}_{\pm},
{\quad}
[{v}_{a} {\,} {}_{\lambda} {\,} {v}_{b}] = {\delta}_{ab}{\,}{d}_{a},
{\qquad}
{\text{for ${a}, {b} \in {\{} +, - {\}}$,}}
\end{align*}
for some ${c}_{+}, {c}_{-}, {d}_{+}, {d}_{-} \in {\mathbb{C}}$, and such that the operators $({{G}^{\pm}}_{(0)}{v}_{\pm})_{(0)}$ act diagonally on $V$. Then the following vectors:
\begin{align*}
{T} + {\frac{1}{2}}{\mu}_{+}{\partial}{G}^{+}_{(0)}{v}_{+} + {\frac{1}{2}}{\mu}_{-}{\partial}{G}^{-}_{(0)}{v}_{-},
{\quad}
{G}^{\pm} + {\mu}_{\mp}{\partial}{v}_{\mp},
{\quad}
{U} + {\mu}_{+}{G}^{+}_{(0)}{v}_{+} - {\mu}_{-}{G}^{-}_{(0)}{v}_{-},
\end{align*}
form an $N = 2$ superconformal structure of $V$ with central charge ${c} + {6}{\mu}_{+}{c}_{+} + {6}{\mu}_{-}{c}_{-}$, for any ${\mu}_{+}, {\mu}_{-} \in {\mathbb{C}}$.
\label{lem: shift of N=2 sconf}
\end{lemma}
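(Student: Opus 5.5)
We verify the $N=2$ super-Virasoro relations of Definition \ref{def: N=2 sconf VA} for the shifted vectors
\begin{align*}
T' := T + \tfrac{1}{2}\mu_{+}\partial w_{+} + \tfrac{1}{2}\mu_{-}\partial w_{-},
\quad
G'^{\pm} := G^{\pm} + \mu_{\mp}\partial v_{\mp},
\quad
U' := U + \mu_{+}w_{+} - \mu_{-}w_{-},
\end{align*}
where $w_{\pm} := G^{\pm}_{(0)}v_{\pm}$. Every relation will be checked by sesquilinearity, skew-symmetry and the Jacobi identity for $\lambda$-brackets; no Wick formula is needed, since all new terms are linear in $v_{\pm}$ and $w_{\pm}$.

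First we collect the basic brackets of $w_{\pm}$. Primarity of $v_{\pm}$ of weight $\frac12$, together with $[T\,{}_\lambda\,G^\pm]=(\partial+\frac32\lambda)G^\pm$ and Jacobi, gives $[T\,{}_\lambda\,w_{\pm}] = (\partial+\lambda)w_{\pm} + c_{\pm}\lambda^{2}$ up to normalization. Applying Jacobi to $[G^{\pm}\,{}_\lambda\,[G^{\mp}\,{}_\gamma\,v_\pm]]$ and using $[G^\mp{}_\gamma v_\pm]=0$ gives $[G^{\mp}\,{}_\lambda\,w_{\pm}]$ in terms of $[G^\mp{}_\lambda G^\pm]$ acting on $v_\pm$, hence via $T$ and $U$. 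Applying Jacobi to $[G^\pm{}_\lambda G^\pm]=0$ gives $[G^{\pm}\,{}_\lambda\,w_{\pm}]=0$. From $[U{}_\lambda v_\pm]=\mp v_\pm$ we get $[U\,{}_\lambda\,w_{\pm}] = \pm c_{\pm}\lambda$, possibly up to a sign. Finally $[w_a\,{}_\lambda\,w_b]$ follows from $[v_a{}_\lambda v_b]=\delta_{ab}d_a$; in particular $[w_+\,{}_\lambda\,w_-]$ reduces to constants times $\lambda$.

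Next we expand each relation bilinearly.
\begin{enumerate}[(i)]
\item $[G'^{\pm}\,{}_\lambda\,G'^{\pm}]=0$ holds because $[G^\pm{}_\lambda \partial v_\mp]=0$ and $[v_\mp{}_\lambda v_\mp]$ is a constant, which is killed by the two derivatives.
\item In $[G'^{+}\,{}_\lambda\,G'^{-}]$, the cross terms $\mu_+[G^+{}_\lambda\partial v_+] = \mu_+(\partial+\lambda)(w_+ + c_+\lambda)$ and its mirror reproduce exactly $\frac12\partial(U'-U)+\lambda(U'-U)$, the shift $\frac12\mu_\pm\partial w_\pm$ of $T$, and the central term $\frac{c'}{6}\lambda^2$ with $c'=c+6\mu_+c_+ +6\mu_-c_-$. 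The term $\mu_+\mu_-[\partial v_-{}_\lambda\partial v_+]$ vanishes since $[v_-{}_\lambda v_+]=0$.
\item For $U'$ we compute $U'=G'^{+}_{(1)}G'^{-}$, $[G'^\pm{}_\lambda U']=\mp G'^\pm$, $[U'{}_\lambda U']=\frac{c'}{3}\lambda$, and $T'$-primarity of $U'$ and $G'^\pm$. Each of these follows from the bracket table of the first step.
\end{enumerate}
The conformal-vector condition on $T'$ then follows from $T'=\frac12(G'^+_{(0)}G'^-+G'^-_{(0)}G'^+)$ and the $N=2$ Jacobi consequences. Semisimplicity of $T'_{(1)} = T_{(1)} - \frac12\mu_+ w_{+(0)} - \frac12\mu_- w_{-(0)}$ uses the hypothesis that the $w_{\pm(0)}$ act diagonally and commute with $T_{(1)}$, which holds because $[T{}_\lambda w_\pm]$ has no $\lambda^0$ term beyond $\partial w_\pm$.

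The main obstacle is bookkeeping: the signs for odd $v_{\pm}$ in the Jacobi identities, and confirming that every occurrence of $c_\pm$ adds consistently to give the total shift $6\mu_\pm c_\pm$ in both $[G'^+{}_\lambda G'^-]$ and $[U'{}_\lambda U']$. To keep these consistent we first establish $[U\,{}_\lambda\,w_\pm]$ and $[w_\pm\,{}_\lambda\,w_\mp]$ carefully and then reuse them throughout.
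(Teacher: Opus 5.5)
The paper does not prove this lemma (it is quoted from Getzler), so a direct $\lambda$-bracket verification is the natural route. Your bracket table for $w_\pm=G^\pm_{(0)}v_\pm$ and your computation (ii) of $[G'^{+}\,{}_\lambda\,G'^{-}]$ and the shift $c'=c+6\mu_+c_++6\mu_-c_-$ are correct. Step (i), however, is wrong as argued. Derivatives are not killed inside a $\lambda$-bracket; sesquilinearity turns them into factors of $\lambda$:
\[
[\partial v_\mp\,{}_\lambda\,\partial v_\mp]=-\lambda(\partial+\lambda)[v_\mp\,{}_\lambda\,v_\mp]=-\lambda^{2}d_\mp .
\]
So your own expansion gives $[G'^{\pm}\,{}_\lambda\,G'^{\pm}]=-\mu_\mp^{2}d_\mp\lambda^{2}$, which is nonzero unless $d_\mp=0$. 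The missing observation is that the hypotheses force $d_\pm=0$. Since $U$ is even and $[v_\pm\,{}_\gamma\,v_\pm]$ is a multiple of the vacuum, the Jacobi identity gives
\[
0=[U\,{}_\lambda\,[v_\pm\,{}_\gamma\,v_\pm]]=[[U\,{}_\lambda\,v_\pm]\,{}_{\lambda+\gamma}\,v_\pm]+[v_\pm\,{}_\gamma\,[U\,{}_\lambda\,v_\pm]]=\mp 2d_\pm .
\]
In other words, $v_{\pm(0)}v_\pm$ would carry $U_{(0)}$-charge $\mp 2$ while being a multiple of the vacuum. With this, (i) holds.

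Two smaller points should be proved exactly rather than left ``up to constants.'' First, you need $[w_+\,{}_\lambda\,w_-]=0$, not merely ``constants times $\lambda$.'' A term $k\lambda$ there would add $-2\mu_+\mu_-k\lambda$ to $[U'\,{}_\lambda\,U']$ and spoil the central charge. Jacobi applied to $[G^{+}\,{}_\mu\,[v_+\,{}_\lambda\,v_-]]=0$, together with $[G^+\,{}_\mu\,v_-]=0$, gives $[w_+\,{}_\lambda\,v_-]=0$. Then Jacobi on $[w_+\,{}_\lambda\,[G^-\,{}_\gamma\,v_-]]$, using $[w_+\,{}_\lambda\,G^-]=\lambda v_+$ and $[v_+\,{}_\lambda\,v_-]=0$, gives $[w_+\,{}_\lambda\,w_-]=0$. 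Similarly $[w_\pm\,{}_\lambda\,w_\pm]=0$, so $[U'\,{}_\lambda\,U']$ picks up exactly $2\mu_\pm c_\pm\lambda$ from the cross terms with $U$.

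Second, semisimplicity of $T'_{(1)}=T_{(1)}-\frac12\mu_+w_{+(0)}-\frac12\mu_-w_{-(0)}$ requires these three operators to commute pairwise. The identity $[T_{(1)},w_{\pm(0)}]=(\partial w_\pm)_{(1)}+(T_{(1)}w_\pm)_{(0)}=0$ holds because the $\lambda$-coefficient of $[T\,{}_\lambda\,w_\pm]=(\partial+\lambda)w_\pm+\frac{c_\pm}{2}\lambda^2$ is exactly $w_\pm$. It does not follow from the $\lambda^0$ term, as you wrote. The remaining commutator $[w_{+(0)},w_{-(0)}]=(w_{+(0)}w_-)_{(0)}=0$ follows from the previous computation.
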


\vskip 1mm

For example, a $1$-parameter deformation of the standard $N = 2$ superconformal structure of the $bc$--${\beta}{\gamma}$ system can be found in \cite{FGLS92} (and will be discussed again in detail in Section \ref{section: N=2 sconf of SUSY semi inf cohom}). As another example, shifts of the $N = 2$ superconformal structure of the super-affine vertex algebra associated with a Lie superalgebra admitting a super Manin triple structure can be found in Proposition 6.4.2 of \cite{Creutzig09} and in equations (28), (29) of \cite{CRS09}.

\vskip 1mm

We now present Lemma \ref{lem: equiv def of N=2 sconf}, which gives an equivalent definition of $N = 2$ superconformal structures by starting from a given $N = 1$ superconformal vector and constructing a compatible $N = 2$ superconformal structure.

\vskip 1mm

\begin{lemma}\label{lem: equiv def of N=2 sconf}\rm
Let $V$ be a vertex algebra with an $N = 1$ superconformal vector $G$ of central charge $c \in {\mathbb{C}}$. Suppose that $U \in V$ is an even primary vector of conformal weight $1$ satisfying the following OPEs:
\begin{align}
[{G} {\,} {}_{\lambda} {\,} {U}] = {G}_{(0)}{U},
{\quad}
[{U} {\,} {}_{\lambda} {\,} {U}] = {\frac{c}{3}}{\lambda},
{\quad}
{U}_{(0)}({G}_{(0)}{U}) = - {G}.
\label{eq: equiv def of N=2 sconf}
\end{align}
Then the vectors ${G}^{\pm} := {\frac{1}{2}}({G} \mp {G}_{(0)}{U})$ form an $N = 2$ superconformal structure of $V$ with central charge $c$.
\end{lemma}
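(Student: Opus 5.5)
We set $G^{\pm} := \frac{1}{2}(G \mp W)$ with $W := G_{(0)}U$, and check the relations of Definition \ref{def: N=2 sconf VA} one at a time. Everything reduces to computing $[G\,{}_\lambda\,G]$, $[G\,{}_\lambda\,W]$, $[W\,{}_\lambda\,W]$, $[U\,{}_\lambda\,G]$ and $[U\,{}_\lambda\,W]$ from the hypotheses \eqref{eq: equiv def of N=2 sconf}, using only sesquilinearity, skew-symmetry and the Jacobi identity for the $\lambda$-bracket.

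\textbf{Step 1: the $U$-charges.} Since $U$ is primary of weight $1$ and $[G\,{}_\lambda\,U] = W$ is constant in $\lambda$, skew-symmetry gives $[U\,{}_\lambda\,G] = -W$; in particular $U_{(0)}G = -W$ and $U_{(n)}G=0$ for $n\ge 1$. Next, apply Jacobi to $[U\,{}_\lambda\,[G\,{}_\gamma\,U]]$. This yields $[U\,{}_\lambda\,W] = [[U\,{}_\lambda\,G]\,{}_{\lambda+\gamma}\,U] + [G\,{}_\gamma\,[U\,{}_\lambda\,U]]$. The second term is $[G\,{}_\gamma\,\frac{c}{3}\lambda] = 0$, and the first is $-[W\,{}_{\lambda+\gamma}\,U]$. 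Comparing degrees in $\gamma$ shows that $[W\,{}_\mu\,U]$ is constant, so $[U\,{}_\lambda\,W]$ equals its $\lambda^0$ part, namely $U_{(0)}W = -G$ by hypothesis. Hence $[U\,{}_\lambda\,G] = -W$ and $[U\,{}_\lambda\,W] = -G$, which give $[U\,{}_\lambda\,G^{\pm}] = \pm G^{\pm}$, i.e. $[G^{\pm}\,{}_\lambda\,U] = \mp G^{\pm}$. Together with $[U\,{}_\lambda\,U]=\frac{c}{3}\lambda$ and the fact that $U$ is primary of weight $1$, this settles every relation involving $U$, provided we also know $U = G^+_{(1)}G^-$; we check that in Step 3.

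\textbf{Step 2: the brackets of $W$.} Applying Jacobi to $[G\,{}_\lambda\,[G\,{}_\gamma\,U]]$ gives $[G\,{}_\lambda\,W] = [[G\,{}_\lambda\,G]\,{}_{\lambda+\gamma}\,U] - [G\,{}_\gamma\,[G\,{}_\lambda\,U]]$. Since $[G\,{}_\lambda\,G] = 2T+\frac{c}{3}\lambda^2$, this becomes $[G\,{}_\lambda\,W] = 2[T\,{}_{\lambda+\gamma}\,U] - [G\,{}_\gamma\,W]$. Using $[T\,{}_\mu\,U] = (\partial+\mu)U$ and setting $\gamma$ appropriately, we can solve for $[G\,{}_\lambda\,W]$; the expected answer is $(\partial + 2\lambda)U$, and this should be checked by consistency in $\lambda,\gamma$. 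Then $[W\,{}_\lambda\,W]$ is obtained by applying Jacobi to $[W\,{}_\lambda\,[G\,{}_\gamma\,U]]$, which involves $[W\,{}_\lambda\,G]$ (known by skew-symmetry) and $[W\,{}_\lambda\,U] = G$. The target is $[W\,{}_\lambda\,W] = -(2T + \frac{c}{3}\lambda^2)$, and this is what forces $[G^{\pm}\,{}_\lambda\,G^{\pm}] = \frac14([G\,{}_\lambda\,G] + [W\,{}_\lambda\,W] \mp ([G\,{}_\lambda\,W] + [W\,{}_\lambda\,G])) = 0$. Here the cross terms must cancel by skew-symmetry, with $[W\,{}_\lambda\,G] = -(\partial+2\lambda)U + \partial U$ arranged so that $[G\,{}_\lambda\,W] + [W\,{}_\lambda\,G] = 0$ up to the correct signs. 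I expect this step, and in particular getting the cancellation in $[G^{\pm}\,{}_\lambda\,G^{\pm}]$ right, to be the main computational obstacle. The fix will likely require carefully tracking the odd parities in skew-symmetry.

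\textbf{Step 3: the mixed bracket.} The same ingredients give
\[
[G^+\,{}_\lambda\,G^-] = \tfrac14\bigl(2[G\,{}_\lambda\,G] + [G\,{}_\lambda\,W] - [W\,{}_\lambda\,G]\bigr) = T + \tfrac12\partial U + \lambda U + \tfrac{c}{6}\lambda^2 .
\]
Reading off the $\lambda^1$ coefficient gives $G^+_{(1)}G^- = U$, and symmetrizing the $\lambda^0$ parts gives $\frac12(G^+_{(0)}G^- + G^-_{(0)}G^+) = T$, so $T$ is the conformal vector with central charge $c$. Finally, $[T\,{}_\lambda\,G^{\pm}] = (\partial+\frac32\lambda)G^{\pm}$ follows because $G$ is primary of weight $\frac32$ and $W = G_{(0)}U$ is primary of weight $\frac32$. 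The latter holds since $G_{(0)}$ commutes with the $T$-action up to terms that vanish because $U$ is primary of weight $1$: one applies Jacobi to $[T\,{}_\lambda\,[G\,{}_\gamma\,U]]$ and takes $\gamma^0$.
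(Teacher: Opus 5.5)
Your overall route is the paper's: derive $[T\,{}_\lambda\,W]$, $[G\,{}_\lambda\,W]$, $[U\,{}_\lambda\,W]$ and $[W\,{}_\lambda\,W]$ for $W=G_{(0)}U$ from sesquilinearity, skew-symmetry and the Jacobi identity, then expand the brackets of $G^{\pm}$ bilinearly. Steps 1 and 3 are fine; your degree comparison in $\gamma$ does the job of the paper's substitution $\gamma=-\lambda$. Also, $[G\,{}_\lambda\,W]=(\partial+2\lambda)U$ does follow from your identity $[G\,{}_\lambda\,W]+[G\,{}_\gamma\,W]=2(\partial+\lambda+\gamma)U$: set $\lambda=\gamma=0$, then $\gamma=0$.

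The genuine problem is in Step 2. The formula $[W\,{}_\lambda\,G]=-(\partial+2\lambda)U+\partial U=-2\lambda U$ is wrong, and every relation you have not yet verified depends on it. With that value the cross terms give $[G^{\pm}\,{}_\lambda\,G^{\pm}]=\mp\frac14\partial U\neq 0$. The coefficient of $\partial U$ in $[G^+\,{}_\lambda\,G^-]$ comes out $\frac14$ instead of $\frac12$. Worse, the Jacobi computation of $[W\,{}_\lambda\,W]$ that you describe returns a $\gamma$-dependent expression, which is impossible. So the values you call targets are correct, but the proposal does not derive them, and its own intermediate formula contradicts them.

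The repair is exactly the parity bookkeeping you anticipated. For two odd vectors, skew-symmetry carries the sign $(-1)^{1\cdot 1+1}=+1$. Hence
$[W\,{}_\lambda\,G]=[G\,{}_{-\partial-\lambda}\,W]=\partial U+2(-\partial-\lambda)U=-(\partial+2\lambda)U$, so $[W\,{}_\lambda\,G]=-[G\,{}_\lambda\,W]$ and the cross terms cancel exactly. Now apply the Jacobi identity to $[W\,{}_\lambda\,[G\,{}_\gamma\,U]]$, using the sign $(-1)^{p(W)p(G)}=-1$ on the second term, $[W\,{}_\lambda\,U]=G$, and $[\partial U\,{}_\mu\,U]=-\mu[U\,{}_\mu\,U]$. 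This gives
\[
[W\,{}_\lambda\,W]=[[W\,{}_\lambda\,G]\,{}_{\lambda+\gamma}\,U]-[G\,{}_\gamma\,G]=\tfrac{c}{3}(\gamma-\lambda)(\gamma+\lambda)-\bigl(2T+\tfrac{c}{3}\gamma^{2}\bigr)=-\bigl(2T+\tfrac{c}{3}\lambda^{2}\bigr).
\]
With these two values, $[G^{\pm}\,{}_\lambda\,G^{\pm}]=0$ and $[G^+\,{}_\lambda\,G^-]=\frac14\bigl(2(2T+\frac{c}{3}\lambda^2)+2(\partial+2\lambda)U\bigr)=T+\frac12\partial U+\lambda U+\frac{c}{6}\lambda^2$. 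The rest of your argument then goes through exactly as in the paper.
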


\begin{proof}
First, if we denote by ${T}$ the conformal vector corresponding to ${G}$, then by using sesquilinearity, skew-symmetry, and the Jacobi identity of vertex algebras, we obtain the following identities:
\begin{gather}
[{T} {\,} {}_{\lambda} {\,} {G}_{(0)}{U}] = ({\partial} +{\frac{3}{2}}{\lambda}){G}_{(0)}{U},
{\quad}
[{G} {\,} {}_{\lambda} {\,} {G}_{(0)}{U}] = ({\partial} + {2}{\lambda}){U},
\label{eq: equiv def of N=2 sconf AB}
\\
[{U} {\,} {}_{\lambda} {\,} {G}_{(0)}{U}] = -{G},
{\quad}
[{G}_{(0)}{U} {\,} {}_{\lambda} {\,} {G}_{(0)}{U}] = -(2{T} + {\frac{c}{3}}{\lambda}^{2}).
\label{eq: equiv def of N=2 sconf CD}
\end{gather}
For example, to prove the first identity in \eqref{eq: equiv def of N=2 sconf CD}, we note that by assumption $[{G} {\,} {}_{\lambda} {\,} {U}] = {G}_{(0)}{U}$, and by skew-symmetry we also have $[{U} {\,} {}_{\lambda} {\,} {G}] = -{G}_{(0)}{U}$. Hence, from the Jacobi identity, we have:
\begin{align*}
[{U} {\,} {}_{\lambda} {\,} {G}_{(0)}{U}] 
= [{U} {\,} {}_{\lambda} {\,} [{G} {\,} {}_{\gamma} {\,} {U}]] 
= [[{U} {\,} {}_{\lambda} {\,} {G}] {\,} {}_{\lambda + \gamma} {\,} {U}] 
+ [{G} {\,} {}_{\gamma} {\,} [{U} {\,} {}_{\lambda} {\,} {U}]] 
= - [{G}_{(0)}{U} {\,} {}_{\lambda + \gamma} {\,} {U}],
\end{align*}
so that substituting ${\gamma} = -{\lambda}$ yields $[{U} {\,} {}_{\lambda} {\,} {G}_{(0)}{U}] = - ({G}_{(0)}{U})_{(0)}{U}$. That is, since the ${\lambda}$-bracket $[{U} {\,} {}_{\lambda} {\,} {G}_{(0)}{U}]$ contains only the constant term in ${\lambda}$, the assumption ${U}_{(0)}({G}_{(0)}{U}) = - {G}$ implies the desired identity $[{U} {\,} {}_{\lambda} {\,} {G}_{(0)}{U}] = -{G}$. Also, the second equality in equation \eqref{eq: equiv def of N=2 sconf CD} follows from the second identity in \eqref{eq: equiv def of N=2 sconf AB} and the first identity in \eqref{eq: equiv def of N=2 sconf CD}.

\vskip 1mm

If we now consider the odd vectors defined by ${G}^{\pm} := {\frac{1}{2}}({G} \mp {G}_{(0)}{U})$, we see from \eqref{eq: equiv def of N=2 sconf AB} and \eqref{eq: equiv def of N=2 sconf CD} that the vectors ${G}^{\pm}$ satisfy all the conditions of an $N = 2$ superconformal structure in Definition \ref{def: N=2 sconf VA}. For example, we have:
\begin{align*}
[{G}^{+} {\,} {}_{\lambda} {\,} {G}^{-}] 
&= {\frac{1}{4}}[{G} - {G}_{(0)}{U} {\,} {}_{\lambda} {\,} {G} + {G}_{(0)}{U}] 
\\
&= {\frac{1}{4}} {\big(}[{G} {\,} {}_{\lambda} {\,} {G}] - [{G}_{(0)}{U} {\,} {}_{\lambda} {\,} {G}] + [{G} {\,} {}_{\lambda} {\,} {G}_{(0)}{U}] - [{G}_{(0)}{U} {\,} {}_{\lambda} {\,} {G}_{(0)}{U}]{\big)}
\\
&= {T} + {\frac{1}{2}}{\partial}{U} + {\lambda}{U} + {\frac{c}{6}}{\lambda}^{2},
\end{align*}
which follows from the second equalities in \eqref{eq: equiv def of N=2 sconf AB} and \eqref{eq: equiv def of N=2 sconf CD}.
\end{proof}

\vskip 1mm

Conversely, suppose that ${T}, {G}^{\pm}, {U}$ form an $N = 2$ superconformal structure satisfying the conditions of Definition \ref{def: N=2 sconf VA}. Then one can see that ${G} := {G}^{+} + {G}^{-}$ and ${U}$ satisfy the conditions of Lemma \ref{lem: equiv def of N=2 sconf}. Therefore, Lemma \ref{lem: equiv def of N=2 sconf} provides an equivalent definition of an $N = 2$ superconformal structure, and we will also refer to a pair ${G}$ and ${U}$ satisfying the conditions of Lemma \ref{lem: equiv def of N=2 sconf} as an $N = 2$ superconformal structure.

\vskip 1mm

We also introduce an equivalent definition of $N = {\frac{3}{2}}$ superconformal structures in terms of ${G}$ and ${U}$, in the same manner as in Lemma \ref{lem: equiv def of N=2 sconf}.

\vskip 1mm

\begin{proposition}\label{prop: equiv def of N=3/2 sconf}\rm
Let $V$ be a vertex algebra with an $N = 1$ superconformal vector $G$ of central charge $c \in {\mathbb{C}}$. Suppose that $U \in V$ is an even primary vector of conformal weight $1$ satisfying the following OPEs:
\begin{align}
\label{eq: eq1 in equiv def of N=3/2}
[{G} {\,} {}_{\lambda} {\,} {U}] = {G}_{(0)}{U},
{\quad}
[{U} {\,} {}_{\lambda} {\,} {U}] = {\frac{c}{3}}{\lambda},
{\quad}
[{U} {\,} {}_{\lambda} {\,} {G}_{(0)}{U}] = {U}_{(0)}({G}_{(0)}{U}),
\end{align}
and that ${U}_{(0)}({G}_{(0)}{U}) \neq - {G}$. If we set ${\Phi} := - {\frac{1}{8}}({G} + {U}_{(0)}({G}_{(0)}{U}))$, and the following additional OPEs hold:
\begin{align}
\label{eq: eq2 in equiv def of N=3/2}
[{G} {\,} {}_{\lambda} {\,} {\Phi}] = {G}_{(0)}{\Phi},
{\quad}
[{G}_{(0)}{U} {\,} {}_{\lambda} {\,} {\Phi}] = - {G}_{(0)}{\Phi},
{\quad}
[{U} {\,} {}_{\lambda} {\,} {\Phi}] = -3{\Phi},
{\quad}
[{\Phi} {\,} {}_{\lambda} {\,} {\Phi}] = 0,
\end{align}
then ${T}, {G}^{\pm}, {U}, {\Phi}, {F}$ give rise to an $N = {\frac{3}{2}}$ superconformal structure on $V$, where ${T}$ is the conformal vector corresponding to ${G}$, and ${G}^{\pm}$, ${F}$ are given by
\begin{align*}
{G}^{+} := {\frac{1}{2}}({G} - {G}_{(0)}{U}) + {\Phi},
{\quad}
{G}^{-} := {\frac{1}{2}}({G} + {G}_{(0)}{U}) - {2}{\Phi},
{\quad}
{F} := {G}_{(0)}{\Phi}.
\end{align*}
\end{proposition}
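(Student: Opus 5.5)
The plan is to follow the proof of Lemma \ref{lem: equiv def of N=2 sconf} and keep track of the correction term. Set $A := {G}_{(0)}{U}$ and $B := {U}_{(0)}A$, so that $B = -{G} - 8{\Phi}$.

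\emph{Step 1: the $A$-brackets.} The derivations in \eqref{eq: equiv def of N=2 sconf AB} used only $[{G}\,{}_{\lambda}\,{U}] = A$ and the $N=1$ relations. They therefore still give
\begin{align*}
[{T}\,{}_{\lambda}\,A] = ({\partial}+\tfrac{3}{2}{\lambda})A,
\quad
[{G}\,{}_{\lambda}\,A] = ({\partial}+2{\lambda}){U}.
\end{align*}
The Jacobi argument in that proof gives $[{U}\,{}_{\lambda}\,A] = -A_{(0)}{U}$. Together with the third relation in \eqref{eq: eq1 in equiv def of N=3/2}, this yields $[{U}\,{}_{\lambda}\,A] = B = -{G}-8{\Phi}$.

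\emph{Step 2: $[A\,{}_{\lambda}\,A]$.} Write $A = [{G}\,{}_{\gamma}\,{U}]$ and apply the Jacobi identity, using the above together with $[{G}\,{}_{\lambda}\,{\Phi}] = {G}_{(0)}{\Phi} = {F}$. This should give
\begin{align*}
[A\,{}_{\lambda}\,A] = -(2{T}+\tfrac{c}{3}{\lambda}^{2}) - 8{F}.
\end{align*}

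\emph{Step 3: the ${\Phi}$-brackets.} From the hypotheses \eqref{eq: eq2 in equiv def of N=3/2} and skew-symmetry we get $[{\Phi}\,{}_{\lambda}\,{G}] = {F}$ and $[{\Phi}\,{}_{\lambda}\,A] = -{F}$. By skew-symmetry, $[{U}\,{}_{\lambda}\,{\Phi}] = -3{\Phi}$ also gives $[{\Phi}\,{}_{\lambda}\,{U}] = 3{\Phi}$. Primarity of ${\Phi}$ of weight $\frac32$ should follow from ${\Phi} = -\frac18({G}+B)$: the vector ${G}$ is primary, and $B = -[{U}\,{}_{\lambda}\,A]$ is primary because ${U}$ and $A$ are. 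For ${F}$, the Jacobi identity gives the needed brackets: $[{G}\,{}_{\lambda}\,{F}] = [{G}\,{}_{\lambda}\,[{G}\,{}_{\gamma}\,{\Phi}]]$, which yields $({\partial}+3{\lambda}){\Phi}$ type terms from $[{G}\,{}_{\lambda}\,{G}] = 2{T}+\frac{c}{3}{\lambda}^{2}$. Similarly we get $[{U}\,{}_{\lambda}\,{F}] = -2{F}$, $[A\,{}_{\lambda}\,{F}]$, and $[{\Phi}\,{}_{\lambda}\,{F}]=0$ from $[{\Phi}\,{}_{\lambda}\,{\Phi}]=0$. The same method shows ${F}$ is primary of weight $2$.

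\emph{Step 4: assemble the generators.} Substitute ${G}^{+} = \frac12({G}-A)+{\Phi}$ and ${G}^{-} = \frac12({G}+A)-2{\Phi}$ and expand bilinearly. Some of the terms are:
\begin{align*}
[{U}\,{}_{\lambda}\,{G}^{+}] &= \tfrac12(-A) - \tfrac12 B - 3{\Phi} = \tfrac12({G}-A)+{\Phi} = {G}^{+},\\
[{U}\,{}_{\lambda}\,{G}^{-}] &= -{G}^{-},\\
[{G}^{-}\,{}_{\lambda}\,{G}^{-}] &= -2{F},\\
[{G}^{+}\,{}_{\lambda}\,{G}^{-}] &= {T}+\tfrac12{\partial}{U}+{\lambda}{U}+\tfrac{c}{6}{\lambda}^{2}.
\end{align*}
In the last two, the ${F}$-contributions from Step 2 and Step 3 cancel or combine; the coefficients $1$ and $-2$ in front of ${\Phi}$ are what make this bookkeeping consistent. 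The remaining relations are checked the same way: $[{G}^{+}\,{}_{\lambda}\,{G}^{+}]=0$, $[{G}^{+}\,{}_{\lambda}\,{\Phi}]={F}$, and $[{G}^{-}\,{}_{\lambda}\,{F}] = ({\partial}+3{\lambda}){\Phi}$.

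The main obstacle is Step 2 and the ${F}$-brackets in Step 3. There the correction ${\Phi}$ enters through iterated Jacobi identities, and the signs and the exact multiples of ${F}$ must come out right for $[{G}^{\pm}\,{}_{\lambda}\,{G}^{\pm}]$ to match Definition \ref{def: N=3/2 sconf VA}. I would first fix all pairwise brackets among ${G}, A, {U}, {\Phi}, {F}$ in a table, and only then expand the generators.
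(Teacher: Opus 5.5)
Your proposal is correct and follows essentially the same route as the paper's appendix proof: primarity of $G_{(0)}U$ and $U_{(0)}(G_{(0)}U)$, the key identity $[G_{(0)}U\,{}_{\lambda}\,G_{(0)}U] = -(2T+\frac{c}{3}\lambda^{2})-8F$, and then bilinear expansion. The only difference is that you get the $F$-brackets from $F=[G\,{}_{\gamma}\,\Phi]$, whereas the paper uses $F=-\frac{1}{2}[G^{-}\,{}_{\gamma}\,G^{-}]$ and $F=[G^{+}\,{}_{\gamma}\,\Phi]$ once the $G^{\pm}$-brackets are known. One small point: with $F=[G\,{}_{\gamma}\,\Phi]$ the Jacobi identity only gives $[\Phi\,{}_{\lambda}\,F]=[F\,{}_{\lambda+\gamma}\,\Phi]$, not zero. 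So $[\Phi\,{}_{\lambda}\,F]=0$ does not follow from $[\Phi\,{}_{\lambda}\,\Phi]=0$ alone; you need a further skew-symmetry step, which forces this constant to equal its own negative.
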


\begin{proof}
The proof is given in Appendix \ref{section: Pf of Prop in Sec N=2 and N=3/2 sconf VA}.
\end{proof}

\vskip 1mm

Conversely, suppose that ${T}, {G}^{\pm}, {U}, {\Phi}, {F}$ form an $N = {\frac{3}{2}}$ superconformal structure in the sense of Definition \ref{def: N=3/2 sconf VA}. If we set ${G} := {G}^{+} + {G}^{-} + {\Phi}$, then ${G}$ and ${U}$ satisfy all the conditions of Proposition \ref{prop: equiv def of N=3/2 sconf}. Therefore, we can also refer to ${G}$ and ${U}$ satisfying the conditions of Proposition \ref{prop: equiv def of N=3/2 sconf} as an $N = {\frac{3}{2}}$ superconformal structure.

\vskip 4mm
\section{$N = {\frac{3}{2}}$ superconformal structure of super--affine vertex algebras}\label{section: N=3/2 sconf of super affine VA}
\vskip 2mm

Note that in Section 4 of \cite{Getzler95}, the super-affine vertex algebra associated to a Lie algebra that admits a Manin pair structure carries an $N = {\frac{3}{2}}$ superconformal structure. In this section, rather than directly verifying the OPEs of the $N = {\frac{3}{2}}$ superconformal algebra (see Definition \ref{def: N=3/2 sconf VA}) in the case where a Lie superalgebra ${\mathfrak{g}}$ admits a super Manin pair structure, we proceed as follows. We introduce a $U(1)$-current ${U}_{\mathfrak{g}}$ in the super-affine vertex algebra associated to ${\mathfrak{g}}$. Together with the Kac--Todorov superconformal vector ${G}_{\mathfrak{g}}$, we will show that the pair ${G}_{\mathfrak{g}}$ and ${U}_{\mathfrak{g}}$ satisfies all the conditions appearing in Lemma \ref{lem: equiv def of N=2 sconf}, except for the last relation in equation \eqref{eq: equiv def of N=2 sconf}. The remaining relation corresponds to the expression ${U}_{{\mathfrak{g}} {\,} (0)}({G}_{{\mathfrak{g}} {\,} (0)}{U}_{\mathfrak{g}})$, whose result will be computed in Proposition \ref{prop: sec N=3/2 computation of UGU}. Thereafter, we consider a Lie superalgebra ${\mathfrak{g}} \oplus {\mathfrak{h}}$ equipped with the super Manin pair structure given in \eqref{eq: sec sMP decomposition of example sMP2}. Under Assumption \ref{assum: second assumption} and the unimodularity of ${\mathfrak{h}}$, we conclude the discussion of the $N = {\frac{3}{2}}$ superconformal structure on the super-affine vertex algebra associated to ${\mathfrak{g}} \oplus {\mathfrak{h}}$ in Theorem \ref{theorem: N=3/2 of saff}.

\vskip 1mm

We first recall the definition of the super-affine vertex algebra, which is the main object of this section, following \cite{KT85} (see also Example 2.6 and Example 5.9 of \cite{HK07}). Let ${\mathfrak{g}}$ be a finite-dimensional Lie superalgebra with a consistent supersymmetric invariant bilinear form ${\kappa}_{\mathfrak{g}}$. The {\textit{super-affine vertex algebra}} associated to ${\mathfrak{g}}$, denoted by ${V}^{{\kappa}_{\mathfrak{g}}}_{{\,\,} {\text{s}}}({\mathfrak{g}})$, is a vertex algebra generated by ${\overline{\mathfrak{g}}} \cup {\mathfrak{g}}$ with the following OPEs:
\begin{align*}
[{a} {\,} {}_{\lambda} {\,} {b}] = [a, b] + {\kappa}_{\mathfrak{g}}({a}, {b}){\lambda},
{\quad}
[{a} {\,} {}_{\lambda} {\,} {\overline{b}}] = (-1)^{a}{\overline{[a, b]}},
{\quad}
[{\overline{a}} {\,} {}_{\lambda} {\,} {b}] = {\overline{[a, b]}},
{\quad}
[{\overline{a}} {\,} {}_{\lambda} {\,} {\overline{b}}] = (-1)^{a}{\kappa}_{\mathfrak{g}}({a}, {b}),
\end{align*}
for $a, b \in {\mathfrak{g}}$, where ${\overline{\mathfrak{g}}} = {\{} {\overline{a}} {\,} {|} {\,} {a} \in {\mathfrak{g}} {\}}$ is the parity-reversed vector superspace of ${\mathfrak{g}}$, i.e. $p({\overline{a}}) = p(a) + 1$ (see, for example, Proposition 3.9.3 of \cite{LL04} and equation (2.4) of \cite{CL19}, for the definition of generators of a vertex algebra). Note that, from Lemma 3 of \cite{KT85} (see also Proposition B.2 of \cite{Sato22}), the super-affine vertex algebra is isomorphic to the tensor product of the affine vertex algebra and the $bc$--${\beta}{\gamma}$ system.

\vskip 1mm

Suppose that ${\kappa}_{\mathfrak{g}}$ is nondegenerate. Let $\{ {a}_{i} \}_{i \in G}$ be a parity-homogeneous basis of ${\mathfrak{g}}$, and let $\{ {b}_{i} \}_{i \in G}$ be the dual basis with respect to ${\kappa}_{\mathfrak{g}}$, that is, ${\kappa}_{\mathfrak{g}}({a}_{i}, {b}_{j}) = {\delta}_{ij}$ and ${\kappa}_{\mathfrak{g}}({b}_{i}, {a}_{j}) = (-1)^{i}{\delta}_{ij}$ for $i,j \in G$. Then, from \cite{KT85}, the following vector
\begin{align}
{G}_{\mathfrak{g}} = 
{\sum}_{i \in G} (-1)^{i}{\overline{a}}_{i}{b}_{i} 
+ {\frac{1}{3}}{\sum}_{i,j,k \in G} (-1)^{ij + i} {\kappa}_{\mathfrak{g}}({b}_{i}, [{b}_{j}, {b}_{k}]){\overline{a}}_{i}({\overline{a}}_{j}{\overline{a}}_{k}),
\label{eq: KT sconf vector}
\end{align}
is an $N = 1$ superconformal vector of ${V}^{{\kappa}_{\mathfrak{g}}}_{{\,\,} {\text{s}}}({\mathfrak{g}})$, called the {\textit{Kac--Todorov superconformal vector}} (see also Section 3 of \cite{Mohammedi94A}, equation (3) of \cite{Rhedin96}, and Section 2.2 of \cite{FS97A} for the Lie algebra case, and Example 5.9 of \cite{HK07} for the Lie superalgebra case). If we denote by ${T}_{\mathfrak{g}}$ the conformal vector corresponding to ${G}_{\mathfrak{g}}$ in \eqref{eq: KT sconf vector}, the OPEs with the generators of the super-affine vertex algebra can be computed using the non-commutative Wick formula of vertex algebras and the Jacobi identity for Lie superalgebras as follows:
\begin{gather}
[{T}_{\mathfrak{g}} {\,} {}_{\lambda} {\,} {v}] = ({\partial} + {\lambda}){v},
{\quad}
[{T}_{\mathfrak{g}} {\,} {}_{\lambda} {\,} {\overline{v}}] = ({\partial} + {\frac{1}{2}}{\lambda}){\overline{v}},
\label{eq: OPE T with generator of saff}
\\
[{G}_{\mathfrak{g}} {\,} {}_{\lambda} {\,} {v}] = ({\partial} + {\lambda}){\overline{v}},
{\quad}
[{G}_{\mathfrak{g}} {\,} {}_{\lambda} {\,} {\overline{v}}] = {v},
\label{eq: OPE G with generator of saff}
\end{gather}
for ${v} \in {\mathfrak{g}}$ (see Example 2.13 of \cite{HZ10} for the case where ${\mathfrak{g}}$ is a Lie algebra).

\vskip 1mm

From now on, until the end of Proposition \ref{prop: sec N=3/2 computation of UGU}, we assume that ${\mathfrak{g}}$ admits a super Manin pair structure (see Definition \ref{def: sMP}). Following the notation of Section \ref{section: sMP}, let $\{ {y}_{i} \}_{i \in I}$ be a parity-homogeneous basis of ${\mathfrak{a}}^{+}$ and let $\{ {y}^{i} \}_{i \in I}$ be the basis of ${\mathfrak{a}}^{-}$ satisfying ${\kappa}_{\mathfrak{g}}({y}_{i}, {y}^{j}) = {\delta}_{ij}$ for $i,j \in I$.

\vskip 1mm

In order to express the superconformal vector in \eqref{eq: KT sconf vector} in terms of the basis from the super Manin pair structure, consider the dual bases $\{ {y}_{i}, {y}^{i} \}_{i \in I}$ and $\{ {y}^{i}, (-1)^{i}{y}_{i} \}_{i \in I}$ of ${\mathfrak{g}}$. Then the first term in \eqref{eq: KT sconf vector} can be written as:
\begin{align*}
{\sum}_{i \in I} {\big(}(-1)^{i}{y}^{i}{\overline{y}}_{i} + {y}_{i}{\overline{y}}^{i}{\big)},
\end{align*}
from the quasi-commutativity of vertex algebras. For the second cubic term in \eqref{eq: KT sconf vector}, when $({a}_{i}, {a}_{j},{a}_{k}) = ({y}^{i}, {y}^{j}, {y}^{k})$, we have $({b}_{i}, {b}_{j}, {b}_{k}) = ((-1)^{i}{y}_{i}, (-1)^{j}{y}_{j}, (-1)^{k}{y}_{k})$, so that ${\kappa}_{\mathfrak{g}}({b}_{i}, [{b}_{j}, {b}_{k}]) = 0$. Next, in the case where one of ${a}_{i}, {a}_{j},{a}_{k}$ is ${y}_{i}$ and the other two are of the form ${y}^{j}$, the second term in \eqref{eq: KT sconf vector} is given by:
\begin{align}
{\frac{1}{3}}{\,} {\sum}_{i,j,k \in I} {\big(}(-1)^{ij + i} {C}^{i}_{jk} {\overline{y}}_{i}({\overline{y}}^{j}{\overline{y}}^{k}) + (-1)^{ij + j} {C}^{k}_{ij} {\overline{y}}^{i}({\overline{y}}^{j}{\overline{y}}_{k}) + (-1)^{ij + jk + k + 1} {C}^{j}_{ik} {\overline{y}}^{i}({\overline{y}}_{j}{\overline{y}}^{k}){\big)}.
\label{eq: cubic term1 in KT}
\end{align}
From the non-commutative Wick formula, we know that
\begin{align}
[{\overline{y}}^{i} {\,} {}_{\lambda} {\,} {\overline{y}}^{j}{\overline{y}}_{k}] = (-1)^{(i+1)(j+1)}{\delta}_{ik}{\overline{y}}^{j},
\label{eq: lambd brac for 1bar and 2bar}
\end{align}
for $i, j, k \in I$. Hence, using equation \eqref{eq: lambd brac for 1bar and 2bar}, quasi-commutativity and quasi-associativity, we obtain the following relation:
\begin{align}
{\overline{y}}^{i}({\overline{y}}^{j}{\overline{y}}_{k}) 
=&{\,}(-1)^{(i+1)(j+k)}({\overline{y}}^{j}{\overline{y}}_{k}){\overline{y}}^{i} + (-1)^{(i+1)(j+1)}{\delta}_{ik}{\partial}{\overline{y}}^{j}
\notag
\\
=&{\,}(-1)^{ij + jk + ki + 1}({\overline{y}}_{k}{\overline{y}}^{j}){\overline{y}}^{i} + (-1)^{(i+1)(j+1)}{\delta}_{ik}{\partial}{\overline{y}}^{j}
\label{eq: re-ordering of bar cubic1}
\\
=&{\,}(-1)^{ij + jk + ki + 1}{\overline{y}}_{k}({\overline{y}}^{j}{\overline{y}}^{i}),
\notag
\end{align}
so that we also have
\begin{align}
{\overline{y}}^{i}({\overline{y}}_{j}{\overline{y}}^{k}) 
= (-1)^{(j+1)(k+1)}{\overline{y}}^{i}({\overline{y}}^{k}{\overline{y}}_{j}) 
= (-1)^{ij + ik + j + k}{\overline{y}}_{j}({\overline{y}}^{k}{\overline{y}}^{i}),
\label{eq: re-ordering of bar cubic2}
\end{align}
from the above equality \eqref{eq: re-ordering of bar cubic1}. Therefore, using the relations \eqref{eq: re-ordering of bar cubic1} and \eqref{eq: re-ordering of bar cubic2}, the term in \eqref{eq: cubic term1 in KT} can be simplified as follows:
\begin{align*}
{\sum}_{i,j,k \in I} (-1)^{ki} {C}^{i}_{jk}{\overline{y}}_{i}({\overline{y}}^{j}{\overline{y}}^{k}).
\end{align*}
For the case where two of ${a}_{i}$ are of the form ${y}_{i}$ and one is of the form ${y}^{i}$, we similarly obtain from the non-commutative Wick formula that
\begin{align}
[{\overline{y}}_{i} {\,} {}_{\lambda} {\,} {\overline{y}}^{j}{\overline{y}}_{k}] = (-1)^{i}{\delta}_{ij}{\overline{y}}_{k},
\label{eq: lambd brac for sub1bar and 2bar}
\end{align}
and hence, via quasi-commutativity and quasi-associativity, we know that
\begin{align*}
{\overline{y}}_{i}({\overline{y}}^{j}{\overline{y}}_{k}) = (-1)^{(i+1)(j+k)}{\overline{y}}^{j}({\overline{y}}_{k}{\overline{y}}_{i}),
{\quad}
{\overline{y}}_{i}({\overline{y}}_{j}{\overline{y}}^{k}) = (-1)^{ik + jk + i + j}{\overline{y}}^{k}({\overline{y}}_{i}{\overline{y}}_{j}),
\end{align*}
so that the second term in \eqref{eq: KT sconf vector} becomes
\begin{align*}
{\sum}_{i,j,k \in I} (-1)^{ij} {F}^{jk}_{i}{\overline{y}}^{i}({\overline{y}}_{j}{\overline{y}}_{k}).
\end{align*}
Finally, when all ${a}_{i}$ are of the form ${y}_{i}$, the second term in \eqref{eq: KT sconf vector} is given by
\begin{align*}
{\frac{1}{3}}{\,}{\sum}_{i,j,k \in I} (-1)^{ij}{\mathring{F}}^{jk}_{i}{\overline{y}}_{i}({\overline{y}}_{j}{\overline{y}}_{k}),
\end{align*}
and hence the whole expression of the superconformal vector in \eqref{eq: KT sconf vector}, written in the basis from the super Manin pair structure, is:
\begin{align}
\begin{split}
{G}_{\mathfrak{g}} =& {\,} {\sum}_{i \in I} {\big(}(-1)^{i}{y}^{i}{\overline{y}}_{i} + {y}_{i}{\overline{y}}^{i}{\big)} + {\sum}_{i,j,k \in I} (-1)^{ki} {C}^{i}_{jk}{\overline{y}}_{i}({\overline{y}}^{j}{\overline{y}}^{k})
\\
&+ {\sum}_{i,j,k \in I} (-1)^{ij} {F}^{jk}_{i}{\overline{y}}^{i}({\overline{y}}_{j}{\overline{y}}_{k}) + {\frac{1}{3}}{\,}{\sum}_{i,j,k \in I} (-1)^{ij}{\mathring{F}}^{jk}_{i}{\overline{y}}_{i}({\overline{y}}_{j}{\overline{y}}_{k}),
\end{split}
\label{eq: sec N=3/2 KT sconf sMP basis expression}
\end{align}
with central charge ${\frac{3}{2}}{\,}{\text{sdim}}{\,}{\mathfrak{g}} + {3}{\,}{\text{str}}({\mathcal{D}}_{\mathfrak{g}})$.

\vskip 1mm

Moreover, we introduce a $U(1)$-current as follows:
\begin{align}
{U}_{\mathfrak{g}} := {\sum}_{i \in I} (-1)^{i}{\overline{y}}^{i}{\overline{y}}_{i} + {\rho},
\label{eq: def of U_g in saff}
\end{align}
where ${\rho}$ is defined in \eqref{eq: def of the element rho}. Then, the OPEs of ${U}_{\mathfrak{g}}$ with the conformal vector ${T}_{\mathfrak{g}}$ and the superconformal vector ${G}_{\mathfrak{g}}$ are given in the following lemma.

\vskip 1mm

\begin{lemma}\label{lem: sec N=3/2 OPE of T and U, G and U}\rm
In the super-affine vertex algebra ${V}^{{\kappa}_{\mathfrak{g}}}_{{\,\,} {\text{s}}}({\mathfrak{g}})$, the following OPEs hold:
\begin{align*}
[{T}_{\mathfrak{g}} {\,} {}_{\lambda} {\,} {U}_{\mathfrak{g}}] = ({\partial} + {\lambda}){U}_{\mathfrak{g}},
{\quad}
[{G}_{\mathfrak{g}} {\,} {}_{\lambda} {\,} {U}_{\mathfrak{g}}] = {\sum}_{i \in I} (-1)^{i}{y}^{i}{\overline{y}}_{i} - {\sum}_{i \in I} {y}_{i}{\overline{y}}^{i} + {2}{\partial}{\overline{\rho}}.
\end{align*}
\end{lemma}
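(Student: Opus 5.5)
The plan is to avoid expanding $G_{\mathfrak g}$ against $U_{\mathfrak g}$ term by term. Instead, I will use the two OPE families \eqref{eq: OPE T with generator of saff} and \eqref{eq: OPE G with generator of saff} of $T_{\mathfrak g}$ and $G_{\mathfrak g}$ with the generators. Together with the non-commutative Wick formula, these reduce everything to computing $\lambda$-brackets of a conformal or superconformal vector with a quadratic expression in the fermions $\overline{y}_i,\overline{y}^i$ and a linear expression $\rho$ in the currents. I split $U_{\mathfrak g}=U_{\mathrm{f}}+\rho$ with $U_{\mathrm{f}}:=\sum_{i\in I}(-1)^i\overline{y}^i\overline{y}_i$ and treat the two pieces separately.

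For $\rho$, both brackets are immediate from the generator OPEs. We get $[T_{\mathfrak g}\,{}_\lambda\,\rho]=(\partial+\lambda)\rho$ and $[G_{\mathfrak g}\,{}_\lambda\,\rho]=(\partial+\lambda)\overline{\rho}$.

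For $U_{\mathrm f}$, I apply the Wick formula to $[T_{\mathfrak g}\,{}_\lambda\,\overline{y}^i\overline{y}_i]$. Using $[T_{\mathfrak g}\,{}_\lambda\,\overline v]=(\partial+\tfrac12\lambda)\overline v$ this gives
\[
\big((\partial+\tfrac12\lambda)\overline{y}^i\big)\overline{y}_i+\overline{y}^i(\partial+\tfrac12\lambda)\overline{y}_i+\int_0^\lambda[(\partial+\tfrac12\lambda)\overline{y}^i\,{}_\gamma\,\overline{y}_i]\,d\gamma .
\]
Since $[\overline{y}^i\,{}_\gamma\,\overline{y}_i]$ is the constant $(-1)^{i+1}$ (as $\kappa(y^i,y_i)=(-1)^i$ and $p(y^i)=p(y_i)$), the integral contributes a $\lambda^2$ term proportional to $\sum_i(-1)^i(-1)^{i+1}\cdot(\text{const})=-\tfrac14\,\mathrm{sdim}$-type constants. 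This has to be checked to vanish or to cancel against the $\lambda^2$-term of $[T_{\mathfrak g}\,{}_\lambda\,\rho]$. I expect the latter to be zero, and the former to be $\sum_i(-1)^i\cdot(-1)^{i+1}\cdot\frac12(\lambda-\gamma\text{ integrated})$.

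The cleaner route for this check is to use that $T_{\mathfrak g}=\tfrac12 G_{\mathfrak g\,(0)}G_{\mathfrak g}$ and to deduce the $T$-bracket from the $G$-bracket via the Jacobi identity. Concretely, once $[G_{\mathfrak g}\,{}_\lambda\,U_{\mathfrak g}]=W$ is known to be $\lambda$-free, one has
\[
[T_{\mathfrak g}\,{}_\lambda\,U_{\mathfrak g}]=\tfrac12[[G\,{}_\mu\,G]_{\mu=0}\ldots]
\]
and equivalently, from the Jacobi identity, $[T_{\mathfrak g}\,{}_\lambda\,U_{\mathfrak g}]$ equals $\tfrac12\big([G\,{}_\lambda\,W]+\ldots\big)$. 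So the main work is the $G$-bracket.

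For $[G_{\mathfrak g}\,{}_\lambda\,\overline{y}^i\overline{y}_i]$, Wick gives $y^i\overline{y}_i-\overline{y}^i y_i+\int_0^\lambda[y^i\,{}_\gamma\,\overline{y}_i]\,d\gamma$. The integrand is $(-1)^i\overline{[y^i,y_i]}$, which is $\gamma$-independent, so the integral equals $\lambda(-1)^i\overline{[y^i,y_i]}$.

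Next I reorder $\overline{y}^iy_i$ via quasi-commutativity. Up to sign this is $y_i\overline{y}^i$ plus $\partial$ of $[y_i\,{}_\lambda\,\overline{y}^i]=(-1)^i\overline{[y_i,y^i]}$. Summing with signs $(-1)^i$ produces $\sum(-1)^iy^i\overline{y}_i-\sum y_i\overline{y}^i$ plus a $\partial\overline{\rho}$ term, since $\sum_i(-1)^i[y_i,y^i]=\rho$ and $[y^i,y_i]=-(-1)^{i}[y_i,y^i]$. It also produces a $-\lambda\overline{\rho}$ term from the integral.

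Adding the $\rho$-contribution $(\partial+\lambda)\overline\rho$, the $\lambda\overline\rho$ terms cancel and the derivative terms combine to $2\partial\overline\rho$, giving the stated formula. The $T$-bracket then follows either by the direct Wick computation above or via $T=\tfrac12G_{(0)}G$, Jacobi, and \eqref{eq: OPE G with generator of saff} applied to the quadratic terms of $W$.

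The main obstacle is sign bookkeeping: the signs in the reordering $\overline{y}^iy_i\to y_i\overline{y}^i$, and confirming that the $\lambda$-linear pieces cancel exactly. I would also verify that no $\lambda^2$ term survives in the $T$-bracket, which amounts to checking $\kappa_{\mathfrak g}(\rho\text{-type terms})$ contributions vanish by isotropy of $\mathfrak a^\pm$.
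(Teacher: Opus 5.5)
\paragraph{The $G_{\mathfrak g}$-bracket.} Your computation of $[G_{\mathfrak g}\,{}_\lambda\,U_{\mathfrak g}]$ follows the paper's proof step for step. You split off $\rho$, apply the Wick formula with the generator OPEs, cancel the integral term $-\lambda\overline{\rho}$ against $(\partial+\lambda)\overline{\rho}$, and use quasi-commutativity to produce the second $\partial\overline{\rho}$. One sign needs fixing. Since $G_{\mathfrak g}$ is odd and $p(\overline{y}^i)=p(y_i)+1$, the Wick formula gives
\[
[G_{\mathfrak g}\,{}_\lambda\,\overline{y}^i\overline{y}_i]=y^i\overline{y}_i+(-1)^{i+1}\overline{y}^iy_i-\lambda\overline{[y_i,y^i]},
\]
not $y^i\overline{y}_i-\overline{y}^iy_i+\cdots$. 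Only with the factor $(-1)^{i+1}$ does multiplying by $(-1)^i$ give $-\overline{y}^iy_i=-y_i\overline{y}^i+(-1)^i\partial\overline{[y_i,y^i]}$. There is no sign in this reordering, since $(-1)^{(i+1)i}=1$. Summing over $i$ gives $-\sum_i y_i\overline{y}^i+\partial\overline{\rho}$. With your sign the $(-1)^i$ factors would not produce $\partial\overline{\rho}$ or the claimed $-\sum_i y_i\overline{y}^i$.

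\paragraph{The $T_{\mathfrak g}$-bracket.} This is the genuine gap: you leave it open, and your expectation of a surviving $\lambda^2$ term is wrong. Nothing needs to cancel. By sesquilinearity the integral term in the Wick formula is
\[
\int_0^\lambda\big[(\partial+\tfrac12\lambda)\overline{y}^i\,{}_\gamma\,\overline{y}_i\big]\,d\gamma=[\overline{y}^i\,{}_\gamma\,\overline{y}_i]\int_0^\lambda\big(\tfrac12\lambda-\gamma\big)\,d\gamma=0,
\]
because $[\overline{y}^i\,{}_\gamma\,\overline{y}_i]$ is a constant (it equals $1$, not $(-1)^{i+1}$). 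The other two Wick terms give $(\partial+\lambda)(\overline{y}^i\overline{y}_i)$. Moreover $[T_{\mathfrak g}\,{}_\lambda\,\rho]=(\partial+\lambda)\rho$ holds exactly by the generator OPEs, so it could never have absorbed a $\lambda^2$ term. Isotropy of $\mathfrak a^{\pm}$ plays no role. This direct Wick computation is exactly the paper's argument for the first OPE.

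\paragraph{Your fallback route.} Going through $T_{\mathfrak g}=\tfrac12 G_{\mathfrak g\,(0)}G_{\mathfrak g}$ is not cleaner. The Jacobi identity gives $2[T_{\mathfrak g}\,{}_\lambda\,U_{\mathfrak g}]=[G_{\mathfrak g}\,{}_\lambda\,W]+G_{\mathfrak g\,(0)}W$ with $W=G_{\mathfrak g\,(0)}U_{\mathfrak g}$. You would then still have to compute $[G_{\mathfrak g}\,{}_\lambda\,W]$ for the bilinears in $W$, which is more work than the one-line computation above.
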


\begin{proof}
Note that the first OPE follows from the non-commutative Wick formula together with \eqref{eq: OPE T with generator of saff}. For the second OPE, from the non-commutative Wick formula and \eqref{eq: OPE G with generator of saff}, we obtain
\begin{align*}
[{G}_{\mathfrak{g}} {\,} {}_{\lambda} {\,} {\overline{y}}^{i}{\overline{y}}_{i}] = {y}^{i}{\overline{y}}_{i} + (-1)^{i+1}{\overline{y}}^{i}{y}_{i} - {\lambda}{\overline{[{y}_{i}, {y}^{i}]}},
\end{align*}
so that we have the following equalities:
\begin{align*}
[{G}_{\mathfrak{g}} {\,} {}_{\lambda} {\,} {U}_{\mathfrak{g}}] 
&= [{G}_{\mathfrak{g}} {\,} {}_{\lambda} {\,} {\sum}_{i \in I} (-1)^{i}{\overline{y}}^{i}{\overline{y}}_{i}] + [{G}_{\mathfrak{g}} {\,} {}_{\lambda} {\,} {\rho}]
\\
&= {\sum}_{i \in I} (-1)^{i}{y}^{i}{\overline{y}}_{i} - {\sum}_{i \in I} {\overline{y}}^{i}{y}_{i} - {\lambda}{\overline{\rho}} + ({\partial} + {\lambda}){\overline{\rho}}
\\
&= {\sum}_{i \in I} (-1)^{i}{y}^{i}{\overline{y}}_{i} - {\sum}_{i \in I} {\overline{y}}^{i}{y}_{i} + {\partial}{\overline{\rho}}
\\
&= {\sum}_{i \in I} (-1)^{i}{y}^{i}{\overline{y}}_{i} - {\sum}_{i \in I} {y}_{i}{\overline{y}}^{i} + {2}{\partial}{\overline{\rho}},
\end{align*}
where the last equality is obtained from quasi-commutativity.
\end{proof}

\vskip 1mm

We now list the OPEs in the super-affine vertex algebra required to compute ${U}_{{\mathfrak{g}} {\,} (0)}({G}_{{\mathfrak{g}} {\,} (0)}{U}_{\mathfrak{g}})$. The definitions of ${\mathcal{D}}^{j}_{i}$ and ${\mathcal{E}}^{ij}$ are given in \eqref{eq: def of map D and D^ij} and \eqref{eq: def of E^ij}, respectively.

\vskip 1mm

\begin{lemma}\label{lem: key OPE lem1 in sec N=3/2}\rm
In ${V}^{{\kappa}_{\mathfrak{g}}}_{{\,\,} {\text{s}}}({\mathfrak{g}})$, the following OPEs hold:
\begin{align*}
[{y}_{r} {\,} {}_{\lambda} {\,} {\sum}_{i \in I} (-1)^{i}{\overline{y}}^{i}{\overline{y}}_{i}] &= {\sum}_{i, j \in I} (-1)^{j}{F}^{ij}_{r}{\overline{y}}_{j}{\overline{y}}_{i} + {\lambda}{\,}{\sum}_{i \in I} (-1)^{i}{C}^{i}_{ir},
\\
[{y}^{r} {\,} {}_{\lambda} {\,} {\sum}_{i \in I} (-1)^{i}{\overline{y}}^{i}{\overline{y}}_{i}] &= {\sum}_{i, j \in I} (-1)^{i}{\mathring{F}}^{ij}_{r}{\overline{y}}_{j}{\overline{y}}_{i} + {\sum}_{i, j \in I} (-1)^{i+1}{C}^{r}_{ji}{\overline{y}}^{i}{\overline{y}}^{j} + {\lambda}{\,}{\sum}_{i \in I} (-1)^{i}{F}^{ri}_{i},
\\
[{\overline{y}}_{r} {\,} {}_{\lambda} {\,} {\sum}_{i \in I} (-1)^{i}{\overline{y}}^{i}{\overline{y}}_{i}] &= {\overline{y}}_{r},
\\
[{\overline{y}}^{r} {\,} {}_{\lambda} {\,} {\sum}_{i \in I} (-1)^{i}{\overline{y}}^{i}{\overline{y}}_{i}] &= -{\overline{y}}^{r},
\\
[{y}_{r} {\,} {}_{\lambda} {\,} {\rho}] &= {\sum}_{k \in I} {\mathcal{D}}^{k}_{r}{y}_{k} + {\lambda}{\,}{\sum}_{i \in I} (-1)^{i}{C}^{i}_{ri},
\\
[{y}^{r} {\,} {}_{\lambda} {\,} {\rho}] &= - {\sum}_{k \in I} {\mathcal{D}}^{r}_{k}{y}^{k} - {\sum}_{k \in I} {\mathcal{E}}^{rk}{y}_{k} + {\lambda}{\,}{\sum}_{i \in I} (-1)^{i}{F}^{ir}_{i},
\\
[{\overline{y}}_{r} {\,} {}_{\lambda} {\,} {\rho}] &= {\sum}_{k \in I} {\mathcal{D}}^{k}_{r}{\overline{y}}_{k},
\\
[{\overline{y}}^{r} {\,} {}_{\lambda} {\,} {\rho}] &= - {\sum}_{k \in I} {\mathcal{D}}^{r}_{k}{\overline{y}}^{k} - {\sum}_{k \in I} {\mathcal{E}}^{rk}{\overline{y}}_{k},
\end{align*}
for ${r} \in {I}$.
\end{lemma}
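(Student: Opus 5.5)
All eight identities will be obtained by direct computation from the defining $\lambda$-brackets of ${V}^{{\kappa}_{\mathfrak{g}}}_{{\,\,} {\text{s}}}({\mathfrak{g}})$. In those brackets we expand every Lie bracket in the super Manin pair basis using \eqref{eq: sec sMP structure constant notation for sMP} and the mixed bracket formula $[{y}_{i}, {y}^{j}] = {\sum}_{k} {F}^{jk}_{i}{y}_{k} + {\sum}_{k}{C}^{j}_{ki}{y}^{k}$. We also use ${\kappa}_{\mathfrak{g}}({y}_{i}, {y}^{j}) = {\delta}_{ij}$, ${\kappa}_{\mathfrak{g}}({y}^{j}, {y}_{i}) = (-1)^{i}{\delta}_{ij}$ and the isotropy of ${\mathfrak{a}}^{\pm}$. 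The four identities involving ${\sum}_{i}(-1)^{i}{\overline{y}}^{i}{\overline{y}}_{i}$ are handled by the non-commutative Wick formula. The four involving ${\rho}$ are read off from the linear brackets and then rewritten using the definitions of ${\mathcal{D}}^{j}_{i}$ and ${\mathcal{E}}^{ij}$.

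\emph{Step 1 (the bar-quadratic term).} We start with ${\overline{y}}_{r}$ and ${\overline{y}}^{r}$, where $[{\overline{a}} {\,}{}_{\lambda}{\,} {\overline{b}}] = (-1)^{a}{\kappa}_{\mathfrak{g}}(a,b)$ is a constant. The integral term in the Wick formula therefore vanishes, and only one of the two remaining terms survives, by isotropy. This gives $\pm{\overline{y}}_{r}$ and $\pm{\overline{y}}^{r}$; the signs come from $(-1)^{r}{\kappa}_{\mathfrak{g}}$ and the reordering sign $(-1)^{(r+1)(i+1)}$.

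Next we treat ${y}_{r}$ and ${y}^{r}$. Here $[{a} {\,}{}_{\lambda}{\,} {\overline{b}}] = (-1)^{a}{\overline{[a,b]}}$, so the Wick formula yields
\begin{align*}
(-1)^{r}{\overline{[{y}_{r},{y}^{i}]}}{\,}{\overline{y}}_{i} \pm {\overline{y}}^{i}{\,}{\overline{[{y}_{r},{y}_{i}]}} + {\lambda}{\,}(\text{constant}).
\end{align*}
The constant term is the integral $\int_{0}^{\lambda}[{\overline{[y_r,y^i]}} {\,}{}_{\gamma}{\,} {\overline{y}}_{i}]{\,}d\gamma$, and it produces the traces ${\sum}_{i}(-1)^{i}{C}^{i}_{ir}$ and ${\sum}_{i}(-1)^{i}{F}^{ri}_{i}$. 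In the normally ordered part, for ${y}_{r}$ the ${C}$-contributions from the two Wick terms cancel after relabeling indices. What survives is ${\overline{y}}^{k}$-free, namely ${\sum}_{i,j} (-1)^{j}{F}^{ij}_{r}{\overline{y}}_{j}{\overline{y}}_{i}$. Reaching this form requires reordering ${\overline{y}}^{i}{\overline{y}}_{j}$ versus ${\overline{y}}_{j}{\overline{y}}^{i}$ by quasi-commutativity. That reordering produces a $\partial$-free constant because the bracket is a constant, and it cancels against, or combines into, the $\lambda$-coefficient.

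For ${y}^{r}$ the same procedure gives the ${\mathring{F}}$ term from the ${\mathfrak{a}}^{+}$-component of $[{y}^{r},{y}^{i}]$ and the ${C}^{r}_{ji}$ term. The $F$-terms cancel pairwise, which we check using the cyclicity \eqref{eq: cyclic str const MP} and invariance of ${\kappa}_{\mathfrak{g}}$ to move indices.

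\emph{Step 2 (the ${\rho}$ term).} By \eqref{eq: def of the element rho}, ${\rho}$ is linear in the generators, so each bracket is just $[{y}_{r},{\rho}] + {\lambda}{\,}{\kappa}_{\mathfrak{g}}({y}_{r},{\rho})$, and similarly for ${y}^{r}$. The barred versions are $\pm{\overline{[\,\cdot\,,{\rho}]}}$ with no $\lambda$-term, since ${\rho}$ is even. The $\lambda$-coefficients are ${\kappa}_{\mathfrak{g}}({y}_{r},{\rho}) = {\sum}_{i}(-1)^{i}{C}^{i}_{ri}$ and ${\kappa}_{\mathfrak{g}}({y}^{r},{\rho}) = {\sum}_{i}(-1)^{i}{F}^{ir}_{i}$; both are read off directly from the two summands of ${\rho}$, taking care with the sign $(-1)^{r}$ in ${\kappa}_{\mathfrak{g}}({y}^{j},{y}_{r})$. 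For the bracket parts, note that ${\rho}$ is even, so $[{y}_{r},{\rho}] = -{\text{ad}}_{\rho}({y}_{r})$. The ${\mathfrak{a}}^{+}$-component of $-{\text{ad}}_{\rho}({y}_{r})$ is ${\mathcal{D}}({y}_{r})$ by definition. Its ${\mathfrak{a}}^{-}$-component is ${\kappa}_{\mathfrak{g}}$-paired with ${\mathfrak{a}}^{+}$ via $[{y}_{r},{\rho}]$, which we show vanishes using invariance and the fact that ${\mathfrak{a}}^{+}$ is a subalgebra: ${\kappa}_{\mathfrak{g}}([{y}_{r},{\rho}],{y}_{k}) = {\kappa}_{\mathfrak{g}}({\rho},[{y}_{k},{y}_{r}])$ up to sign, which is a trace of $C\cdot C$, and this is zero by \eqref{eq: JI 1 modified for MP}. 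For ${y}^{r}$, the ${\mathfrak{a}}^{+}$-component is $-{\mathcal{E}}({y}^{r})$ by \eqref{eq: map E for MP}. The ${\mathfrak{a}}^{-}$-component is obtained by pairing with ${y}_{k}$ and using invariance to move ${\rho}$ onto ${y}_{k}$, which gives $-{\sum}_{k}{\mathcal{D}}^{r}_{k}{y}^{k}$.

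The main obstacle is the sign and index bookkeeping in Step 1 for ${y}^{r}$. There we must show that the ${F}$-contributions from the two Wick terms cancel exactly, and that the quasi-commutativity corrections combine with the integral term to leave precisely ${\lambda}{\sum}_{i}(-1)^{i}{F}^{ri}_{i}$. We would first verify this in the purely even case and then insert the Koszul signs systematically.
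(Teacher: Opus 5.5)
Your proposal is correct. For the four identities with $\sum_{i}(-1)^{i}\overline{y}^{i}\overline{y}_{i}$ it follows the paper's argument: the non-commutative Wick formula in the Manin-pair basis, with the $\lambda$-term coming from the integral term.

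For the four identities with $\rho$ your route differs somewhat. The paper expands $[y^{r},\rho]$ fully in structure constants and then invokes \eqref{eq: JI 3 modified for MP} to recognize the $\mathfrak{a}^{-}$-coefficients as $-\mathcal{D}^{r}_{k}$. You instead do three things:
\begin{itemize}
\item read the $\mathfrak{a}^{+}$-components directly off the operator definitions of $\mathcal{D}$ and $\mathcal{E}$;
\item kill the $\mathfrak{a}^{-}$-component of $[y_{r},\rho]$ via $\kappa_{\mathfrak{g}}([y_{r},\rho],y_{m})=-\kappa_{\mathfrak{g}}(\rho,[y_{r},y_{m}])$ and \eqref{eq: JI 1 modified for MP}, which is also how the paper handles it;
\item obtain the $\mathfrak{a}^{-}$-component of $[y^{r},\rho]$ by transposition, $\kappa_{\mathfrak{g}}([y^{r},\rho],y_{m})=\kappa_{\mathfrak{g}}(y^{r},[\rho,y_{m}])=-(-1)^{r}\mathcal{D}^{r}_{m}$.
\end{itemize}
This is valid and avoids a second round of Jacobi-identity bookkeeping. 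The identity \eqref{eq: JI 3 modified for MP} then enters only through the already established formula for the matrix of $\mathcal{D}$ in \eqref{eq: def of map D and D^ij}. Matching \eqref{eq: map E for MP} with $\mathcal{E}^{ij}$ uses $\mathring{F}^{mi}_{j}=(-1)^{m}\mathring{F}^{ij}_{m}$, with $m$ even whenever $C^{n}_{mn}\neq 0$.

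One remark in your Step 1 is wrong, though it does not affect the outcome: no quasi-commutativity reordering is needed.
\begin{itemize}
\item Every Wick term already comes out in the target order. When bracketing with $y_{r}$, both $C$-contributions are of the form $\overline{y}^{j}\overline{y}_{i}$ and the $F$-contribution is $\overline{y}_{j}\overline{y}_{i}$; the same holds for $y^{r}$.
\item Swapping two barred generators costs $\int_{-\partial}^{0}[\overline{a}\,{}_{\lambda}\,\overline{b}]\,d\lambda$. This is $\partial$ applied to a multiple of the vacuum, hence zero, not a nonzero constant.
\item No identity between elements of $V$ can feed into the coefficient of $\lambda$. That coefficient comes solely from the integral term of the Wick formula, and equals $\sum_{i}\kappa_{\mathfrak{g}}([y_{r},y^{i}],y_{i})=\sum_{i}(-1)^{i}C^{i}_{ir}$, respectively $\sum_{i}\kappa_{\mathfrak{g}}([y^{r},y^{i}],y_{i})=\sum_{i}(-1)^{i}F^{ri}_{i}$.
\end{itemize}
So the ``main obstacle'' you single out is not there.

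Similarly, the $F$-terms for $y^{r}$ cancel by relabeling alone, using $p(k)=p(r)+p(i)$ whenever $F^{rk}_{i}\neq 0$. Cyclicity \eqref{eq: cyclic str const MP} is needed elsewhere: to rewrite the surviving term $\sum(-1)^{i+r}\mathring{F}^{ri}_{k}\overline{y}_{k}\overline{y}_{i}$ in the stated form, via $\mathring{F}^{ri}_{k}=(-1)^{r}\mathring{F}^{ik}_{r}$.
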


\begin{proof}
From the non-commutative Wick formula, we have:
\begin{align*}
[{y}_{r} {\,} {}_{\lambda} {\,} {\sum}_{i \in I} (-1)^{i}{\overline{y}}^{i}{\overline{y}}_{i}] 
=& {\,} {\sum}_{i \in I} (-1)^{i}[{y}_{r} {\,} {}_{\lambda} {\,} {\overline{y}}^{i}]{\overline{y}}_{i} 
+ {\sum}_{i \in I} (-1)^{ir + i + r}{\overline{y}}^{i}[{y}_{r} {\,} {}_{\lambda} {\,} {\overline{y}}_{i}]
\\
&+ {\sum}_{i \in I} (-1)^{i} \int^{\lambda}_{0} [[{y}_{r} {\,} {}_{\lambda} {\,} {\overline{y}}^{i}] {\,} {}_{\gamma} {\,} {\overline{y}}_{i}] {\,} {d}{\gamma}
\\
=& {\,} {\sum}_{i \in I} (-1)^{i + r}{\overline{[{y}_{r}, {y}^{i}]}}{\overline{y}}_{i} 
+ {\sum}_{i \in I} (-1)^{ir + i}{\overline{y}}^{i}{\overline{[{y}_{r}, {y}_{i}]}}
\\
&+ {\sum}_{i \in I} (-1)^{i + r} \int^{\lambda}_{0} [{\overline{[{y}_{r}, {y}^{i}]}} {\,} {}_{\gamma} {\,} {\overline{y}}_{i}] {\,} {d}{\gamma}
\\
=& {\,} {\sum}_{i \in I} (-1)^{i + r} {\big(}{\sum}_{j \in I} {F}^{ij}_{r}{\overline{y}}_{j} + {\sum}_{j \in I} {C}^{i}_{jr}{\overline{y}}^{j}{\big)}{\overline{y}}_{i}
\\
&+ {\sum}_{i \in I} (-1)^{ir + i}{\overline{y}}^{i} {\big(}{\sum}_{j \in I} {C}^{j}_{ri}{\overline{y}}_{j}{\big)} 
+ {\sum}_{i \in I} \int^{\lambda}_{0} {\kappa}_{\mathfrak{g}}([{y}_{r}, {y}^{i}], {y}_{i}) {\,} {d}{\gamma}
\\
=& {\,} {\sum}_{i, j \in I} (-1)^{j}{F}^{ij}_{r}{\overline{y}}_{j}{\overline{y}}_{i} + {\lambda}{\,}{\sum}_{i \in I} (-1)^{i}{C}^{i}_{ir},
\end{align*}
so that the first OPE follows. The second OPE follows from a similar computation using the non-commutative Wick formula and the equality in \eqref{eq: cyclic str const MP}, while the third and fourth OPEs follow from \eqref{eq: lambd brac for sub1bar and 2bar} and \eqref{eq: lambd brac for 1bar and 2bar}, respectively.

\vskip 1mm

The fifth OPE follows from a direct computation together with \eqref{eq: JI 1 modified for MP} and \eqref{eq: JI 3 modified for MP}, and similarly, the sixth OPE is obtained from the following computation:
\begin{align*}
[{y}^{r} {\,} {}_{\lambda} {\,} {\rho}] 
=& {\,} {\sum}_{i, j \in I} (-1)^{i}{F}^{ij}_{i} [{y}^{r} {\,} {}_{\lambda} {\,} {y}_{j}] + {\sum}_{i, j \in I} (-1)^{i}{C}^{i}_{ji} [{y}^{r} {\,} {}_{\lambda} {\,} {y}^{j}]
\\
=& {\,} {\sum}_{i, j \in I} (-1)^{i + 1}{F}^{ij}_{i} {\big(}{\sum}_{k \in I} {F}^{rk}_{j}{y}_{k} + {\sum}_{k \in I} {C}^{r}_{kj}{y}^{k}{\big)} + {\lambda}{\,}{\sum}_{i \in I} (-1)^{i}{F}^{ir}_{i}
\\
&+ {\sum}_{i, j \in I} (-1)^{i}{C}^{i}_{ji} {\big(}{\sum}_{k \in I} {F}^{rj}_{k}{y}^{k} + {\sum}_{k \in I} {\mathring{F}}^{rj}_{k}{y}_{k}{\big)}
\\
=& {\,} {\sum}_{k \in I} {\big(} {\sum}_{i, j \in I} {\big(}(-1)^{i}{C}^{i}_{ji}{F}^{rj}_{k} + (-1)^{i + 1}{C}^{r}_{kj}{F}^{ij}_{i}{\big)}{\big)}{y}^{k}
\\
&+ {\sum}_{k \in I} {\big(} {\sum}_{i, j \in I} {\big(}(-1)^{i + 1}{F}^{ij}_{i}{F}^{rk}_{j} + (-1)^{i}{C}^{i}_{ji}{\mathring{F}}^{rj}_{k}{\big)}{\big)}{y}_{k} + {\lambda}{\,}{\sum}_{i \in I} (-1)^{i}{F}^{ir}_{i}
\\
=& {\,} {\sum}_{k \in I} {\big(} {\sum}_{i, j \in I} {\big(}(-1)^{i + 1}{C}^{i}_{ji}{F}^{jr}_{k} + (-1)^{i + 1}{C}^{r}_{jk}{F}^{ji}_{i}{\big)}{\big)}{y}^{k}
\\
&+ {\sum}_{k \in I} {\big(} {\sum}_{i, j \in I} {\big(}(-1)^{i + 1}{F}^{ij}_{i}{F}^{rk}_{j} + (-1)^{i + 1}{C}^{i}_{ji}{\mathring{F}}^{rk}_{j}{\big)}{\big)}{y}_{k} + {\lambda}{\,}{\sum}_{i \in I} (-1)^{i}{F}^{ir}_{i}
\\
=& {\,} - {\sum}_{k \in I} {\mathcal{D}}^{r}_{k}{y}^{k} - {\sum}_{k \in I} {\mathcal{E}}^{rk}{y}_{k} + {\lambda}{\,}{\sum}_{i \in I} (-1)^{i}{F}^{ir}_{i},
\end{align*}
where the last equality is a consequence of \eqref{eq: JI 3 modified for MP}. Moreover, the seventh and eighth OPEs follow directly from the fifth and sixth OPEs, respectively.
\end{proof}

\vskip 1mm

From the above Lemma \ref{lem: key OPE lem1 in sec N=3/2}, we can show that ${U}_{\mathfrak{g}}$, defined in \eqref{eq: def of U_g in saff}, is indeed a $U(1)$-current of the super-affine vertex algebra associated with a Lie superalgebra admitting a super Manin pair structure, as follows.

\vskip 1mm

\begin{lemma}\label{lem: sec N=3/2 OPE of U and U}\rm
In ${V}^{{\kappa}_{\mathfrak{g}}}_{{\,\,} {\text{s}}}({\mathfrak{g}})$, the OPE of ${U}_{\mathfrak{g}}$ with itself is given by:
\begin{align*}
[{U}_{\mathfrak{g}} {\,} {}_{\lambda} {\,} {U}_{\mathfrak{g}}] = {\frac{1}{3}}{\big(}{\frac{3}{2}}{\,}{\text{sdim}}{\,}{\mathfrak{g}} + {3}{\,}{\text{str}}({\mathcal{D}}_{\mathfrak{g}}){\big)}{\lambda}.
\end{align*}
\end{lemma}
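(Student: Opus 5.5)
We compute $[{U}_{\mathfrak{g}} {\,} {}_{\lambda} {\,} {U}_{\mathfrak{g}}]$ by splitting ${U}_{\mathfrak{g}} = {W} + {\rho}$ with ${W} := {\sum}_{i \in I} (-1)^{i}{\overline{y}}^{i}{\overline{y}}_{i}$. Bilinearity then gives four pieces:
\begin{align*}
[{U}_{\mathfrak{g}} {\,} {}_{\lambda} {\,} {U}_{\mathfrak{g}}] = [{W} {\,} {}_{\lambda} {\,} {W}] + [{W} {\,} {}_{\lambda} {\,} {\rho}] + [{\rho} {\,} {}_{\lambda} {\,} {W}] + [{\rho} {\,} {}_{\lambda} {\,} {\rho}].
\end{align*}
Each piece will be reduced to the generator-level OPEs of Lemma \ref{lem: key OPE lem1 in sec N=3/2}, using skew-symmetry and the non-commutative Wick formula. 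The aim is to show that all field-valued terms cancel, so that only a multiple of ${\lambda}$ survives.

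First, $[{\rho} {\,} {}_{\lambda} {\,} {\rho}] = [{\rho}, {\rho}] + {\kappa}_{\mathfrak{g}}({\rho}, {\rho}){\lambda} = -{\text{str}}({\mathcal{D}}){\lambda}$, since ${\rho}$ is even and by \eqref{eq: def of map D and D^ij}. Next, for the cross terms, expand ${\rho} = {\sum}_{j}{\rho}^{j}{y}_{j} + {\sum}_{j}{\rho}_{j}{y}^{j}$ with ${\rho}^{j} = {\sum}_{i}(-1)^{i}{F}^{ij}_{i}$ and ${\rho}_{j} = {\sum}_{i}(-1)^{i}{C}^{i}_{ji}$. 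The first two lines of Lemma \ref{lem: key OPE lem1 in sec N=3/2} then give $[{\rho} {\,} {}_{\lambda} {\,} {W}]$ as a quadratic expression in the ${\overline{y}}$'s plus a scalar multiple of ${\lambda}$. The ${\lambda}$-coefficient is ${\sum}_{j}{\rho}^{j}{\sum}_{i}(-1)^{i}{C}^{i}_{ij} + {\sum}_{j}{\rho}_{j}{\sum}_{i}(-1)^{i}{F}^{ji}_{i}$. Up to the sign conventions ${C}^{i}_{ij} = -(-1)^{ij}{C}^{i}_{ji}$ and ${F}^{ji}_{i} = -(-1)^{ij}{F}^{ij}_{i}$, this equals $-2{\sum}_{j}{\rho}^{j}{\rho}_{j}$ (the parity factors should collapse because only even-sum contributions pair up). Skew-symmetry gives $[{W} {\,} {}_{\lambda} {\,} {\rho}]$ as the same quadratic field with a sign flip, up to a total derivative, plus the same ${\lambda}$-coefficient with the opposite sign. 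So the two cross terms should cancel at order ${\lambda}$. Their quadratic parts differ by $\partial$ of something, which must be matched against the remaining terms.

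The main piece is $[{W} {\,} {}_{\lambda} {\,} {W}]$. Expanding by the Wick formula using only the free-fermion-type brackets $[{\overline{y}}_{r} {\,} {}_{\lambda} {\,} {\overline{y}}^{s}]$ is unreliable, because the ${\overline{a}}$'s do not form a free system: $[{\overline{a}} {\,} {}_{\lambda} {\,} {\overline{b}}] = (-1)^{a}{\kappa}_{\mathfrak{g}}(a,b)$ is purely scalar, so in fact they do. Hence $[{W} {\,} {}_{\lambda} {\,} {W}]$ can be computed as in a $bc$--${\beta}{\gamma}$ system, using the third and fourth lines of Lemma \ref{lem: key OPE lem1 in sec N=3/2}. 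These show that $W_{(0)}$ acts as charge $\pm 1$ on ${\overline{y}}_{r}$ and ${\overline{y}}^{r}$. The double contraction then yields ${\lambda}{\sum}_{i}(-1)^{i}\cdot(\pm1)$, which should total $-{\text{sdim}}\,{\mathfrak{a}}^{+}\cdot(-1)\cdot{\lambda} = {\frac{1}{2}}{\text{sdim}}{\,}{\mathfrak{g}}{\,}{\lambda}$, since ${\text{sdim}}{\,}{\mathfrak{g}} = 2{\,}{\text{sdim}}{\,}{\mathfrak{a}}^{+}$. The single-contraction terms cancel by the charge $+1$ and $-1$ symmetry. Summing the pieces gives $({\frac{1}{2}}{\text{sdim}}{\,}{\mathfrak{g}} - {\text{str}}{\mathcal{D}}){\lambda}$, which is off from the claimed $({\frac{1}{2}}{\text{sdim}}{\,}{\mathfrak{g}} + {\text{str}}{\mathcal{D}}){\lambda}$ by $2{\,}{\text{str}}{\mathcal{D}}$. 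The missing contribution must come from the cross terms.

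The main obstacle is therefore the precise evaluation of the cross terms. Here I must be careful: the fermions ${\overline{y}}$ do \emph{not} commute with ${y}$, so $[{\rho} {\,} {}_{\lambda} {\,} {W}]$ has the nontrivial quadratic part from Lemma \ref{lem: key OPE lem1 in sec N=3/2}, and the ${\lambda}$-terms need not cancel. I would write ${\sum}_{j}{\rho}^{j}{\rho}_{j}$-type expressions in terms of ${\mathcal{D}}$ using \eqref{eq: JI 3 modified for MP}, namely ${\kappa}_{\mathfrak{g}}({\rho},{\rho}) = 2{\sum}_{j}{\rho}^{j}{\rho}_{j}$. The cross ${\lambda}$-terms should then add rather than cancel, contributing $+2{\,}{\text{str}}{\mathcal{D}}$. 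For the field parts, I would check via skew-symmetry that the ${\lambda}^{0}$ parts of $[{W} {\,} {}_{\lambda} {\,} {\rho}]$ and $[{\rho} {\,} {}_{\lambda} {\,} {W}]$ cancel exactly. This uses that both are quadratic in ${\overline{y}}$ with ${\partial}$-corrections vanishing by \eqref{eq: JI 1 modified for MP}, and the vanishing of the corresponding ${\mathcal{E}}$- and ${\mathring{F}}$-contributions by antisymmetry under reordering.
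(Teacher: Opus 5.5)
Your decomposition $U_{\mathfrak{g}} = W + \rho$ is the paper's route. So are the values $[W\,{}_{\lambda}\,W] = \frac{1}{2}\,\text{sdim}\,\mathfrak{g}\,\lambda$ and $[\rho\,{}_{\lambda}\,\rho] = -\text{str}(\mathcal{D})\lambda$, and the use of skew-symmetry for the cross terms. The tally you finally commit to is also correct.

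The gap is in the one step that actually produces the $+\text{str}(\mathcal{D})$. You first assert that skew-symmetry gives $[W\,{}_{\lambda}\,\rho]$ the \emph{opposite} $\lambda$-coefficient of $[\rho\,{}_{\lambda}\,W]$, which is false. You then reverse this only because the resulting total disagrees with the statement. The reason you give, the non-commutativity of $\overline{y}$ with $y$, has no bearing on a scalar term. As written, the sign is fitted to the answer rather than derived.

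The missing line is this. Since $W$ and $\rho$ are both even, skew-symmetry reads $[W\,{}_{\lambda}\,\rho] = -[\rho\,{}_{-\partial-\lambda}\,W]$. Write $[\rho\,{}_{\lambda}\,W] = Q + c\lambda$, with $Q = \rho_{(0)}W$ quadratic in the $\overline{y}$'s and $c = \rho_{(1)}W \in \mathbb{C}$. Then $[W\,{}_{\lambda}\,\rho] = -Q + c\lambda$ exactly, because the only possible $\partial$-correction is $\partial c = 0$. Thus the two $\lambda$-coefficients are \emph{equal}.

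Your value $c = -2\sum_{j}\rho^{j}\rho_{j}$ from the first two lines of Lemma~\ref{lem: key OPE lem1 in sec N=3/2} is right. The identity $\kappa_{\mathfrak{g}}(\rho,\rho) = 2\sum_{j}\rho^{j}\rho_{j}$ is immediate from the pairing and the evenness of $\rho$. The Jacobi identity \eqref{eq: JI 3 modified for MP} enters only through $\kappa_{\mathfrak{g}}(\rho,\rho) = -\text{str}(\mathcal{D})$. So each cross term contributes $\text{str}(\mathcal{D})\lambda$, for $2\,\text{str}(\mathcal{D})\lambda$ in total.

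The same identity disposes of the field parts: $Q$ and $-Q$ cancel on the nose. There are no $\partial$-terms to match, and neither \eqref{eq: JI 1 modified for MP} nor any vanishing of the $\mathcal{E}$-terms is needed.

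Your proposed vanishing of the $\mathcal{E}$-terms by antisymmetry would in fact fail. In the paper's form, $Q = \sum_{k,r}(-1)^{r}\mathcal{E}^{rk}\overline{y}_{k}\overline{y}_{r}$. Relabel $k \leftrightarrow r$ using $\mathcal{E}^{rk} = (-1)^{rk+1}\mathcal{E}^{kr}$ and $\overline{y}_{k}\overline{y}_{r} = (-1)^{(k+1)(r+1)}\overline{y}_{r}\overline{y}_{k}$. This returns $+Q$, not $-Q$, so $Q$ does not vanish in general.

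Apart from this, your route differs from the paper's only cosmetically. The paper Wick-expands $[\rho\,{}_{\lambda}\,W]$ in the factors of $W$, using the last two lines of Lemma~\ref{lem: key OPE lem1 in sec N=3/2}. That gives $Q$ and $c = \sum_{r}(-1)^{r}\mathcal{D}^{r}_{r}$ directly. You instead expand $\rho$ in the basis and use the first two lines, at the price of the extra conversion through $\kappa_{\mathfrak{g}}(\rho,\rho)$.
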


\begin{proof}
From Lemma \ref{lem: key OPE lem1 in sec N=3/2}, we know that:
\begin{align*}
[{\sum}_{i \in I} (-1)^{i}{\overline{y}}^{i}{\overline{y}}_{i} {\,} {}_{\lambda} {\,} {\sum}_{r \in I} (-1)^{r}{\overline{y}}^{r}{\overline{y}}_{r}] 
=& {\,} {\sum}_{r \in I} (-1)^{r} [{\sum}_{i \in I} (-1)^{i}{\overline{y}}^{i}{\overline{y}}_{i} {\,} {}_{\lambda} {\,} {\overline{y}}^{r}]{\overline{y}}_{r}
\\
&+ {\sum}_{r \in I} (-1)^{r}{\overline{y}}^{r} [{\sum}_{i \in I} (-1)^{i}{\overline{y}}^{i}{\overline{y}}_{i} {\,} {}_{\lambda} {\,} {\overline{y}}_{r}]
\\
&+ {\sum}_{r \in I} (-1)^{r} \int^{\lambda}_{0} [[{\sum}_{i \in I} (-1)^{i}{\overline{y}}^{i}{\overline{y}}_{i} {\,} {}_{\lambda} {\,} {\overline{y}}^{r}] {\,} {}_{\gamma} {\,} {\overline{y}}_{r}] {\,} {d}{\gamma}
\\
=& {\,} {\sum}_{r \in I} (-1)^{r} {\overline{y}}^{r}{\overline{y}}_{r} 
+ {\sum}_{r \in I} (-1)^{r + 1} {\overline{y}}^{r}{\overline{y}}_{r} 
\\
&+ {\sum}_{r \in I} (-1)^{r} \int^{\lambda}_{0} [{\overline{y}}^{r} {\,} {}_{\gamma} {\,} {\overline{y}}_{r}] {\,} {d}{\gamma}
\\
=& {\,} {\lambda}{\,}{\sum}_{r \in I} (-1)^{r} = {\big(}{\frac{1}{2}}{\,}{\text{sdim}}{\,}{\mathfrak{g}}{\big)}{\lambda},
\end{align*}
and we also have:
\begin{align*}
[{\rho} {\,} {}_{\lambda} {\,} {\sum}_{r \in I} (-1)^{r}{\overline{y}}^{r}{\overline{y}}_{r}] 
=& {\,} {\sum}_{r \in I} (-1)^{r} [{\rho} {\,} {}_{\lambda} {\,} {\overline{y}}^{r}]{\overline{y}}_{r} 
+ {\sum}_{r \in I} (-1)^{r}{\overline{y}}^{r} [{\rho} {\,} {}_{\lambda} {\,} {\overline{y}}_{r}]
\\
&+ {\sum}_{r \in I} (-1)^{r} \int^{\lambda}_{0} [[{\rho} {\,} {}_{\lambda} {\,} {\overline{y}}^{r}] {\,} {}_{\gamma} {\,} {\overline{y}}_{r}] {\,} {d}{\gamma}
\\
=& {\,} {\sum}_{r \in I} (-1)^{r} {\big(}{\sum}_{k \in I} {\mathcal{D}}^{r}_{k}{\overline{y}}^{k} + {\sum}_{k \in I} {\mathcal{E}}^{rk}{\overline{y}}_{k}{\big)}{\overline{y}}_{r}
\\
&+ {\sum}_{r \in I} (-1)^{r + 1}{\overline{y}}^{r} {\big(}{\sum}_{k \in I} {\mathcal{D}}^{k}_{r}{\overline{y}}_{k}{\big)}
\\
&+ {\sum}_{r \in I} (-1)^{r} \int^{\lambda}_{0} [{\sum}_{k \in I} {\mathcal{D}}^{r}_{k}{\overline{y}}^{k} + {\sum}_{k \in I} {\mathcal{E}}^{rk}{\overline{y}}_{k} {\,} {}_{\gamma} {\,} {\overline{y}}_{r}] {\,} {d}{\gamma}
\\
=& {\,} {\sum}_{k, r \in I} (-1)^{r}{\mathcal{E}}^{rk}{\overline{y}}_{k}{\overline{y}}_{r} 
+ {\lambda}{\,}{\sum}_{r \in I} (-1)^{r}{\mathcal{D}}^{r}_{r},
\end{align*}
so that
\begin{align*}
[{\sum}_{r \in I} (-1)^{r}{\overline{y}}^{r}{\overline{y}}_{r} {\,} {}_{\lambda} {\,} {\rho}] = - {\sum}_{k, r \in I} (-1)^{r}{\mathcal{E}}^{rk}{\overline{y}}_{k}{\overline{y}}_{r} 
+ {\lambda}{\,}{\sum}_{r \in I} (-1)^{r}{\mathcal{D}}^{r}_{r},
\end{align*}
from the skew-symmetry of vertex algebras. Moreover, by \eqref{eq: def of map D and D^ij}, we have:
\begin{align*}
[{\rho} {\,} {}_{\lambda} {\,} {\rho}] = [{\rho}, {\rho}] + {\kappa}_{\mathfrak{g}}({\rho}, {\rho}){\lambda} = - {\text{str}}({\mathcal{D}}_{\mathfrak{g}}){\lambda},
\end{align*}
hence, the OPE of ${U}_{\mathfrak{g}}$ with itself is as follows:
\begin{align*}
[{U}_{\mathfrak{g}} {\,} {}_{\lambda} {\,} {U}_{\mathfrak{g}}] = {\frac{1}{3}}{\big(}{\frac{3}{2}}{\,}{\text{sdim}}{\,}{\mathfrak{g}} + {3}{\,}{\text{str}}({\mathcal{D}}_{\mathfrak{g}}){\big)}{\lambda}.
\end{align*}
\end{proof}

\vskip 1mm

Let us also record a result that follows immediately from Lemma \ref{lem: key OPE lem1 in sec N=3/2}, which will be needed later in Section \ref{section: N=2 sconf of SUSY semi inf cohom} to verify that a $U(1)$-current on the supersymmetric semi-infinite complex is indeed BRST-invariant.

\vskip 1mm

\begin{lemma}\label{lem: sec N=3/2 OPE generator and U}\rm
In ${V}^{{\kappa}_{\mathfrak{g}}}_{{\,\,} {\text{s}}}({\mathfrak{g}})$, the following OPEs hold:
\begin{align*}
[{y}_{r} {\,} {}_{\lambda} {\,} {U}_{\mathfrak{g}}] &= {\sum}_{k \in I} {\mathcal{D}}^{k}_{r}{y}_{k} + {\sum}_{i, j \in I} (-1)^{j}{F}^{ij}_{r}{\overline{y}}_{j}{\overline{y}}_{i},
\\
[{y}^{r} {\,} {}_{\lambda} {\,} {U}_{\mathfrak{g}}] &= - {\sum}_{k \in I} {\mathcal{D}}^{r}_{k}{y}^{k} - {\sum}_{k \in I} {\mathcal{E}}^{rk}{y}_{k} + {\sum}_{i, j \in I} (-1)^{i}{\mathring{F}}^{ij}_{r}{\overline{y}}_{j}{\overline{y}}_{i} + {\sum}_{i, j \in I} (-1)^{i+1}{C}^{r}_{ji}{\overline{y}}^{i}{\overline{y}}^{j},
\\
[{\overline{y}}_{r} {\,} {}_{\lambda} {\,} {U}_{\mathfrak{g}}] &= {\overline{y}}_{r} + {\sum}_{k \in I} {\mathcal{D}}^{k}_{r}{\overline{y}}_{k},
\\
[{\overline{y}}^{r} {\,} {}_{\lambda} {\,} {U}_{\mathfrak{g}}] &= - {\overline{y}}^{r} - {\sum}_{k \in I} {\mathcal{D}}^{r}_{k}{\overline{y}}^{k} - {\sum}_{k \in I} {\mathcal{E}}^{rk}{\overline{y}}_{k},
\end{align*}
for ${r} \in {I}$.
\end{lemma}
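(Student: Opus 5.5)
Since $U_{\mathfrak{g}} = \sum_{i \in I}(-1)^{i}\overline{y}^{i}\overline{y}_{i} + \rho$, the $\lambda$-bracket of any generator with $U_{\mathfrak{g}}$ is the sum of its brackets with the two pieces. Both pieces are already computed in Lemma \ref{lem: key OPE lem1 in sec N=3/2}. So the plan is to add the relevant pairs of formulas from that lemma for each generator $y_{r}$, $y^{r}$, $\overline{y}_{r}$, $\overline{y}^{r}$, and simplify.

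\textbf{The generators $\overline{y}_{r}$ and $\overline{y}^{r}$.} Adding the third and seventh OPEs of Lemma \ref{lem: key OPE lem1 in sec N=3/2} gives the stated formula for $[\overline{y}_{r}\,{}_{\lambda}\,U_{\mathfrak{g}}]$ immediately. Adding the fourth and eighth OPEs gives the formula for $[\overline{y}^{r}\,{}_{\lambda}\,U_{\mathfrak{g}}]$. No $\lambda$-terms occur here.

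\textbf{The generators $y_{r}$ and $y^{r}$.} Adding the first and fifth OPEs gives all non-$\lambda$ terms of the claimed $[y_{r}\,{}_{\lambda}\,U_{\mathfrak{g}}]$. Adding the second and sixth gives all non-$\lambda$ terms of the claimed $[y^{r}\,{}_{\lambda}\,U_{\mathfrak{g}}]$. What remains is to show that the $\lambda$-coefficients cancel:
\begin{align*}
\sum_{i \in I}(-1)^{i}\big(C^{i}_{ir} + C^{i}_{ri}\big) = 0,
\qquad
\sum_{i \in I}(-1)^{i}\big(F^{ri}_{i} + F^{ir}_{i}\big) = 0.
\end{align*}
These should follow from the super skew-symmetry of the structure constants. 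From $[y_{i}, y_{r}] = -(-1)^{ir}[y_{r}, y_{i}]$ we get $C^{i}_{ir} = -(-1)^{ir}C^{i}_{ri}$. From $[y^{r}, y^{i}] = -(-1)^{ri}[y^{i}, y^{r}]$ we get $F^{ri}_{i} = -(-1)^{ri}F^{ir}_{i}$. In each sum only terms with $p(y_{i}) = p(y_{r})$ survive, because the coefficient $C^{i}_{ri}$ or $F^{ir}_{i}$ is nonzero only when the parity bookkeeping matches; this needs checking.

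\textbf{The main obstacle} is exactly this sign check. When $p(i) = p(r)$, we have $(-1)^{ir} = (-1)^{r}$, which is not automatically $1$. So for odd $r$ and odd $i$ the terms seem to add rather than cancel, since then $C^{i}_{ir} = C^{i}_{ri}$. I would first verify which index conventions make the cancellation hold, taking into account the convention $\kappa_{\mathfrak{g}}(b_{i}, a_{j}) = (-1)^{i}\delta_{ij}$.

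If the naive argument fails, I would fall back on the more robust route. Compute $[y_{r}\,{}_{\lambda}\,U_{\mathfrak{g}}]$ directly by the non-commutative Wick formula, in the same way as in the proof of the first OPE of Lemma \ref{lem: key OPE lem1 in sec N=3/2}. Then track the $\lambda$-term from the $\rho$-part, $\lambda\,\kappa_{\mathfrak{g}}(y_{r}, \rho)$, and compare it with $\lambda \sum_{i}\kappa_{\mathfrak{g}}([y_{r}, y^{i}], y_{i})$ using invariance of $\kappa_{\mathfrak{g}}$:
\begin{align*}
\kappa_{\mathfrak{g}}(y_{r}, [y_{i}, y^{i}]) = \kappa_{\mathfrak{g}}([y_{r}, y_{i}], y^{i}).
\end{align*}
Both contributions then become multiples of the supertrace of $\mathrm{ad}\,y_{r}$ restricted appropriately, and they appear with opposite signs, so they cancel. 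The same argument handles $y^{r}$, using $\kappa_{\mathfrak{g}}(y^{r}, \rho)$.
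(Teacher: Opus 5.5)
Your route is the paper's: its entire proof is that the lemma follows directly from Lemma~\ref{lem: key OPE lem1 in sec N=3/2}. Your pairing of the non-$\lambda$ terms (first with fifth, second with sixth, third with seventh, fourth with eighth OPE) is correct. The only real content is the cancellation of the $\lambda$-terms, which the paper leaves implicit, and that is where your argument goes wrong.

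The selection rule you propose, $p(y_i)=p(y_r)$, is not the right one, and the ``odd $r$, odd $i$'' obstruction you describe does not occur. The bracket is even, so $[y_r,y_i]$ has parity $p(y_r)+p(y_i)$. Its $y_i$-coefficient $C^i_{ri}$, and likewise $C^i_{ir}$, therefore vanishes unless $p(y_r)=0$. Because $\kappa_{\mathfrak{g}}$ is consistent, $p(y^k)=p(y_k)$. The same reasoning then shows that $F^{ri}_i$ and $F^{ir}_i$, the $y^i$-coefficients of $[y^r,y^i]$ and $[y^i,y^r]$, vanish unless $p(y^r)=0$. Hence $(-1)^{ir}=1$ on every nonzero term. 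Your skew-symmetry relations $C^i_{ir}=-(-1)^{ir}C^i_{ri}$ and $F^{ri}_i=-(-1)^{ri}F^{ir}_i$ then give $C^i_{ir}+C^i_{ri}=0$ and $F^{ri}_i+F^{ir}_i=0$ term by term. For odd $r$ both $\lambda$-coefficients are identically zero, so the coefficients you feared would add are all zero. With this one observation your main argument is complete.

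Your fallback is also sound, but as written it asserts exactly the sign that needs checking. Carried out, it goes as follows.
\begin{itemize}
\item For $y_r$: $\kappa_{\mathfrak{g}}(y_r,\rho)=\sum_i(-1)^i\kappa_{\mathfrak{g}}([y_r,y_i],y^i)=\sum_i(-1)^iC^i_{ri}=\mathrm{str}(\mathrm{ad}\,y_r|_{\mathfrak{a}^+})$. Supersymmetry and invariance of $\kappa_{\mathfrak{g}}$ give $\kappa_{\mathfrak{g}}([y_r,y^i],y_i)=(-1)^{(r+i)i}\kappa_{\mathfrak{g}}([y_i,y_r],y^i)=(-1)^{ri+i}C^i_{ir}=-(-1)^iC^i_{ri}$. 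The Wick term $\lambda\sum_i\kappa_{\mathfrak{g}}([y_r,y^i],y_i)$ is therefore the exact negative of the $\rho$-term.
\item For $y^r$: $\kappa_{\mathfrak{g}}(y^r,\rho)=-\sum_i(-1)^{i+ri}F^{ri}_i$ and $\kappa_{\mathfrak{g}}([y^r,y^i],y_i)=(-1)^{ri+i}F^{ri}_i$, which again cancel.
\end{itemize}
This version never uses the parity selection rule explicitly, which is a small advantage. It is unnecessary, however, once the rule is stated correctly.
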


\begin{proof}
It follows directly from Lemma \ref{lem: key OPE lem1 in sec N=3/2}.
\end{proof}

\vskip 1mm

Therefore, by Lemma \ref{lem: sec N=3/2 OPE of T and U, G and U} and Lemma \ref{lem: sec N=3/2 OPE of U and U}, we have verified that ${G}_{\mathfrak{g}}$ and ${U}_{\mathfrak{g}}$ satisfy all the conditions of Lemma \ref{lem: equiv def of N=2 sconf} except for the last relation in \eqref{eq: equiv def of N=2 sconf}. The remaining computation of ${U}_{{\mathfrak{g}} {\,} (0)}({G}_{{\mathfrak{g}} {\,} (0)}{U}_{\mathfrak{g}})$ is as follows:

\vskip 1mm

\begin{proposition}\label{prop: sec N=3/2 computation of UGU}\rm
In ${V}^{{\kappa}_{\mathfrak{g}}}_{{\,\,} {\text{s}}}({\mathfrak{g}})$, the OPE of ${U}_{\mathfrak{g}}$ and ${G}_{{\mathfrak{g}} {\,} (0)}{U}_{\mathfrak{g}}$ is given by
\begin{align*}
[{U}_{\mathfrak{g}} {\,} {}_{\lambda} {\,} {G}_{{\mathfrak{g}} {\,} (0)}{U}_{\mathfrak{g}}] = - {G}_{\mathfrak{g}} 
+ {\frac{4}{3}}{\,}{\sum}_{i, j, k \in I} (-1)^{ij}{\mathring{F}}^{jk}_{i}{\overline{y}}_{i}({\overline{y}}_{j}{\overline{y}}_{k}) 
- {2}{\,} {\sum}_{i, j \in I} {\mathcal{E}}^{ij}{y}_{i}{\overline{y}}_{j}.
\end{align*}
\end{proposition}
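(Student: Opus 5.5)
We compute $[U_{\mathfrak g}\,{}_\lambda\,W]$ for the explicit vector
\begin{align*}
W := G_{\mathfrak g\,(0)}U_{\mathfrak g} = \sum_{i\in I}(-1)^i y^i\overline{y}_i - \sum_{i\in I} y_i\overline{y}^i + 2\partial\overline{\rho}
\end{align*}
supplied by Lemma \ref{lem: sec N=3/2 OPE of T and U, G and U}. By skew-symmetry it is equivalent, and more convenient, to compute $[W\,{}_\lambda\,U_{\mathfrak g}]$, which we then flip back. Each summand of $W$ is a normally ordered product of one generator with another generator, or a derivative of a generator. The $\lambda$-brackets of every generator $y_r, y^r, \overline{y}_r, \overline{y}^r$ with $U_{\mathfrak g}$ are already listed in Lemma \ref{lem: sec N=3/2 OPE generator and U}. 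The computation therefore reduces to the right version of the non-commutative Wick formula, namely $[ab\,{}_\lambda\,c]$ obtained via skew-symmetry, applied to the three pieces of $W$.

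We proceed piece by piece.
\begin{enumerate}[(i)]
\item The term $2\partial\overline{\rho}$ contributes $-2\lambda[\overline{\rho}\,{}_\lambda\,U_{\mathfrak g}]$. This is computed by summing the third and fourth lines of Lemma \ref{lem: sec N=3/2 OPE generator and U} against the coefficients of $\rho$ in \eqref{eq: def of the element rho}.
\item For $\sum_i(-1)^i y^i\overline{y}_i$, the bracket $[y^i\,{}_\lambda\,U_{\mathfrak g}]$ produces the terms $-\mathcal D^i_k y^k - \mathcal E^{ik}y_k$ together with quadratic fermion terms in $\mathring F$ and $C$. Multiplying these by $\overline{y}_i$ gives the pieces $\mathring F\,\overline{y}\,\overline{y}\,\overline{y}$ and $C\,\overline{y}^i\overline{y}^j\overline{y}_k$. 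The factor $\overline{y}_i$ contributes $\overline{y}_i+\mathcal D^k_i\overline{y}_k$.
\item The piece $-\sum_i y_i\overline{y}^i$ is handled similarly: it yields $F\,\overline{y}\,\overline{y}\,\overline{y}^i$ cubic terms, $\mathcal D$-terms, and $\mathcal E^{ik}y_i\overline{y}_k$-type terms.
\end{enumerate}
We then collect terms. The $\mathcal D$-contributions from (ii) and (iii) should cancel against each other and against the $\mathcal D$-part of (i). Comparing the bilinear terms $y^i\overline{y}_i + y_i\overline{y}^i$ with the quadratic part of $G_{\mathfrak g}$ in \eqref{eq: sec N=3/2 KT sconf sMP basis expression} shows that they assemble, with sign, into $-$ that quadratic part.

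Next we reorder the cubic fermion terms using \eqref{eq: re-ordering of bar cubic1}, \eqref{eq: re-ordering of bar cubic2} and their analogues. These turn them into the normal forms $\overline{y}_i(\overline{y}^j\overline{y}^k)$, $\overline{y}^i(\overline{y}_j\overline{y}_k)$ and $\overline{y}_i(\overline{y}_j\overline{y}_k)$. The $C$- and $F$-cubic terms should reproduce exactly $-1$ times the corresponding terms of $G_{\mathfrak g}$. The $\mathring F$ cubic terms, after the cyclic relation \eqref{eq: cyclic str const MP}, arise with total coefficient $1$. Since $G_{\mathfrak g}$ already contains $\frac13\mathring F$, this gives $-\tfrac13+\tfrac43$, which is how the $\frac43$ in the statement appears. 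The $\mathcal E$ contributions combine via the antisymmetry $\mathcal E^{ij}=(-1)^{ij+1}\mathcal E^{ji}$ and quasi-commutativity into $-2\sum\mathcal E^{ij}y_i\overline{y}_j$. Finally, we check that all $\lambda$- and $\lambda^2$-terms vanish. These come from the integral terms of Wick's formula, from the $\partial\overline{\rho}$ piece, and from the flip of skew-symmetry, and they vanish by \eqref{eq: JI 1 modified for MP}, \eqref{eq: JI 3 modified for MP}, and the trace identities such as \eqref{eq: def of map D and D^ij}. The result is then the constant term alone, matching the claimed formula.

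The main obstacle is the bookkeeping in the collection step. Quasi-commutativity corrections produce $\partial\overline{y}$ terms, notably from reordering $\overline{y}^i(\overline{y}^j\overline{y}_k)$ with $\delta_{ik}$ and from $y_i\overline{y}^i$ versus $\overline{y}^iy_i$. These must cancel precisely against the $2\partial\overline{\rho}$ contributions, and the sign conventions for odd/even indices must be tracked consistently. To keep this manageable, we will first carry out the computation in the purely even (Lie algebra) case, as in \cite{Getzler95}, to fix the coefficients, and then insert the parity signs uniformly.
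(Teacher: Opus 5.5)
Your plan has the same architecture as the paper's proof. You expand $G_{\mathfrak{g}\,(0)}U_{\mathfrak{g}}$ by Lemma~\ref{lem: sec N=3/2 OPE of T and U, G and U}, apply the Wick formula with the generator brackets of Lemma~\ref{lem: sec N=3/2 OPE generator and U}, let the $\mathcal{D}$-terms and trace terms cancel (partly against $2\partial\overline{\rho}$), and reorder the cubic fermions. Your accounting of the quadratic part, of $-\tfrac13+\tfrac43$ for $\mathring{F}$, and of the factor $-2$ on the $\mathcal{E}$-term is right.

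Computing $[W\,{}_\lambda\,U_{\mathfrak g}]$ by the right Wick formula and then flipping is legitimate, but it creates avoidable work. Every generator bracket with $U_{\mathfrak g}$ is $\lambda$-independent, so the paper simply flips those brackets and applies the left Wick formula to $[U_{\mathfrak g}\,{}_\lambda\,\cdot\,]$. This keeps $y_k$ to the left of $\overline{y}_r$, so the $\mathcal{E}$-terms need no reordering.

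The genuine gap is your claim that all $\lambda$-terms vanish by \eqref{eq: JI 1 modified for MP}, \eqref{eq: JI 3 modified for MP} and trace identities. In the Wick formula for the summand $\sum_r(-1)^r y^r\overline{y}_r$, the integral term pairs the $\mathcal{E}$-part $\pm\sum_k\mathcal{E}^{rk}y_k$ of the bracket between $U_{\mathfrak g}$ and $y^r$ with $\overline{y}_r$. Since $y_k$ and $\overline{y}_r$ bracket to $\pm\sum_s C^s_{kr}\overline{y}_s$, this produces (in the paper's normalization)
\[
\lambda\sum_{s\in I}\Big(\sum_{k,r\in I}(-1)^{k+1}\mathcal{E}^{kr}C^{s}_{kr}\Big)\overline{y}_s .
\]
This is not a trace of $C$ or $F$. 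So the modified Jacobi identities and $\kappa_{\mathfrak g}(\rho,\rho)=-\mathrm{str}(\mathcal{D})$ do not touch it. It also cannot cancel against $2\partial\overline{\rho}$, whose contributions involve only $F^{ri}_r$ and $C^{r}_{ir}$.

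In your route the same contraction also appears as a $\partial\overline{y}_s$-term when you reorder $\overline{y}_i\mathcal{E}^{ik}y_k$. That copy is absorbed by the $\partial$ coming from the skew-symmetry flip, but the $\lambda$-term survives.

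You therefore need the separate identity $\sum_{k,r}(-1)^k\mathcal{E}^{kr}C^s_{kr}=0$. The paper proves it in three steps:
\begin{enumerate}
\item Substitute \eqref{eq: sec sMP equality for map E} for $\mathcal{E}^{kr}$. The two halves contribute equally, giving $2\sum(-1)^{mn+k}C^s_{kr}C^k_{mn}\mathring{F}^{mn}_r$.
\item Use the cyclic symmetry \eqref{eq: cyclic str const MP} of $\mathring{F}$ to average over the three cyclic positions.
\item Recognize the resulting coefficient of $\mathring{F}^{mn}_r$ as the left side of the full Jacobi identity \eqref{eq: JI 1 for MP} for $\mathfrak{a}^{+}$.
\end{enumerate}
Conceptually, $\mathcal{E}$ is the Chevalley--Eilenberg boundary of the $3$-tensor $\mathring{F}$ on $\mathfrak{a}^{+}$, and contracting with $C$ applies the boundary once more. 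Without this step your computation does not close.

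A smaller point: the $\mathcal{D}$- and $\mathcal{E}$-parts of $[\overline{\rho}\,{}_\lambda\,U_{\mathfrak g}]$ cancel among themselves, since they assemble into $\overline{[\rho,\rho]}=0$. They do not cancel against pieces (ii)--(iii).
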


\begin{proof}
From Lemma \ref{lem: sec N=3/2 OPE generator and U}, we know that:
\begin{align*}
[{U}_{\mathfrak{g}} {\,} {}_{\lambda} {\,} {\sum}_{r \in I} (-1)^{r}{y}^{r}{\overline{y}}_{r}] 
=& {\,} {\sum}_{r \in I} (-1)^{r} [{U}_{\mathfrak{g}} {\,} {}_{\lambda} {\,} {y}^{r}]{\overline{y}}_{r} 
+ {\sum}_{r \in I} (-1)^{r} {y}^{r}[{U}_{\mathfrak{g}} {\,} {}_{\lambda} {\,} {\overline{y}}_{r}] 
\\
&+ {\sum}_{r \in I} (-1)^{r} \int^{\lambda}_{0} [[{U}_{\mathfrak{g}} {\,} {}_{\lambda} {\,} {y}^{r}] {\,} {}_{\gamma} {\,} {\overline{y}}_{r}] {\,} {d}{\gamma}
\\
=& {\,} {\sum}_{r \in I} (-1)^{r + 1}{y}^{r}{\overline{y}}_{r} 
+ {\sum}_{k, r \in I} (-1)^{r}{\mathcal{E}}^{rk}{y}_{k}{\overline{y}}_{r}
\\
&+ {\sum}_{i, j, r \in I} (-1)^{j + 1}{\mathring{F}}^{ij}_{r}({\overline{y}}_{j}{\overline{y}}_{i}){\overline{y}}_{r} 
+ {\sum}_{i, j, r \in I} (-1)^{j}{C}^{r}_{ji}({\overline{y}}^{i}{\overline{y}}^{j}){\overline{y}}_{r}
\\
&+ {\lambda}{\,} {\sum}_{k, r \in I} (-1)^{k + 1}{\mathcal{D}}^{r}_{k}{\overline{[{y}_{r}, {y}^{k}]}} 
- 2{\lambda}{\,} {\sum}_{i, r \in I} (-1)^{r}{C}^{r}_{ir}{\overline{y}}^{i}
\\
&+ {\lambda}{\,} {\sum}_{s \in I} {\big(}{\sum}_{k, r \in I} (-1)^{k + 1}{\mathcal{E}}^{kr}{C}^{s}_{kr}{\big)}{\overline{y}}_{s}.
\end{align*}
Moreover, since the following equalities hold:
\begin{align*}
{\sum}_{k, r \in I}& (-1)^{k}{\mathcal{E}}^{kr}{C}^{s}_{kr} 
\\
=& {\,} {\sum}_{k, r, m, n \in I} (-1)^{mn + k} {\big(}{C}^{k}_{mn}{\mathring{F}}^{mn}_{r} - (-1)^{m + n}{C}^{r}_{mn}{\mathring{F}}^{mn}_{k}{\big)}{C}^{s}_{kr}
\\
=& {\,} 2{\,}{\sum}_{k, r, m, n \in I} (-1)^{mn + k}{C}^{s}_{kr}{C}^{k}_{mn}{\mathring{F}}^{mn}_{r}
\\
=& {\,} {\frac{2}{3}}{\,}{\sum}_{k, r, m, n \in I} (-1)^{mn + k}{C}^{s}_{kr}{C}^{k}_{mn}{\mathring{F}}^{mn}_{r} 
+ {\frac{2}{3}}{\,}{\sum}_{k, r, m, n \in I} (-1)^{nr + k} {C}^{s}_{km}{C}^{k}_{nr}{\mathring{F}}^{nr}_{m}
\\
&+ {\frac{2}{3}}{\,}{\sum}_{k, r, m, n \in I} (-1)^{mr + k} {C}^{s}_{kn}{C}^{k}_{rm}{\mathring{F}}^{rm}_{n}
\\
=& {\,} {\frac{2}{3}}{\,}{\sum}_{k, r, m, n \in I} {\big(}(-1)^{mn + r}{C}^{s}_{kr}{C}^{k}_{mn} + (-1)^{nr}{C}^{s}_{km}{C}^{k}_{nr} + (-1)^{mr + n + r}{C}^{s}_{kn}{C}^{k}_{rm}{\big)}{\mathring{F}}^{mn}_{r}
\\
=& {\,} 0,
\end{align*}
where the first equality follows from \eqref{eq: sec sMP equality for map E}, and the equality before the last and the final one follow from \eqref{eq: cyclic str const MP} and \eqref{eq: JI 1 for MP}, respectively, we have:
\begin{align*}
[{U}_{\mathfrak{g}} {\,} {}_{\lambda} {\,} {\sum}_{r \in I} (-1)^{r}{y}^{r}{\overline{y}}_{r}] 
=& {\,} {\sum}_{r \in I} (-1)^{r + 1}{y}^{r}{\overline{y}}_{r} 
+ {\sum}_{k, r \in I} (-1)^{r}{\mathcal{E}}^{rk}{y}_{k}{\overline{y}}_{r}
\\
&+ {\sum}_{i, j, r \in I} (-1)^{j + 1}{\mathring{F}}^{ij}_{r}({\overline{y}}_{j}{\overline{y}}_{i}){\overline{y}}_{r} 
+ {\sum}_{i, j, r \in I} (-1)^{j}{C}^{r}_{ji}({\overline{y}}^{i}{\overline{y}}^{j}){\overline{y}}_{r}
\\
&+ {\lambda}{\,} {\sum}_{k, r \in I} (-1)^{k + 1}{\mathcal{D}}^{r}_{k}{\overline{[{y}_{r}, {y}^{k}]}} 
- 2{\lambda}{\,} {\sum}_{i, r \in I} (-1)^{r}{C}^{r}_{ir}{\overline{y}}^{i}.
\end{align*}

\vskip 1mm

Likewise, by Lemma \ref{lem: sec N=3/2 OPE generator and U}, we have:
\begin{align*}
[{U}_{\mathfrak{g}} {\,} {}_{\lambda} {\,} {\sum}_{r \in I} {y}_{r}{\overline{y}}^{r}] 
=& {\,} {\sum}_{r \in I} {y}_{r}{\overline{y}}^{r} + {\sum}_{k, r \in I} {\mathcal{E}}^{rk}{y}_{r}{\overline{y}}_{k} + {\sum}_{i, j, r \in I} (-1)^{j + 1}{F}^{ij}_{r}({\overline{y}}_{j}{\overline{y}}_{i}){\overline{y}}^{r}
\\
&+ {\lambda}{\,} {\sum}_{k, r \in I} (-1)^{k + 1}{\mathcal{D}}^{k}_{r}{\overline{[{y}_{k}, {y}^{r}]}} 
- {2}{\lambda}{\,} {\sum}_{i, r \in I} (-1)^{r}{F}^{ri}_{r}{\overline{y}}_{i},
\end{align*}
and by Lemma \ref{lem: key OPE lem1 in sec N=3/2}, together with the fact that $[{\rho} {\,} {}_{\lambda} {\,} {\overline{\rho}}] = 0$, we also have:
\begin{align*}
[{U}_{\mathfrak{g}} {\,} {}_{\lambda} {\,} {2}{\partial}{\overline{\rho}}] 
= {2}({\partial} + {\lambda}){\,}{\sum}_{r, s \in I} (-1)^{s + 1}{F}^{sr}_{s}{\overline{y}}_{r} + {2}({\partial} + {\lambda}){\,}{\sum}_{r, s \in I} (-1)^{s}{C}^{s}_{rs}{\overline{y}}^{r}.
\end{align*}

\vskip 1mm

Therefore, we have the following computation of OPE:
\begin{align*}
[{U}_{\mathfrak{g}} {\,} {}_{\lambda} {\,} {G}_{{\mathfrak{g}} {\,} (0)}{U}_{\mathfrak{g}}] 
=& {\,} {\sum}_{r \in I} (-1)^{r + 1}{y}^{r}{\overline{y}}_{r} 
+ {\sum}_{k, r \in I} (-1)^{r}{\mathcal{E}}^{rk}{y}_{k}{\overline{y}}_{r}
\\
&+ {\sum}_{i, j, r \in I} (-1)^{j + 1}{\mathring{F}}^{ij}_{r}({\overline{y}}_{j}{\overline{y}}_{i}){\overline{y}}_{r} 
+ {\sum}_{i, j, r \in I} (-1)^{j}{C}^{r}_{ji}({\overline{y}}^{i}{\overline{y}}^{j}){\overline{y}}_{r}
\\
&+ {\lambda}{\,} {\sum}_{k, r \in I} (-1)^{k + 1}{\mathcal{D}}^{r}_{k}{\overline{[{y}_{r}, {y}^{k}]}} 
- 2{\lambda}{\,} {\sum}_{i, r \in I} (-1)^{r}{C}^{r}_{ir}{\overline{y}}^{i}
\\
&- {\sum}_{r \in I} {y}_{r}{\overline{y}}^{r} - {\sum}_{k, r \in I} {\mathcal{E}}^{rk}{y}_{r}{\overline{y}}_{k} - {\sum}_{i, j, r \in I} (-1)^{j + 1}{F}^{ij}_{r}({\overline{y}}_{j}{\overline{y}}_{i}){\overline{y}}^{r}
\\
&- {\lambda}{\,} {\sum}_{k, r \in I} (-1)^{k + 1}{\mathcal{D}}^{k}_{r}{\overline{[{y}_{k}, {y}^{r}]}} 
+ {2}{\lambda}{\,} {\sum}_{i, r \in I} (-1)^{r}{F}^{ri}_{r}{\overline{y}}_{i}
\\
&+ {2}({\partial} + {\lambda}){\,}{\sum}_{r, s \in I} (-1)^{s + 1}{F}^{sr}_{s}{\overline{y}}_{r} + {2}({\partial} + {\lambda}){\,}{\sum}_{r, s \in I} (-1)^{s}{C}^{s}_{rs}{\overline{y}}^{r}
\\
=& {\,} - {\sum}_{r \in I} (-1)^{r}{y}^{r}{\overline{y}}_{r} - {\sum}_{r \in I} {y}_{r}{\overline{y}}^{r} + {2}{\,} {\sum}_{k, r \in I} (-1)^{r}{\mathcal{E}}^{rk}{y}_{k}{\overline{y}}_{r}
\\
&+ {\sum}_{i, j, r \in I} (-1)^{j}{C}^{r}_{ji}({\overline{y}}^{i}{\overline{y}}^{j}){\overline{y}}_{r} + {\sum}_{i, j, r \in I} (-1)^{j}{F}^{ij}_{r}({\overline{y}}_{j}{\overline{y}}_{i}){\overline{y}}^{r}
\\
&+ {\sum}_{i, j, r \in I} (-1)^{j + 1}{\mathring{F}}^{ij}_{r}({\overline{y}}_{j}{\overline{y}}_{i}){\overline{y}}_{r}
\\
&+ {2}{\,}{\sum}_{r, s \in I} (-1)^{s + 1}{F}^{sr}_{s}{\partial}{\overline{y}}_{r} + {2}{\,}{\sum}_{r, s \in I} (-1)^{s}{C}^{s}_{rs}{\partial}{\overline{y}}^{r},
\end{align*}
since ${G}_{{\mathfrak{g}} {\,} (0)}{U}_{\mathfrak{g}} = {\sum}_{i \in I} (-1)^{i}{y}^{i}{\overline{y}}_{i} - {\sum}_{i \in I} {y}_{i}{\overline{y}}^{i} + {2}{\partial}{\overline{\rho}}$ by Lemma \ref{lem: sec N=3/2 OPE of T and U, G and U}.

\vskip 1mm

In order to simplify the result of the above OPE computation, we use the following identities derived from quasi-commutativity:
\begin{align*}
{\overline{y}}_{r}({\overline{y}}^{i}{\overline{y}}^{j}) &= (-1)^{(r+1)(i+j)}({\overline{y}}^{i}{\overline{y}}^{j}){\overline{y}}_{r} + (-1)^{r}{\delta}_{ir}{\partial}{\overline{y}}^{j} + (-1)^{ir + i + 1}{\delta}_{jr}{\partial}{\overline{y}}^{i},
\\
{\overline{y}}^{r}({\overline{y}}_{j}{\overline{y}}_{i}) &= (-1)^{(r+1)(i+j)}({\overline{y}}_{j}{\overline{y}}_{i}){\overline{y}}^{r} + {\delta}_{jr}{\partial}{\overline{y}}_{i} + (-1)^{(r+1)(j+1)}{\delta}_{ir}{\partial}{\overline{y}}_{j},
\\
({\overline{y}}_{j}{\overline{y}}_{i}){\overline{y}}_{r} &= (-1)^{(r+1)(i+j)}{\overline{y}}_{r}({\overline{y}}_{j}{\overline{y}}_{i}),
\end{align*}
from which it follows that the OPE of ${U}_{\mathfrak{g}}$ and ${G}_{{\mathfrak{g}} {\,} (0)}{U}_{\mathfrak{g}}$ is given by:
\begin{align*}
[{U}_{\mathfrak{g}} {\,} {}_{\lambda} {\,} {G}_{{\mathfrak{g}} {\,} (0)}{U}_{\mathfrak{g}}] 
=& {\,} - {\sum}_{r \in I} (-1)^{r}{y}^{r}{\overline{y}}_{r} - {\sum}_{r \in I} {y}_{r}{\overline{y}}^{r} + {2}{\,} {\sum}_{k, r \in I} (-1)^{r}{\mathcal{E}}^{rk}{y}_{k}{\overline{y}}_{r}
\\
& {\,} - {\sum}_{i, j, r \in I} (-1)^{jr}{C}^{r}_{ij}{\overline{y}}_{r}({\overline{y}}^{i}{\overline{y}}^{j}) 
- {\sum}_{i, j, r \in I} (-1)^{ri}{F}^{ij}_{r}{\overline{y}}^{r}({\overline{y}}_{i}{\overline{y}}_{j})
\\
& {\,} + {\sum}_{i, j, r \in I} (-1)^{ri}{\mathring{F}}^{ij}_{r}{\overline{y}}_{r}({\overline{y}}_{i}{\overline{y}}_{j})
\\
=& {\,} - {G}_{\mathfrak{g}} 
+ {\frac{4}{3}}{\,}{\sum}_{i, j, r \in I} (-1)^{ri}{\mathring{F}}^{ij}_{r}{\overline{y}}_{r}({\overline{y}}_{i}{\overline{y}}_{j}) 
+ {2}{\,} {\sum}_{k, r \in I} (-1)^{r}{\mathcal{E}}^{rk}{y}_{k}{\overline{y}}_{r},
\end{align*}
where the last equality is obtained from \eqref{eq: sec N=3/2 KT sconf sMP basis expression}.
\end{proof}

\vskip 1mm

Note that if a Lie superalgebra ${\mathfrak{g}}$ admits a super Manin triple structure, then the structure constants ${\mathring{F}}^{jk}_{i}$ all vanish, and hence ${\mathcal{E}}^{ij}$ also vanish by \eqref{eq: sec sMP equality for map E}. Therefore, the above ${G}_{\mathfrak{g}}$ and ${U}_{\mathfrak{g}}$ give rise to an $N = 2$ superconformal structure of the super-affine vertex algebra, as given in Proposition 6.4.1 of \cite{Creutzig09} and Section 4 of \cite{CRS09}. The $N = 2$ superconformal structures in the case where a Lie algebra ${\mathfrak{g}}$ admits a Manin triple structure can be found in \cite{Parkhomenko92} (see also \cite{Mohammedi94B} and \cite{Figueroa-O'Farrill96}).

\vskip 1mm

From now on, we study the $N = {\frac{3}{2}}$ superconformal structure on the super-affine vertex algebra ${V}^{{\kappa}_{{\mathfrak{g}} \oplus {\mathfrak{h}}}}_{{\,\,} {\text{s}}}({\mathfrak{g}} \oplus {\mathfrak{h}})$ associated to ${\mathfrak{g}} \oplus {\mathfrak{h}}$ admitting the super Manin pair structure given in \eqref{eq: sec sMP decomposition of example sMP2}, under Assumption \ref{assum: second assumption} and the unimodularity of ${\mathfrak{h}}$. First, consider ${G}_{{\mathfrak{g}} \oplus {\mathfrak{h}}}$ and ${U}_{{\mathfrak{g}} \oplus {\mathfrak{h}}}$ given in \eqref{eq: sec N=3/2 KT sconf sMP basis expression} and \eqref{eq: def of U_g in saff}, respectively. Then all the OPEs in \eqref{eq: eq1 in equiv def of N=3/2} hold by Lemma \ref{lem: sec N=3/2 OPE of T and U, G and U}, Lemma \ref{lem: sec N=3/2 OPE of U and U} and Proposition \ref{prop: sec N=3/2 computation of UGU}. Here, by Lemma \ref{lem: sec sMP F and ring F vanish} and Lemma \ref{lem: sec sMP coeff E vanish}, the OPE in Proposition \ref{prop: sec N=3/2 computation of UGU} can be rewritten as follows:
\begin{align*}
[{U}_{{\mathfrak{g}} \oplus {\mathfrak{h}}} {\,} {}_{\lambda} {\,} {G}_{{{\mathfrak{g}} \oplus {\mathfrak{h}}} {\,} (0)}{U}_{{\mathfrak{g}} \oplus {\mathfrak{h}}}] = - {G}_{{\mathfrak{g}} \oplus {\mathfrak{h}}} - {8}{\Phi}_{{\mathfrak{g}} \oplus {\mathfrak{h}}},
\end{align*}
where ${\Phi}_{{\mathfrak{g}} \oplus {\mathfrak{h}}}$ is given by:
\begin{align*}
{\Phi}_{{\mathfrak{g}} \oplus {\mathfrak{h}}} = - {\frac{1}{6}}{\,}{\sum}_{i, j, k \in H} (-1)^{ij}{\mathring{F}}^{jk}_{i}{\overline{y}}_{i}({\overline{y}}_{j}{\overline{y}}_{k}).
\end{align*}
If we now verify the remaining OPEs \eqref{eq: eq2 in equiv def of N=3/2} in Proposition \ref{prop: equiv def of N=3/2 sconf}, we obtain the following theorem.

\vskip 1mm

\begin{theorem}\label{theorem: N=3/2 of saff}\rm
Under Assumption \ref{assum: second assumption}, suppose that ${\mathfrak{h}}$ is unimodular. Then ${G}_{{\mathfrak{g}} \oplus {\mathfrak{h}}}$ and ${U}_{{\mathfrak{g}} \oplus {\mathfrak{h}}}$ form an $N = {\frac{3}{2}}$ superconformal structure on the super-affine vertex algebra ${V}^{{\kappa}_{{\mathfrak{g}} \oplus {\mathfrak{h}}}}_{{\,\,} {\text{s}}}({\mathfrak{g}} \oplus {\mathfrak{h}})$ with central charge ${\frac{3}{2}}{\,}{\text{sdim}}{\,}{\mathfrak{g}} + {\frac{3}{2}}{\,}{\text{sdim}}{\,}{\mathfrak{h}} + {3}{\,}{\text{str}}({\mathcal{D}}_{{\mathfrak{g}} \oplus {\mathfrak{h}}})$.
\end{theorem}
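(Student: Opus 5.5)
We will apply Proposition \ref{prop: equiv def of N=3/2 sconf} with $G = {G}_{{\mathfrak{g}} \oplus {\mathfrak{h}}}$ and $U = {U}_{{\mathfrak{g}} \oplus {\mathfrak{h}}}$. Three inputs are already in place. ${G}_{{\mathfrak{g}} \oplus {\mathfrak{h}}}$ is the Kac--Todorov $N=1$ vector, with central charge ${\frac{3}{2}}{\,}{\text{sdim}}({\mathfrak{g}}\oplus{\mathfrak{h}}) + 3{\,}{\text{str}}({\mathcal{D}}_{{\mathfrak{g}} \oplus {\mathfrak{h}}})$; since ${\text{sdim}}({\mathfrak{g}}\oplus{\mathfrak{h}}) = {\text{sdim}}{\,}{\mathfrak{g}} + {\text{sdim}}{\,}{\mathfrak{h}}$, this gives the stated value. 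Primarity and weight $1$ of $U$ come from Lemma \ref{lem: sec N=3/2 OPE of T and U, G and U}, and $[U{}_\lambda U]=\frac c3\lambda$ is Lemma \ref{lem: sec N=3/2 OPE of U and U}. For $[G{}_\lambda U]=G_{(0)}U$, the $\lambda$-bracket computed in Lemma \ref{lem: sec N=3/2 OPE of T and U, G and U} has no $\lambda$-term. For the third relation in \eqref{eq: eq1 in equiv def of N=3/2}, Proposition \ref{prop: sec N=3/2 computation of UGU} shows that $[U{}_\lambda G_{(0)}U]$ has no positive powers of $\lambda$. By Lemmas \ref{lem: sec sMP F and ring F vanish} and \ref{lem: sec sMP coeff E vanish} it equals $-G-8\Phi$, where $\Phi=-\frac16\sum_{i,j,k\in H}(-1)^{ij}\mathring F^{jk}_i\overline y_i(\overline y_j\overline y_k)$. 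This matches the definition $\Phi=-\frac18(G+U_{(0)}(G_{(0)}U))$. If $\mathfrak h$ is abelian, then $\Phi=0$ and we get an $N=2$ structure, which is a degenerate case of the statement.

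It remains to verify \eqref{eq: eq2 in equiv def of N=3/2}. The key structural fact is that $\Phi$ only involves $\overline y_i$ with $i\in H$, and by \eqref{eq: sec sMP bracket in a+} we have $\mathring F^{jk}_i=\frac14\kappa(\ldots)$, essentially the structure constants of $\mathfrak h$. Since the $\overline y_i$ ($i\in H$) span an isotropic subspace, they have mutually trivial $\lambda$-brackets. This gives $[\Phi{}_\lambda\Phi]=0$ immediately.

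For $[U{}_\lambda\Phi]=-3\Phi$, we use Lemma \ref{lem: sec N=3/2 OPE generator and U}: $[\overline y_r{}_\lambda U]=\overline y_r+\sum_k\mathcal D^k_r\overline y_k$. We then use $[U{}_\lambda \overline y_r]=-\overline y_r-\sum_k \mathcal D^k_r\overline y_k$ together with the Wick formula. The integral terms vanish by isotropy, so each of the three factors contributes $-1$. The $\mathcal D$-terms need $\sum_{i}\mathcal D$-contractions of $\mathring F$ to cancel, and I expect $\mathcal D^k_r=0$ for $r\in H$. Indeed, $\mathcal D^j_i=\sum C^j_{mn}F^{mn}_i$ with $F^{mn}_w=0$ for $w\in H$ (Lemma \ref{lem: sec sMP F and ring F vanish}), so $\mathcal D^k_r=0$ whenever $r\in H$.

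For $[G{}_\lambda\Phi]=G_{(0)}\Phi$ we use \eqref{eq: OPE G with generator of saff}: $[G{}_\lambda\overline v]=v$, which has no $\lambda$-term. The $\lambda$-terms in the Wick expansion come from $\int[v{}_\gamma \overline y_j\overline y_k]$. These involve $\overline{[y_i,y_j]}$ for $i,j\in H$, contracted against $\overline y_k$, hence vanish by isotropy of $\mathfrak a^+$. The main obstacle is the bracket $[G_{(0)}U{}_\lambda\Phi]=-G_{(0)}\Phi$. Here $G_{(0)}U=\sum(-1)^iy^i\overline y_i-\sum y_i\overline y^i+2\partial\overline\rho$, and $\Phi$ is cubic in $\overline y_H$. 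I would compute $[\overline y_r{}_\lambda G_{(0)}U]$ and $[y_r{}_\lambda\ldots]$ generator-wise and then apply skew-symmetry and Wick. The term $-\sum y_i\overline y^i$ produces $-y_r$ for each factor. The term $\sum(-1)^iy^i\overline y_i$ gives brackets $[y^i{}_\lambda\overline y_r]$, which lie in $\overline{\mathfrak a^+}\oplus\overline{\mathfrak a^-}$ via $F$ and $C$. The $\overline\rho$ term contributes via $\mathcal D$ and $\mathcal E$, which vanish on $H$ by Lemma \ref{lem: sec sMP coeff E vanish}. After reordering, one should obtain $-\sum(-1)^{ij}\mathring F^{jk}_i(y_i\overline y_j\overline y_k+\cdots)$ up to normal-ordering corrections, which must be matched with $-G_{(0)}\Phi$. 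The $\lambda$-terms must also be shown to cancel, using the Jacobi identity \eqref{eq: JI 1 for MP} restricted to $\mathfrak h$ together with the cyclicity \eqref{eq: cyclic str const MP}.
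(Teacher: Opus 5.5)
Your framework is the paper's: verify the hypotheses of Proposition~\ref{prop: equiv def of N=3/2 sconf} for $G=G_{\mathfrak g\oplus\mathfrak h}$ and $U=U_{\mathfrak g\oplus\mathfrak h}$. Your checks of \eqref{eq: eq1 in equiv def of N=3/2}, of $[\Phi\,{}_\lambda\,\Phi]=0$, and of $[U\,{}_\lambda\,\Phi]=-3\Phi$ (via $\mathcal D^k_r=0$ for $r\in H$) are fine. There is a genuine gap in the remaining two relations.

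\textbf{First gap: $[G\,{}_\lambda\,\Phi]$.} The $\lambda$-terms here do not vanish by isotropy. The Wick formula produces $\int_0^\lambda[y_i\,{}_\gamma\,\overline y_j\overline y_k]\,d\gamma$ and $\overline y_i\int_0^\lambda[y_j\,{}_\gamma\,\overline y_k]\,d\gamma$. These are $\lambda$ times the normally ordered bilinears $\overline{[y_i,y_j]}\,\overline y_k$, $\overline y_j\,\overline{[y_i,y_k]}$ and $\overline y_i\,\overline{[y_j,y_k]}$; the last type comes from the inner Wick integral, which you do not mention. Isotropy of $\mathfrak a^+$ only kills the second-order contractions $[\overline y_a\,{}_\mu\,\overline y_b]$, not these fields.

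In the paper the three contributions are combined, by quasi-commutativity and \eqref{eq: cyclic str const MP}, into $-\frac12\lambda\sum(-1)^{ij+i+j+1}C^r_{jk}\mathring F^{jk}_i\,\overline y_i\overline y_r$. Averaging this with its $i\leftrightarrow r$ relabelling gives $-\frac14\lambda\sum_{i,r}\mathcal E^{ir}\overline y_i\overline y_r$, which vanishes by Lemma~\ref{lem: sec sMP coeff E vanish}. That lemma is where unimodularity really enters, and your argument does not supply this input.

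\textbf{Second gap: $[G_{(0)}U\,{}_\lambda\,\Phi]=-G_{(0)}\Phi$.} This is only announced, and the bookkeeping in your sketch is off. For $r\in H$, the fermion bilinears that $\sum(-1)^iy^i\overline y_i$ and $-\sum y_i\overline y^i$ produce in $[\overline y_r\,{}_\lambda\,G_{(0)}U]$ cancel each other. The term $2\partial\overline\rho$ contributes the scalar $2\lambda(-1)^r\sum_i(-1)^iC^i_{ri}$, not a $\mathcal D$ or $\mathcal E$ term, and it must cancel against the contraction terms. The leftover $\lambda$-terms are again governed by $\mathcal E^{ir}$, not by \eqref{eq: JI 1 for MP}.

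The missing idea is a Jacobi-identity shortcut. The paper first computes $[U\,{}_\lambda\,G_{(0)}\Phi]=-2G_{(0)}\Phi$ from \eqref{eq: sec N=3/2 U and oveline y} and \eqref{eq: sec N=3/2 U and y}. It then uses
\[
[\Phi\,{}_\lambda\,[G\,{}_\gamma\,U]]=[[\Phi\,{}_\lambda\,G]\,{}_{\lambda+\gamma}\,U]-[G\,{}_\gamma\,[\Phi\,{}_\lambda\,U]]=2G_{(0)}\Phi-3G_{(0)}\Phi ,
\]
followed by skew-symmetry.

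Alternatively, your own generator-wise route closes at once. Apply the Jacobi identity to $[G\,{}_\lambda\,[U\,{}_\gamma\,\overline y_r]]$ with $[U\,{}_\gamma\,\overline y_r]=-\overline y_r$ and $[U\,{}_\gamma\,y_r]=0$ for $r\in H$. This gives $[G_{(0)}U\,{}_\mu\,\overline y_r]=-y_r=-[G\,{}_\mu\,\overline y_r]$. So $G+G_{(0)}U$ has zero $\lambda$-bracket with every $\overline y_r$, $r\in H$, and by the Wick formula $[G_{(0)}U\,{}_\lambda\,\Phi]=-[G\,{}_\lambda\,\Phi]$.

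Either way, this step rests on $[G\,{}_\lambda\,\Phi]=G_{(0)}\Phi$, that is, on the $\mathcal E$-argument missing from your first step.
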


\begin{proof}
First, $[{\Phi}_{{\mathfrak{g}} \oplus {\mathfrak{h}}} {\,} {}_{\lambda} {\,} {\Phi}_{{\mathfrak{g}} \oplus {\mathfrak{h}}}] = 0$ holds trivially. On the other hand, from Lemma \ref{lem: sec sMP F and ring F vanish} and Lemma \ref{lem: sec N=3/2 OPE generator and U}, we have:
\begin{align}
\label{eq: sec N=3/2 U and oveline y}
[{U}_{{\mathfrak{g}} \oplus {\mathfrak{h}}} {\,} {}_{\lambda} {\,} {\overline{y}}_{i}] = - {\overline{y}}_{i},
\end{align}
for $i \in H$, and hence
\begin{align*}
[{U}_{{\mathfrak{g}} \oplus {\mathfrak{h}}} {\,} {}_{\lambda} {\,} {\Phi}_{{\mathfrak{g}} \oplus {\mathfrak{h}}}] = - {3}{\Phi}_{{\mathfrak{g}} \oplus {\mathfrak{h}}},
\end{align*}
also holds. Now, by using \eqref{eq: OPE G with generator of saff}, we have:
\begin{align*}
[{G}_{{\mathfrak{g}} \oplus {\mathfrak{h}}} {\,} {}_{\lambda} {\,} {\Phi}_{{\mathfrak{g}} \oplus {\mathfrak{h}}}] 
=& {\,} - {\frac{1}{6}}{\,}{\sum}_{i, j, k \in H} (-1)^{ij}{\mathring{F}}^{jk}_{i}{\big(}{y}_{i}({\overline{y}}_{j}{\overline{y}}_{k}) + (-1)^{i + 1}{\overline{y}}_{i}({y}_{j}{\overline{y}}_{k}) + (-1)^{i + j}{\overline{y}}_{i}({\overline{y}}_{j}{y}_{k}){\big)}
\\
& {\,} - {\frac{1}{6}}{\lambda}{\,}{\sum}_{i, j, k, r \in H} (-1)^{ij + i + j + 1}{C}^{r}_{jk}{\mathring{F}}^{jk}_{i}{\overline{y}}_{i}{\overline{y}}_{r}
\\
& {\,} - {\frac{1}{6}}{\lambda}{\,}{\sum}_{i, j, k, r \in H} (-1)^{ij + i}{C}^{r}_{ij}{\mathring{F}}^{jk}_{i}{\overline{y}}_{r}{\overline{y}}_{k}
\\
& {\,} - {\frac{1}{6}}{\lambda}{\,}{\sum}_{i, j, k, r \in H} {C}^{r}_{ik}{\mathring{F}}^{jk}_{i}{\overline{y}}_{j}{\overline{y}}_{r}
\\
=& {\,} - {\frac{1}{6}}{\,}{\sum}_{i, j, k \in H} (-1)^{ij}{\mathring{F}}^{jk}_{i}{\big(}{y}_{i}({\overline{y}}_{j}{\overline{y}}_{k}) + (-1)^{i + 1}{\overline{y}}_{i}({y}_{j}{\overline{y}}_{k}) + (-1)^{i + j}{\overline{y}}_{i}({\overline{y}}_{j}{y}_{k}){\big)}
\\
& {\,} - {\frac{1}{2}}{\lambda}{\,}{\sum}_{i, j, k, r \in H} (-1)^{ij + i + j + 1}{C}^{r}_{jk}{\mathring{F}}^{jk}_{i}{\overline{y}}_{i}{\overline{y}}_{r},
\end{align*}
where the last equality follows from quasi-commutativity and \eqref{eq: cyclic str const MP}. However, by exchanging the indices ${i}$ and ${r}$ and using quasi-commutativity, we have:
\begin{align*}
{\sum}_{i, j, k, r \in H} (-1)^{(i + 1)(j + 1)}{C}^{r}_{jk}{\mathring{F}}^{jk}_{i}{\overline{y}}_{i}{\overline{y}}_{r} 
&= {\sum}_{i, j, k, r \in H} (-1)^{(r + 1)(j + 1)}{C}^{i}_{jk}{\mathring{F}}^{jk}_{r}{\overline{y}}_{r}{\overline{y}}_{i}
\\
&= {\sum}_{i, r \in H} {\big(}{\sum}_{j, k \in H}(-1)^{jk}{C}^{i}_{jk}{\mathring{F}}^{jk}_{r}{\big)}{\overline{y}}_{i}{\overline{y}}_{r},
\end{align*}
and hence:
\begin{align*}
- {\frac{1}{2}}{\lambda}{\,}{\sum}_{i, j, k, r \in H} &(-1)^{(i + 1)(j + 1)}{C}^{r}_{jk}{\mathring{F}}^{jk}_{i}{\overline{y}}_{i}{\overline{y}}_{r}
\\
&= - {\frac{1}{4}}{\lambda}{\,}{\sum}_{i, r \in H} {\big(}{\sum}_{j, k \in H} (-1)^{jk}({C}^{i}_{jk}{\mathring{F}}^{jk}_{r} - (-1)^{j + k}{C}^{r}_{jk}{\mathring{F}}^{jk}_{i}){\big)}{\overline{y}}_{i}{\overline{y}}_{r}
\\
&= - {\frac{1}{4}}{\lambda}{\,}{\sum}_{i, r \in H} {\mathcal{E}}^{ir}{\overline{y}}_{i}{\overline{y}}_{r} = 0,
\end{align*}
which follows from Lemma \ref{lem: sec sMP coeff E vanish}. Thus, the result of the computation of $[{G}_{{\mathfrak{g}} \oplus {\mathfrak{h}}} {\,} {}_{\lambda} {\,} {\Phi}_{{\mathfrak{g}} \oplus {\mathfrak{h}}}]$ can be rewritten as follows:
\begin{align*}
[{G}_{{\mathfrak{g}} \oplus {\mathfrak{h}}} {\,} {}_{\lambda} {\,} {\Phi}_{{\mathfrak{g}} \oplus {\mathfrak{h}}}] = {G}_{{{\mathfrak{g}} \oplus {\mathfrak{h}}} {\,} (0)}{\Phi}_{{\mathfrak{g}} \oplus {\mathfrak{h}}},
\end{align*}
where
\begin{align*}
{G}_{{{\mathfrak{g}} \oplus {\mathfrak{h}}} {\,} (0)}{\Phi}_{{\mathfrak{g}} \oplus {\mathfrak{h}}} = - {\frac{1}{6}}{\,}{\sum}_{i, j, k \in H} (-1)^{ij}{\mathring{F}}^{jk}_{i}{\big(}{y}_{i}({\overline{y}}_{j}{\overline{y}}_{k}) + (-1)^{i + 1}{\overline{y}}_{i}({y}_{j}{\overline{y}}_{k}) + (-1)^{i + j}{\overline{y}}_{i}({\overline{y}}_{j}{y}_{k}){\big)}.
\end{align*}

\vskip 1mm

On the other hand, from Lemma \ref{lem: sec sMP F and ring F vanish} and Lemma \ref{lem: sec N=3/2 OPE generator and U}, we have:
\begin{align}
\label{eq: sec N=3/2 U and y}
[{U}_{{\mathfrak{g}} \oplus {\mathfrak{h}}} {\,} {}_{\lambda} {\,} {y}_{i}] = 0,
\end{align}
for $i \in H$. Hence, using \eqref{eq: sec N=3/2 U and oveline y} and \eqref{eq: sec N=3/2 U and y}, we obtain
\begin{align*}
[{U}_{{\mathfrak{g}} \oplus {\mathfrak{h}}} {\,} {}_{\lambda} {\,} {y}_{i}({\overline{y}}_{j}{\overline{y}}_{k})] &= -{2}{y}_{i}({\overline{y}}_{j}{\overline{y}}_{k}),
\\
[{U}_{{\mathfrak{g}} \oplus {\mathfrak{h}}} {\,} {}_{\lambda} {\,} {\overline{y}}_{i}({y}_{j}{\overline{y}}_{k})] &= -{2}{\overline{y}}_{i}({y}_{j}{\overline{y}}_{k}) - {\lambda}{\overline{[{y}_{i}, {y}_{j}]}}{\overline{y}}_{k},
\\
[{U}_{{\mathfrak{g}} \oplus {\mathfrak{h}}} {\,} {}_{\lambda} {\,} {\overline{y}}_{i}({\overline{y}}_{j}{y}_{k})] &= -{2}{\overline{y}}_{i}({\overline{y}}_{j}{y}_{k}) - {\lambda}{\overline{y}}_{i}{\overline{[{y}_{j}, {y}_{k}]}} - (-1)^{(i + 1)(j + 1)}{\lambda}{\overline{y}}_{j}{\overline{[{y}_{i}, {y}_{k}]}},
\end{align*}
for $i, j, k \in H$. Therefore, by combining the above identities, we obtain:
\begin{align*}
[{U}_{{\mathfrak{g}} \oplus {\mathfrak{h}}} {\,} {}_{\lambda} {\,} {G}_{{{\mathfrak{g}} \oplus {\mathfrak{h}}} {\,} (0)}{\Phi}_{{\mathfrak{g}} \oplus {\mathfrak{h}}}] 
=& {\,} {\frac{1}{3}}{\,}{\sum}_{i, j, k \in H} (-1)^{ij}{\mathring{F}}^{jk}_{i}{\big(}{y}_{i}({\overline{y}}_{j}{\overline{y}}_{k}) + (-1)^{i + 1}{\overline{y}}_{i}({y}_{j}{\overline{y}}_{k}) + (-1)^{i + j}{\overline{y}}_{i}({\overline{y}}_{j}{y}_{k}){\big)}
\\
& - {\frac{1}{2}}{\lambda}{\,}{\sum}_{i, j, k, r \in H} (-1)^{ij + i + j + 1}{C}^{r}_{jk}{\mathring{F}}^{jk}_{i}{\overline{y}}_{i}{\overline{y}}_{r}
\\
=& {\,} - {2}{G}_{{{\mathfrak{g}} \oplus {\mathfrak{h}}} {\,} (0)}{\Phi}_{{\mathfrak{g}} \oplus {\mathfrak{h}}}.
\end{align*}
It follows from the above results that
\begin{align*}
[{\Phi}_{{\mathfrak{g}} \oplus {\mathfrak{h}}} {\,} {}_{\lambda} {\,} {G}_{{{\mathfrak{g}} \oplus {\mathfrak{h}}} {\,} (0)}{U}_{{\mathfrak{g}} \oplus {\mathfrak{h}}}] 
&= [{\Phi}_{{\mathfrak{g}} \oplus {\mathfrak{h}}} {\,} {}_{\lambda} {\,} [{G}_{{\mathfrak{g}} \oplus {\mathfrak{h}}} {\,} {}_{\gamma} {\,} {U}_{{\mathfrak{g}} \oplus {\mathfrak{h}}}]] 
\\
&= [[{\Phi}_{{\mathfrak{g}} \oplus {\mathfrak{h}}} {\,} {}_{\lambda} {\,} {G}_{{\mathfrak{g}} \oplus {\mathfrak{h}}}] {\,} {}_{{\lambda} + {\gamma}} {\,} {U}_{{\mathfrak{g}} \oplus {\mathfrak{h}}}] - [{G}_{{\mathfrak{g}} \oplus {\mathfrak{h}}} {\,} {}_{\gamma} {\,} [{\Phi}_{{\mathfrak{g}} \oplus {\mathfrak{h}}} {\,} {}_{\lambda} {\,} {U}_{{\mathfrak{g}} \oplus {\mathfrak{h}}}]]
\\
&= [{G}_{{{\mathfrak{g}} \oplus {\mathfrak{h}}} {\,} (0)}{\Phi}_{{\mathfrak{g}} \oplus {\mathfrak{h}}} {\,} {}_{{\lambda} + {\gamma}} {\,} {U}_{{\mathfrak{g}} \oplus {\mathfrak{h}}}] - {3}[{G}_{{\mathfrak{g}} \oplus {\mathfrak{h}}} {\,} {}_{\gamma} {\,} {\Phi}_{{\mathfrak{g}} \oplus {\mathfrak{h}}}] 
\\
&= - {G}_{{{\mathfrak{g}} \oplus {\mathfrak{h}}} {\,} (0)}{\Phi}_{{\mathfrak{g}} \oplus {\mathfrak{h}}}.
\end{align*}
Consequently, from Proposition \ref{prop: equiv def of N=3/2 sconf}, ${G}_{{\mathfrak{g}} \oplus {\mathfrak{h}}}$ and ${U}_{{\mathfrak{g}} \oplus {\mathfrak{h}}}$ form an $N = {\frac{3}{2}}$ superconformal structure on the super-affine vertex algebra ${V}^{{\kappa}_{{\mathfrak{g}} \oplus {\mathfrak{h}}}}_{{\,\,} {\text{s}}}({\mathfrak{g}} \oplus {\mathfrak{h}})$.
\end{proof}

\vskip 4mm
\section{Supersymmetric semi--infinite cohomology}\label{section: SUSY semi inf cohom}
\vskip 2mm

In this section, following the construction of the semi-infinite cohomology given in \cites{Feigin84, FGZ86} (see also Section 3.1 of \cite{CLNS25}), we consider the supersymmetric semi-infinite cohomology for a ${V}^{0}_{{\,\,} {\text{s}}}({\mathfrak{h}})$-module $M$, where ${\mathfrak{h}}$ is a Lie superalgebra. Concrete examples of the supersymmetric semi-infinite cohomology can be found in \cites{Rhedin96, FS97A} and \cites{Kuramoto94, MR94, MRS21}. In the latter part of this section, we also introduce the relative supersymmetric semi-infinite cohomology.

\vskip 1mm

To begin, for convenience, we first recall the notion of a supersymmetric vertex algebra, introduced as an ${N}_{K} = 1$ supersymmetric vertex algebra in \cite{HK07} and as an $N = 1$ vertex superalgebra in Section 5.9 of \cite{Kac98}. A {\textit{supersymmetric vertex algebra}} is a vertex algebra ${V}$ equipped with an odd endomorphism ${D} \in {\text{End}}(V)$ such that ${D}$ is an odd derivation of the ${n}$-th products for all ${n} \in {\mathbb{Z}}$ and satisfies ${D}^{2} = {\partial}$, where ${\partial}$ is the infinitesimal translation operator of ${V}$ (see (5.9.9) and (5.9.10) of \cite{Kac98}).

\vskip 1mm

For a supersymmetric vertex algebra $V$, we consider an odd ${\mathbb{C}}$-linear map ${V} \otimes {V} \to {\mathbb{C}}[{\lambda}, {\chi}] \otimes {V}$, called the {\textit{${\Lambda}$-bracket}}, defined by
\begin{align*}
{a} \otimes {b} {\,} \mapsto {\,} [{a} {\,} {}_{\Lambda} {\,} {b}] :=& {\,} [Da {\,} {}_{\lambda} {\,} {b}] + {\chi}[{a} {\,} {}_{\lambda} {\,} {b}]
\\
=& {\,} {\sum}_{n \in {\mathbb{Z}}_{\geq 0}} {\frac{{\lambda}^{n}}{n!}}{(Da)}_{(n)}{b} + {\chi}{\,}{\sum}_{n \in {\mathbb{Z}}_{\geq 0}} {\frac{{\lambda}^{n}}{n!}}{a}_{(n)}{b},
\end{align*}
where ${\lambda}$ is an even indeterminate and ${\chi}$ is an odd indeterminate satisfying ${\chi}^{2} = - {\lambda}$. Then, for parity-homogeneous elements $a, b, c \in V$, the ${\Lambda}$-bracket satisfies the following properties:
\begin{enumerate}[]
\item (Sesquilinearity)
{\quad}
$
[{D}{a} {\,} {}_{\Lambda} {\,} {b}] = {\chi}[{a} {\,} {}_{\Lambda} {\,} {b}],
{\quad}
[{a} {\,} {}_{\Lambda} {\,} {D}{b}] = (-1)^{a + 1}({D} + {\chi})[{a} {\,} {}_{\Lambda} {\,} {b}],
$
\item (Skew-symmetry)
{\quad}
$
[{b} {\,} {}_{\Lambda} {\,} {a}] = (-1)^{ab}[{a} {\,} {}_{-{\nabla}-{\Lambda}} {\,} {b}],
$
\item (Jacobi identity) 
{\quad}
$
[{a} {\,} {}_{\Lambda} {\,} [{b} {\,} {}_{\Gamma} {\,} {c}]] = (-1)^{a + 1}[[{a} {\,} {}_{\Lambda} {\,} {b}] {\,} {}_{{\Lambda} + {\Gamma}} {\,} {c}] + (-1)^{(a+1)(b+1)}[{b} {\,} {}_{\Gamma} {\,} [{a} {\,} {}_{\Lambda} {\,} {c}]],
$
\item (Non-commutative Wick formula)
\begin{equation*}
[{a} {\,} {}_{\Lambda} {\,} {b}{c}] = [{a} {\,} {}_{\Lambda} {\,} {b}]{c} + (-1)^{(a+1)b}{b}[{a} {\,} {}_{\Lambda} {\,} {c}] + \int^{\Lambda}_{0} [[{a} {\,} {}_{\Lambda} {\,} {b}] {\,} {}_{\Gamma} {\,} {c}] {\,} {d}{\Gamma},
\end{equation*}
\end{enumerate}
where $\int^{\Lambda}_{0} {\,} {d}{\Gamma}$ is computed as ${\partial}_{\eta} \int^{\lambda}_{0} {\,} {d}{\gamma}$. Here, for the ${\Lambda}$-bracket, we use the notation
\begin{align*}
{\Lambda} := ({\lambda}, {\chi}),
{\quad}
- {\nabla} - {\Lambda} := (- {\partial} - {\lambda}, - {D} - {\chi}),
{\quad}
{\Gamma} := ({\gamma}, {\eta}),
{\quad}
{\Lambda} + {\Gamma} := ({\lambda} + {\gamma}, {\chi} + {\eta}),
\end{align*}
where ${\gamma}$ is an even indeterminate and ${\eta}$ is an odd indeterminate satisfying ${\eta}^{2} = - {\gamma}$. Moreover, ${\partial}$, ${D}$, ${\lambda}$, ${\chi}$, ${\gamma}$, and ${\eta}$ are subject to the following relations:
\begin{align*}
{D}{\chi} + {\chi}{D} = {2}{\lambda},
{\quad}
{D}{\eta} + {\eta}{D} = {2}{\gamma},
{\quad}
{\chi}{\eta} + {\eta}{\chi} = 0,
\end{align*}
and all remaining pairs supercommute with each other. In addition, ${\chi}$ and ${\eta}$ supercommute with the operators ${a}_{(n)}$ for ${a} \in V$ and ${n} \in {\mathbb{Z}}$. For more details on the ${\Lambda}$-bracket, the reader is referred to Definition 4.10 and Definition 4.19 of \cite{HK07}. For the definition of supersymmetric vertex algebras in terms of superfields, see Definition 4.13 of \cite{HK07}.

\vskip 1mm

Note that, if a vertex algebra ${V}$ has an $N = 1$ superconformal vector ${G}$, then the odd derivation $D := {G}_{(0)}$ gives rise to a supersymmetric vertex algebra structure on $V$. For example, the Kac--Todorov superconformal vector induces a supersymmetric vertex algebra structure on the super-affine vertex algebra, with an odd derivation ${D}$ sending generators ${\overline{a}}$ to ${a}$ and ${a}$ to ${\partial}{\overline{a}}$ (see \eqref{eq: OPE G with generator of saff}). Even when the bilinear form defining the super-affine vertex algebra is degenerate, the odd derivation ${D}$ of the ${n}$-th products defined by ${D}({\overline{a}}) = a$ and ${D}({a}) = {\partial}{\overline{a}}$ on the generators still gives a supersymmetric vertex algebra structure on the super-affine vertex algebra.

\vskip 1mm

Another example of a supersymmetric vertex algebra is the following vertex algebra, called the ${b}{c}$--${\beta}{\gamma}$ system, which is one of the ingredients used to construct the complex of supersymmetric semi-infinite cohomology.

\vskip 1mm

Throughout this section, let ${\mathfrak{h}}$ be a finite-dimensional Lie superalgebra, and let $\{ {u}_{i} \}_{i \in H}$ be a parity-homogeneous basis of ${\mathfrak{h}}$. We denote the structure constants of ${\mathfrak{h}}$ with respect to the basis $\{ {u}_{i} \}_{i \in H}$ by
\begin{align*}
[{u}_{i}, {u}_{j}] = {\sum}_{k \in H} {h}^{k}_{ij}{u}_{k},
\end{align*}
for $i, j \in H$. Denote by ${\textstyle{\bigwedge}}^{{\frac{\infty}{2}} + {\bullet}}_{{\,\,} {\text{s}}}({\mathfrak{h}})$ the vertex algebra consisting of ${\text{dim}}{\,}{\mathfrak{h}}$ copies of the ${b}{c}$--${\beta}{\gamma}$ system, that is, the vertex algebra generated by $\{ {\phi}_{i}, {D}{\phi}_{i}, {\phi}^{i}, {D}{\phi}^{i} \}_{i \in H}$ such that the only nonzero OPEs among the generators are given by
\begin{align*}
[{D}{\phi}_{i} {\,} {}_{\lambda} {\,} {\phi}^{j}] = {\delta}_{ij},
{\quad}
[{D}{\phi}^{i} {\,} {}_{\lambda} {\,} {\phi}_{j}] = {\delta}_{ij},
\end{align*}
for $i, j \in H$. Equivalently, in terms of the ${\Lambda}$-bracket, the above OPEs are encoded by
\begin{align*}
[{\phi}_{i} {\,} {}_{\Lambda} {\,} {\phi}^{j}] = {\delta}_{ij},
\end{align*}
for $i, j \in H$. Here, the parity of each generator is given by $p({\phi}_{i}) = p({D}{\phi}^{i}) = p({u}_{i})$ and $p({\phi}^{i}) = p({D}{\phi}_{i}) = p({u}_{i}) + 1$ for $i \in H$. Then we have the following vertex algebra homomorphism.

\vskip 1mm

\begin{lemma}\label{lem: sec SUSY semi inf cohom J wedge homo}\rm
The map ${J}_{\bigwedge} : {V}^{0}_{{\,\,} {\text{s}}}({\mathfrak{h}}) \rightarrow {\textstyle{\bigwedge}}^{{\frac{\infty}{2}} + {\bullet}}_{{\,\,} {\text{s}}}({\mathfrak{h}})$ defined on the generators by
\begin{align*}
{J}_{\bigwedge}({\overline{u}}_{i}) &= {\sum}_{j, k \in H} (-1)^{jk}{h}^{k}_{ij}{\phi}^{j}{\phi}_{k},
\\
{J}_{\bigwedge}({u}_{i}) &= {\sum}_{j, k \in H} (-1)^{jk}{\big(}{h}^{k}_{ij}({D}{\phi}^{j}){\phi}_{k} + (-1)^{j+1} {h}^{k}_{ij}{\phi}^{j}{D}{\phi}_{k}{\big)},
\end{align*}
for $i \in H$, is a vertex algebra homomorphism.
\end{lemma}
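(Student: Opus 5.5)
The plan is to use the universal property of ${V}^{0}_{{\,\,} {\text{s}}}({\mathfrak{h}})$. It is the universal enveloping vertex algebra of the Lie conformal superalgebra spanned by ${\overline{\mathfrak{h}}} \oplus {\mathfrak{h}}$, with the level-zero $\lambda$-brackets
\begin{align*}
[{u}_{i} {\,} {}_{\lambda} {\,} {u}_{j}] = [{u}_{i}, {u}_{j}],
{\quad}
[{u}_{i} {\,} {}_{\lambda} {\,} {\overline{u}}_{j}] = (-1)^{i}{\overline{[{u}_{i}, {u}_{j}]}},
{\quad}
[{\overline{u}}_{i} {\,} {}_{\lambda} {\,} {u}_{j}] = {\overline{[{u}_{i}, {u}_{j}]}},
{\quad}
[{\overline{u}}_{i} {\,} {}_{\lambda} {\,} {\overline{u}}_{j}] = 0.
\end{align*}
Hence it suffices to check two things. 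First, the proposed images have the correct parities: ${J}_{\bigwedge}({\overline{u}}_{i})$ has parity $p({u}_{i})+1$, which is immediate from $p({\phi}^{j}) + p({\phi}_{k}) = p({u}_{j}) + p({u}_{k}) + 1$ together with $h^{k}_{ij} \neq 0 \Rightarrow p(u_k) = p(u_i)+p(u_j)$. Second, these images satisfy the same $\lambda$-brackets in ${\textstyle{\bigwedge}}^{{\frac{\infty}{2}} + {\bullet}}_{{\,\,} {\text{s}}}({\mathfrak{h}})$.

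To cut the work in half, I will exploit the supersymmetric structure. The $bc$--$\beta\gamma$ system is a supersymmetric vertex algebra with odd derivation $D$, and one checks directly that
\begin{align*}
{J}_{\bigwedge}({u}_{i}) = {D}{\big(}{J}_{\bigwedge}({\overline{u}}_{i}){\big)},
\end{align*}
by the Leibniz rule for the odd derivation $D$ applied to ${\phi}^{j}{\phi}_{k}$. The sign $(-1)^{j+1}$ is exactly the sign arising from passing $D$ past ${\phi}^{j}$. Consequently, the required relations are equivalent to the single $\Lambda$-bracket identity
\begin{align*}
[{J}_{\bigwedge}({\overline{u}}_{i}) {\,} {}_{\Lambda} {\,} {J}_{\bigwedge}({\overline{u}}_{j})] = {\sum}_{k \in H} {h}^{k}_{ij}{J}_{\bigwedge}({\overline{u}}_{k}),
\end{align*}
which is the $\Lambda$-bracket form of the level-zero super-affine relations, up to the sign conventions of sesquilinearity. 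From this identity, the brackets involving $u_i = D\overline{u}_i$ follow by sesquilinearity:
\begin{align*}
[{D}{a} {\,} {}_{\Lambda} {\,} {b}] = {\chi}[{a} {\,} {}_{\Lambda} {\,} {b}],
{\quad}
[{a} {\,} {}_{\Lambda} {\,} {D}{b}] = (-1)^{a+1}({D} + {\chi})[{a} {\,} {}_{\Lambda} {\,} {b}].
\end{align*}
Reading off the coefficients of $\chi$ and of $\lambda$ then recovers all four $\lambda$-brackets above.

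The main computation is this $\Lambda$-bracket. Using the non-commutative Wick formula for $\Lambda$-brackets and the only nonzero bracket $[{\phi}_{i} {\,} {}_{\Lambda} {\,} {\phi}^{j}] = {\delta}_{ij}$, the bracket of two quadratics splits into two kinds of terms:
\begin{enumerate}[(i)]
\item single-contraction terms, which are quadratic in the ghosts;
\item a double-contraction (integral) term, which is a scalar multiple of $\lambda$ or $\chi$.
\end{enumerate}
The single-contraction terms combine into ${\sum}_{k} {h}^{k}_{ij}{J}_{\bigwedge}({\overline{u}}_{k})$ by the super Jacobi identity of ${\mathfrak{h}}$. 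This is the usual statement that ${\phi}^{j}{\phi}_{k}$ realize the coadjoint/adjoint action, as in the classical $bc$ ghost realization of $\mathrm{ad}$. The double-contraction term is proportional to ${\sum}_{k,l} \pm {h}^{l}_{ik}{h}^{k}_{jl}$, a supertrace of $\mathrm{ad}\,u_i\,\mathrm{ad}\,u_j$, i.e. the super Killing form.

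The main obstacle is showing that this anomaly term vanishes. I expect it to cancel for the following reason. The two possible contractions, $\phi_k$ with $\phi^{l}$ and $\phi^{j}$ with $\phi_{l}$, enter with opposite signs, because $\phi$ and $\phi^{\ast}$ have opposite parities (a boson/fermion pair). The supersymmetric ghost system has zero net Killing-form contribution, in contrast to the non-supersymmetric case, where the anomaly is $\overline{\kappa}_{\mathfrak{h}}$. I would verify this carefully by tracking the graded signs, using that $\int_0^\Lambda d\Gamma$ of a constant is zero in the $\Lambda$-bracket formalism: the double contraction produces $[[a{}_\Lambda b]{}_\Gamma c]$ with a constant inner result, whose integral $\partial_\eta\int_0^\lambda d\gamma$ of a constant vanishes. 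If this shortcut fails, I would instead fall back to a direct $\lambda$-bracket check of the four relations, with the Killing form terms shown to cancel pairwise.
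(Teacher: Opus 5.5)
Your proposal is correct and follows essentially the same route as the paper: you use ${J}_{\bigwedge}({u}_{i}) = {D}{J}_{\bigwedge}({\overline{u}}_{i})$ to reduce to a single $\Lambda$-bracket of the ${J}_{\bigwedge}({\overline{u}}_{i})$, compute it with the Wick formula and the Jacobi identity, and recover the remaining brackets by sesquilinearity; as you anticipated, the double-contraction term vanishes identically because $\int^{\Lambda}_{0} d\Gamma$ of a constant is zero. One fix: the target identity needs a sign, namely $[{J}_{\bigwedge}({\overline{u}}_{i}) {\,} {}_{\Lambda} {\,} {J}_{\bigwedge}({\overline{u}}_{j})] = (-1)^{i}{\sum}_{k \in H} {h}^{k}_{ij}{J}_{\bigwedge}({\overline{u}}_{k})$, because at level zero $[{\overline{u}}_{i} {\,} {}_{\Lambda} {\,} {\overline{u}}_{j}] = [{u}_{i} {\,} {}_{\lambda} {\,} {\overline{u}}_{j}] = (-1)^{i}{\overline{[{u}_{i}, {u}_{j}]}}$; without it, sesquilinearity would give the wrong sign for $[{u}_{i} {\,} {}_{\lambda} {\,} {u}_{j}]$ when ${u}_{i}$ is odd.
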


\begin{proof}
From the non-commutative Wick formula and the Jacobi identity for Lie superalgebras, we have:
\begin{align*}
[{J}_{\bigwedge}({\overline{u}}_{a}) {\,} {}_{\Lambda} {\,} {J}_{\bigwedge}({\overline{u}}_{b})] 
=& {\,} [{\sum}_{i, j \in H} (-1)^{ij}{h}^{j}_{ai}{\phi}^{i}{\phi}_{j} {\,} {}_{\Lambda} {\,} {\sum}_{r, s \in H} (-1)^{rs}{h}^{s}_{br}{\phi}^{r}{\phi}_{s}]
\\
=& {\,} {\sum}_{i, j, s \in H} (-1)^{ij + js + i + 1}{h}^{j}_{ai}{h}^{s}_{bj}{\phi}^{i}{\phi}_{s} + {\sum}_{i, j, r \in H} (-1)^{jr + i + j}{h}^{j}_{ai}{h}^{i}_{br}{\phi}^{r}{\phi}_{j}
\\
=& {\,} {\sum}_{x, y, z \in H} {\big(}(-1)^{xz + yz + x + 1}{h}^{z}_{ax}{h}^{y}_{bz} + (-1)^{xy + y + z}{h}^{y}_{az}{h}^{z}_{bx}{\big)}{\phi}^{x}{\phi}_{y}
\\
=& {\,} {\sum}_{x, y, z \in H} (-1)^{a + xy}{h}^{z}_{ab}{h}^{y}_{zx}{\phi}^{x}{\phi}_{y}
\\
=& {\,} (-1)^{a}{J}_{\bigwedge}({\sum}_{z \in H} {h}^{z}_{ab}{\overline{u}}_{z})
\\
=& {\,} (-1)^{a}{J}_{\bigwedge}({\overline{[{u}_{a}, {u}_{b}]}}) 
= {J}_{\bigwedge}([{u}_{a} {\,} {}_{\lambda} {\,} {\overline{u}}_{b}]) 
= {J}_{\bigwedge}([{\overline{u}}_{a} {\,} {}_{\Lambda} {\,} {\overline{u}}_{b}]),
\end{align*}
for $a, b \in H$. Hence, using ${J}_{\bigwedge}{D} = {D}{J}_{\bigwedge}$, we have:
\begin{align*}
[{J}_{\bigwedge}({u}_{a}) {\,} {}_{\lambda} {\,} {J}_{\bigwedge}({\overline{u}}_{b})] 
= {J}_{\bigwedge}([{u}_{a} {\,} {}_{\lambda} {\,} {\overline{u}}_{b}]),
{\quad}
[{J}_{\bigwedge}({\overline{u}}_{a}) {\,} {}_{\lambda} {\,} {J}_{\bigwedge}({\overline{u}}_{b})] 
= {J}_{\bigwedge}([{\overline{u}}_{a} {\,} {}_{\lambda} {\,} {\overline{u}}_{b}]),
\end{align*}
for $a, b \in H$. Moreover, by the sesquilinearity of the ${\Lambda}$-bracket, we obtain:
\begin{align*}
[{J}_{\bigwedge}({\overline{u}}_{a}) {\,} {}_{\Lambda} {\,} {J}_{\bigwedge}({u}_{b})] 
&= [{J}_{\bigwedge}({\overline{u}}_{a}) {\,} {}_{\Lambda} {\,} {D}{J}_{\bigwedge}({\overline{u}}_{b})]
\\
&= ({D} + {\chi}){J}_{\bigwedge}({\overline{[{u}_{a}, {u}_{b}]}}) 
= {J}_{\bigwedge}([{u}_{a} {\,} {}_{\lambda} {\,} {u}_{b}]) + {\chi}{J}_{\bigwedge}([{\overline{u}}_{a} {\,} {}_{\lambda} {\,} {u}_{b}]),
\end{align*}
for $a, b \in H$, which gives the desired result.
\end{proof}

\vskip 1mm

Recall that the $n$-th products on the tensor product of vertex algebras ${V}_{1} \otimes {V}_{2}$ are given by
\begin{align*}
({a} \otimes {b})_{(n)} := {\sum}_{m \in {\mathbb{Z}}} {a}_{(m)} \otimes {b}_{(n - m - 1)},
\end{align*}
for $a \in {V}_{1}$ and $b \in {V}_{2}$ (see (4.3.2) of \cite{Kac98}). Then, for supersymmetric vertex algebras $({V}_{1}, {D}_{1})$ and $({V}_{2}, {D}_{2})$, the odd endomorphism ${D}_{1} \otimes {\text{id}}_{{V}_{2}} + {\text{id}}_{{V}_{1}} \otimes {D}_{2}$ gives rise to a supersymmetric vertex algebra structure on ${V}_{1} \otimes {V}_{2}$.

\vskip 1mm

We now consider the supersymmetric vertex algebra ${V}^{0}_{{\,\,} {\text{s}}}({\mathfrak{h}}) \otimes {\textstyle{\bigwedge}}^{{\frac{\infty}{2}} + {\bullet}}_{{\,\,} {\text{s}}}({\mathfrak{h}})$ and introduce the following even vector in ${V}^{0}_{{\,\,} {\text{s}}}({\mathfrak{h}}) \otimes {\textstyle{\bigwedge}}^{{\frac{\infty}{2}} + {\bullet}}_{{\,\,} {\text{s}}}({\mathfrak{h}})$:
\begin{align}
\label{eq: sec SUSY semi inf cohom def of Q}
Q = {\sum}_{i \in H} (-1)^{i} {\overline{u}}_{i} \otimes {\phi}^{i} - {\frac{1}{2}}{\,}{\sum}_{i, j, k \in H} (-1)^{jk + j}{h}^{i}_{jk} {\ket{0}} \otimes {\phi}_{i}({\phi}^{j}{\phi}^{k}).
\end{align}
Then we obtain the following OPE of ${Q}$ with itself:

\vskip 1mm

\begin{proposition}\label{prop: sec SUSY semi inf OPE Q and Q}\rm
In ${V}^{0}_{{\,\,} {\text{s}}}({\mathfrak{h}}) \otimes {\textstyle{\bigwedge}}^{{\frac{\infty}{2}} + {\bullet}}_{{\,\,} {\text{s}}}({\mathfrak{h}})$, we have $[{Q} {\,} {}_{\Lambda} {\,} {Q}] = 0$.
\end{proposition}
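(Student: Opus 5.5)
We split $Q = Q_{1} + Q_{2}$, where
\begin{align*}
Q_{1} := {\sum}_{i \in H} (-1)^{i}{\overline{u}}_{i} \otimes {\phi}^{i},
\quad
Q_{2} := - {\frac{1}{2}}{\,}{\sum}_{i,j,k \in H} (-1)^{jk+j}{h}^{i}_{jk}{\ket{0}} \otimes {\phi}_{i}({\phi}^{j}{\phi}^{k}).
\end{align*}
By bilinearity we then compute the three pieces $[Q_{1} {\,}{}_{\Lambda}{\,} Q_{1}]$, $[Q_{1} {\,}{}_{\Lambda}{\,} Q_{2}] + [Q_{2} {\,}{}_{\Lambda}{\,} Q_{1}]$ and $[Q_{2} {\,}{}_{\Lambda}{\,} Q_{2}]$. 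The mixed terms are related by the skew-symmetry of the ${\Lambda}$-bracket, so only one of them needs computing. All computations use the non-commutative Wick formula for the ${\Lambda}$-bracket, the fact that the $\Lambda$-brackets of the two tensor factors supercommute, and the basic relations $[{\phi}_{i}{\,}{}_{\Lambda}{\,}{\phi}^{j}] = {\delta}_{ij}$ and $[{\overline{u}}_{a}{\,}{}_{\Lambda}{\,}{\overline{u}}_{b}] = (-1)^{a}{\overline{[u_a,u_b]}}$. The latter holds in ${V}^{0}_{{\,\,}{\text{s}}}({\mathfrak{h}})$ since the level is $0$, as in the proof of Lemma \ref{lem: sec SUSY semi inf cohom J wedge homo}.

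\emph{Step 1.} In $[Q_{1}{\,}{}_{\Lambda}{\,}Q_{1}]$ the ghosts ${\phi}^{i}$ have zero ${\Lambda}$-bracket among themselves. Only the bracket of the ${\overline{u}}$'s survives, and there are no integral (central) terms because the level is zero. The result is a term of the form ${\sum} \pm {h}^{k}_{ij}{\overline{u}}_{k}\otimes {\phi}^{i}{\phi}^{j}$.

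\emph{Step 2.} In the mixed bracket, $Q_{1}$ sees only the ${\phi}_{i}$ factor of $Q_{2}$. This produces $\mp{\sum}{h}^{i}_{jk}{\overline{u}}_{i}\otimes{\phi}^{j}{\phi}^{k}$, which is designed to cancel Step 1 after symmetrizing in $j,k$ with the appropriate signs. The factor $\tfrac12$ in $Q_{2}$ is compensated by the two copies of the mixed term. Equivalently, Step 2 says $[Q_2{\,}{}_{\Lambda}{\,}{\phi}^{i}]$ reproduces the $J_{\bigwedge}$-type action of ${\mathfrak{h}}$ on the ghosts, mirroring $[Q_1{\,}{}_\Lambda{\,}{\phi}^i]=0$ against $[{\overline u}{\,}{}_\Lambda{\,}{\overline u}]$. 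We should also check that no $\chi$- or $\lambda$-dependent terms arise, since all contractions here are single.

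\emph{Step 3.} The main obstacle is $[Q_{2}{\,}{}_{\Lambda}{\,}Q_{2}]$, which lives purely in ${\textstyle\bigwedge}^{\frac{\infty}{2}+\bullet}_{{\,\,}{\text{s}}}({\mathfrak{h}})$. Single contractions give a quintic term ${\sum}{h}{h}\,{\phi}_{\cdot}{\phi}^{\cdot}{\phi}^{\cdot}{\phi}^{\cdot}$, which vanishes by the super Jacobi identity for ${\mathfrak{h}}$. Before applying Jacobi, we reorder the normally ordered products via quasi-commutativity and quasi-associativity; the correction terms are derivative terms. Double contractions, coming from the integral term of the Wick formula, give terms of the form ${\sum}{h}^{i}_{jk}{h}^{k}_{il}\cdot{\phi}^{j}{\phi}^{l}$ possibly multiplied by ${\chi}$ or ${\lambda}$, together with total derivative terms. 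To handle the reordering systematically, we first check that $\phi_i(\phi^j\phi^k)$ is already symmetric under the relevant exchange, since $[\phi^j{\,}{}_\Lambda{\,}\phi^k]=0$ and ${\phi}^{j}$'s commute with each other. The only nonvanishing reordering corrections then come from moving ${\phi}_{i}$ past a ${\phi}^{i}$. These quadratic terms are governed by $\sum_{i}(-1)^{i}h^{i}_{ji}$ and by Killing-form-type contractions. We expect them to cancel by the super-antisymmetry of $h$ combined with the Jacobi identity, as in the classical Chevalley--Eilenberg computation. If a residual term proportional to a contraction such as $\sum (-1)^{\cdot}h^{i}_{jk}h^{k}_{il}$ remains, we will check that it cancels against the corresponding contribution of the $\phi$--$\phi^{i}$ reorderings. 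Unimodularity is not assumed here, so the cancellation must be exact.

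Combining the three steps yields $[Q{\,}{}_{\Lambda}{\,}Q]=0$.
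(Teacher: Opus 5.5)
Your decomposition $Q=Q_{1}+Q_{2}$ and Steps 1--2 follow the paper's proof exactly. The paper lists the three brackets $[\overline{u}_{i}\otimes\phi^{i} {\,}{}_{\Lambda}{\,} \overline{u}_{r}\otimes\phi^{r}]$, $[\overline{u}_{i}\otimes\phi^{i} {\,}{}_{\Lambda}{\,} \phi_{r}(\phi^{s}\phi^{t})]$ and $[\phi_{i}(\phi^{j}\phi^{k}) {\,}{}_{\Lambda}{\,} \phi_{r}(\phi^{s}\phi^{t})]$, and then invokes the Jacobi identity.

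\textbf{The gap is in Step 3.} There you anticipate double-contraction terms with coefficients like $\sum h^{i}_{jk}h^{k}_{il}$ and $\sum_{i}(-1)^{i}h^{i}_{ji}$, plus reordering corrections, and you propose that they cancel by super-antisymmetry and the Jacobi identity. That mechanism cannot work. These are Killing-form and trace-of-$\mathrm{ad}$ contractions, which are generically nonzero and are not consequences of Jacobi. In the non-supersymmetric complex, terms of exactly this type are what force the coefficient module to have level $-\overline{\kappa}_{\mathfrak{h}}$. So Step 3 rests on a cancellation that is unproven and incorrectly justified. (Also, the single-contraction terms are quartic, with one $\phi_{\cdot}$ and three $\phi^{\cdot}$, not quintic.)

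\textbf{The fix is that these terms never arise.}
\begin{itemize}
\item \emph{No reordering corrections.} The only nonzero ordinary $\lambda$-brackets among ghost generators are $[D\phi_{i} {\,}{}_{\lambda}{\,} \phi^{j}]$ and $[D\phi^{i} {\,}{}_{\lambda}{\,} \phi_{j}]$. Hence $[\phi_{i} {\,}{}_{\lambda}{\,} \phi^{j}]=[\phi^{i} {\,}{}_{\lambda}{\,} \phi^{j}]=[\phi_{i} {\,}{}_{\lambda}{\,} \phi_{j}]=0$. The undifferentiated ghosts therefore generate a supercommutative, associative subalgebra, and quasi-commutativity and quasi-associativity give no corrections at all. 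In particular, moving $\phi_{i}$ past $\phi^{i}$ costs nothing.
\item \emph{No double contractions.} By definition, $[Q_{2} {\,}{}_{\Lambda}{\,} Q_{2}]=[DQ_{2} {\,}{}_{\lambda}{\,} Q_{2}]+\chi[Q_{2} {\,}{}_{\lambda}{\,} Q_{2}]$, and the $\chi$-part vanishes by the previous point. Every monomial of $DQ_{2}$ contains exactly one factor $D\phi$, and that factor is the only thing that can contract with the undifferentiated fields in $Q_{2}$. So only single contractions occur, and there are no $\lambda$-, $\chi$- or quadratic terms. Equivalently, all basic $\Lambda$-brackets are constants, and $\int_{0}^{\Lambda}d\Gamma=\partial_{\eta}\int_{0}^{\lambda}d\gamma$ kills every $\eta$-independent integrand.
\end{itemize}

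Consequently $[Q_{2} {\,}{}_{\Lambda}{\,} Q_{2}]$ is exactly the sum of the four quartic terms $\phi_{r}(\phi^{t}(\phi^{j}\phi^{k}))$, $\phi_{r}(\phi^{s}(\phi^{j}\phi^{k}))$, $\phi_{i}(\phi^{k}(\phi^{s}\phi^{t}))$, $\phi_{i}(\phi^{j}(\phi^{s}\phi^{t}))$, as in the paper. Their cancellation is purely the super Jacobi identity of $\mathfrak{h}$. With this correction, your argument coincides with the paper's.
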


\begin{proof}
From the Jacobi identity for Lie superalgebras and the following ${\Lambda}$-bracket relations
\begin{align*}
[{\overline{u}}_{i} \otimes {\phi}^{i} {\,} {}_{\Lambda} {\,} {\overline{u}}_{r} \otimes {\phi}^{r}] 
=& {\,} (-1)^{r} {\overline{[{u}_{r}, {u}_{i}]}} \otimes {\phi}^{i}{\phi}^{r},
\\
[{\overline{u}}_{i} \otimes {\phi}^{i} {\,} {}_{\Lambda} {\,} {\ket{0}} \otimes {\phi}_{r}({\phi}^{s}{\phi}^{t})] 
=& {\,} (-1)^{r + 1}{\delta}_{ir} {\overline{u}}_{i} \otimes {\phi}^{s}{\phi}^{t},
\\
[{\ket{0}} \otimes {\phi}_{i}({\phi}^{j}{\phi}^{k}) {\,} {}_{\Lambda} {\,} {\ket{0}} \otimes {\phi}_{r}({\phi}^{s}{\phi}^{t})] 
=& {\,} (-1)^{(i + j + k)(r + s + t) + it + i + r}{\delta}_{is} {\ket{0}} \otimes {\phi}_{r}{\big(}{\phi}^{t}({\phi}^{j}{\phi}^{k}){\big)}
\\
&+ (-1)^{(i + j + k)(r + s + t) + r + s + 1}{\delta}_{it} {\ket{0}} \otimes {\phi}_{r}{\big(}{\phi}^{s}({\phi}^{j}{\phi}^{k}){\big)}
\\
&+ (-1)^{jk + i + j}{\delta}_{jr} {\ket{0}} \otimes {\phi}_{i}{\big(}{\phi}^{k}({\phi}^{s}{\phi}^{t}){\big)}
\\
&+ (-1)^{i + j + 1}{\delta}_{kr} {\ket{0}} \otimes {\phi}_{i}{\big(}{\phi}^{j}({\phi}^{s}{\phi}^{t}){\big)},
\end{align*}
for $i, j, k, r, s, t \in H$, we obtain the desired result.
\end{proof}

\vskip 1mm

Let ${M}$ be a module over a vertex algebra ${V}$ (for the definition of a module over a vertex algebra, see, for example, Section 4.1 of \cite{CKLW18} and Section A.3 of \cite{Sato22}). Then, in particular, the following identity follows from the Borcherds identity for vertex algebra modules:
\begin{align}
\label{eq: sec SUSY semi inf cohom BI}
({a}_{(0)}{b})^{M}_{(0)} = {a}^{M}_{(0)}{b}^{M}_{(0)} + (-1)^{ab + 1}{b}^{M}_{(0)}{a}^{M}_{(0)},
\end{align}
for $a, b \in V$. Moreover, let ${V}_{1}$ and ${V}_{2}$ be vertex algebras, and let ${M}_{i}$ be a ${V}_{i}$-module for $i = 1, 2$. Then ${M}_{1} \otimes {M}_{2}$ is a ${V}_{1} \otimes {V}_{2}$-module via the action
\begin{align*}
({a} \otimes {b})^{{M}_{1} \otimes {M}_{2}}_{(n)} = {\sum}_{m \in {\mathbb{Z}}} {a}^{{M}_{1}}_{(m)} \otimes {b}^{{M}_{2}}_{(n - m - 1)},
\end{align*}
for $a \in {V}_{1}$, $b \in {V}_{2}$, and $n \in {\mathbb{Z}}$ (see Section 4.6 of \cite{LL04}).

\vskip 1mm

We now introduce the complex and differential for supersymmetric semi-infinite cohomology. For a ${V}^{0}_{{\,\,} {\text{s}}}({\mathfrak{h}})$-module ${M}$, we set 
\begin{align}
\label{eq: sec SUSY semi inf cohom def of complex}
{C}^{{\frac{\infty}{2}} + {\bullet}}_{{\,\,} {\text{s}}}({\mathfrak{h}}, {M}) := {M} \otimes {\textstyle{\bigwedge}}^{{\frac{\infty}{2}} + {\bullet}}_{{\,\,} {\text{s}}}({\mathfrak{h}}),
\end{align}
and consider ${C}^{{\frac{\infty}{2}} + {\bullet}}_{{\,\,} {\text{s}}}({\mathfrak{h}}, {M})$ as a ${V}^{0}_{{\,\,} {\text{s}}}({\mathfrak{h}}) \otimes {\textstyle{\bigwedge}}^{{\frac{\infty}{2}} + {\bullet}}_{{\,\,} {\text{s}}}({\mathfrak{h}})$-module. Define the differential on ${C}^{{\frac{\infty}{2}} + {\bullet}}_{{\,\,} {\text{s}}}({\mathfrak{h}}, {M})$ by
\begin{align}
\label{eq: sec SUSY semi inf cohom def of differential}
{d}_{\text{s}} 
:= {DQ}^{{C}^{{\frac{\infty}{2}} + {\bullet}}_{{\,\,} {\text{s}}}({\mathfrak{h}}, {M})}_{(0)},
\end{align}
as a ${V}^{0}_{{\,\,} {\text{s}}}({\mathfrak{h}}) \otimes {\textstyle{\bigwedge}}^{{\frac{\infty}{2}} + {\bullet}}_{{\,\,} {\text{s}}}({\mathfrak{h}})$-module action, that is:
\begin{align}
\begin{split}
{d}_{\text{s}} 
=& {\,} {\sum}_{i \in H}{\sum}_{m \in {\mathbb{Z}}} (-1)^{i} {u}^{M}_{i {\,} (m)} \otimes {\phi}^{i}_{(-m-1)} 
- {\sum}_{i \in H}{\sum}_{m \in {\mathbb{Z}}} {\overline{u}}^{M}_{i {\,} (m)} \otimes {D}{\phi}^{i}_{(-m-1)}
\\
&- {\frac{1}{2}}{\,}{\sum}_{i, j, k \in H} (-1)^{jk + j}{h}^{i}_{jk} {\,}{\text{id}}_{M} \otimes ({D}({\phi}_{i}{\phi}^{j}{\phi}^{k}))_{(0)},
\end{split}
\label{eq: sec SUSY semi inf cohom mode expression of differential}
\end{align}
where $Q$ is given by \eqref{eq: sec SUSY semi inf cohom def of Q}. Then we have ${d}^{2}_{\text{s}} = 0$, which follows from Proposition \ref{prop: sec SUSY semi inf OPE Q and Q}, the sesquilinearity of the ${\Lambda}$-bracket, and \eqref{eq: sec SUSY semi inf cohom BI}.

\vskip 1mm

We also introduce a cohomological grading on ${\textstyle{\bigwedge}}^{{\frac{\infty}{2}} + {\bullet}}_{{\,\,} {\text{s}}}({\mathfrak{h}})$ as follows:
\begin{align*}
{\text{deg}}{\,}({D}^{m}{\phi}_{i}) := -1,
{\quad}
{\text{deg}}{\,}({D}^{m}{\phi}^{i}) := 1,
\end{align*}
for $i \in H$ and $m \in {\mathbb{Z}}_{\geq 0}$. For the normally ordered product, we set
\begin{align*}
{\text{deg}}{\,}({\varphi}_{1}{\varphi}_{2}) := {\text{deg}}{\,}({\varphi}_{1}) + {\text{deg}}{\,}({\varphi}_{2}),
\end{align*}
for ${\varphi}_{1}, {\varphi}_{2} \in {\textstyle{\bigwedge}}^{{\frac{\infty}{2}} + {\bullet}}_{{\,\,} {\text{s}}}({\mathfrak{h}})$. Thus, with respect to this cohomological grading, we have the following decomposition:
\begin{align*}
{\textstyle{\bigwedge}}^{{\frac{\infty}{2}} + {\bullet}}_{{\,\,} {\text{s}}}({\mathfrak{h}}) = {\bigoplus}_{n \in {\mathbb{Z}}}{\,} {\textstyle{\bigwedge}}^{{\frac{\infty}{2}} + {n}}_{{\,\,} {\text{s}}}({\mathfrak{h}}),
\end{align*}
where ${\textstyle{\bigwedge}}^{{\frac{\infty}{2}} + {n}}_{{\,\,} {\text{s}}}({\mathfrak{h}}) := {\{} {\varphi} \in {\textstyle{\bigwedge}}^{{\frac{\infty}{2}} + {\bullet}}_{{\,\,} {\text{s}}}({\mathfrak{h}}) {\,} {|} {\,} {\text{deg}}{\,}({\varphi}) = n {\}}$ for $n \in {\mathbb{Z}}$. Consequently, the space ${C}^{{\frac{\infty}{2}} + {\bullet}}_{{\,\,} {\text{s}}}({\mathfrak{h}}, {M})$ is endowed with a cohomological grading given by
\begin{align*}
{C}^{{\frac{\infty}{2}} + {\bullet}}_{{\,\,} {\text{s}}}({\mathfrak{h}}, {M}) = {\bigoplus}_{n \in {\mathbb{Z}}}{\,} {C}^{{\frac{\infty}{2}} + {n}}_{{\,\,} {\text{s}}}({\mathfrak{h}}, {M}),
\end{align*}
where ${C}^{{\frac{\infty}{2}} + {n}}_{{\,\,} {\text{s}}}({\mathfrak{h}}, {M}) := {M} \otimes {\textstyle{\bigwedge}}^{{\frac{\infty}{2}} + {n}}_{{\,\,} {\text{s}}}({\mathfrak{h}})$ for $n \in {\mathbb{Z}}$. Moreover, ${d}_{\text{s}}$ increases the cohomological grading by $1$, that is:
\begin{align*}
{d}_{\text{s}}({v} \otimes {\varphi}) \in {C}^{{\frac{\infty}{2}} + {n + 1}}_{{\,\,} {\text{s}}}({\mathfrak{h}}, {M}),    
\end{align*}
for ${v} \otimes {\varphi} \in {C}^{{\frac{\infty}{2}} + {n}}_{{\,\,} {\text{s}}}({\mathfrak{h}}, {M})$ and $n \in {\mathbb{Z}}$.

\vskip 1mm

\begin{definition}\label{def: SUSY semi-inf coh}\rm
Let ${\mathfrak{h}}$ be a finite-dimensional Lie superalgebra, and let ${M}$ be a ${V}^{0}_{{\,\,} {\text{s}}}({\mathfrak{h}})$-module. The {\textit{supersymmetric semi-infinite cohomology}} of ${\widehat{\mathfrak{h}}}$ with coefficients in ${M}$ is the cohomology of the complex $({C}^{{\frac{\infty}{2}} + {\bullet}}_{{\,\,} {\text{s}}}({\mathfrak{h}}, {M}), {d}_{\text{s}})$, denoted by
\begin{align*}
{H}^{{\frac{\infty}{2}} + {\bullet}}_{{\,\,} {\text{s}}}({\mathfrak{h}}, {M}) = {\bigoplus}_{n \in {\mathbb{Z}}}{\,} {H}^{{\frac{\infty}{2}} + {n}}_{{\,\,} {\text{s}}}({\mathfrak{h}}, {M}),
\end{align*}
where ${C}^{{\frac{\infty}{2}} + {\bullet}}_{{\,\,} {\text{s}}}({\mathfrak{h}}, {M})$ and ${d}_{\text{s}}$ are defined in \eqref{eq: sec SUSY semi inf cohom def of complex} and \eqref{eq: sec SUSY semi inf cohom def of differential}, respectively.
\end{definition}

\vskip 1mm

To define the relative version of supersymmetric semi-infinite cohomology, we now consider the following lemma, which is a supersymmetric analogue of Proposition 1.4 of \cite{FGZ86}.

\vskip 1mm

\begin{lemma}\rm
In ${V}^{0}_{{\,\,} {\text{s}}}({\mathfrak{h}}) \otimes {\textstyle{\bigwedge}}^{{\frac{\infty}{2}} + {\bullet}}_{{\,\,} {\text{s}}}({\mathfrak{h}})$, we have
\begin{align}
&[{Q} {\,} {}_{\Lambda} {\,} {\ket{0}} \otimes {\phi}_{r}] 
= - {\big(}{\overline{u}}_{r} \otimes {\ket{0}} + {\ket{0}} \otimes {J}_{\bigwedge}({\overline{u}}_{r}){\big)},
\label{eq: sec SUSY semi inf cohom lem Q and phi}
\\
&[{Q} {\,} {}_{\Lambda} {\,} {\overline{u}}_{r} \otimes {\ket{0}} + {\ket{0}} \otimes {J}_{\bigwedge}({\overline{u}}_{r})] 
= 0,
\label{eq: sec SUSY semi inf cohom lem Q and J}
\end{align}
for $r \in H$, where ${J}_{\bigwedge}$ is given in Lemma \ref{lem: sec SUSY semi inf cohom J wedge homo}.
\end{lemma}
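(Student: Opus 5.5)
We prove both identities by direct $\Lambda$-bracket computation, using the two summands of $Q$ in \eqref{eq: sec SUSY semi inf cohom def of Q}, sesquilinearity, the non-commutative Wick formula for the $\Lambda$-bracket, and the fact that the two tensor factors of ${V}^{0}_{{\,\,} {\text{s}}}({\mathfrak{h}}) \otimes {\textstyle{\bigwedge}}^{{\frac{\infty}{2}} + {\bullet}}_{{\,\,} {\text{s}}}({\mathfrak{h}})$ supercommute. The only nonzero bracket among ghost generators is $[{\phi}_{i} {\,} {}_{\Lambda} {\,} {\phi}^{j}] = {\delta}_{ij}$. Since the level is $0$, we also have $[{\overline{u}}_{a} {\,} {}_{\Lambda} {\,} {\overline{u}}_{b}] = (-1)^{a}{\overline{[{u}_{a}, {u}_{b}]}}$, with no central term.

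For \eqref{eq: sec SUSY semi inf cohom lem Q and phi}, it is easiest to compute $[{\ket{0}} \otimes {\phi}_{r} {\,} {}_{\Lambda} {\,} Q]$ and then apply skew-symmetry. In the first summand only ${\phi}^{i}$ pairs with ${\phi}_{r}$, which produces $\pm {\overline{u}}_{r} \otimes {\ket{0}}$. In the cubic summand ${\phi}_{r}$ pairs with ${\phi}^{j}$ or ${\phi}^{k}$ in ${\phi}_{i}({\phi}^{j}{\phi}^{k})$. By the Wick formula this gives two terms of the form ${h}^{i}_{rk}{\phi}_{i}{\phi}^{k}$. No integral terms appear, because all brackets between ghosts are constants. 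Using the antisymmetry ${h}^{i}_{jk} = -(-1)^{jk}{h}^{i}_{kj}$, the two terms combine, and the factor $\frac{1}{2}$ cancels. Quasi-commutativity reorders ${\phi}_{i}{\phi}^{k}$ into ${\phi}^{k}{\phi}_{i}$; the correction term is a multiple of ${h}^{i}_{ri}$ times the vacuum, and it vanishes after ${\partial}$-integration because the bracket is constant. This matches ${J}_{\bigwedge}({\overline{u}}_{r})$ from Lemma \ref{lem: sec SUSY semi inf cohom J wedge homo}. The result is a $\Lambda$-independent constant, so skew-symmetry only introduces the sign $(-1)^{p(Q)p(\phi_r)}$, where $Q$ is even, and possibly a $D$ acting through $-\nabla$, which is absent here. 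Comparing signs gives the overall minus in \eqref{eq: sec SUSY semi inf cohom lem Q and phi}.

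For \eqref{eq: sec SUSY semi inf cohom lem Q and J}, we avoid a second long computation by using the Jacobi identity for the $\Lambda$-bracket. Write $J_r := {\overline{u}}_{r} \otimes {\ket{0}} + {\ket{0}} \otimes {J}_{\bigwedge}({\overline{u}}_{r})$, so that $J_r = -[{Q} {\,} {}_{\Lambda} {\,} {\ket{0}} \otimes {\phi}_{r}]$, and this bracket is independent of $\Lambda$. Then
\begin{align*}
[{Q} {\,} {}_{\Lambda} {\,} [{Q} {\,} {}_{\Gamma} {\,} {\phi}_{r}]] = (-1)^{Q+1}[[{Q} {\,} {}_{\Lambda} {\,} {Q}] {\,} {}_{\Lambda + \Gamma} {\,} {\phi}_{r}] + (-1)^{(Q+1)(Q+1)}[{Q} {\,} {}_{\Gamma} {\,} [{Q} {\,} {}_{\Lambda} {\,} {\phi}_{r}]].
\end{align*}
By Proposition \ref{prop: sec SUSY semi inf OPE Q and Q}, the first term on the right vanishes. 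Since $Q$ is even, the second term carries the sign $-1$. Hence $[{Q} {\,} {}_{\Lambda} {\,} J_r] = -[{Q} {\,} {}_{\Gamma} {\,} J_r]$ for independent $\Lambda$ and $\Gamma$. A polynomial in $\Lambda$ that equals minus the same polynomial in an independent variable $\Gamma$ must be constant, and that constant is forced to satisfy $c = -c$, so it is $0$.

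The main obstacle is the sign bookkeeping in the first identity, which must match the precise signs in ${J}_{\bigwedge}({\overline{u}}_{r})$ and in the cubic term of $Q$. I would carry it out carefully with parities $p({\phi}_i) = p(u_i)$ and $p({\phi}^i) = p(u_i) + 1$, keeping track of the Wick-formula sign $(-1)^{(a+1)b}$. I would also double-check the sign convention for the odd Jacobi identity, since the vanishing argument in the second identity depends on it. If that sign turned out to be $+$, I would instead expand $[{Q} {\,} {}_{\Lambda} {\,} J_r]$ directly. The $Q$-part applied to ${\overline{u}}_{r}$ would then cancel against the cubic part applied to ${J}_{\bigwedge}({\overline{u}}_{r})$ by the Jacobi identity of ${\mathfrak{h}}$, exactly as in the proof of Proposition \ref{prop: sec SUSY semi inf OPE Q and Q}.
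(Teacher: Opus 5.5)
Your proposal is correct and follows the paper's route. You compute the first identity directly from the two summands of $Q$ against $\phi_r$; passing through $[\phi_r {}_{\Lambda} Q]$ and skew-symmetry is harmless, since $Q$ is even and the result is $\Lambda$-independent. You then derive the second identity from the first via the Jacobi identity and $[Q {}_{\Lambda} Q]=0$, where the sign $(-1)^{(Q+1)(Q+1)}=-1$ in the paper's Jacobi identity indeed gives $[Q {}_{\Lambda} J_r]=-[Q {}_{\Gamma} J_r]$ and hence vanishing. One minor remark: the ordinary $\lambda$-bracket $[\phi_i {}_{\lambda} \phi^k]$ is already zero, so the reordering $\phi_i\phi^k=(-1)^{i(k+1)}\phi^k\phi_i$ is exact and there is no correction term to dispose of.
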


\begin{proof}
First, in ${\textstyle{\bigwedge}}^{{\frac{\infty}{2}} + {\bullet}}_{{\,\,} {\text{s}}}({\mathfrak{h}})$, we know that
\begin{align}
[{\phi}_{i}({\phi}^{j}{\phi}^{k}) {\,} {}_{\Lambda} {\,} {\phi}_{r}] 
= (-1)^{r(i + j + k)}{\big(}(-1)^{(r + 1)i}{\delta}_{rj}{\phi}_{i}{\phi}^{k} + (-1)^{(r + 1)(i + j + 1)}{\delta}_{rk}{\phi}_{i}{\phi}^{j}{\big)},
\label{eq: sec SUSY semi inf cohom phi phi phi and phi}
\end{align}
for $i, j, k, r \in H$. Also, in ${V}^{0}_{{\,\,} {\text{s}}}({\mathfrak{h}}) \otimes {\textstyle{\bigwedge}}^{{\frac{\infty}{2}} + {\bullet}}_{{\,\,} {\text{s}}}({\mathfrak{h}})$, we have
\begin{align*}
[{\overline{u}}_{i} \otimes {\phi}^{i} {\,} {}_{\Lambda} {\,} {\ket{0}} \otimes {\phi}_{r}] 
= (-1)^{r + 1}{\delta}_{ri} {\overline{u}}_{i} \otimes {\ket{0}},
\end{align*}
for $i, r \in H$. Hence, \eqref{eq: sec SUSY semi inf cohom lem Q and phi} follows directly from the above ${\Lambda}$-bracket relations. The OPE in \eqref{eq: sec SUSY semi inf cohom lem Q and J} follows from \eqref{eq: sec SUSY semi inf cohom lem Q and phi}, the Jacobi identity of the ${\Lambda}$-bracket, and Proposition \ref{prop: sec SUSY semi inf OPE Q and Q}.
\end{proof}

\vskip 1mm

For ${u} \in {\mathfrak{h}}$, we use the following notation:
\begin{align*}
{J}_{u} := {u} \otimes {\ket{0}} + {\ket{0}} \otimes {J}_{\bigwedge}(u) \in {V}^{0}_{{\,\,} {\text{s}}}({\mathfrak{h}}) \otimes {\textstyle{\bigwedge}}^{{\frac{\infty}{2}} + {\bullet}}_{{\,\,} {\text{s}}}({\mathfrak{h}}),
\end{align*}
and consider the invariant subspace defined by
\begin{align}
\label{eq: sec SUSY semi inf cohom def of subcomplex}
{C}^{{\frac{\infty}{2}} + {\bullet}}_{{\,\,} {\text{s, rel}}}({\mathfrak{h}}, {M}) := {\{} {v} \in {C}^{{\frac{\infty}{2}} + {\bullet}}_{{\,\,} {\text{s}}}({\mathfrak{h}}, {M}) {\,} {|} {\,} {J}_{{u}_{i} (0)}{v} = 0 {\text{ and }} ({\ket{0}} \otimes {D}{\phi}_{i})_{(0)}{v} = 0 {\text{ for all }} {i} \in H {\}},
\end{align}
where ${J}_{{u}_{i} (0)}$ and $({\ket{0}} \otimes {D}{\phi}_{i})_{(0)}$ act via the ${V}^{0}_{{\,\,} {\text{s}}}({\mathfrak{h}}) \otimes {\textstyle{\bigwedge}}^{{\frac{\infty}{2}} + {\bullet}}_{{\,\,} {\text{s}}}({\mathfrak{h}})$-module action on ${C}^{{\frac{\infty}{2}} + {\bullet}}_{{\,\,} {\text{s}}}({\mathfrak{h}}, {M})$. Then, in ${V}^{0}_{{\,\,} {\text{s}}}({\mathfrak{h}}) \otimes {\textstyle{\bigwedge}}^{{\frac{\infty}{2}} + {\bullet}}_{{\,\,} {\text{s}}}({\mathfrak{h}})$, we have:
\begin{gather}
{D}{Q}_{(0)}{J}_{{u}_{i}} = 0,
\label{eq: sec SUSY semi inf cohom DQ and J}
\\
{D}{Q}_{(0)}({\ket{0}} \otimes {D}{\phi}_{i}) = {J}_{{u}_{i}},
\label{eq: sec SUSY semi inf cohom DQ and Dphi}
\end{gather}
for ${i} \in H$, which follow from the sesquilinearity of the ${\Lambda}$-bracket together with \eqref{eq: sec SUSY semi inf cohom lem Q and J} and \eqref{eq: sec SUSY semi inf cohom lem Q and phi}, respectively. Thus, from \eqref{eq: sec SUSY semi inf cohom BI} together with \eqref{eq: sec SUSY semi inf cohom DQ and J} and \eqref{eq: sec SUSY semi inf cohom DQ and Dphi}, respectively, we have:
\begin{gather*}
{J}_{{u}_{i} (0)}{d}_{\text{s}} = (-1)^{i}{d}_{\text{s}}{J}_{{u}_{i} (0)},
\\
({\ket{0}} \otimes {D}{\phi}_{i})_{(0)}{d}_{\text{s}} = (-1)^{i + 1}{d}_{\text{s}}({\ket{0}} \otimes {D}{\phi}_{i})_{(0)} + (-1)^{i}{J}_{{u}_{i} (0)},
\end{gather*}
for $i \in H$, as the ${V}^{0}_{{\,\,} {\text{s}}}({\mathfrak{h}}) \otimes {\textstyle{\bigwedge}}^{{\frac{\infty}{2}} + {\bullet}}_{{\,\,} {\text{s}}}({\mathfrak{h}})$-module action on ${C}^{{\frac{\infty}{2}} + {\bullet}}_{{\,\,} {\text{s}}}({\mathfrak{h}}, {M})$. Hence, we have
\begin{align*}
{d}_{\text{s}}(v) \in {C}^{{\frac{\infty}{2}} + {\bullet}}_{{\,\,} {\text{s, rel}}}({\mathfrak{h}}, {M}),    
\end{align*}
for ${v} \in {C}^{{\frac{\infty}{2}} + {\bullet}}_{{\,\,} {\text{s, rel}}}({\mathfrak{h}}, {M})$, so that the invariant subspace defined in \eqref{eq: sec SUSY semi inf cohom def of subcomplex} is a subcomplex of the complex given in \eqref{eq: sec SUSY semi inf cohom def of complex} with respect to the differential ${d}_{\text{s}}$, which leads to the following definition.

\vskip 1mm

\begin{definition}\label{def: rel SUSY semi-inf coh}\rm
The cohomology of the subcomplex ${C}^{{\frac{\infty}{2}} + {\bullet}}_{{\,\,} {\text{s, rel}}}({\mathfrak{h}}, {M})$ in \eqref{eq: sec SUSY semi inf cohom def of subcomplex} with the differential ${d}_{\text{s}}$ is denoted by
\begin{align*}
{H}^{{\frac{\infty}{2}} + {\bullet}}_{{\,\,} {\text{s, rel}}}({\mathfrak{h}}, {M}) = {\bigoplus}_{n \in {\mathbb{Z}}}{\,} {H}^{{\frac{\infty}{2}} + {n}}_{{\,\,} {\text{s, rel}}}({\mathfrak{h}}, {M}),
\end{align*}
and is called the {\textit{relative supersymmetric semi-infinite cohomology}} of ${\widehat{\mathfrak{h}}}$ with coefficients in ${M}$.
\end{definition}

\vskip 4mm
\section{$N = 2$ superconformal structure of supersymmetric semi--infinite cohomology}\label{section: N=2 sconf of SUSY semi inf cohom}
\vskip 2mm

We have studied the super Manin pair structure on ${\mathfrak{g}} \oplus {\mathfrak{h}}$ arising from both the Kazama--Suzuki $N = 2$ condition (see Assumption \ref{assum: second assumption}) and the unimodularity of ${\mathfrak{h}}$, and have shown that the corresponding super-affine vertex algebra admits an $N = {\frac{3}{2}}$ superconformal structure (see Theorem \ref{theorem: N=3/2 of saff}). In this section, we will show that the above $N = {\frac{3}{2}}$ superconformal structure on the super-affine vertex algebra, combined with the $N = 2$ superconformal structure on the ghost system of the supersymmetric semi-infinite complex, gives rise to BRST-invariant states and induces an $N = 2$ superconformal structure on the supersymmetric semi-infinite cohomology of ${\widehat{\mathfrak{h}}}$ with coefficients in the super-affine vertex algebra associated to ${\mathfrak{g}} \oplus {\mathfrak{h}}$.

\vskip 1mm

Recall Assumption \ref{assum: first assumption}, which states that ${\mathfrak{g}}$ is a finite-dimensional Lie superalgebra with a nondegenerate consistent supersymmetric invariant bilinear form ${\kappa}_{\mathfrak{g}}$, and that there exists an injective Lie superalgebra homomorphism ${\iota} : {\mathfrak{h}} \rightarrow {\mathfrak{g}}$ such that the bilinear form ${\kappa}_{\mathfrak{h}} := {\kappa}_{\mathfrak{g}} \circ ({\iota} \times {\iota})$ is nondegenerate on ${\mathfrak{h}}$. Under Assumption \ref{assum: first assumption}, consider the Lie superalgebra ${\mathfrak{g}} \oplus {\mathfrak{h}}$ together with the bilinear form ${\kappa}_{{\mathfrak{g}} \oplus {\mathfrak{h}}}$ defined in \eqref{eq: sec sMP bilinear form on g+h}. Then the super-affine vertex algebra ${V}^{{\kappa}_{{\mathfrak{g}} \oplus {\mathfrak{h}}}}_{{\,\,} {\text{s}}}({\mathfrak{g}} \oplus {\mathfrak{h}})$ is a ${V}^{0}_{{\,\,} {\text{s}}}({\mathfrak{h}})$-module via the diagonal vertex algebra homomorphism ${V}^{0}_{{\,\,} {\text{s}}}({\mathfrak{h}}) \rightarrow {V}^{{\kappa}_{{\mathfrak{g}} \oplus {\mathfrak{h}}}}_{{\,\,} {\text{s}}}({\mathfrak{g}} \oplus {\mathfrak{h}})$ defined on the generators by
\begin{align}
{h} \mapsto {\iota}({h}) + {h},
{\quad}
{\overline{h}} \mapsto {\overline{{\iota}({h})}} + {\overline{h}},
\label{eq: sec N=2 of SUSY semi inf diag homo}
\end{align}
for ${h} \in {\mathfrak{h}}$. Note that the conformal grading on ${V}^{{\kappa}_{{\mathfrak{g}} \oplus {\mathfrak{h}}}}_{{\,\,} {\text{s}}}({\mathfrak{g}} \oplus {\mathfrak{h}})$ is given by the conformal vector ${T}_{{\mathfrak{g}} \oplus {\mathfrak{h}}}$ (see \eqref{eq: OPE T with generator of saff}).

\vskip 1mm

Now, let $\{ {u}_{i} \}_{i \in H}$ be a parity-homogeneous basis of ${\mathfrak{h}}$. For the ghost sector ${\textstyle{\bigwedge}}^{{\frac{\infty}{2}} + {\bullet}}_{{\,\,} {\text{s}}}({\mathfrak{h}})$, we recall the following shifts of the standard $N = 2$ superconformal structure on the ${b}{c}$--${\beta}{\gamma}$ system:

\vskip 1mm

\begin{proposition}[\cite{FGLS92}]\label{prop: sec N=2 of SUSY semi inf N=2 shift of ghost}\rm
In ${\textstyle{\bigwedge}}^{{\frac{\infty}{2}} + {\bullet}}_{{\,\,} {\text{s}}}({\mathfrak{h}})$, the following vectors
\begin{align*}
{G}_{\wedge} 
=& {\,} {\sum}_{i \in {H}_{0}} ({\partial}{\phi}_{i}){\phi}^{i} + {\sum}_{i \in {H}_{1}} {\phi}_{i}{\partial}{\phi}^{i} + {\sum}_{i \in {H}} {D}{\phi}_{i}{D}{\phi}^{i} + {\sum}_{i \in {H}} {s}_{i}{\partial}({\phi}_{i}{\phi}^{i}),
\\
{U}_{\wedge} 
=& {\,} {\sum}_{i \in {H}_{0}} ({D}{\phi}_{i}){\phi}^{i} - {\sum}_{i \in {H}_{1}} {\phi}_{i}{D}{\phi}^{i} + {\sum}_{i \in {H}} {s}_{i}{D}({\phi}_{i}{\phi}^{i}),
\end{align*}
form an $N = 2$ superconformal structure on ${\textstyle{\bigwedge}}^{{\frac{\infty}{2}} + {\bullet}}_{{\,\,} {\text{s}}}({\mathfrak{h}})$ with central charge ${\sum}_{i \in {H}} ({6}{s}_{i} + 3)$, for any ${s}_{i} \in {\mathbb{C}}$. Here, the index sets ${H}_{0}$ and ${H}_{1}$ are given by:
\begin{align*}
{H}_{0} = {\{} i \in H {\,} {|} {\,} p({u}_{i}) = 0 {\}},
{\quad}
{H}_{1} = {\{} i \in H {\,} {|} {\,} p({u}_{i}) = 1 {\}}.
\end{align*}
\end{proposition}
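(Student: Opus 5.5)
The bc--βγ system decomposes as a tensor product over $i \in H$ of single copies, each generated by $\phi_i, D\phi_i, \phi^i, D\phi^i$, and these copies pairwise $\lambda$-commute. Both $G_{\wedge}$ and $U_{\wedge}$ are sums of contributions from the individual copies. Both the $N=2$ relations in Definition \ref{def: N=2 sconf VA} and the equivalent relations \eqref{eq: equiv def of N=2 sconf} of Lemma \ref{lem: equiv def of N=2 sconf} are additive under tensor products, with central charges adding. So it suffices to treat a single copy with index $i$, and to show that it is $N=2$ with central charge $6s_i+3$. Since $G_{\wedge}$ contains $-\partial$ of a bilinear for odd $i$ as well, I also need the parity cases $i \in H_0$ and $i \in H_1$ separately. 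These two cases are related by exchanging the roles of $(\phi_i, D\phi^i)$ and $(\phi^i, D\phi_i)$ up to signs, so the computation is essentially the same.

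First I fix $s_i = 0$. For a single copy I work with the equivalent form of Lemma \ref{lem: equiv def of N=2 sconf}, and I check each of the following by the non-commutative Wick formula, using only $[D\phi_i\,{}_\lambda\,\phi^i] = [D\phi^i\,{}_\lambda\,\phi_i] = 1$ and skew-symmetry.
\begin{enumerate}[(a)]
\item $G^0 := G_{\wedge}|_{s_i=0}$ is an $N=1$ superconformal vector with central charge $3$. Concretely, $\frac12 G^0_{(0)}G^0$ is the standard Virasoro field, in which $\phi_i, \phi^i$ have conformal weights $0$ and $\frac12$ (in a suitable order), and the two pairs each contribute the appropriate $\pm 1, \pm 2, \pm\frac12$ central charges, summing to $3$.
\item $G^0_{(0)}$ acts as $D$ on the generators, i.e. $G^0_{(0)}\phi = D\phi$ and $G^0_{(0)} D\phi = \partial \phi$.
\item $U^0$ is primary of weight $1$ with $[U^0\,{}_\lambda\,U^0] = \lambda$.
\item $[G^0\,{}_\lambda\,U^0] = G^0_{(0)}U^0$.
\item $U^0_{(0)}(G^0_{(0)}U^0) = -G^0$. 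This holds because $U^0_{(0)}$ acts with charge $\pm1$ on the two ghost pairs, and $G^0_{(0)}U^0$ is $G^0$ with the relative sign of its two summands flipped.
\end{enumerate}
This is the classical free-field $N=2$ structure, and these checks are routine.

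Next I add the deformation. I apply Lemma \ref{lem: shift of N=2 sconf} with a single shift direction. I take $v_- = 0$ and $v_+ = \phi_i \phi^i$, or more precisely an odd weight-$\frac12$ field $v$ with $G^{0,+}_{(0)}v$ proportional to $D(\phi_i\phi^i)$. A cleaner route is to observe directly that adding $s_i\,\partial(\phi_i\phi^i)$ to $G$ and $s_i\,D(\phi_i\phi^i)$ to $U$ amounts to shifting by the total derivative of $D$ applied to the weight-$0$ bosonic current $\phi_i\phi^i$ (or its fermionic analogue). I would verify the hypotheses of Lemma \ref{lem: shift of N=2 sconf} for the appropriate $v$: its $\lambda$-brackets with $G^{\pm}$, with $U$, and with itself. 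These reduce to a few Wick computations, and they produce the constants $c_+$ and $\mu_+$ with $6\mu_+c_+ = 6s_i$. Diagonalizability of $(G^{+}_{(0)}v)_{(0)}$ holds because this operator is the ghost-number-type charge $(\phi_i\phi^i)$-zero-mode, which is diagonal on the PBW monomials.

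The main obstacle is fitting the given deformation into the exact shape of Lemma \ref{lem: shift of N=2 sconf}. That lemma shifts $G^{\pm}$ by $\partial v_{\mp}$ separately, whereas here $G = G^+ + G^-$ is shifted by $s_i\partial(\phi_i\phi^i)$. So I need to identify $v_{\pm}$ with $\pm$ charge under $U^0$ such that $\mu_+\partial v_- + \mu_-\partial v_+ = s_i\partial(\phi_i\phi^i)$ and $\mu_+G^{+}_{(0)}v_+ - \mu_-G^{-}_{(0)}v_- = s_iD(\phi_i\phi^i)$. Because $\phi_i\phi^i$ is odd, has weight $\frac12$, and is $U^0$-neutral, I expect this to fail as stated. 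If so, I would instead verify the relations \eqref{eq: equiv def of N=2 sconf} directly for the deformed pair $(G,U)$, expanding each bracket in $s_i$. The first-order terms cancel by the Jacobi identity, using $[G^0\,{}_\lambda\,\phi_i\phi^i] = D(\phi_i\phi^i)$ up to $\lambda$-terms. The quadratic terms vanish because $[D(\phi_i\phi^i)\,{}_\lambda\,D(\phi_i\phi^i)]$ is a multiple of $\lambda$ that contributes only to the central term. Collecting the $\lambda^2$ and $\lambda$ coefficients then gives $c = 3+6s_i$.
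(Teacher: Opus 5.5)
Your reduction to a single copy and the checks (a)--(e) at $s_i=0$ are fine. The plan ``standard structure plus shift'' is also the natural one: the paper gives no proof, citing FGLS92 and presenting this family as an instance of Lemma~\ref{lem: shift of N=2 sconf}. The gap is the deformation step, where you drop that lemma on a false premise and replace it with a sketch that does not prove anything.

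The premise is wrong because $\phi_i\phi^i$ is neutral for the cohomological degree, but not for $U^0$. $U^0$ is the number current of the fermionic pair only: $(D\phi_i,\phi^i)$ for $i\in H_0$ and $(\phi_i,D\phi^i)$ for $i\in H_1$. So exactly one factor of $\phi_i\phi^i$ is charged, and in both parity cases $[U^0\,{}_\lambda\,\phi_i\phi^i]=-\phi_i\phi^i$, which is precisely the condition on $v_+$. You also transposed the parameters. The lemma sends $G^-\mapsto G^-+\mu_+\partial v_+$ and $G^+\mapsto G^++\mu_-\partial v_-$. Hence $G=G^++G^-$ shifts by $\mu_+\partial v_++\mu_-\partial v_-$, while $U$ shifts by $\mu_+G^+_{(0)}v_+-\mu_-G^-_{(0)}v_-$. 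With $v_+=\phi_i\phi^i$, $v_-=0$ and $\mu_+=s_i$, these shapes match the statement exactly.

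To complete the argument:
\begin{itemize}
\item By (b), $G^0_{(0)}U^0=DU^0$. Hence $G^{0,+}=D\phi_iD\phi^i$, while $G^{0,-}=(\partial\phi_i)\phi^i$ for $i\in H_0$ and $G^{0,-}=\phi_i\partial\phi^i$ for $i\in H_1$.
\item Since $\phi_i$ and $\phi^i$ have vanishing $\lambda$-brackets with each other, $[G^{0,-}\,{}_\lambda\,v_+]=0$ and $[v_+\,{}_\lambda\,v_+]=0$. Also $v_+$ is primary of weight $\frac12$.
\item The Wick formula gives $[G^{0,+}\,{}_\lambda\,v_+]=D(\phi_i\phi^i)+\lambda$. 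So $G^{0,+}_{(0)}v_+=D(\phi_i\phi^i)$ and $c_+=1$.
\item $(D(\phi_i\phi^i))_{(0)}$ acts on each generator by a scalar, hence diagonally on PBW monomials.
\end{itemize}
The lemma then produces exactly $G_\wedge$ and $U_\wedge$ with central charge $3+6s_i$.

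Your fallback also misplaces where the $6s_i$ comes from. The first-order terms do not cancel; they carry it. For example, with $v=\phi_i\phi^i$ one gets $[U\,{}_\lambda\,U]=\lambda+s_i([U^0\,{}_\lambda\,Dv]+[Dv\,{}_\lambda\,U^0])=(1+2s_i)\lambda$. All quadratic terms vanish identically: $[v\,{}_\lambda\,v]=[Dv\,{}_\lambda\,v]=0$, and $[Dv\,{}_\lambda\,Dv]=0$ because $Dv$ is the sum of a bosonic and a fermionic number current with self-brackets $-\lambda$ and $+\lambda$. A nonzero $s_i^2$ contribution to the central term, as you suggest, would make $c$ quadratic in $s_i$.
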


\vskip 1mm

We will use the following $N = 2$ superconformal structure on ${\textstyle{\bigwedge}}^{{\frac{\infty}{2}} + {\bullet}}_{{\,\,} {\text{s}}}({\mathfrak{h}})$ throughout the rest of this section. Taking the shift ${s}_{i} = - {\frac{1}{2}}{\big(}1 + (-1)^{i}{\big)}$ for all $i \in H$ in the $N = 2$ superconformal structure of Proposition \ref{prop: sec N=2 of SUSY semi inf N=2 shift of ghost}, we obtain:
\begin{align}
\begin{split}
{G}_{\wedge} 
=& {\,} {\sum}_{i \in {H}} {\big(}(-1)^{i + 1}{\phi}_{i}{\partial}{\phi}^{i} + {D}{\phi}_{i}{D}{\phi}^{i}{\big)},
\\
{U}_{\wedge} 
=& {\,} - {\sum}_{i \in {H}} {\phi}_{i}{D}{\phi}^{i},
\end{split}
\label{eq: sec N=2 of SUSY semi inf def of G and U in ghost}
\end{align}
which form an $N = 2$ superconformal structure on ${\textstyle{\bigwedge}}^{{\frac{\infty}{2}} + {\bullet}}_{{\,\,} {\text{s}}}({\mathfrak{h}})$ with central charge $-{3}{\,}{\text{sdim}}{\,}{\mathfrak{h}}$. Moreover, the conformal grading on ${\textstyle{\bigwedge}}^{{\frac{\infty}{2}} + {\bullet}}_{{\,\,} {\text{s}}}({\mathfrak{h}})$ is given by the fact that the generators are primary of conformal weights
\begin{align*}
{\Delta}({\phi}_{i}) = {\frac{1}{2}},
{\quad}
{\Delta}({D}{\phi}_{i}) = 1,
{\quad}
{\Delta}({\phi}^{i}) = 0,
{\quad}
{\Delta}({D}{\phi}^{i}) = {\frac{1}{2}},
\end{align*}
for $i \in H$ (see also Proposition 4.9 of \cite{Yoon23}).

\vskip 1mm

The following lemma will be used to verify that the fields defining an $N = {\frac{3}{2}}$ superconformal structure on the supersymmetric semi-infinite complex are BRST-invariant.

\vskip 1mm

\begin{lemma}\label{lem: sec N=2 of SUSY semi inf diff with G_gh and U_gh}\rm
In ${\textstyle{\bigwedge}}^{{\frac{\infty}{2}} + {\bullet}}_{{\,\,} {\text{s}}}({\mathfrak{h}})$, the following OPEs hold:
\begin{align*}
[{\phi}^{x} {\,} {}_{\lambda} {\,} {G}_{\wedge}] 
=& {\,} (-1)^{x}{D}{\phi}^{x},
\\
[{D}{\phi}^{x} {\,} {}_{\lambda} {\,} {G}_{\wedge}] 
=& {\,} (-1)^{x + 1}{\partial}{\phi}^{x},
\\
[{D}({\phi}_{x}{\phi}^{y}{\phi}^{z}) {\,} {}_{\lambda} {\,} {G}_{\wedge}] 
=& {\,} (-1)^{x + y + z + 1}{\lambda}({\phi}_{x}{\phi}^{y}{\phi}^{z}) + (-1)^{xy + x + 1}{\delta}_{xz}({\partial} + {\lambda}){D}{\phi}^{y} 
\\
&+ (-1)^{z}{\delta}_{xy}({\partial} + {\lambda}){D}{\phi}^{z},
\\
[{\phi}^{x} {\,} {}_{\lambda} {\,} {U}_{\wedge}] 
=& {\,} 0,
\\
[{D}{\phi}^{x} {\,} {}_{\lambda} {\,} {U}_{\wedge}] 
=& {\,} - {D}{\phi}^{x},
\\
[{D}({\phi}_{x}{\phi}^{y}{\phi}^{z}) {\,} {}_{\lambda} {\,} {U}_{\wedge}] 
=& {\,} {\delta}_{xy}({\partial} + {\lambda}){\phi}^{z} + (-1)^{(x + 1)(y + 1)}{\delta}_{xz}({\partial} + {\lambda}){\phi}^{y},
\end{align*}
for $x, y, z \in H$, where ${G}_{\wedge}$ and ${U}_{\wedge}$ are given in \eqref{eq: sec N=2 of SUSY semi inf def of G and U in ghost}.
\end{lemma}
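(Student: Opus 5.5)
The plan is a direct computation with the non-commutative Wick formula, using only the two nonzero brackets $[{D}{\phi}_{i} {\,} {}_{\lambda} {\,} {\phi}^{j}] = {\delta}_{ij}$ and $[{D}{\phi}^{i} {\,} {}_{\lambda} {\,} {\phi}_{j}] = {\delta}_{ij}$, together with sesquilinearity and skew-symmetry. We start by recording the brackets of the basic generators ${\phi}^{x}$ and ${D}{\phi}^{x}$ with each constituent field. The field ${\phi}^{x}$ pairs nontrivially only with ${D}{\phi}_{x}$. The field ${D}{\phi}^{x}$ pairs nontrivially only with ${\phi}_{x}$.

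In ${G}_{\wedge} = {\sum}_{i} {\big(}(-1)^{i+1}{\phi}_{i}{\partial}{\phi}^{i} + {D}{\phi}_{i}{D}{\phi}^{i}{\big)}$, the bracket of ${\phi}^{x}$ therefore only sees the factor ${D}{\phi}_{x}$. No integral term arises, because all pairwise brackets are constants. This gives $\pm {D}{\phi}^{x}$, and we fix the sign $(-1)^{x}$ by tracking the Koszul sign when ${\phi}^{x}$ passes ${D}{\phi}_{x}$, using skew-symmetry $[{\phi}^{x} {\,} {}_{\lambda} {\,} {D}{\phi}_{x}] = (-1)^{x}$ (the parities are $x+1$ and $x+1$).

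Similarly, ${D}{\phi}^{x}$ only sees ${\phi}_{x}$ in the first sum, which produces ${\partial}{\phi}^{x}$ with sign $(-1)^{x+1}$. For ${U}_{\wedge} = -{\sum}_{i} {\phi}_{i}{D}{\phi}^{i}$, the field ${\phi}^{x}$ pairs with nothing, so its bracket is $0$. The field ${D}{\phi}^{x}$ pairs with ${\phi}_{x}$ and gives $-{D}{\phi}^{x}$.

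For the cubic field, we do not expand ${D}({\phi}_{x}{\phi}^{y}{\phi}^{z})$ directly. Instead we compute the opposite bracket $[{G}_{\wedge} {\,} {}_{\lambda} {\,} {\phi}_{x}{\phi}^{y}{\phi}^{z}]$ (respectively with ${U}_{\wedge}$) by the Wick formula applied twice. Here we use the single-generator brackets of ${G}_{\wedge}$ and ${U}_{\wedge}$ with ${\phi}_{x}$, ${\phi}^{y}$, ${\phi}^{z}$, which follow from the previous step by skew-symmetry. This time the integral terms $\int_{0}^{\lambda}$ do contribute: the brackets of ${G}_{\wedge}$ with generators produce ${D}{\phi}_{x}$, ${\partial}{\phi}^{x}$ and similar fields, and these pair again with the remaining factors. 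Those contractions produce the ${\delta}_{xy}$ and ${\delta}_{xz}$ terms.

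Next we use that ${G}_{\wedge (0)}$ acts as the odd derivation ${D}$. By sesquilinearity ($[{D}a {\,} {}_{\lambda} {\,} b] = [{G}_{(0)}a {\,} {}_{\lambda} {\,} b]$) and the Jacobi identity, or more simply by skew-symmetry after writing ${D}a$ via $[{G}_{\wedge} {\,} {}_{\lambda} {\,} a]$, we convert the result into $[{D}({\phi}_{x}{\phi}^{y}{\phi}^{z}) {\,} {}_{\lambda} {\,} {G}_{\wedge}]$. A cleaner route, and the one we adopt, uses the ${\Lambda}$-bracket. Writing $[a {\,} {}_{\lambda} {\,} b]$ for $a = {D}c$ as the $\chi^{0}$ part of $[c {\,} {}_{\Lambda} {\,} b]$, the cubic computation reduces to ${\Lambda}$-brackets of the superfields ${\phi}_{x}$ and ${\phi}^{y}$ with the superfield expressions of ${G}_{\wedge}$ and ${U}_{\wedge}$. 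These are short.

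The first term $(-1)^{x+y+z+1}{\lambda}({\phi}_{x}{\phi}^{y}{\phi}^{z})$ reflects that ${\phi}_{x}{\phi}^{y}{\phi}^{z}$ has conformal weight $\tfrac12$ and $U$-charge $1$. The total-derivative factors $({\partial}+{\lambda})$ come from sesquilinearity.

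The main obstacle is purely bookkeeping: getting the signs $(-1)^{xy+x+1}$ and $(-1)^{(x+1)(y+1)}$ right, and checking that the quasi-associativity corrections from reordering normally ordered products cancel or combine into the stated ${\delta}$-terms. I would verify the signs first in the purely even case $p(u_i)=0$ for all $i$, then reinstate the parities.
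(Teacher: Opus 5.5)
Your proposal is correct and is the same direct computation the paper intends: the paper only says the identities follow by direct computation, and carrying out your Wick-formula and skew-symmetry bookkeeping does reproduce all six brackets, including the signs $(-1)^{xy+x+1}$ and $(-1)^{z}$. One step should be made explicit. Passing from $[G_{\wedge}\,{}_{\lambda}\,\phi_{x}\phi^{y}\phi^{z}] = {D}(\phi_{x}\phi^{y}\phi^{z}) + \lambda\big(\delta_{xy}\phi^{z} + (-1)^{(x+1)(y+1)}\delta_{xz}\phi^{y}\big)$ to the bracket of ${D}(\phi_{x}\phi^{y}\phi^{z})$ with $G_{\wedge}$ cannot be done by skew-symmetry alone: it needs, via Jacobi or the full $\Lambda$-bracket, that $\phi_{x}\phi^{y}\phi^{z}$ is $T_{\wedge}$-primary of weight $\tfrac{1}{2}$. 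Also, the brackets of $G_{\wedge}$ and $U_{\wedge}$ with $\phi_{x}$ are a separate, equally short, computation; they do not follow from your $\phi^{x}$ and ${D}\phi^{x}$ cases.
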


\begin{proof}
The result follows from direct computations.
\end{proof}

\vskip 1mm

Moreover, Lemma \ref{lem: sec N=2 of SUSY semi inf rel complex and G and U} below will be used later to check that the fields defining an $N = {\frac{3}{2}}$ superconformal structure on the supersymmetric semi-infinite complex belong to the relative supersymmetric semi-infinite complex.

\vskip 1mm

\begin{lemma}\label{lem: sec N=2 of SUSY semi inf rel complex and G and U}\rm
In ${\textstyle{\bigwedge}}^{{\frac{\infty}{2}} + {\bullet}}_{{\,\,} {\text{s}}}({\mathfrak{h}})$, we have:
\begin{gather*}
[{D}({\phi}^{x}{\phi}_{y}) {\,} {}_{\lambda} {\,} {G}_{\wedge}] 
= (-1)^{x + y}{\lambda}({\phi}^{x}{\phi}_{y}),
{\quad}
[{D}{\phi}_{x} {\,} {}_{\lambda} {\,} {G}_{\wedge}] 
= (-1)^{x + 1}{\lambda}{\phi}_{x},
\\
[{D}({\phi}^{x}{\phi}_{y}) {\,} {}_{\lambda} {\,} {U}_{\wedge}] 
= {\delta}_{xy}{\lambda},
{\quad}
[{D}{\phi}_{x} {\,} {}_{\lambda} {\,} {U}_{\wedge}] 
= 0,
\end{gather*}
for $x, y \in H$, where ${G}_{\wedge}$ and ${U}_{\wedge}$ are given in \eqref{eq: sec N=2 of SUSY semi inf def of G and U in ghost}.
\end{lemma}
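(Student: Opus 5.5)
The plan is a direct computation with the non-commutative Wick formula, done for $U_\wedge$ first and then $G_\wedge$. The only nonzero brackets among generators are $[D\phi_i\,{}_\lambda\,\phi^j]=\delta_{ij}$ and $[D\phi^i\,{}_\lambda\,\phi_j]=\delta_{ij}$. Consequently $D\phi_x$ pairs only with $\phi^x$, and $\phi^x$ pairs only with $D\phi_x$. The relevant pairings are therefore limited to $D\phi_x$ against the $\phi^i$ factors in $G_\wedge$ and $U_\wedge$, plus the analogous pairings for the quadratic field.

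\textbf{The field $D\phi_x$.} Since $U_\wedge=-\sum_i \phi_i D\phi^i$ contains no $\phi^i$ factor, and $D\phi_x$ pairs trivially with $\phi_i$ and $D\phi^i$, we get $[D\phi_x\,{}_\lambda\,U_\wedge]=0$ at once.

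For $G_\wedge$, the term $D\phi_i D\phi^i$ contains no $\phi^i$, so it contributes nothing. The term $(-1)^{i+1}\phi_i\partial\phi^i$ contributes through sesquilinearity, $[D\phi_x\,{}_\lambda\,\partial\phi^x]=(\partial+\lambda)\cdot 1=\lambda$. The Wick formula then gives $(-1)^{x+1}(-1)^{p}\lambda\phi_x$, where the sign $(-1)^p$ comes from moving $D\phi_x$ past $\phi_x$. The integral term vanishes because $[\lambda\,{}_\gamma\,\phi_x]=0$. It remains to check that the total sign is $(-1)^{x+1}$; this is the only bookkeeping in this case.

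\textbf{The field $D(\phi^x\phi_y)$.} Rather than expanding this field, I would use skew-symmetry together with the Jacobi identity, or the $\Lambda$-bracket sesquilinearity $[Da\,{}_\Lambda\,b]=\chi[a\,{}_\Lambda\,b]$. Alternatively, one can expand $D(\phi^x\phi_y)=(D\phi^x)\phi_y+(-1)^{x+1}\phi^x D\phi_y$ and apply the right Wick formula, obtained from the left one via skew-symmetry.

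A cleaner route is to compute $[G_\wedge\,{}_\lambda\,D(\phi^x\phi_y)]$ and $[U_\wedge\,{}_\lambda\,D(\phi^x\phi_y)]$ using the standard facts below, and then transfer the results back by skew-symmetry.
\begin{itemize}
\item $G_{\wedge(0)}$ acts as the derivation $D$.
\item $\phi_x$ and $\phi^x$ are primary of weights $\tfrac12$ and $0$.
\item The $U_\wedge$-charges of the fields are known; Lemma \ref{lem: sec N=2 of SUSY semi inf diff with G_gh and U_gh} already gives the charges of $\phi^x$ and $D\phi^x$.
\end{itemize}
In an $N=2$ structure, $[G\,{}_\lambda\,Da]$ for $a$ primary of weight $\Delta$ and charge $q$ is determined by these data. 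Here $\phi^x\phi_y$ is primary of weight $\tfrac12$ with charge such that $D(\phi^x\phi_y)$ acts like a current, which is why the $U_\wedge$ bracket gives the pure central term $\delta_{xy}\lambda$. Concretely, $U_\wedge$ pairs $\phi_i$ with $D\phi^x$, and $D\phi^i$ with $\phi_y$. The double contraction occurs only when $x=y$ and produces the central term $\delta_{xy}\lambda$. The single contractions produce $\phi_x$-type and $D\phi^x$-type quadratic terms, and these cancel between the two summands of $D(\phi^x\phi_y)$.

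For $G_\wedge$, the single contractions reassemble into $(-1)^{x+y}\lambda\phi^x\phi_y$. The $\partial$-terms cancel, reflecting that $D(\phi^x\phi_y)$ is $D$-exact. The double contractions give constants times $\lambda^2$, which should cancel between the $\phi_i\partial\phi^i$ and $D\phi_iD\phi^i$ pieces. This cancellation happens precisely because of the chosen shift $s_i=-\tfrac12(1+(-1)^i)$.

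\textbf{Main obstacle.} The main obstacle is the sign bookkeeping and the $\lambda^2$ and $\partial$ cancellations in $[D(\phi^x\phi_y)\,{}_\lambda\,G_\wedge]$, including the quasi-commutativity corrections produced by reordering the normally ordered products. To keep these under control, I would work with the $\Lambda$-bracket $[\phi^x\phi_y\,{}_\Lambda\,G_\wedge]$. The desired results are then read off from the $\chi$-free part, since $[Da\,{}_\lambda\,b]$ is that part of $[a\,{}_\Lambda\,b]$.
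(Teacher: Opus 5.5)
Your overall method is the paper's own: its proof is just ``direct computations''. Your treatment of the two $D\phi_x$ identities is correct, and the sign you leave open is $(-1)^{p(D\phi_x)p(\phi_x)}=(-1)^{(x+1)x}=1$.

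The two $D(\phi^x\phi_y)$ identities, however, are not derived, and several mechanisms you predict for them are false. As written, this half has a gap.

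(i) The shortcut ``$\phi^x\phi_y$ is primary of weight $\tfrac12$, and weight and charge determine $[G\,{}_\lambda\,Da]$'' is not valid as stated. $\phi^x\phi_y$ is a Virasoro primary of weight $\tfrac12$, but it is not superprimary when $x=y$: the Wick formula gives $[G_\wedge\,{}_\lambda\,\phi^x\phi_y]=D(\phi^x\phi_y)+\delta_{xy}\lambda$. In general, if $[G\,{}_\lambda\,a]=Da+\lambda b$, the Jacobi identity gives $[G\,{}_\lambda\,Da]=2[T\,{}_\lambda\,a]-\partial a-\lambda\,Db$. So the correction $b$ must be tracked; here it drops out only because $b$ is a constant. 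Likewise, no charge bookkeeping produces the central term $\delta_{xy}\lambda$ in the $U_\wedge$ bracket.

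(ii) Write $D(\phi^x\phi_y)=(D\phi^x)\phi_y+(-1)^{x+1}\phi^xD\phi_y$. The summand $\phi^xD\phi_y$ has zero bracket with $U_\wedge$, since neither factor pairs with $\phi_i$ or $D\phi^i$. The single contractions therefore cancel inside $(D\phi^x)\phi_y$, whose two factors carry opposite $U_\wedge$-charge, not between the two summands. The double contraction in that same summand gives $\delta_{xy}\lambda$.

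(iii) For $G_\wedge$ there are no double contractions at all. Pairing either monomial of $D(\phi^x\phi_y)$ with $\phi_i\partial\phi^i$ or with $D\phi_iD\phi^i$, at most one pair of fields contracts. (For $x=y$ the bracket is also odd, so no constant could appear anyway.) So there are no $\lambda^2$ terms for the shift to cancel. The shift enters only through the weights $\Delta(\phi^x)=0$ and $\Delta(\phi_y)=\tfrac12$. It is the total weight $\tfrac12$ of $\phi^x\phi_y$, not $D$-exactness, that removes the $\partial$-term; a weight-$\Delta$ superprimary $a$ would leave a multiple of $(1-2\Delta)\partial a$.

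The gap closes along the route you gestured at, carried out with the Wick formula instead of structural claims.

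First, record the needed generator brackets. Use Lemma~\ref{lem: sec N=2 of SUSY semi inf diff with G_gh and U_gh}, your two $D\phi_x$ identities, and the one-line computations $[\phi_y\,{}_\lambda\,G_\wedge]=(-1)^{y+1}D\phi_y$ and $[\phi_y\,{}_\lambda\,U_\wedge]=\phi_y$. Skew-symmetry then gives:
$[G_\wedge\,{}_\mu\,\phi^x]=D\phi^x$, $[G_\wedge\,{}_\mu\,D\phi^x]=\partial\phi^x$, $[G_\wedge\,{}_\mu\,\phi_y]=D\phi_y$, $[G_\wedge\,{}_\mu\,D\phi_y]=(\partial+\mu)\phi_y$;
$[U_\wedge\,{}_\mu\,D\phi^x]=D\phi^x$, $[U_\wedge\,{}_\mu\,\phi_y]=-\phi_y$, $[U_\wedge\,{}_\mu\,\phi^x]=[U_\wedge\,{}_\mu\,D\phi_y]=0$.

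Next, apply the left Wick formula to the two summands. For $G_\wedge$, the $D\phi^xD\phi_y$ terms cancel between the summands. The integral terms vanish because $[\partial\phi^x\,{}_\gamma\,\phi_y]=[D\phi^x\,{}_\gamma\,D\phi_y]=0$. Hence $[G_\wedge\,{}_\mu\,D(\phi^x\phi_y)]=(\partial+\mu)(\phi^x\phi_y)$. For $U_\wedge$, only the integral term of the first summand survives, so $[U_\wedge\,{}_\mu\,D(\phi^x\phi_y)]=\int_0^\mu[D\phi^x\,{}_\gamma\,\phi_y]\,d\gamma=\delta_{xy}\mu$.

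Finally, apply skew-symmetry with $p(D(\phi^x\phi_y))=x+y$. This gives exactly $(-1)^{x+y}\lambda(\phi^x\phi_y)$ and $\delta_{xy}\lambda$.
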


\begin{proof}
The result can be verified by direct computations.
\end{proof}

\vskip 1mm

In the remainder of this section, assume further that ${\mathfrak{h}}$ is unimodular and that Assumption \ref{assum: second assumption} holds, that is, the orthogonal complement of ${\iota}({\mathfrak{h}})$ in ${\mathfrak{g}}$ with respect to ${\kappa}_{\mathfrak{g}}$ admits a vector superspace decomposition ${\iota}({\mathfrak{h}})^{\perp} = {\mathfrak{t}}^{+} \oplus {\mathfrak{t}}^{-}$ satisfying:
\begin{align*}
{\kappa}_{\mathfrak{g}}({\mathfrak{t}}^{\pm}, {\,} {\mathfrak{t}}^{\pm}) = 0,
{\quad}
[{\mathfrak{t}}^{\pm}, {\iota}({\mathfrak{h}}) \oplus {\mathfrak{t}}^{\pm}] \subseteq {\mathfrak{t}}^{\pm}.
\end{align*}
Under these assumptions, consider the super Manin pair structure on ${\mathfrak{g}} \oplus {\mathfrak{h}}$ given in \eqref{eq: sec sMP decomposition of example sMP2}, namely:
\begin{align*}
{\mathfrak{g}} \oplus {\mathfrak{h}} = {\mathfrak{a}}^{+}_{{\mathfrak{g}} \oplus {\mathfrak{h}}} \oplus {\mathfrak{a}}^{-}_{{\mathfrak{g}} \oplus {\mathfrak{h}}},
\end{align*}
where $\{ {y}_{i} \}_{i \in I}$ and $\{ {y}^{i} \}_{i \in I}$ are parity-homogeneous bases of ${\mathfrak{a}}^{+}_{{\mathfrak{g}} \oplus {\mathfrak{h}}}$ and ${\mathfrak{a}}^{-}_{{\mathfrak{g}} \oplus {\mathfrak{h}}}$, respectively, satisfying 
\begin{align*}
{\kappa}_{{\mathfrak{g}} \oplus {\mathfrak{h}}}({y}_{i}, {y}^{j}) = {\delta}_{ij},    
\end{align*}
for $i, j \in I$. Here, we follow the notation and constructions associated with the super Manin pair structure on ${\mathfrak{g}} \oplus {\mathfrak{h}}$ introduced in Section \ref{section: sMP} after Assumption \ref{assum: first assumption}.

\vskip 1mm

We also recall that the $N = {\frac{3}{2}}$ superconformal structure ${G}_{{\mathfrak{g}} \oplus {\mathfrak{h}}}$ and ${U}_{{\mathfrak{g}} \oplus {\mathfrak{h}}}$ on the super-affine vertex algebra ${V}^{{\kappa}_{{\mathfrak{g}} \oplus {\mathfrak{h}}}}_{{\,\,} {\text{s}}}({\mathfrak{g}} \oplus {\mathfrak{h}})$ are given by
\begin{align}
{G}_{{\mathfrak{g}} \oplus {\mathfrak{h}}} 
=& {\,} {\sum}_{i \in I} {\big(}(-1)^{i}{y}^{i}{\overline{y}}_{i} + {y}_{i}{\overline{y}}^{i}{\big)} + {\sum}_{i,j,k \in I} (-1)^{ki} {C}^{i}_{jk}{\overline{y}}_{i}({\overline{y}}^{j}{\overline{y}}^{k})
\notag
\\
&+ {\sum}_{i,j,k \in I} (-1)^{ij} {F}^{jk}_{i}{\overline{y}}^{i}({\overline{y}}_{j}{\overline{y}}_{k}) + {\frac{1}{3}}{\,}{\sum}_{i,j,k \in I} (-1)^{ij}{\mathring{F}}^{jk}_{i}{\overline{y}}_{i}({\overline{y}}_{j}{\overline{y}}_{k}),
\label{eq: sec N=2 of SUSY semi inf def of G and U in matter}
\\
{U}_{{\mathfrak{g}} \oplus {\mathfrak{h}}} 
=& {\,} {\sum}_{i \in I} (-1)^{i}{\overline{y}}^{i}{\overline{y}}_{i} + {\sum}_{i,j \in I} (-1)^{i}{F}^{ij}_{i}{y}_{j} + {\sum}_{i,j \in I} (-1)^{i}{C}^{i}_{ji}{y}^{j}.
\notag
\end{align}

\vskip 1mm

Combining the vectors in \eqref{eq: sec N=2 of SUSY semi inf def of G and U in matter} with the $N = 2$ superconformal structure on ${\textstyle{\bigwedge}}^{{\frac{\infty}{2}} + {\bullet}}_{{\,\,} {\text{s}}}({\mathfrak{h}})$ given in \eqref{eq: sec N=2 of SUSY semi inf def of G and U in ghost}, we introduce the following vectors
\begin{align}
\begin{split}
{G}_{\mathcal{C}} :=& {\,} {G}_{{\mathfrak{g}} \oplus {\mathfrak{h}}} \otimes {\ket{0}} + {\ket{0}} \otimes {G}_{\wedge},
\\
{U}_{\mathcal{C}} :=& {\,} {U}_{{\mathfrak{g}} \oplus {\mathfrak{h}}} \otimes {\ket{0}} + {\ket{0}} \otimes {U}_{\wedge},
\end{split}
\label{eq: sec N=2 of SUSY semi inf def of G and U in complex}
\end{align}
in the supersymmetric semi-infinite complex:
\begin{align}
{C}^{{\frac{\infty}{2}} + {\bullet}}_{{\,\,} {\text{s}}}({\mathfrak{h}}, {V}^{{\kappa}_{{\mathfrak{g}} \oplus {\mathfrak{h}}}}_{{\,\,} {\text{s}}}({\mathfrak{g}} \oplus {\mathfrak{h}})) 
= {V}^{{\kappa}_{{\mathfrak{g}} \oplus {\mathfrak{h}}}}_{{\,\,} {\text{s}}}({\mathfrak{g}} \oplus {\mathfrak{h}}) \otimes {\textstyle{\bigwedge}}^{{\frac{\infty}{2}} + {\bullet}}_{{\,\,} {\text{s}}}({\mathfrak{h}}).
\label{eq: sec N=2 of SUSY semi inf def of complex}
\end{align}
Then it follows from Theorem \ref{theorem: N=3/2 of saff} and Proposition \ref{prop: sec N=2 of SUSY semi inf N=2 shift of ghost} that ${G}_{\mathcal{C}}$ and ${U}_{\mathcal{C}}$ form an $N = {\frac{3}{2}}$ superconformal structure on the supersymmetric semi-infinite complex with central charge ${\frac{3}{2}}{\,}{\text{sdim}}{\,}{\mathfrak{g}} - {\frac{3}{2}}{\,}{\text{sdim}}{\,}{\mathfrak{h}} + {3}{\,}{\text{str}}({\mathcal{D}}_{{\mathfrak{g}} \oplus {\mathfrak{h}}})$. Moreover, in this case, we have
\begin{align}
{U}_{{\mathcal{C}} (0)}({G}_{{\mathcal{C}} (0)}{U}_{\mathcal{C}}) = -{G}_{\mathcal{C}} - {8}{\Phi}_{\mathcal{C}},
\label{eq: sec N=2 of SUSY semi inf UGU and -G -8Phi}
\end{align}
where ${\Phi}_{\mathcal{C}}$ is given by
\begin{align}
{\Phi}_{\mathcal{C}} = - {\frac{1}{6}}{\,}{\sum}_{i, j, k \in H} (-1)^{ij}{\mathring{F}}^{jk}_{i}{\overline{y}}_{i}({\overline{y}}_{j}{\overline{y}}_{k}) \otimes {\ket{0}}.
\label{eq: sec N=2 of SUSY semi inf constuction of Phi}
\end{align}
This follows from Proposition \ref{prop: sec N=3/2 computation of UGU}, Lemma \ref{lem: sec sMP F and ring F vanish}, and Lemma \ref{lem: sec sMP coeff E vanish}.

\vskip 1mm

Note that, since we are considering the ${V}^{0}_{{\,\,} {\text{s}}}({\mathfrak{h}})$-module action on ${V}^{{\kappa}_{{\mathfrak{g}} \oplus {\mathfrak{h}}}}_{{\,\,} {\text{s}}}({\mathfrak{g}} \oplus {\mathfrak{h}})$ given by \eqref{eq: sec N=2 of SUSY semi inf diag homo}, the differential ${d}_{\text{s}}$ on the complex in \eqref{eq: sec N=2 of SUSY semi inf def of complex} can be written as follows (see also \eqref{eq: sec SUSY semi inf cohom mode expression of differential} and \eqref{eq: sec sMP dual bases y for h}):
\begin{align}
\begin{split}
{d}_{\text{s}} 
=& {\,} {\sum}_{i \in H}{\sum}_{m \in {\mathbb{Z}}} (-1)^{i} {y}_{i (m)} \otimes {\phi}^{i}_{(-m-1)} 
- {\sum}_{i \in H}{\sum}_{m \in {\mathbb{Z}}} {\overline{y}}_{i (m)} \otimes {D}{\phi}^{i}_{(-m-1)}
\\
&- {\frac{1}{2}}{\,}{\sum}_{i, j, k \in H} (-1)^{jk + j}{h}^{i}_{jk} {\,}{\text{id}} \otimes ({D}({\phi}_{i}{\phi}^{j}{\phi}^{k}))_{(0)}.
\end{split}
\label{eq: sec N=2 of SUSY semi inf diff in complex}
\end{align}
Then we can show that ${G}_{\mathcal{C}}$ and ${U}_{\mathcal{C}}$ are BRST-invariant as follows:

\vskip 1mm

\begin{proposition}\label{prop: sec N=2 of SUSY semi inf G and U in kernel}\rm
In ${C}^{{\frac{\infty}{2}} + {\bullet}}_{{\,\,} {\text{s}}}({\mathfrak{h}}, {V}^{{\kappa}_{{\mathfrak{g}} \oplus {\mathfrak{h}}}}_{{\,\,} {\text{s}}}({\mathfrak{g}} \oplus {\mathfrak{h}}))$, we have:
\begin{align*}
{d}_{\text{s}}({G}_{\mathcal{C}}) = 0,
{\quad}
{d}_{\text{s}}({U}_{\mathcal{C}}) = 0,
\end{align*}
where ${G}_{\mathcal{C}}$ and ${U}_{\mathcal{C}}$ are given in \eqref{eq: sec N=2 of SUSY semi inf def of G and U in complex}.
\end{proposition}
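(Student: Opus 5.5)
Since $d_{\text{s}} = (DQ)_{(0)}$ and $(DQ)_{(0)}v = [DQ\,{}_\lambda\,v]|_{\lambda=0}$, it suffices to show that $[DQ\,{}_{\lambda}\,G_{\mathcal{C}}]$ and $[DQ\,{}_{\lambda}\,U_{\mathcal{C}}]$ have vanishing constant term. Equivalently, by skew-symmetry, that the constant terms of $[G_{\mathcal{C}}\,{}_\lambda\,DQ]$ and $[U_{\mathcal{C}}\,{}_\lambda\,DQ]$ vanish up to total derivatives. I will work from the explicit form \eqref{eq: sec N=2 of SUSY semi inf diff in complex}, i.e.
\begin{align*}
DQ = {\sum}_{i\in H}(-1)^{i}\, y_i\otimes\phi^i - {\sum}_{i\in H}\overline{y}_i\otimes D\phi^i - \tfrac12{\sum}_{i,j,k\in H}(-1)^{jk+j}h^i_{jk}\,\ket{0}\otimes D(\phi_i\phi^j\phi^k),
\end{align*}
and split $G_{\mathcal{C}}$, $U_{\mathcal{C}}$ into their matter and ghost parts. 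The brackets of the ghost parts $G_\wedge$, $U_\wedge$ with $\phi^x$, $D\phi^x$ and $D(\phi_x\phi^y\phi^z)$ are exactly Lemma~\ref{lem: sec N=2 of SUSY semi inf diff with G_gh and U_gh}. The brackets of the matter parts with $y_i$ and $\overline{y}_i$ for $i\in H$ come from \eqref{eq: OPE G with generator of saff} for $G$, and from \eqref{eq: sec N=3/2 U and oveline y}, \eqref{eq: sec N=3/2 U and y} for $U$. Here $y_i = \iota(u_i)+u_i$ is the image of $u_i$ under the diagonal map \eqref{eq: sec N=2 of SUSY semi inf diag homo}. On a tensor product, a bracket of $a\otimes b$ with $c\otimes\ket 0$ only involves $[c\,{}_\lambda\,a]\otimes b$, so every contribution reduces to these one-factor brackets.

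For $U_{\mathcal{C}}$, the matter part gives $[U_{\mathfrak{g}\oplus\mathfrak{h}}\,{}_\lambda\, y_i]=0$ and $[U_{\mathfrak{g}\oplus\mathfrak{h}}\,{}_\lambda\,\overline{y}_i] = -\overline{y}_i$. Hence from the first two terms of $DQ$ only $\overline{y}_i\otimes D\phi^i$ contributes, giving $\pm\overline{y}_i\otimes D\phi^i$. By Lemma~\ref{lem: sec N=2 of SUSY semi inf diff with G_gh and U_gh}, the ghost part gives $[U_\wedge\,{}_\lambda\, D\phi^i] = \pm D\phi^i$ with the opposite sign, and these cancel. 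The ghost cubic term produces, by skew-symmetry, expressions of the form $\delta_{xy}(\partial+\lambda)\phi^z$ evaluated at $-\partial-\lambda$. Their constant term vanishes by sesquilinearity, and in any case it is a combination of $h^x_{xk}\phi^k$ which dies by unimodularity of $\mathfrak{h}$. The cross term $y_i\otimes\phi^i$ contributes nothing, since $[U_\wedge\,{}_\lambda\,\phi^i]=0$ and the matter bracket is $0$.

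For $G_{\mathcal{C}}$, by \eqref{eq: OPE G with generator of saff} the matter part gives $(\partial+\lambda)\overline{y}_i\otimes\phi^i$ from $y_i\otimes\phi^i$ and $y_i\otimes D\phi^i$ from $\overline{y}_i\otimes D\phi^i$. The ghost part gives $[G_\wedge\,{}_\lambda\,\phi^i]\sim D\phi^i$ and $[G_\wedge\,{}_\lambda\,D\phi^i]\sim\partial\phi^i$. After fixing signs, the constant terms are
\begin{align*}
\pm\partial\overline{y}_i\otimes\phi^i \pm \overline{y}_i\otimes\partial\phi^i \pm y_i\otimes D\phi^i \pm \ldots,
\end{align*}
and these combine into $\partial(\overline{y}_i\otimes\phi^i)$ plus pieces that cancel pairwise. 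This is the usual statement that $DQ$ has weight $1$ and $d_{\text{s}}$ commutes with $G_{(0)}$ up to the $D$-structure. Equivalently, $G_{\mathcal{C}(0)} = D$ on the complex, so $[G_{\mathcal{C}}\,{}_\lambda\,DQ] = D(DQ) + \lambda(\cdots) = \partial Q + \lambda(\cdots)$. I would use this cleaner route: $G_{\mathcal{C}(0)}$ acts as the total odd derivation $D$, because $G_{\mathfrak{g}\oplus\mathfrak{h}(0)}$ is $D$ on the super-affine generators and $G_{\wedge(0)}$ is $D$ on the ghosts, as read off from the first two lines of Lemma~\ref{lem: sec N=2 of SUSY semi inf diff with G_gh and U_gh} and Lemma~\ref{lem: sec N=2 of SUSY semi inf rel complex and G and U}. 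Then skew-symmetry gives
\begin{align*}
(DQ)_{(0)}G_{\mathcal{C}} = \pm\bigl(G_{\mathcal{C}(0)}DQ\bigr) + \partial(\cdots) = \pm D^2Q + \partial(\cdots) = \partial(\cdots),
\end{align*}
with the total-derivative part controlled by the $\lambda^1$ coefficient. So the real task is to show that the $\lambda$-linear term of $[G_{\mathcal{C}}\,{}_\lambda\,DQ]$ is proportional to $Q$, so that the $\partial$-corrections assemble into $\partial(\cdot)-\partial(\cdot)=0$.

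The main obstacle is exactly this $\lambda^1$ coefficient: it is the weight of $DQ$ being $1$ (i.e.\ $Q$ having weight $\tfrac12$), plus the vanishing of the $\lambda^2$ terms. Checking this requires the ghost cubic term, where Lemma~\ref{lem: sec N=2 of SUSY semi inf diff with G_gh and U_gh} produces the extra terms $\delta_{xz}(\partial+\lambda)D\phi^y$ and $\delta_{xy}(\partial+\lambda)D\phi^z$. Contracted against $h^x_{yx}$, these vanish by unimodularity, and this is where the hypothesis on $\mathfrak{h}$ enters. I would first verify the conformal weights $\Delta(\overline{y}_i\otimes\phi^i)=\tfrac12$ and $\Delta(\phi_i\phi^j\phi^k)=\tfrac12$, and then check the trace terms case by case.
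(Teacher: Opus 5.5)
Your proposal is correct and uses the same inputs as the paper's proof: the brackets \eqref{eq: OPE G with generator of saff}, \eqref{eq: sec N=3/2 U and oveline y}, \eqref{eq: sec N=3/2 U and y}, Lemma~\ref{lem: sec N=2 of SUSY semi inf diff with G_gh and U_gh}, and unimodularity. However, the reformulation in your first paragraph is not an equivalence. Replace it with the exact identity $d_{\text{s}}(v) = (DQ)_{(0)}v = \pm[v\,{}_{-\partial}\,DQ] = \pm\sum_{n\geq 0}\frac{(-\partial)^{n}}{n!}\,v_{(n)}(DQ)$, in which every coefficient of $[v\,{}_{\lambda}\,DQ]$ enters.

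This matters for $U_{\mathcal{C}}$. The cubic ghost term contributes $\lambda S$ to $[U_{\mathcal{C}}\,{}_{\lambda}\,DQ]$, hence $\partial S$ to $d_{\text{s}}(U_{\mathcal{C}})$. This $\partial S$ is exactly the term $\sum_{i,j\in H}(-1)^{i}h^{i}_{ji}\,\ket{0}\otimes\partial\phi^{j}$ in the paper's computation. The vanishing of the constant term ``by sesquilinearity'' proves nothing here. What kills this term is your fallback, unimodularity, and that is exactly where the paper uses it.

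For $G_{\mathcal{C}}$ your route is a genuinely different, more structured packaging of what the paper covers with ``similarly'' after expanding $d_{\text{s}}$ in modes. By $G_{\mathcal{C}(0)}=D$ and skew-symmetry (both fields odd, so no sign), $d_{\text{s}}(G_{\mathcal{C}}) = [G_{\mathcal{C}}\,{}_{-\partial}\,DQ]$. This vanishes as soon as $[G_{\mathcal{C}}\,{}_{\lambda}\,DQ] = (\partial+\lambda)Q$ exactly: the $\lambda$-coefficient must be $Q$ itself, not merely proportional to it, and there must be no higher powers of $\lambda$. This does hold:
\begin{itemize}
\item For $\sum_{i}(-1)^{i}\overline{y}_{i}\otimes\phi^{i}$ it is immediate, since each summand is a product of superprimaries of weights $\tfrac12$ and $0$ from different tensor factors.
\item For the cubic ghost term, the only deviation is the $\delta$-terms of Lemma~\ref{lem: sec N=2 of SUSY semi inf diff with G_gh and U_gh}. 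As you say, these cancel after contraction with $h^{i}_{jk}$ by unimodularity.
\end{itemize}
The weight count $\Delta(Q)=\tfrac12$ alone is not enough. What is actually used is superprimarity, and the $\delta$-terms are precisely its failure for individual cubic monomials.

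Your version gives an explanation: $G_{\mathcal{C}}$ is BRST-closed because $Q$ is a weight-$\tfrac12$ superprimary for $G_{\mathcal{C}}$. It also shows why the particular ghost shift, which makes $\Delta(\phi^{i})=0$, is needed. The paper's direct mode computation is more mechanical but requires no separate structural claim.
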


\begin{proof}
First, by Lemma \ref{lem: sec sMP F and ring F vanish} and Lemma \ref{lem: sec N=3/2 OPE generator and U}, we have \eqref{eq: sec N=3/2 U and oveline y} and \eqref{eq: sec N=3/2 U and y}. Together with Lemma \ref{lem: sec N=2 of SUSY semi inf diff with G_gh and U_gh}, we obtain the following computation:
\begin{align*}
{d}_{\text{s}}({U}_{\mathcal{C}}) 
=& {\,} {\sum}_{i \in H}{\sum}_{m \in {\mathbb{Z}}_{\geq 0}} (-1)^{i} {y}_{i (m)}{U}_{{\mathfrak{g}} \oplus {\mathfrak{h}}} \otimes {\phi}^{i}_{(-m-1)}{\ket{0}}
\\
&- {\sum}_{i \in H}{\sum}_{m \in {\mathbb{Z}}_{\geq 0}} {\overline{y}}_{i (m)}{U}_{{\mathfrak{g}} \oplus {\mathfrak{h}}} \otimes {D}{\phi}^{i}_{(-m-1)}{\ket{0}}
\\
&+ {\sum}_{i \in H}{\sum}_{m \in {\mathbb{Z}}_{\geq 0}} (-1)^{i}{y}_{i (- m - 1)}{\ket{0}} \otimes {\phi}^{i}_{(m)}{U}_{\wedge}
\\
&- {\sum}_{i \in H}{\sum}_{m \in {\mathbb{Z}}_{\geq 0}} {\overline{y}}_{i (- m - 1)}{\ket{0}} \otimes {D}{\phi}^{i}_{(m)}{U}_{\wedge}
\\
&- {\frac{1}{2}}{\,}{\sum}_{i, j, k \in H} (-1)^{jk + j}{h}^{i}_{jk} {\,} {\ket{0}} \otimes ({D}({\phi}_{i}{\phi}^{j}{\phi}^{k}))_{(0)}{U}_{\wedge}
\\
=& {\,} - {\sum}_{i \in H} {\overline{y}}_{i} \otimes {D}{\phi}^{i} - {\sum}_{i \in H} {\overline{y}}_{i} \otimes (-{D}{\phi}^{i})
\\
&+ {\sum}_{i, j \in H} (-1)^{i}{h}^{i}_{ji}{\,} {\ket{0}} \otimes {\partial}{\phi}^{j}
\\
=& {\,} 0,
\end{align*}
where the last equality follows from the unimodularity of ${\mathfrak{h}}$. Similarly, by \eqref{eq: OPE G with generator of saff}, Lemma \ref{lem: sec N=2 of SUSY semi inf diff with G_gh and U_gh}, and the unimodularity of ${\mathfrak{h}}$, we obtain ${d}_{\text{s}}({G}_{\mathcal{C}}) = 0$.
\end{proof}

\vskip 1mm

We have shown that ${G}_{\mathcal{C}}$ and ${U}_{\mathcal{C}}$ form an $N = {\frac{3}{2}}$ superconformal structure on the complex in \eqref{eq: sec N=2 of SUSY semi inf def of complex}. Moreover, by Proposition \ref{prop: sec N=2 of SUSY semi inf G and U in kernel}, they lie in ${\text{ker}}{\,}{d}_{\text{s}}$. Thus, to obtain the $N = 2$ superconformal structure on the supersymmetric semi-infinite cohomology induced by ${G}_{\mathcal{C}}$ and ${U}_{\mathcal{C}}$, it remains to show that ${\Phi}_{\mathcal{C}}$ in \eqref{eq: sec N=2 of SUSY semi inf UGU and -G -8Phi} is BRST-trivial.

\vskip 1mm

\begin{lemma}\label{lem: sec N=2 of SUSY semi inf BRST triv of Phi}\rm
In the complex ${C}^{{\frac{\infty}{2}} + {\bullet}}_{{\,\,} {\text{s}}}({\mathfrak{h}}, {V}^{{\kappa}_{{\mathfrak{g}} \oplus {\mathfrak{h}}}}_{{\,\,} {\text{s}}}({\mathfrak{g}} \oplus {\mathfrak{h}}))$, we have
\begin{align*}
{\Phi}_{\mathcal{C}} = {d}_{\text{s}}{\big(}{\frac{1}{6}}{\,}{\sum}_{r, s, t \in H} (-1)^{st}{\mathring{F}}^{rs}_{t}{\overline{y}}_{r}{\overline{y}}_{s} \otimes {\phi}_{t}{\big)},
\end{align*}
where ${\Phi}_{\mathcal{C}}$ is given in \eqref{eq: sec N=2 of SUSY semi inf constuction of Phi}.
\end{lemma}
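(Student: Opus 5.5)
The plan is to compute ${d}_{\text{s}}$ of
\begin{align*}
X := {\frac{1}{6}}{\,}{\sum}_{r, s, t \in H} (-1)^{st}{\mathring{F}}^{rs}_{t}{\overline{y}}_{r}{\overline{y}}_{s} \otimes {\phi}_{t}
\end{align*}
directly from the mode expression \eqref{eq: sec N=2 of SUSY semi inf diff in complex}. I will treat the three pieces of ${d}_{\text{s}}$ separately, in the same way as in the proof of Proposition \ref{prop: sec N=2 of SUSY semi inf G and U in kernel}. Each piece splits into modes ${m} \geq 0$ acting on the matter factor and modes ${m} < 0$ acting on the ghost factor.

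\textbf{The term producing ${\Phi}_{\mathcal{C}}$.} Consider the second sum ${\overline{y}}_{i (m)} \otimes {D}{\phi}^{i}_{(-m-1)}$.
\begin{itemize}
\item For ${m} \geq 0$, the mode ${\overline{y}}_{i (m)}$ acts on ${\overline{y}}_{r}{\overline{y}}_{s}$ with ${r}, {s} \in H$. The relevant brackets are $[{\overline{y}}_{i} {\,} {}_{\lambda} {\,} {\overline{y}}_{r}] = \pm{\kappa}_{{\mathfrak{g}} \oplus {\mathfrak{h}}}({y}_{i}, {y}_{r})$. These vanish because ${\mathfrak{a}}^{+}_{{\mathfrak{g}} \oplus {\mathfrak{h}}}$ is isotropic, so these terms contribute nothing.
\item For ${m} = -1$, we have ${D}{\phi}^{i}_{(0)}{\phi}_{t} = {\delta}_{it}{\ket{0}}$. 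This produces $-{\frac{1}{6}}\sum \pm{\mathring{F}}^{rs}_{t}{\overline{y}}_{t}({\overline{y}}_{r}{\overline{y}}_{s}) \otimes {\ket{0}}$.
\item The remaining ${m} \leq -2$ modes vanish, since ${D}{\phi}^{i}_{(n)}{\phi}_{t} = 0$ for ${n} \geq 1$.
\end{itemize}
Using the cyclic symmetry \eqref{eq: cyclic str const MP} and the reordering identities for triple products of ${\overline{y}}_{i}$ (which hold without correction terms, again by isotropy), I will rewrite this as exactly ${\Phi}_{\mathcal{C}}$ in \eqref{eq: sec N=2 of SUSY semi inf constuction of Phi}. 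Tracking the signs here is routine but needs care.

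\textbf{The remaining two sums.} I must show these cancel each other.
\begin{itemize}
\item In the first sum, ${\phi}^{i}_{(-m-1)}$ with ${m} < 0$ kills ${\phi}_{t}$, because $[{\phi}^{i} {\,} {}_{\lambda} {\,} {\phi}_{t}] = 0$. For ${m} \geq 0$, only ${m} = 0$ survives. The modes ${m} \geq 1$ would need ${\kappa}({y}_{i}, {y}_{r}) = 0$, or would produce ${\overline{[{y}_{i}, {y}_{r}]}}$ only at order zero. Using \eqref{eq: sec sMP str const C}, the surviving term is $\sum \pm{h}^{k}_{ir}{\mathring{F}}^{rs}_{t}{\overline{y}}_{k}{\overline{y}}_{s} \otimes {\phi}^{i}{\phi}_{t}$, plus the analogous term in the second slot.
\item The cubic ghost term acts only on ${\phi}_{t}$, by \eqref{eq: sec SUSY semi inf cohom phi phi phi and phi} after applying ${D}$. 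It yields $\sum \pm{h}^{t}_{ij}{\mathring{F}}^{rs}_{t}{\overline{y}}_{r}{\overline{y}}_{s} \otimes {\phi}^{i}{\phi}_{j}$.
\end{itemize}
After relabelling, the sum of the two contributions is ${\overline{y}}_{a}{\overline{y}}_{b} \otimes {\phi}^{i}{\phi}_{c}$ times the coefficient
\begin{align*}
{h}^{a}_{ir}{\mathring{F}}^{rb}_{c} \pm {h}^{b}_{ir}{\mathring{F}}^{ar}_{c} \pm {h}^{t}_{ic}{\mathring{F}}^{ab}_{t}.
\end{align*}
This combination is precisely the statement that the three-tensor ${\mathring{F}}|_{H}$ is $\mathrm{ad}({\mathfrak{h}})$-invariant.

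\textbf{Invariance of ${\mathring{F}}|_{H}$.} I will prove this invariance in either of two ways. The first uses \eqref{eq: sec sMP bracket in a+}, which identifies ${\mathring{F}}^{wx}_{z}$ for ${w}, {x}, {z} \in H$ with ${\frac{1}{2}}{\kappa}_{\mathfrak{h}}([{u}^{w}, {u}^{x}], {u}^{z})$ up to normalization; invariance then follows from the Jacobi identity and invariance of ${\kappa}_{\mathfrak{h}}$. The second uses the second (${F}{\mathring{F}}$) Jacobi relation of Section \ref{section: sMP} together with Lemma \ref{lem: sec sMP F and ring F vanish}.

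\textbf{Main obstacle.} I expect the main difficulty to be the sign bookkeeping in this cancellation. Specifically, I must match the $(-1)^{st}$ prefactor and the parity signs from ${\phi}^{i}$ passing ${\overline{y}}$'s against the signs in \eqref{eq: sec SUSY semi inf cohom phi phi phi and phi}. I also need to confirm that no extra $\lambda$-integral (normal-ordering) corrections arise in the ${m} = 0$ matter action. Should direct sign matching prove unwieldy, my fallback is a more conceptual check. I would observe that $X$ equals ${\frac{1}{6}}\sum {\mathring{F}}\,{\overline{y}}_{r}{\overline{y}}_{s}\otimes{\phi}_{t}$ with indices contracted ${\mathfrak{h}}$-equivariantly. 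Then the first and third pieces together are the action of the total current $({\overline{u}}_i\otimes\ket{0} + \ket{0}\otimes {J}_{\bigwedge}({\overline{u}}_i))$ on an invariant tensor, which vanishes.
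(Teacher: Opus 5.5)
Your plan follows the paper's proof. You expand $d_{\text{s}}(\overline{y}_r\overline{y}_s\otimes\phi_t)$ via \eqref{eq: sec N=2 of SUSY semi inf diff in complex}. You get $\Phi_{\mathcal{C}}$ from the $m=-1$ term of the $\overline{y}_{i(m)}\otimes D\phi^{i}_{(-m-1)}$ piece; isotropy of $\mathfrak{a}^{+}_{\mathfrak{g}\oplus\mathfrak{h}}$ kills all $m\geq 0$ contributions and all $\lambda$-corrections, as you say. The remaining $\phi^{i}$-terms then cancel by $\mathrm{ad}(\mathfrak{h})$-invariance of $\mathring{F}|_{H}$, which the paper takes from \eqref{eq: JI B for MP} on the subpair \eqref{eq: sec sMP decomposition of example sMP1} with all $F=0$. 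The $\Phi_{\mathcal{C}}$ term needs no cyclic rewriting. Its sign is $(-1)^{st}(-1)^{t(r+s)+1}=-(-1)^{tr}$, which gives \eqref{eq: sec N=2 of SUSY semi inf constuction of Phi} verbatim after renaming $(t,r,s)\to(i,j,k)$.

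Three details in your sketch are wrong as written.

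(1) The cubic ghost term sends $\phi_t$ to $\sum_{i,j}\pm h^{j}_{ti}\,\phi^{i}\phi_{j}$, not to $\sum\pm h^{t}_{ij}\phi^{i}\phi_{j}$: the output index is the upper one (cf. \eqref{eq: sec SUSY semi inf cohom phi phi phi and phi}). So the third term of your coefficient must be $\sum_{t}h^{c}_{ti}\mathring{F}^{ab}_{t}$. With your $\sum_{t}h^{t}_{ic}\mathring{F}^{ab}_{t}$, the third slot is acted on by the transposed matrix, and the identity fails for every choice of signs. For example, take $\mathfrak{h}=\mathfrak{sl}_2$ with $u_1=e$, $u_2=f$, $u_3=h$, and set $i=1$, $(a,b,c)=(1,2,2)$. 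The first two terms vanish, while your third term equals $\mathring{F}^{12}_{3}=-\tfrac14$.

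(2) Your second route to invariance is vacuous. Every term of the Jacobi relation built only from $F\mathring{F}$ products contains some $F^{\cdot\cdot}_{l}$ with $l\in H$ (forced by $\mathring{F}\neq 0$). That factor vanishes by Lemma \ref{lem: sec sMP F and ring F vanish}, so the relation reduces to $0=0$. You need \eqref{eq: JI B for MP}, whose $C\mathring{F}$ terms survive. Your first route, via $\mathring{F}^{wx}_{z}=\tfrac14\kappa_{\mathfrak{h}}([u^{w},u^{x}],u^{z})$, is fine.

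(3) In your fallback, the relevant operator is the zero mode of the even current $J_{u_i}$, not that of $\overline{u}_i\otimes\ket{0}+\ket{0}\otimes J_{\bigwedge}(\overline{u}_i)$, whose zero mode annihilates $X$ trivially. With that fix the fallback is a clean proof. On each $\overline{y}_r\overline{y}_s\otimes\phi_t$ with $r,s,t\in H$, the $\phi^{i}$-terms of $d_{\text{s}}$ are exactly $\sum_{i}(-1)^{i}(\ket{0}\otimes\phi^{i})_{(-1)}J_{u_i (0)}$. The paper verifies $J_{u_i (0)}X=0$ at the end of Section \ref{section: N=2 sconf of SUSY semi inf cohom} from the same Jacobi identity.
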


\begin{proof}
For $i, r, s \in H$, we have:
\begin{align}
[{y}_{i} {\,} {}_{\lambda} {\,} {\overline{y}}_{r}{\overline{y}}_{s}] 
= {\sum}_{w \in H} (-1)^{i}{C}^{w}_{ir}{\overline{y}}_{w}{\overline{y}}_{s} + {\sum}_{w \in H} (-1)^{ir}{C}^{w}_{is}{\overline{y}}_{r}{\overline{y}}_{w},
\label{eq: sec N=2 of SUSY semi inf y and bary bary}
\end{align}
in ${V}^{{\kappa}_{{\mathfrak{g}} \oplus {\mathfrak{h}}}}_{{\,\,} {\text{s}}}({\mathfrak{g}} \oplus {\mathfrak{h}})$. Hence, from \eqref{eq: sec N=2 of SUSY semi inf diff in complex}, \eqref{eq: sec N=2 of SUSY semi inf y and bary bary}, \eqref{eq: sec SUSY semi inf cohom phi phi phi and phi}, and \eqref{eq: sec sMP str const C}, we obtain:
\begin{align*}
{d}_{\text{s}}({\overline{y}}_{r}{\overline{y}}_{s} \otimes {\phi}_{t}) 
=& {\,} (-1)^{t(r + s) + 1}{\overline{y}}_{t}({\overline{y}}_{r}{\overline{y}}_{s}) \otimes {\ket{0}}
\\
&+ {\sum}_{i, w \in H} (-1)^{(i + 1)(r + s)}{C}^{w}_{ir}{\overline{y}}_{w}{\overline{y}}_{s} \otimes {\phi}^{i}{\phi}_{t}
\\
&+ {\sum}_{i, w \in H} (-1)^{is + i + r + s}{C}^{w}_{is}{\overline{y}}_{r}{\overline{y}}_{w} \otimes {\phi}^{i}{\phi}_{t}
\\
&+ {\sum}_{i, j \in H} (-1)^{ij + r + s + 1}{C}^{j}_{ti}{\overline{y}}_{r}{\overline{y}}_{s} \otimes {\phi}^{i}{\phi}_{j},
\end{align*}
for $r, s, t \in H$. Thus, using the identity obtained by applying \eqref{eq: sec sMP str const F zero} to \eqref{eq: JI B for MP} with respect to the super Manin pair structure \eqref{eq: sec sMP decomposition of example sMP1}, we have:
\begin{align*}
{d}_{\text{s}}{\big(}{\sum}_{r, s, t \in H} (-1)^{st}{\mathring{F}}^{rs}_{t}{\overline{y}}_{r}{\overline{y}}_{s} \otimes {\phi}_{t}{\big)} = {6}{\Phi}_{\mathcal{C}},
\end{align*}
which gives the desired result.
\end{proof}

\vskip 1mm

Therefore, we obtain the following theorem, which is the main result of this section:

\vskip 1mm

\begin{theorem}\label{theorem: sec N=2 of SUSY semi inf N=2 on cohom}\rm
Under Assumption \ref{assum: second assumption}, suppose that ${\mathfrak{h}}$ is unimodular. Then ${G}_{\mathcal{C}}$ and ${U}_{\mathcal{C}}$ given in \eqref{eq: sec N=2 of SUSY semi inf def of G and U in complex} induce an $N = 2$ superconformal structure on the supersymmetric semi-infinite cohomology
\begin{align}
{H}^{{\frac{\infty}{2}} + {\bullet}}_{{\,\,} {\text{s}}}({\mathfrak{h}}, {V}^{{\kappa}_{{\mathfrak{g}} \oplus {\mathfrak{h}}}}_{{\,\,} {\text{s}}}({\mathfrak{g}} \oplus {\mathfrak{h}})),
\label{eq: sec N=2 of SUSY semi inf def of cohom}
\end{align}
with central charge ${\frac{3}{2}}{\,}{\text{sdim}}{\,}{\mathfrak{g}} - {\frac{3}{2}}{\,}{\text{sdim}}{\,}{\mathfrak{h}} + {3}{\,}{\text{str}}({\mathcal{D}}_{{\mathfrak{g}} \oplus {\mathfrak{h}}})$.
\end{theorem}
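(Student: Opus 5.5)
The plan is to pass everything to cohomology. The first step is to record that $d_{\text{s}} = DQ_{(0)}$ is an odd derivation of all $n$-th products on the complex ${C}^{{\frac{\infty}{2}} + {\bullet}}_{{\,\,} {\text{s}}}({\mathfrak{h}}, {V}^{{\kappa}_{{\mathfrak{g}} \oplus {\mathfrak{h}}}}_{{\,\,} {\text{s}}}({\mathfrak{g}} \oplus {\mathfrak{h}}))$. This holds because it is the zero mode of a field in a vertex algebra, via the Borcherds identity, and $d_{\text{s}}^2=0$. Consequently $\ker d_{\text{s}}$ is a vertex subalgebra, $\operatorname{im} d_{\text{s}}$ is an ideal in it, and the cohomology inherits a vertex algebra structure with all $\lambda$-brackets computed on representatives.

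By Proposition \ref{prop: sec N=2 of SUSY semi inf G and U in kernel}, $G_{\mathcal{C}}$ and $U_{\mathcal{C}}$ are cocycles, so they define classes $[G_{\mathcal{C}}]$ and $[U_{\mathcal{C}}]$. On the complex they already form an $N=\frac32$ structure. This follows from Theorem \ref{theorem: N=3/2 of saff} combined with Proposition \ref{prop: sec N=2 of SUSY semi inf N=2 shift of ghost}, since the two tensor factors commute. In particular, all the hypotheses of Lemma \ref{lem: equiv def of N=2 sconf} except the last one hold on the complex:
\begin{itemize}
\item $G_{\mathcal{C}}$ is an $N=1$ superconformal vector of central charge $c=\frac32\,\text{sdim}\,\mathfrak g-\frac32\,\text{sdim}\,\mathfrak h+3\,\text{str}(\mathcal D_{\mathfrak g\oplus\mathfrak h})$;
\item $U_{\mathcal{C}}$ is even, primary of weight $1$, with $[G_{\mathcal{C}}\,{}_\lambda\,U_{\mathcal{C}}]=G_{\mathcal{C}(0)}U_{\mathcal{C}}$ and $[U_{\mathcal{C}}\,{}_\lambda\,U_{\mathcal{C}}]=\frac c3\lambda$.
\end{itemize}
These polynomial identities descend verbatim to cohomology. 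One must also check that $T=\frac12 G_{(0)}G$ still acts with $T_{(0)}=\partial$ on cohomology and that $T_{(1)}$ remains semisimple. The first is automatic. For the second, $T_{\mathcal C(1)}$ commutes with $d_{\text{s}}$ because $d_{\text{s}}$ has conformal weight zero ($DQ$ has weight $1$, since $Q$ has weight $\frac12$). Hence $T_{\mathcal C(1)}$ preserves cocycles and coboundaries, and it acts semisimply on the quotient.

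The remaining relation is $U_{(0)}(G_{(0)}U)=-G$. By \eqref{eq: sec N=2 of SUSY semi inf UGU and -G -8Phi} we have $U_{\mathcal C(0)}(G_{\mathcal C(0)}U_{\mathcal C})=-G_{\mathcal C}-8\Phi_{\mathcal C}$. Lemma \ref{lem: sec N=2 of SUSY semi inf BRST triv of Phi} shows that $\Phi_{\mathcal C}$ is exact, so the class of the left side equals $-[G_{\mathcal C}]$. Lemma \ref{lem: equiv def of N=2 sconf}, applied in the cohomology vertex algebra, then yields the $N=2$ structure $G^\pm=\frac12([G_{\mathcal C}]\mp[G_{\mathcal C(0)}U_{\mathcal C}])$ with the stated central charge.

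The one point needing care is that Lemma \ref{lem: equiv def of N=2 sconf} requires the relation $[U\,{}_\lambda\,G_{(0)}U]$ only through its constant term. That lemma derives the full $\lambda$-bracket from the Jacobi identity, so nothing beyond the identity in cohomology is needed. The main obstacle is therefore essentially absent, since the exactness of $\Phi_{\mathcal C}$ is already supplied by Lemma \ref{lem: sec N=2 of SUSY semi inf BRST triv of Phi}. The only thing left to verify is the routine weight bookkeeping for $T_{(1)}$-semisimplicity on the quotient, and possibly that the central charge does not degenerate, which is immaterial.
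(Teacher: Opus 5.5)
Your proposal is correct and follows the paper's proof essentially step for step. You use the cocycle property of $G_{\mathcal{C}}$ and $U_{\mathcal{C}}$, the $N = \frac{3}{2}$ structure on the complex (which supplies all but the last hypothesis of the equivalent-definition lemma), the identity $U_{\mathcal{C}(0)}(G_{\mathcal{C}(0)}U_{\mathcal{C}}) = -G_{\mathcal{C}} - 8\Phi_{\mathcal{C}}$, and the exactness of $\Phi_{\mathcal{C}}$, then apply that lemma in cohomology. Your extra remarks only make explicit what the paper leaves implicit: $d_{\text{s}}$ is a derivation, so the cohomology is a vertex algebra, and $d_{\text{s}}$ has conformal weight zero for the chosen ghost shift, so $T_{(1)}$ stays semisimple.
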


\begin{proof}
We summarize the proof as follows. First, by Proposition \ref{prop: sec N=2 of SUSY semi inf G and U in kernel}, ${G}_{\mathcal{C}}$ and ${U}_{\mathcal{C}}$ lie in ${\text{ker}}{\,}{d}_{\text{s}}$. Moreover, by Theorem \ref{theorem: N=3/2 of saff} and Proposition \ref{prop: sec N=2 of SUSY semi inf N=2 shift of ghost}, ${G}_{\mathcal{C}}$ and ${U}_{\mathcal{C}}$ form an $N = {\frac{3}{2}}$ superconformal structure on the complex in \eqref{eq: sec N=2 of SUSY semi inf def of complex} with central charge ${\frac{3}{2}}{\,}{\text{sdim}}{\,}{\mathfrak{g}} - {\frac{3}{2}}{\,}{\text{sdim}}{\,}{\mathfrak{h}} + {3}{\,}{\text{str}}({\mathcal{D}}_{{\mathfrak{g}} \oplus {\mathfrak{h}}})$. In particular, by Lemma \ref{lem: sec sMP F and ring F vanish}, Lemma \ref{lem: sec sMP coeff E vanish}, Proposition \ref{prop: sec N=3/2 computation of UGU}, and Proposition \ref{prop: sec N=2 of SUSY semi inf N=2 shift of ghost}, we know that \eqref{eq: sec N=2 of SUSY semi inf UGU and -G -8Phi} holds. Finally, by \eqref{eq: sec N=2 of SUSY semi inf UGU and -G -8Phi} and Lemma \ref{lem: sec N=2 of SUSY semi inf BRST triv of Phi}, it follows that ${G}_{\mathcal{C}}$ and ${U}_{\mathcal{C}}$ satisfy all the conditions of Lemma \ref{lem: equiv def of N=2 sconf} on the cohomology in \eqref{eq: sec N=2 of SUSY semi inf def of cohom}.
\end{proof}

\vskip 1mm

That is, the Kazama--Suzuki $N = 2$ condition together with the unimodularity of ${\mathfrak{h}}$ gives rise to an ${N} = {\frac{3}{2}}$ superconformal symmetry on the super-affine vertex algebra, and when that symmetry is combined with a specific shift of the $N = 2$ superconformal structure on the ghost sector, it yields an $N = 2$ superconformal symmetry on the cohomology through the supersymmetric quantum Hamiltonian reduction.

\vskip 1mm

We now rewrite the $N = 2$ superconformal structure of Theorem \ref{theorem: sec N=2 of SUSY semi inf N=2 on cohom} on the supersymmetric semi-infinite cohomology without ghost terms. First, by \eqref{eq: sec SUSY semi inf cohom phi phi phi and phi} and \eqref{eq: sec N=2 of SUSY semi inf diff in complex}, we have:
\begin{align}
\begin{split}
{d}_{\text{s}}({\overline{y}}^{r} \otimes {\phi}_{r}) 
=& {\,} - {\ket{0}} \otimes {\phi}_{r}{D}{\phi}^{r} + (-1)^{r}{\overline{y}}^{r}{\overline{y}}_{r} \otimes {\ket{0}}
\\
&+ {\sum}_{i, j \in H} (-1)^{i + 1}{h}^{r}_{ji}{\overline{y}}^{j} \otimes {\phi}_{r}{\phi}^{i} 
+ {\sum}_{i, j \in H} (-1)^{j}{h}^{i}_{rj}{\overline{y}}^{r} \otimes {\phi}_{i}{\phi}^{j},
\end{split}
\label{eq: sec N=2 of SUSY semi inf image of bar y times phi}
\end{align}
for $r \in H$, since we know that:
\begin{align*}
[{y}_{i}, {y}^{r}] = {\sum}_{j \in H} {h}^{r}_{ji}{y}^{j},
\end{align*}
for $i, r \in H$. On the other hand, with respect to the super Manin pair structure \eqref{eq: sec sMP decomposition of example sMP1} on ${\iota}({\mathfrak{h}}) \oplus {\mathfrak{h}}$, consider the following $U(1)$-current, as given in \eqref{eq: def of U_g in saff}:
\begin{align*}
{U}_{{\iota}({\mathfrak{h}}) \oplus {\mathfrak{h}}} 
=& {\,} {\sum}_{i \in H} (-1)^{i}{\overline{y}}^{i}{\overline{y}}_{i} + {\sum}_{i \in H} (-1)^{i}[{y}_{i}, {y}^{i}]
\\
=& {\,} {\sum}_{i \in H} (-1)^{i}{\overline{y}}^{i}{\overline{y}}_{i},
\end{align*}
where the last equality follows from the unimodularity of ${\mathfrak{h}}$. Then, from \eqref{eq: sec N=2 of SUSY semi inf image of bar y times phi}, we have
\begin{align*}
{d}_{\text{s}}{\big(}{\sum}_{r \in H} {\overline{y}}^{r} \otimes {\phi}_{r}{\big)} = {U}_{{\iota}({\mathfrak{h}}) \oplus {\mathfrak{h}}} \otimes {\ket{0}} + {\ket{0}} \otimes {U}_{\wedge},
\end{align*}
and hence we obtain
\begin{align}
{U}_{\mathcal{C}} = ({U}_{{\mathfrak{g}} \oplus {\mathfrak{h}}} - {U}_{{\iota}({\mathfrak{h}}) \oplus {\mathfrak{h}}}) \otimes {\ket{0}} + {d}_{\text{s}}{\big(}{\sum}_{r \in H} {\overline{y}}^{r} \otimes {\phi}_{r}{\big)},
\label{eq: sec N=2 of SUSY semi inf without ghost part U_C}
\end{align}
which gives a representative of ${U}_{\mathcal{C}}$ in cohomology with no ghost contribution. For the $N = 1$ superconformal vector ${G}_{\mathcal{C}}$, one can use the Lie superalgebra version of the representative constructed in Section 4.3 of \cite{FS97A}.

\vskip 1mm

We conclude this section by observing that Theorem \ref{theorem: sec N=2 of SUSY semi inf N=2 on cohom} and \eqref{eq: sec N=2 of SUSY semi inf without ghost part U_C} also hold for the relative supersymmetric semi-infinite cohomology. First, note that the relative supersymmetric semi-infinite complex:
\begin{align}
{C}^{{\frac{\infty}{2}} + {\bullet}}_{{\,\,} {\text{s, rel}}}({\mathfrak{h}}, {V}^{{\kappa}_{{\mathfrak{g}} \oplus {\mathfrak{h}}}}_{{\,\,} {\text{s}}}({\mathfrak{g}} \oplus {\mathfrak{h}}))
\label{eq: sec N=2 of SUSY semi inf def of rel complex}
\end{align}
is a vertex subalgebra of the complex in \eqref{eq: sec N=2 of SUSY semi inf def of complex}. Indeed, by definition, the subcomplex in \eqref{eq: sec N=2 of SUSY semi inf def of rel complex} is the invariant subspace with respect to the zero modes of the following elements:
\begin{align*}
{y}_{i} \otimes {\ket{0}} + {\ket{0}} \otimes {J}_{\bigwedge}({u}_{i}),
{\quad}
{\ket{0}} \otimes {D}{\phi}_{i},
\end{align*}
for all $i \in H$. Moreover, by \eqref{eq: OPE G with generator of saff}, \eqref{eq: sec N=3/2 U and y}, and Lemma \ref{lem: sec N=2 of SUSY semi inf rel complex and G and U}, we also have:
\begin{align*}
{G}_{\mathcal{C}}, {U}_{\mathcal{C}} \in {C}^{{\frac{\infty}{2}} + {\bullet}}_{{\,\,} {\text{s, rel}}}({\mathfrak{h}}, {V}^{{\kappa}_{{\mathfrak{g}} \oplus {\mathfrak{h}}}}_{{\,\,} {\text{s}}}({\mathfrak{g}} \oplus {\mathfrak{h}})).
\end{align*}
On the other hand, from \eqref{eq: sec N=2 of SUSY semi inf y and bary bary} and the following OPE:
\begin{align*}
[{J}_{\bigwedge}({u}_{i}) {\,} {}_{\lambda} {\,} {\phi}_{t}] = {\sum}_{w \in H} {C}^{w}_{it}{\phi}_{w},
\end{align*}
for $i, t \in H$, we have:
\begin{align*}
{J}_{{u}_{i} (0)}({\overline{y}}_{r}{\overline{y}}_{s} \otimes {\phi}_{t}) 
=& {\,} {\sum}_{w \in H} (-1)^{i}{C}^{w}_{ir}{\overline{y}}_{w}{\overline{y}}_{s} \otimes {\phi}_{t} + {\sum}_{w \in H} (-1)^{ir}{C}^{w}_{is}{\overline{y}}_{r}{\overline{y}}_{w} \otimes {\phi}_{t}
\\
&+ {\sum}_{w \in H} (-1)^{ir + is}{C}^{w}_{it}{\overline{y}}_{r}{\overline{y}}_{s} \otimes {\phi}_{w},
\end{align*}
for $i, r, s, t \in H$. Therefore, using the identity obtained by applying \eqref{eq: sec sMP str const F zero} to \eqref{eq: JI B for MP} with respect to the super Manin pair structure \eqref{eq: sec sMP decomposition of example sMP1}, we have:
\begin{align*}
{J}_{{u}_{i} (0)}{\big(}{\sum}_{r, s, t \in H} (-1)^{st}{\mathring{F}}^{rs}_{t}{\overline{y}}_{r}{\overline{y}}_{s} \otimes {\phi}_{t}{\big)} = 0,
\end{align*}
for $i \in H$. Moreover, we have:
\begin{align*}
({\ket{0}} \otimes {D}{\phi}_{i})_{(0)}{\big(}{\sum}_{r, s, t \in H} (-1)^{st}{\mathring{F}}^{rs}_{t}{\overline{y}}_{r}{\overline{y}}_{s} \otimes {\phi}_{t}{\big)} = 0,
\end{align*}
for $i \in H$. Thus, ${G}_{\mathcal{C}}$ and ${U}_{\mathcal{C}}$ form an $N = {\frac{3}{2}}$ superconformal structure on the relative supersymmetric semi-infinite complex in \eqref{eq: sec N=2 of SUSY semi inf def of rel complex}, and both \eqref{eq: sec N=2 of SUSY semi inf UGU and -G -8Phi} and Lemma \ref{lem: sec N=2 of SUSY semi inf BRST triv of Phi} also hold in the relative version. Therefore, ${G}_{\mathcal{C}}$ and ${U}_{\mathcal{C}}$ induce an $N = 2$ superconformal structure on the relative supersymmetric semi-infinite cohomology. Finally, one can check that
\begin{align*}
{\sum}_{r \in H} {\overline{y}}^{r} \otimes {\phi}_{r} \in {C}^{{\frac{\infty}{2}} + {\bullet}}_{{\,\,} {\text{s, rel}}}({\mathfrak{h}}, {V}^{{\kappa}_{{\mathfrak{g}} \oplus {\mathfrak{h}}}}_{{\,\,} {\text{s}}}({\mathfrak{g}} \oplus {\mathfrak{h}})),
\end{align*}
so that \eqref{eq: sec N=2 of SUSY semi inf without ghost part U_C} also holds in the relative supersymmetric semi-infinite complex. Hence, we have shown the following result:

\vskip 1mm

\begin{corollary}\label{corollary: sec N=2 of SUSY semi inf N=2 on rel SUSY cohom}\rm
Under Assumption \ref{assum: second assumption}, suppose that ${\mathfrak{h}}$ is unimodular. Then the $N = {\frac{3}{2}}$ superconformal structure ${G}_{\mathcal{C}}$ and ${U}_{\mathcal{C}}$ on the relative supersymmetric semi-infinite complex in \eqref{eq: sec N=2 of SUSY semi inf def of rel complex} induce an $N = 2$ superconformal structure on the relative supersymmetric semi-infinite cohomology
\begin{align*}
{H}^{{\frac{\infty}{2}} + {\bullet}}_{{\,\,} {\text{s, rel}}}({\mathfrak{h}}, {V}^{{\kappa}_{{\mathfrak{g}} \oplus {\mathfrak{h}}}}_{{\,\,} {\text{s}}}({\mathfrak{g}} \oplus {\mathfrak{h}})),
\end{align*}
with central charge ${\frac{3}{2}}{\,}{\text{sdim}}{\,}{\mathfrak{g}} - {\frac{3}{2}}{\,}{\text{sdim}}{\,}{\mathfrak{h}} + {3}{\,}{\text{str}}({\mathcal{D}}_{{\mathfrak{g}} \oplus {\mathfrak{h}}})$.
\end{corollary}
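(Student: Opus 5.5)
The plan is to reduce the relative statement to the non-relative argument of Theorem \ref{theorem: sec N=2 of SUSY semi inf N=2 on cohom}, which goes through once every ingredient is shown to live inside the relative complex. The first step is to check that the relative complex in \eqref{eq: sec N=2 of SUSY semi inf def of rel complex} is a vertex subalgebra of the full complex \eqref{eq: sec N=2 of SUSY semi inf def of complex}. It is the joint kernel of the zero modes of $J_{u_i}$ and $\ket{0}\otimes D\phi_i$. By \eqref{eq: sec SUSY semi inf cohom BI} and the Borcherds identity, zero modes of elements of a vertex algebra act as derivations of all $n$-th products. Hence their common kernel is closed under all $n$-th products and contains the vacuum. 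It is stable under $d_{\text{s}}$ by the commutation relations of $J_{u_i(0)}$ and $(\ket{0}\otimes D\phi_i)_{(0)}$ with $d_{\text{s}}$ derived before Definition \ref{def: rel SUSY semi-inf coh}. Consequently its cohomology is a vertex algebra, and the inclusion induces a vertex algebra map on cohomology.

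Next I would verify that $G_{\mathcal C}$ and $U_{\mathcal C}$ lie in the relative complex. On the matter side, $y_i$ has vanishing $\lambda$-bracket with $U_{\mathfrak{g}\oplus\mathfrak{h}}$ for $i\in H$ by \eqref{eq: sec N=3/2 U and y}. For $G_{\mathfrak{g}\oplus\mathfrak{h}}$, \eqref{eq: OPE G with generator of saff} gives $[y_i\,{}_\lambda\, G]$ as a total derivative, whose zero mode is zero after skew-symmetry. On the ghost side, $J_\bigwedge(u_i)=D(\cdots)$ is a $D$-exact quadratic, so Lemma \ref{lem: sec N=2 of SUSY semi inf rel complex and G and U} shows that $[J_\bigwedge(u_i)\,{}_\lambda\, G_\wedge]$ and $[J_\bigwedge(u_i)\,{}_\lambda\, U_\wedge]$ are proportional to $\lambda$. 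The same lemma gives the corresponding statement for $D\phi_i$. All zero modes therefore vanish. By the previous step, $\Phi_{\mathcal C}$, being built from $n$-th products of $G_{\mathcal C}$ and $U_{\mathcal C}$ through \eqref{eq: sec N=2 of SUSY semi inf UGU and -G -8Phi}, also lies in the subalgebra. The full $N=\frac32$ structure restricts, with the same central charge, because all OPEs are computed inside the ambient complex.

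The main obstacle is that the primitive of $\Phi_{\mathcal C}$ given in Lemma \ref{lem: sec N=2 of SUSY semi inf BRST triv of Phi} must itself be relative. Otherwise $\Phi_{\mathcal C}$ is only trivial in the absolute cohomology. Write $\Psi=\frac16\sum (-1)^{st}\mathring F^{rs}_t\,\overline y_r\overline y_s\otimes\phi_t$. Killing by $(\ket{0}\otimes D\phi_i)_{(0)}$ is immediate, since $D\phi_i$ has a nontrivial bracket only with $\phi^j$, which does not occur in $\Psi$. For $J_{u_i(0)}$, I would use \eqref{eq: sec N=2 of SUSY semi inf y and bary bary} together with $[J_\bigwedge(u_i)\,{}_\lambda\,\phi_t]=\sum_w C^w_{it}\phi_w$. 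This gives three terms, and their sum is the infinitesimal $\mathrm{ad}(u_i)$-action on the tensor $\mathring F^{rs}_t$. Its vanishing is exactly the $\mathfrak h$-invariance of the structure constants $\mathring F$. That invariance follows from the Jacobi identity \eqref{eq: JI B for MP} for the Manin subpair \eqref{eq: sec sMP decomposition of example sMP1}, once $F^{ij}_w=0$ from \eqref{eq: sec sMP str const F zero} is inserted. Tracking the signs there is the delicate point.

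With these facts established, $\Phi_{\mathcal C}=d_{\text{s}}\Psi$ holds in the relative complex. On relative cohomology we then get $U_{(0)}(G_{(0)}U)=-G$, together with the remaining relations of Lemma \ref{lem: equiv def of N=2 sconf}. The latter are inherited from the $N=\frac32$ relations \eqref{eq: eq1 in equiv def of N=3/2}, since Lemma \ref{lem: equiv def of N=2 sconf} only requires them in cohomology. This completes the argument.
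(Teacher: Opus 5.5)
Your proposal is correct and follows essentially the same route as the paper. Both arguments show that the relative complex is a vertex subalgebra stable under $d_{\text{s}}$ (you additionally justify this via zero modes acting as derivations). Both place $G_{\mathcal C}$ and $U_{\mathcal C}$ in it using \eqref{eq: OPE G with generator of saff}, \eqref{eq: sec N=3/2 U and y} and Lemma \ref{lem: sec N=2 of SUSY semi inf rel complex and G and U}, and both check that the primitive $\frac16\sum(-1)^{st}\mathring F^{rs}_t\,\overline y_r\overline y_s\otimes\phi_t$ of $\Phi_{\mathcal C}$ is relative via \eqref{eq: sec N=2 of SUSY semi inf y and bary bary} and the Jacobi identity \eqref{eq: JI B for MP} with \eqref{eq: sec sMP str const F zero} for the subpair, before rerunning the argument of Theorem \ref{theorem: sec N=2 of SUSY semi inf N=2 on cohom}.
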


\vskip 4mm
\section{Cohomological description of super coset models}\label{section: cohom description of super coset model}
\vskip 2mm

In this section, we investigate the supersymmetric semi-infinite cohomology of ${\widehat{\mathfrak{h}}}$ with coefficients in ${M}$ and its relative version, where the coefficient module is given as follows:
\begin{align*}
{M} := {V}^{{\kappa}_{\mathfrak{g}}}_{{\,\,} {\text{s}}}({\mathfrak{g}}) \otimes {V}^{-{\kappa}_{\mathfrak{h}}}_{{\,\,} {\text{s}}}({\mathfrak{h}}),
\end{align*}
by comparing them with the semi-infinite cohomology and the relative semi-infinite cohomology, respectively. Recall the definition of a coset vertex algebra. For a vertex subalgebra ${V}_{1}$ of ${V}_{2}$, the vertex subalgebra ${\text{Com}}({V}_{1}, {V}_{2})$ of ${V}_{2}$ defined by
\begin{align*}
{\text{Com}}({V}_{1}, {V}_{2}) := {\{} {v} \in {V}_{2} {\,} {|} {\,} {a}_{(n)}{v} = 0 {\text{ for all }} a \in {V}_{1}, n \in {\mathbb{Z}}_{\geq 0} {\}},
\end{align*}
is called the {\textit{coset}} (or {\textit{commutant}}) of ${V}_{1}$ in ${V}_{2}$. Then, the main result of this section shows that the {\textit{super coset model}}
\begin{align*}
{\text{Com}}({V}^{{\kappa}_{\mathfrak{h}}}_{{\,\,} {\text{s}}}({\mathfrak{h}}), {V}^{{\kappa}_{\mathfrak{g}}}_{{\,\,} {\text{s}}}({\mathfrak{g}}))
\end{align*}
can be realized as the relative supersymmetric semi-infinite cohomology of ${\widehat{\mathfrak{h}}}$ with coefficients in ${M}$.

\vskip 1mm

Throughout this section, we work in the following setting. Let ${\mathfrak{g}}$ be a finite-dimensional Lie superalgebra, and let ${\mathfrak{h}}$ be a Lie subalgebra of ${\mathfrak{g}}$. Suppose that ${\mathfrak{g}}$ is equipped with a nondegenerate consistent supersymmetric invariant bilinear form $({\,}{\cdot}{\,}, {\,}{\cdot}{\,})$ whose restriction to the Lie superalgebra ${\mathfrak{h}}$ is also nondegenerate. For a nonzero ${\kappa} \in {\mathbb{C}}$, consider the following bilinear forms
\begin{align*}
{\kappa}_{\mathfrak{g}} = {\kappa}({\,}{\cdot}{\,}, {\,}{\cdot}{\,}),
{\quad}
{\kappa}_{\mathfrak{h}} = {\kappa}({\,}{\cdot}{\,}, {\,}{\cdot}{\,})|_{{\mathfrak{h}} \times {\mathfrak{h}}},
\end{align*}
on ${\mathfrak{g}}$ and ${\mathfrak{h}}$, respectively, and we simply write ${V}^{\kappa}_{{\,\,} {\text{s}}}({\mathfrak{g}}) := {V}^{{\kappa}_{\mathfrak{g}}}_{{\,\,} {\text{s}}}({\mathfrak{g}})$ and ${V}^{-{\kappa}}_{{\,\,} {\text{s}}}({\mathfrak{h}}) := {V}^{-{\kappa}_{\mathfrak{h}}}_{{\,\,} {\text{s}}}({\mathfrak{h}})$. In addition, we denote the Killing forms of ${\mathfrak{g}}$ and ${\mathfrak{h}}$ by ${\overline{\kappa}}_{\mathfrak{g}}$ and ${\overline{\kappa}}_{\mathfrak{h}}$, respectively. Choose parity-homogeneous dual bases ${\{} {u}_{i} {\}}_{i \in G}$ and ${\{} {u}^{i} {\}}_{i \in G}$ of ${\mathfrak{g}}$ with respect to $({\,}{\cdot}{\,}, {\,}{\cdot}{\,})$ adapted to the orthogonal decomposition:
\begin{align*}
{\mathfrak{g}} = {\mathfrak{h}} \oplus {\mathfrak{h}}^{\perp},
\end{align*}
so that ${\{} {u}_i {\}}_{i \in H}$ and ${\{} {u}^{i} {\}}_{i \in H}$ are dual bases of ${\mathfrak{h}}$ with respect to $({\,}{\cdot}{\,}, {\,}{\cdot}{\,})|_{{\mathfrak{h}} \times {\mathfrak{h}}}$. Here, ${\mathfrak{h}}^{\perp}$ denotes the orthogonal complement of ${\mathfrak{h}}$ in ${\mathfrak{g}}$ with respect to the bilinear form $({\,}{\cdot}{\,}, {\,}{\cdot}{\,})$.

\vskip 1mm

We now recall the supersymmetric semi-infinite cohomology introduced in Section \ref{section: SUSY semi inf cohom}. For the coefficient module
\begin{align*}
{M} = {V}^{\kappa}_{{\,\,} {\text{s}}}({\mathfrak{g}}) \otimes {V}^{-{\kappa}}_{{\,\,} {\text{s}}}({\mathfrak{h}}),
\end{align*}
the differential ${d}_{\text{s}}$ on ${C}^{{\frac{\infty}{2}} + {\bullet}}_{{\,\,} {\text{s}}}({\mathfrak{h}}, {M}) := {M} \otimes {\textstyle{\bigwedge}}^{{\frac{\infty}{2}} + {\bullet}}_{{\,\,} {\text{s}}}({\mathfrak{h}})$ is given by:
\begin{align*}
{d}_{\text{s}}({\overline{g}} \otimes {\ket{0}}) &= {\sum}_{i \in H} (-1)^{(i + 1)(g + 1)}({\overline{[{u}_{i}, g]}} \otimes {\ket{0}}){\phi}^{i} + (-1)^{g + 1}{\kappa}{D}{\phi}^{g},
\\
{d}_{\text{s}}({\ket{0}} \otimes {\overline{h}}) &= {\sum}_{i \in H} (-1)^{(i + 1)(h + 1)}({\ket{0}} \otimes {\overline{[{u}_{i}, h]}}){\phi}^{i} + (-1)^{h}{\kappa}{D}{\phi}^{h},
\\
{d}_{\text{s}}({\phi}_{h}) &= - ({\overline{h}} \otimes {\ket{0}} + {\ket{0}} \otimes {\overline{h}}) + {\sum}_{i \in H} (-1)^{i}{\phi}^{i}{\phi}_{[{u}_{i}, h]},
\\
{d}_{\text{s}}({\phi}^{h}) &= {\frac{1}{2}}{\,}{\sum}_{i \in H} (-1)^{i}{\phi}^{i}{\phi}^{[{u}_{i}, h]},
\end{align*}
for $g \in {\mathfrak{g}}$ and $h \in {\mathfrak{h}}$. Here, we use the following notation:
\begin{align*}
{\phi}_{h} := {\sum}_{i \in H} {c}_{i}{\phi}_{i},
{\quad}
{\phi}^{h} := {\sum}_{i \in H} {c}^{i}{\phi}^{i},
\end{align*}
for $h = {\sum}_{i \in H} {c}_{i}{u}_{i} = {\sum}_{i \in H} {c}^{i}{u}^{i} \in {\mathfrak{h}}$. Moreover, for $g \in {\mathfrak{g}}$, we write ${\phi}^{g}$ for ${\phi}^{{\pi}_{\mathfrak{h}}(g)}$ where ${\pi}_{\mathfrak{h}} : {\mathfrak{g}} \rightarrow {\mathfrak{h}}$ is the projection map.

\vskip 1mm

To analyze the cohomology of ${C}^{{\frac{\infty}{2}} + {\bullet}}_{{\,\,} {\text{s}}}({\mathfrak{h}}, {M})$ using subcomplexes, let us consider the vertex subalgebras of ${C}^{{\frac{\infty}{2}} + {\bullet}}_{{\,\,} {\text{s}}}({\mathfrak{h}}, {M})$ denoted by ${C}^{\partial}_{1}$, ${C}^{\partial}_{2}$ and ${C}^{\partial}_{3}$. The first two vertex subalgebras are generated by ${\mathcal{B}}_{1}$ and ${\mathcal{B}}_{2}$, respectively, where
\begin{align*}
{\mathcal{B}}_{1} &=
{\Big\{} {\phi}_{a}, {\,} {\overline{a}} \otimes {\ket{0}} + {\ket{0}} \otimes {\overline{a}} + {\sum}_{i \in H} (-1)^{i + 1}{\phi}^{i}{\phi}_{[{u}_{i}, a]} {\,} {\Big|} {\,} a \in {\mathfrak{h}} {\Big\}},
\\
\begin{split}
{\mathcal{B}}_{2} &= \left\{
\begin{aligned}
&{\overline{a}} \otimes {\ket{0}} - {\ket{0}} \otimes {\overline{a}}, \\
&{D}{\phi}^{a} + {\frac{1}{2{\kappa}}}{\,}{\sum}_{i \in H} (-1)^{i(a + 1)}({\overline{[{u}_{i}, a]}} \otimes {\ket{0}} - {\ket{0}} \otimes {\overline{[{u}_{i}, a]}}){\phi}^{i}
\end{aligned}
{\,} \middle| {\,} a \in {\mathfrak{h}} \right\}.
\end{split}
\end{align*}
Since the ${\lambda}$-bracket of any two elements of ${\mathcal{B}}_{1}$, as well as that of any two elements of ${\mathcal{B}}_{2}$, is zero, it follows that ${C}^{\partial}_{1}$ and ${C}^{\partial}_{2}$ are differential supersymmetric algebras generated by ${\mathcal{B}}_{1}$ and ${\mathcal{B}}_{2}$, respectively. To define the subalgebra ${C}^{\partial}_{3}$, let us introduce the following notation:
\begin{align*}
{\widetilde{g}} := {g} - {\frac{1}{2{\kappa}}}{\,}{\sum}_{i \in G} (-1)^{i}{\overline{u}}^{i}{\overline{[{u}_{i}, g]}} \in {V}^{\kappa}_{{\,\,} {\text{s}}}({\mathfrak{g}}),
\end{align*}
for $g \in {\mathfrak{g}}$, and
\begin{align*}
X(h) := {\ket{0}} \otimes {h} + {X}_{1}(h) \otimes {\ket{0}} + {X}_{2}(h),
\end{align*}
for $h \in {\mathfrak{h}}$, where
\begin{align*}
{X}_{1}(h) &:= {\frac{1}{2\kappa}}{\,}{\sum}_{i\in H} (-1)^{i} {\overline{u}}^{i}{\overline{[{u}_{i}, h]}} \in {V}^{\kappa}_{{\,\,} {\text{s}}}({\mathfrak{g}}),
\\
{X}_{2}(h) &:= {\sum}_{i\in H} (-1)^{i + 1} ({D}{\phi}^{i}){\phi}_{[{u}_{i}, h]} \in {\textstyle{\bigwedge}}^{{\frac{\infty}{2}} + {\bullet}}_{{\,\,} {\text{s}}}({\mathfrak{h}}).
\end{align*}
Now, let ${C}^{\partial}_{3}$ be the differential subalgebra of ${C}^{{\frac{\infty}{2}} + {\bullet}}_{{\,\,} {\text{s}}}({\mathfrak{h}}, {M})$ generated by 
\begin{align*}
{\mathcal{B}}_{3} = {\Big\{} {\widetilde{g}} \otimes {\ket{0}}, {\,} X(h), {\,} {\overline{h}}^{\perp} \otimes {\ket{0}}, {\,} {\phi}^{i}, {D}{\phi}_{i} {\,} {\Big|} {\,} {g} \in {\mathfrak{g}}, {\,} {h} \in {\mathfrak{h}}, {\,} {h}^{\perp} \in {\mathfrak{h}}^{\perp}, {\,} {i} \in {H} {\big\}},
\end{align*}
under the even derivation ${\partial}$. Then we have $[{\overline{g}}_{1} {\,} {}_{\lambda} {\,} {\widetilde{g}}_{2}] = 0$ for ${g}_{1}, {g}_{2} \in {\mathfrak{g}}$. Moreover, ${\{} {\widetilde{g}} {\}}_{g \in {\mathfrak{g}}}$ generates the affine vertex algebra ${V}^{{\kappa} - {\frac{1}{2}}{\overline{\kappa}}_{\mathfrak{g}}}({\mathfrak{g}})$, that is:
\begin{align*}
[{\widetilde{g}}_{1} {\,} {}_{\lambda} {\,} {\widetilde{g}}_{2}] 
= {\widetilde{[{g}_{1}, {g}_{2}]}} + ({\kappa} - {\frac{1}{2}}{\overline{\kappa}}_{\mathfrak{g}})({g}_{1}, {g}_{2}){\lambda},
\end{align*}
for ${g}_{1}, {g}_{2} \in {\mathfrak{g}}$. Hence, we have an isomorphism of vertex algebras
\begin{align*}
{V}^{{\kappa}}_{{\,\,} {\text{s}}}({\mathfrak{g}}) \cong {V}^{{\kappa} - {\frac{1}{2}}{\overline{\kappa}}_{\mathfrak{g}}}({\mathfrak{g}}) \otimes {\mathcal{F}}({\overline{\mathfrak{g}}}),
\end{align*}
where ${\mathcal{F}}({\overline{\mathfrak{g}}})$ denotes the vertex subalgebra generated by ${\{} {\overline{g}} {\}}_{g \in {\mathfrak{g}}}$.

\vskip 1mm

Now, let ${\textstyle{\bigwedge}}^{{\frac{\infty}{2}} + {\bullet}}({\mathfrak{h}})$ be the vertex subalgebra of ${\textstyle{\bigwedge}}^{{\frac{\infty}{2}} + {\bullet}}_{{\,\,} {\text{s}}}({\mathfrak{h}})$ generated by $\{ {D}{\phi}_{i}, {\phi}^{i} \}_{i \in H}$. Then we obtain the following lemmas:

\vskip 1mm

\begin{lemma}\label{lem: sec cohom desc VA iso of C3}\rm
As vertex algebras, we have
\begin{align*}
{C}^{\partial}_{3} \cong 
{V}^{{\kappa} - {\frac{1}{2}}{\overline{\kappa}}_{\mathfrak{g}}}({\mathfrak{g}}) 
\otimes 
{\mathcal{F}}({\overline{\mathfrak{h}}}^{\perp}) 
\otimes 
{V}^{- {\kappa} - {\frac{1}{2}}{\overline{\kappa}}_{\mathfrak{h}}}({\mathfrak{h}}) 
\otimes 
{\textstyle{\bigwedge}}^{{\frac{\infty}{2}} + {\bullet}}({\mathfrak{h}}).
\end{align*}
\end{lemma}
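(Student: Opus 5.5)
The plan is to show that the generators in $\mathcal{B}_3$ split into four mutually commuting families, and that each family satisfies exactly the OPEs of the corresponding tensor factor. The isomorphism then follows from freeness and strong generation.

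\textbf{Step 1 (the $\mathfrak{g}$-factor).} The family $\{\widetilde{g}\otimes\ket{0}\}_{g\in\mathfrak{g}}$ already generates a copy of $V^{\kappa-\frac12\overline{\kappa}_{\mathfrak{g}}}(\mathfrak{g})$. This is the isomorphism $V^{\kappa}_{\,\,\text{s}}(\mathfrak{g})\cong V^{\kappa-\frac12\overline{\kappa}_{\mathfrak{g}}}(\mathfrak{g})\otimes\mathcal{F}(\overline{\mathfrak{g}})$ recalled above.

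\textbf{Step 2 (the fermions $\overline{h}^\perp$).} The vectors $\overline{h}^\perp\otimes\ket{0}$ for $h^\perp\in\mathfrak{h}^\perp$ commute with every $\widetilde{g}$, since $[\overline{g}_1\,{}_\lambda\,\widetilde{g}_2]=0$. They are free fermions with the nondegenerate pairing $\kappa(\cdot,\cdot)|_{\mathfrak{h}^\perp}$, so they generate $\mathcal{F}(\overline{\mathfrak{h}}^{\perp})$.

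\textbf{Step 3 (the ghosts).} The fields $\phi^i$ and $D\phi_i$ generate $\bigwedge^{\frac{\infty}{2}+\bullet}(\mathfrak{h})$. They have no OPE with $M\otimes\ket{0}$. Among the ghosts they pair only with $D\phi^i$ and $\phi_i$, respectively. Neither of these appears in $\widetilde g$ or $\overline{h}^\perp$, so they commute with both.

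\textbf{Step 4 (the $X(h)$).} I would show that the $X(h)$ generate $V^{-\kappa-\frac12\overline{\kappa}_{\mathfrak{h}}}(\mathfrak{h})$ and commute with the other three families. The argument splits into three checks.

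\emph{(a) Commutation with $\widetilde g\otimes\ket0$.} The term $\ket0\otimes h$ commutes trivially. For $X_1(h)\otimes\ket0$, note that $X_1(h)$ lies in the fermion subalgebra $\mathcal{F}(\overline{\mathfrak{g}})$ of $V^{\kappa}_{\,\,\text{s}}(\mathfrak{g})$. Every $\overline{g}$ commutes with every $\widetilde{g}$, so $X_1(h)\otimes\ket0$ does too. $X_2(h)$ lives in the ghost factor and commutes automatically.

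\emph{(b) Commutation with $\overline{h}^\perp$.} Writing $X_1(h)=\frac{1}{2\kappa}\sum_{i\in H}(-1)^i\overline{u}^i\overline{[u_i,h]}$, both $\overline{u}^i$ and $\overline{[u_i,h]}$ lie in $\overline{\mathfrak{h}}$, because $[\mathfrak{h},\mathfrak{h}]\subset\mathfrak{h}$. Since $\overline{h}^\perp$ is orthogonal to $\overline{\mathfrak{h}}$, the bracket vanishes.

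\emph{(c) Commutation with the ghosts and closure.} $X_2(h)$ contains $D\phi^i$ and $\phi_{[u_i,h]}$, which do pair with $\phi_j$ and $D\phi^j$. So the real content is that $X_2(h)$ commutes with $\phi^j$ and $D\phi_j$. This holds because the pairings are only $D\phi_i\leftrightarrow\phi^j$ and $D\phi^i\leftrightarrow\phi_j$. With this I must also compute $[X(h_1)\,{}_\lambda\,X(h_2)]$. There are three contributions:
\begin{itemize}
\item $\ket0\otimes h$ contributes $[h_1,h_2]$ with level $-\kappa$.
\item $X_1$ behaves like a (rescaled) fermionic realization of $\mathfrak{h}$ built on $\overline{\mathfrak{h}}\subset\overline{\mathfrak{g}}$.
\item $X_2$ is the standard $bc$-ghost realization of $\mathrm{ad}\,\mathfrak{h}$, of level $-\overline{\kappa}_{\mathfrak{h}}$ up to sign conventions.
\end{itemize}
Summing these and using the non-commutative Wick formula, the $\lambda$-term should come out to $(-\kappa-\frac12\overline{\kappa}_{\mathfrak{h}})(h_1,h_2)$.

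\textbf{Main obstacle.} I expect this level bookkeeping to be the hardest step. In particular, the normalization $\frac{1}{2\kappa}$ in $X_1$ must produce the $\frac12\overline\kappa$-type correction consistently with the ghost contribution. I would first check it on the Lie algebra case, and then insert the super signs.

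\textbf{Step 5 (conclusion).} The four families are freely and strongly generating and pairwise commute. Hence the map from the tensor product to $C^\partial_3$ is a surjective vertex algebra homomorphism. For injectivity, I would compare PBW bases: every generator is a generator of $M\otimes\bigwedge_{\,\,\text{s}}$ plus lower or triangular corrections. So the PBW monomials remain linearly independent. Equivalently, one can exhibit a filtration whose associated graded is the free algebra.
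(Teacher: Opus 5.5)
Your proposal is correct and follows essentially the paper's route. The paper also reduces the lemma to the identity $[X(h_1)\,{}_{\lambda}\,X(h_2)] = X([h_1,h_2]) - (\kappa + \frac{1}{2}\overline{\kappa}_{\mathfrak{h}})(h_1,h_2)\lambda$, summing the levels $-\kappa$ (from $\ket{0}\otimes h$), $+\frac{1}{2}\overline{\kappa}_{\mathfrak{h}}$ (from $X_1$) and $-\overline{\kappa}_{\mathfrak{h}}$ (from $X_2$), and then simply declares the generator map an isomorphism; your explicit commutation checks and filtration-based injectivity argument fill in what the paper leaves implicit.

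One small correction: $X_2(h)$ is built from the pair $\phi_k, D\phi^k$, both of parity $p(u_k)$. On the even part of $\mathfrak{h}$ this pair is therefore bosonic ($\beta\gamma$-type, not $bc$), and that is exactly why its level is $-\overline{\kappa}_{\mathfrak{h}}$. A fermionic $bc$ realization would instead give $+\overline{\kappa}_{\mathfrak{h}}$ and spoil the total.
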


\begin{proof}
By direct computations, one can check that
\begin{align*}
[X({h}_{1}) {\,} {}_{\lambda} {\,} X({h}_{2})] = X([{h}_{1}, {h}_{2}]) - ({\kappa} + {\frac{1}{2}}{\overline{\kappa}}_{\mathfrak{h}})({h}_{1}, {h}_{2}){\lambda},
\end{align*}
since we have
\begin{align*}
[{X}_{1}({h}_{1}) {\,} {}_{\lambda} {\,} {X}_{1}({h}_{2})] 
&= {X}_{1}([{h}_{1}, {h}_{2}]) + {\frac{1}{2}}{\overline{\kappa}}_{\mathfrak{h}}({h}_{1}, {h}_{2}){\lambda},
\\
[{X}_{2}({h}_{1}) {\,} {}_{\lambda} {\,} {X}_{2}({h}_{2})] 
&= {X}_{2}([{h}_{1}, {h}_{2}]) - {\overline{\kappa}}_{\mathfrak{h}}({h}_{1}, {h}_{2}){\lambda},
\end{align*}
for ${h}_{1}, {h}_{2} \in {\mathfrak{h}}$. Hence the map defined by
\begin{gather*}
{\widetilde{g}} \otimes {\ket{0}} \mapsto {g} \otimes {\ket{0}} \otimes {\ket{0}},
{\quad}
{X(h)} \mapsto {\ket{0}} \otimes {\ket{0}} \otimes {h},
\\
{\overline{h}}^{\perp} \otimes {\ket{0}} \mapsto {\ket{0}} \otimes {\overline{h}}^{\perp} \otimes {\ket{0}},
{\quad}
{\phi}^{i} \mapsto {\phi}^{i},
{\quad}
{D}{\phi}_{i} \mapsto {D}{\phi}_{i},
\end{gather*}
for ${g} \in {\mathfrak{g}}$, ${h} \in {\mathfrak{h}}$, ${h}^{\perp} \in {\mathfrak{h}}^{\perp}$, $i \in H$, is the desired isomorphism.
\end{proof}

\vskip 1mm

\begin{lemma}\label{lem: sec cohom desc SUSY decomposition of SUSY complex into subcomplexes}\rm
As vector superspaces, we have
\begin{align*}
{C}^{{\frac{\infty}{2}} + {\bullet}}_{{\,\,} {\text{s}}}({\mathfrak{h}}, {M}) 
\cong
{C}^{\partial}_{1} \otimes {C}^{\partial}_{2} \otimes {C}^{\partial}_{3}.
\end{align*}
\end{lemma}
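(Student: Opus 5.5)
The plan is to prove the decomposition by a change of free generators, followed by a PBW-type argument. The complex ${C}^{{\frac{\infty}{2}} + {\bullet}}_{{\,\,} {\text{s}}}({\mathfrak{h}}, {M}) = {V}^{\kappa}_{{\,\,} {\text{s}}}({\mathfrak{g}}) \otimes {V}^{-\kappa}_{{\,\,} {\text{s}}}({\mathfrak{h}}) \otimes {\textstyle{\bigwedge}}^{{\frac{\infty}{2}} + {\bullet}}_{{\,\,} {\text{s}}}({\mathfrak{h}})$ is freely strongly generated by the following fields:
\begin{align*}
{\overline{g}} \otimes {\ket{0}},\ g \otimes {\ket{0}}\ (g \in {\mathfrak{g}}),
\quad
{\ket{0}} \otimes {\overline{h}},\ {\ket{0}} \otimes h\ (h \in {\mathfrak{h}}),
\quad
{\phi}_{i},\ {D}{\phi}_{i},\ {\phi}^{i},\ {D}{\phi}^{i}\ (i \in H).
\end{align*}
A PBW basis therefore consists of ordered normally ordered monomials in these fields and their $\partial$-derivatives. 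I will show that the set
\begin{align*}
{\mathcal{B}} := {\mathcal{B}}_{1} \cup {\mathcal{B}}_{2} \cup {\mathcal{B}}_{3}
\end{align*}
is obtained from this generating set by a \emph{triangular} change of generators. Since each ${C}^{\partial}_{j}$ is strongly generated by ${\mathcal{B}}_{j}$ and the ${\mathcal{B}}_{j}$ do not overlap, it then follows that ordered monomials in ${\mathcal{B}}$ and their derivatives again form a PBW basis. That basis factors into a product of bases of ${C}^{\partial}_{1}$, ${C}^{\partial}_{2}$, ${C}^{\partial}_{3}$, which gives the claimed isomorphism of vector superspaces (realized by multiplication $a \otimes b \otimes c \mapsto abc$).

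\textbf{Step 1 (counting and matching).} First I match leading terms, counting with respect to the filtration by the number of generators in a normally ordered monomial (equivalently, the standard good filtration). Modulo higher filtration degree the elements of ${\mathcal{B}}$ have the following leading terms:
\begin{itemize}
\item ${\mathcal{B}}_{1}$: ${\phi}_{a}$ and ${\overline{a}} \otimes {\ket{0}} + {\ket{0}} \otimes {\overline{a}}$;
\item ${\mathcal{B}}_{2}$: ${\overline{a}} \otimes {\ket{0}} - {\ket{0}} \otimes {\overline{a}}$ and ${D}{\phi}^{a}$;
\item ${\mathcal{B}}_{3}$: $g \otimes {\ket{0}}$, ${\ket{0}} \otimes h$, ${\overline{h}}^{\perp} \otimes {\ket{0}}$, ${\phi}^{i}$, and ${D}{\phi}_{i}$.
\end{itemize}
The terms ${\widetilde{g}} - g$ and $X(h) - {\ket{0}} \otimes h$ are quadratic, as are the ghost corrections in ${\mathcal{B}}_{1}$ and ${\mathcal{B}}_{2}$. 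For $a \in {\mathfrak{h}}$, the two combinations ${\overline{a}} \otimes {\ket{0}} \pm {\ket{0}} \otimes {\overline{a}}$ recover ${\overline{a}} \otimes {\ket{0}}$ and ${\ket{0}} \otimes {\overline{a}}$. Together with ${\overline{h}}^{\perp} \otimes {\ket{0}}$ they recover all of ${\overline{\mathfrak{g}}} \otimes {\ket{0}}$ (here ${\mathfrak{g}} = {\mathfrak{h}} \oplus {\mathfrak{h}}^{\perp}$, with $\kappa \neq 0$ in the denominators). Hence the leading symbols of ${\mathcal{B}}$ are exactly the original generators, each appearing once.

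\textbf{Step 2 (triangularity and passing to the associated graded).} Next I argue that the correction terms have strictly higher filtration degree, or equivalently lower ``weight-adjusted'' order. On the associated graded commutative vertex (Poisson) algebra, which is a free differential polynomial superalgebra, the images of ${\mathcal{B}}$ are then the original free generators. A filtered map that is an isomorphism on associated graded pieces is itself an isomorphism. I will use the conformal grading together with the filtration, under which each graded piece is finite-dimensional (after also fixing the ghost number), so that the filtration is exhaustive and separated on each piece and the associated-graded argument is legitimate. It follows that the multiplication map ${C}^{\partial}_{1} \otimes {C}^{\partial}_{2} \otimes {C}^{\partial}_{3} \to {C}^{{\frac{\infty}{2}} + {\bullet}}_{{\,\,} {\text{s}}}({\mathfrak{h}}, {M})$ is bijective. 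Independently, Lemma \ref{lem: sec cohom desc VA iso of C3} identifies ${C}^{\partial}_{3}$ as freely generated, and ${C}^{\partial}_{1}$ and ${C}^{\partial}_{2}$ are free because their generators have vanishing mutual $\lambda$-brackets.

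\textbf{Main obstacle.} The delicate point is making the filtration argument rigorous in the presence of the weight-zero ghosts ${\phi}^{i}$: the corrections such as $({\overline{[{u}_{i},a]}} \otimes {\ket{0}}){\phi}^{i}$ do not lower conformal weight. I would handle this with a bigrading by conformal weight and ghost number. Within a fixed bidegree, the number of factors other than ${\phi}^{i}$ is bounded, and every correction term strictly increases the total number of factors while keeping bidegree fixed. The filtration by number of factors is then finite on each bidegree piece, and the triangularity argument goes through.
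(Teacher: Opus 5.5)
Your strategy is the paper's. Its proof only notes that the standard generators of the complex lie in the algebra generated by $\mathcal{B}_1\cup\mathcal{B}_2\cup\mathcal{B}_3$ and then invokes a PBW argument, which is your Step~1. Your leading-term matching is correct, and so is the finite-dimensionality of the pieces of fixed conformal weight and ghost number.

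The gap is in Step~2, where you try to make the PBW step rigorous: the filtration by number of factors cannot do both jobs you assign to it.
\begin{itemize}
\item In its increasing form (at most $p$ factors) it is compatible with normally ordered products, and its associated graded is the free differential polynomial superalgebra. But there the symbol of $\widetilde{g}$ is its (generally nonzero) quadratic part $-\frac{1}{2\kappa}\sum_{i\in G}(-1)^{i}\overline{u}^{i}\overline{[u_i,g]}$. The same holds for $X(h)$ and for the corrected elements of $\mathcal{B}_1$ and $\mathcal{B}_2$. So the images of $\mathcal{B}$ in the associated graded are not the old generators.
\item In the form you actually use (corrections counted as higher), it is not multiplicative. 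Putting an expanded monomial into PBW form via quasi-associativity and quasi-commutativity creates terms with as many factors as the leading monomial, or fewer.
\end{itemize}
For instance, let $\mathfrak{g}$ be a Lie algebra, $g'\in\mathfrak{h}^{\perp}$, and suppose currents precede odd generators in your order. Then in $V^{\kappa}_{\,\,\text{s}}(\mathfrak{g})$ one computes
\begin{align*}
\widetilde{g}\,\overline{g'} = g\,\overline{g'} - \frac{1}{2\kappa}\sum_{i\in G}\overline{u}^{i}\big(\overline{[u_i,g]}\,\overline{g'}\big) - \partial\overline{[g,g']}.
\end{align*}
So an ordered two-factor $\mathcal{B}_3$-monomial contains a one-factor PBW term, and your claim that ``the triangularity argument goes through'' is unjustified. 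The weight-zero ghosts you flag are not the real obstruction, since your bigrading already controls them.

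What is missing is a filtration that is multiplicative \emph{and} places every correction strictly lower. One that works is the following. Set $w=1$ on $g\otimes\ket{0}$, $\ket{0}\otimes h$, $D\phi^{i}$; set $w=0$ on $\overline{g}\otimes\ket{0}$, $\ket{0}\otimes\overline{h}$, $\phi^{i}$, $D\phi_{i}$; set $w=-1$ on $\phi_{i}$. Derivatives keep their $w$. Let $\mathcal{W}_p$ be the span of PBW monomials of total $w$ at most $p$.

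Each nonzero $\lambda$-bracket of two standard generators has $w$ at most the sum of their $w$'s. Hence $\mathcal{W}_p\mathcal{W}_q\subset\mathcal{W}_{p+q}$ and $\partial\mathcal{W}_p\subset\mathcal{W}_p$. Moreover, each element of $\mathcal{B}$ is its leading generator (after passing to $\overline{a}\otimes\ket{0}\pm\ket{0}\otimes\overline{a}$) plus a correction whose $w$ is exactly one lower.

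Therefore an ordered $\mathcal{B}$-monomial, formed with the same order and nesting as the corresponding PBW monomial of total $w=W$, agrees with it modulo $\mathcal{W}_{W-1}$. On each finite-dimensional bigraded piece the transition matrix is then unitriangular. Ordering $\mathcal{B}_1<\mathcal{B}_2<\mathcal{B}_3$ gives the decomposition by concatenating ordered monomials. This also supplies the freeness of $C^{\partial}_1$ and $C^{\partial}_2$, which does not follow from vanishing brackets alone.

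If you want the map to be literally $m_1\otimes m_2\otimes m_3\mapsto m_1(m_2m_3)$, refine the order within each $w$-level by number of factors. This works because re-nesting only creates terms with $w\le W$ and fewer factors.
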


\begin{proof}
As a differential algebra with respect to ${\partial}$, we know that ${C}^{{\frac{\infty}{2}} + {\bullet}}_{{\,\,} {\text{s}}}({\mathfrak{h}}, {M})$ is generated by
\begin{align*}
{\overline{g}} \otimes {\ket{0}},
{\quad}
{g} \otimes {\ket{0}},
{\quad}
{\ket{0}} \otimes {\overline{h}},
{\quad}
{\ket{0}} \otimes {h},
{\quad}
{\phi}_{i},
{\quad}
{D}{\phi}_{i},
{\quad}
{\phi}^{i},
{\quad}
{D}{\phi}^{i},
\end{align*}
for ${g} \in {\mathfrak{g}}$, ${h} \in {\mathfrak{h}}$, $i \in H$. Then, one can check that the above generators lie in the differential algebra generated by ${\mathcal{B}}_{1}$, ${\mathcal{B}}_{2}$, and ${\mathcal{B}}_{3}$. Hence, by the PBW basis argument, we obtain the desired result.
\end{proof}

\vskip 1mm

From now on, we will show that the differential ${d}_{\text{s}}$ restricts to ${C}^{\partial}_{3}$ and understand the supersymmetric semi-infinite cohomology
\begin{align*}
{H}^{{\frac{\infty}{2}} + {\bullet}}_{{\,\,} {\text{s}}}({\mathfrak{h}}, {M})
\end{align*}
through the cohomology ${H}({C}^{\partial}_{3}, {d}_{\text{s}})$. On ${C}^{\partial}_{3}$, the differential ${d}_{\text{s}}$ acts as follows:
\begin{align}
{d}_{\text{s}}({\widetilde{g}} \otimes {\ket{0}}) &= {\sum}_{i \in H} (-1)^{ig + i + g}({\widetilde{[{u}_{i}, {g}]}} \otimes {\ket{0}}){\phi}^{i} + {\sum}_{i \in H}(-1)^{g} ({\kappa} - {\frac{1}{2}}{\overline{\kappa}}_{\mathfrak{g}})({u}_{i}, {g}){\partial}{\phi}^{i},
\notag\\
{d}_{\text{s}}({\overline{h}}^{\perp} \otimes {\ket{0}}) &= {\sum}_{i \in H} (-1)^{(i + 1)({h}^{\perp} + 1)}({\overline{[{u}_{i}, {h}^{\perp}]}} \otimes {\ket{0}}){\phi}^{i},
\notag\\
{d}_{\text{s}}(X(h)) &= {\sum}_{i \in H} (-1)^{ih + i + h} X([{u}_{i}, {h}]){\phi}^{i} + {\sum}_{i \in H} (-1)^{h}(-{\kappa} - {\frac{1}{2}}{\overline{\kappa}}_{\mathfrak{h}})({u}_{i}, {h}){\partial}{\phi}^{i},
\label{eq: sec cohom desc d_s on C3}\\
{d}_{\text{s}}({\phi}^{h}) &= {\frac{1}{2}}{\,}{\sum}_{i \in H} (-1)^{i}{\phi}^{i}{\phi}^{[{u}_{i}, h]},
\notag\\
{d}_{\text{s}}({D}{\phi}_{h}) &= {\widetilde{h}} \otimes {\ket{0}} + X(h) + {\frac{1}{2{\kappa}}}{\,}{\sum}_{i \in {G} \setminus {H}} (-1)^{i} {\overline{u}}^{i}{\overline{[{u}_{i}, h]}} \otimes {\ket{0}} + {\sum}_{i \in H} {\phi}^{i}{D}{\phi}_{[{u}_{i}, h]},
\notag
\end{align}
for ${g} \in {\mathfrak{g}}$, ${h} \in {\mathfrak{h}}$, ${h}^{\perp} \in {\mathfrak{h}}^{\perp}$. Based on this computation, we can deduce the following proposition.

\vskip 1mm

\begin{proposition}\label{prop: sec cohom desc SUSY H(M) to H(C3)}\rm
We have ${H}({C}^{\partial}_{1}, {d}_{\text{s}}) = {H}({C}^{\partial}_{2}, {d}_{\text{s}}) = {\mathbb{C}}$ and hence
\begin{align*}
{H}^{{\frac{\infty}{2}} + {\bullet}}_{{\,\,} {\text{s}}}({\mathfrak{h}}, {M}) \cong {H}^{\bullet}({C}^{\partial}_{3}, {d}_{\text{s}}).
\end{align*}
\end{proposition}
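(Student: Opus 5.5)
The plan is to show that $d_{\text{s}}$ preserves each of $C^{\partial}_1$, $C^{\partial}_2$, $C^{\partial}_3$. Then the isomorphism of Lemma \ref{lem: sec cohom desc SUSY decomposition of SUSY complex into subcomplexes} becomes an isomorphism of complexes $C^{\frac{\infty}{2}+\bullet}_{\text{s}}(\mathfrak{h},M)\cong C^{\partial}_1\otimes C^{\partial}_2\otimes C^{\partial}_3$, with the tensor product differential. The Künneth formula then reduces the statement to the acyclicity claims $H(C^{\partial}_1,d_{\text{s}})=H(C^{\partial}_2,d_{\text{s}})=\mathbb{C}$.

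\textbf{Compatibility of $d_{\text{s}}$ with the decomposition.} Since $d_{\text{s}}=(DQ)_{(0)}$ is an odd derivation of all $n$-th products and commutes with $\partial$, it suffices to check generators. For $C^{\partial}_3$ this is exactly the list \eqref{eq: sec cohom desc d_s on C3}. The right-hand sides lie in $C^{\partial}_3$, except that the last line involves $\widetilde{h}\otimes|0\rangle$, $X(h)$, quadratic terms in $\overline{h}^{\perp}$, and $\phi^i D\phi_{[u_i,h]}$. Each of these is generated by $\mathcal{B}_3$: a product $\overline{u}^i\overline{[u_i,h]}$ with $i\in G\setminus H$ has $[u_i,h]\in\mathfrak{h}^{\perp}$ by invariance.

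For $C^{\partial}_1$, write $\beta_a:=\overline{a}\otimes|0\rangle+|0\rangle\otimes\overline{a}+\sum_i(-1)^{i+1}\phi^i\phi_{[u_i,a]}$. From the formula for $d_{\text{s}}(\phi_h)$ we get $d_{\text{s}}(\phi_a)=-\beta_a$. Then $d_{\text{s}}(\beta_a)=-d_{\text{s}}^2(\phi_a)=0$, so $C^{\partial}_1$ is closed.

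For $C^{\partial}_2$, write $\alpha_a:=\overline{a}\otimes|0\rangle-|0\rangle\otimes\overline{a}$. The $\kappa$-terms in $d_{\text{s}}(\overline{g}\otimes|0\rangle)$ and $d_{\text{s}}(|0\rangle\otimes\overline{h})$ have opposite signs and add, giving
\[
d_{\text{s}}(\alpha_a)=(-1)^{a+1}2\kappa\,\gamma_a,
\]
where $\gamma_a$ is (up to normalization) the second element of $\mathcal{B}_2$. This is precisely why that correction term was inserted in $\mathcal{B}_2$. Then $d_{\text{s}}(\gamma_a)=0$ again follows from $d_{\text{s}}^2=0$.

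\textbf{Acyclicity of $C^{\partial}_1$ and $C^{\partial}_2$.} Each of these is a free (super)commutative differential algebra. Its generators are $\{\partial^n x,\partial^n d_{\text{s}}x\}$, with $x$ running over a basis of $\phi_a$ (respectively $\alpha_a$), and $d_{\text{s}}$ pairs $x$ with $d_{\text{s}}x$ of opposite parity. The $\lambda$-brackets within each $\mathcal{B}_j$ vanish, so normally ordered products are supercommutative and associative there. Each $C^{\partial}_j$ is therefore a polynomial superalgebra $S(W\oplus\Pi W)$ equipped with the Koszul differential $w\mapsto\Pi w$. The standard contracting homotopy $h(\Pi w)=w$, extended as a derivation and normalized by the total polynomial degree (the Euler operator argument), shows that the cohomology is $\mathbb{C}\cdot|0\rangle$. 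Both the even and odd generators occur, but the super Koszul complex $S(W)\otimes\Lambda(W)$, suitably parity-shifted, is acyclic in either parity.

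\textbf{Main obstacle.} The delicate point is justifying that the vector space isomorphism of Lemma \ref{lem: sec cohom desc SUSY decomposition of SUSY complex into subcomplexes} is multiplicative enough to intertwine $d_{\text{s}}$ with $d_1\otimes1\otimes1\pm1\otimes d_2\otimes1\pm1\otimes1\otimes d_3$. I would arrange this by ordering the PBW monomials as $c_1c_2c_3$, with $c_j\in C^{\partial}_j$ normally ordered in that order. Because $d_{\text{s}}$ is a derivation of the $(-1)$-product and preserves each $C^{\partial}_j$, it acts by the Leibniz rule on such a product, and no quasi-associativity corrections arise. The one hypothesis this uses is that the map $c_1\otimes c_2\otimes c_3\mapsto c_1(c_2c_3)$ is the PBW bijection, which is what the proof of the lemma supplies. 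If the mutual $\lambda$-brackets between different $\mathcal{B}_j$ were nonzero, I would instead filter by ghost number or conformal weight and run a spectral sequence whose first page is this tensor product, concluding by the same Künneth argument.
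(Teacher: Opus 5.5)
Your proposal is correct and follows the paper's proof. You show that $d_{\text{s}}$ preserves each $C^{\partial}_{j}$, that $C^{\partial}_{1}$ and $C^{\partial}_{2}$ are acyclic as free supercommutative algebras on pairs $\partial^{m}x,\ \partial^{m}d_{\text{s}}x$, and then apply K{\"u}nneth; your Euler-operator homotopy is an explicit proof of the paper's step $H(C^{\partial}_{j}, d_{\text{s}}) = S(\mathbb{C}[\partial]\otimes H(\mathrm{span}_{\mathbb{C}}\mathcal{B}_{j}, d_{\text{s}})) = \mathbb{C}$, and is the same normalized derivation $\Theta_{12}$ the paper uses later. Your fallback spectral sequence is unnecessary: the cross-brackets between different $\mathcal{B}_{j}$ are in fact nonzero (e.g.\ $D\phi_{i}$ against the ghost part of the second element of $\mathcal{B}_{1}$), but because $d_{\text{s}}$ is an exact derivation of the normally ordered product, the map $c_{1}\otimes c_{2}\otimes c_{3}\mapsto c_{1}(c_{2}c_{3})$ intertwines the differentials anyway, so only its bijectivity (the PBW content of Lemma \ref{lem: sec cohom desc SUSY decomposition of SUSY complex into subcomplexes}) is needed.
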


\begin{proof}
Recall that ${C}^{\partial}_{1}$ is the differential algebra generated by ${\mathcal{B}}_{1}$, which consists of ${\phi}_{h}$ and ${d}_{\text{s}}({\phi}_{h})$ for $h \in {\mathfrak{h}}$. Hence ${d}_{\text{s}}({C}^{\partial}_{1}) \subset {C}^{\partial}_{1}$. Since ${C}^{\partial}_{1}$ is a differential supersymmetric algebra, by the K{\"u}nneth theorem, we have
\begin{align*}
{H}({C}^{\partial}_{1}, {d}_{\text{s}}) = S({\mathbb{C}}[\partial] \otimes {H}({\text{span}}_{\mathbb{C}}{\mathcal{B}}_{1}, {d}_{\text{s}})) = {\mathbb{C}}.
\end{align*}
Similarly, ${C}^{\partial}_{2}$ is the differential supersymmetric algebra generated by
\begin{align*}
{\overline{h}} \otimes {\ket{0}} - {\ket{0}} \otimes {\overline{h}},
{\quad}
{d}_{\text{s}}({\overline{h}} \otimes {\ket{0}} - {\ket{0}} \otimes {\overline{h}})
\end{align*}
for $h \in {\mathfrak{h}}$. Therefore, by the same argument as above, we have:
\begin{align*}
{H}({C}^{\partial}_{2}, {d}_{\text{s}}) = S({\mathbb{C}}[\partial] \otimes {H}({\text{span}}_{\mathbb{C}}{\mathcal{B}}_{2}, {d}_{\text{s}})) = {\mathbb{C}}.
\end{align*}
Finally, by \eqref{eq: sec cohom desc d_s on C3}, we also have ${d}_{\text{s}}({C}^{\partial}_{3}) \subset {C}^{\partial}_{3}$. Hence, the isomorphism of Lemma \ref{lem: sec cohom desc SUSY decomposition of SUSY complex into subcomplexes} is an isomorphism of complexes, and the desired result follows from the K{\"u}nneth theorem.
\end{proof}

\vskip 1mm

We now aim to establish an analogue of Proposition \ref{prop: sec cohom desc SUSY H(M) to H(C3)} for the relative supersymmetric semi-infinite cohomology. For this purpose, we adapt the contracting homotopy argument of \cite{LL07} and the chiral chain homotopy argument of \cite{LLS10}. To this end, we will construct a homotopy equivalence between
\begin{align*}
{C}^{{\frac{\infty}{2}} + {\bullet}}_{{\,\,} {\text{s}}}({\mathfrak{h}}, {M}) = {C}^{\partial}_{1} \otimes {C}^{\partial}_{2} \otimes {C}^{\partial}_{3}
\end{align*}
and ${C}^{\partial}_{3}$. Define the following linear maps on the PBW monomial bases:
\begin{align}
P : {C}^{\partial}_{1} \otimes {C}^{\partial}_{2} \otimes {C}^{\partial}_{3} \rightarrow {C}^{\partial}_{3},
{\quad}
P({\ket{0}} \otimes {\ket{0}} \otimes {v}_{3}) = {v}_{3},
{\quad}
P({v}_{1} \otimes {v}_{2} \otimes {v}_{3}) = 0,
\label{eq: sec cohom desc def of map P}
\end{align}
for ${v}_{1} \neq {\ket{0}}$ or ${v}_{2} \neq {\ket{0}}$, and
\begin{align}
I : {C}^{\partial}_{3} \rightarrow {C}^{\partial}_{1} \otimes {C}^{\partial}_{2} \otimes {C}^{\partial}_{3},
{\quad}
I({v}_{3}) = {\ket{0}} \otimes {\ket{0}} \otimes {v}_{3},
\label{eq: sec cohom desc def of map I}
\end{align}
where each ${v}_{i}$ is a PBW monomial in ${C}^{\partial}_{i}$ for $i = 1, 2, 3$. Then we have
\begin{align}
{\text{id}}_{{C}^{\partial}_{3}} - {P} \circ {I} = 0.
\label{eq: sec cohom desc homotopy eq1}
\end{align}

\vskip 1mm

For the other direction, we will construct a cochain homotopy between ${I} \circ {P}$ and ${\text{id}}_{{C}^{\partial}_{1} \otimes {C}^{\partial}_{2} \otimes {C}^{\partial}_{3}}$. To simplify the notation, we set
\begin{gather*}
{A}^{1}_{i} := {\phi}_{i},
{\quad}
{B}^{1}_{i} := {d}_{\text{s}}({\phi}_{i}),
\\
{A}^{2}_{i} :={\overline{u}_{i}} \otimes {\ket{0}} - {\ket{0}} \otimes {\overline{u}_{i}},
{\quad}
{B}^{2}_{i} := {d}_{\text{s}}({\overline{u}_{i}} \otimes {\ket{0}} - {\ket{0}} \otimes {\overline{u}_{i}}),
\end{gather*}
for $i \in H$. Here, we use the same notation for these elements and their canonical images in ${C}^{\partial}_{1} \otimes {C}^{\partial}_{2}$. Then we know that ${C}^{\partial}_{1} \otimes {C}^{\partial}_{2}$ is the supersymmetric algebra generated by ${\partial}^{m}{A}^{r}_{i}$ and ${\partial}^{m}{B}^{r}_{i}$ for $m \in {\mathbb{Z}}_{\geq 0}$, $r = 1, 2$, and $i \in H$. Let $({C}^{\partial}_{1} \otimes {C}^{\partial}_{2})[{\ell}]$ denote the subspace of ${C}^{\partial}_{1} \otimes {C}^{\partial}_{2}$ spanned by PBW monomials of length ${\ell}$. Then we have:
\begin{align*}
{C}^{\partial}_{1} \otimes {C}^{\partial}_{2} = {\bigoplus}_{{\ell} \in {\mathbb{Z}}_{\geq 0}} ({C}^{\partial}_{1} \otimes {C}^{\partial}_{2})[{\ell}].
\end{align*}
Now, define an odd derivation ${\widetilde{\Theta}}_{12} : {C}^{\partial}_{1} \otimes {C}^{\partial}_{2} \rightarrow {C}^{\partial}_{1} \otimes {C}^{\partial}_{2}$ by
\begin{align*}
{\partial}^{m}{A}^{r}_{i} \mapsto 0,
{\quad}
{\partial}^{m}{B}^{r}_{i} \mapsto {\partial}^{m}{A}^{r}_{i},
\end{align*}
for $m \in {\mathbb{Z}}_{\geq 0}$, $r = 1, 2$, $i \in H$. Using this derivation, define the linear map
\begin{align*}
{\Theta}_{12} := {0}{|}_{{\mathbb{C}}({\ket{0}} \otimes {\ket{0}})} \oplus {\big(}{\bigoplus}_{{\ell} \geq 1} {\frac{1}{\ell}} {\widetilde{\Theta}}_{12}[{\ell}]{\big)},
\end{align*}
where ${\widetilde{\Theta}}_{12}[{\ell}]$ denotes the restriction of ${\widetilde{\Theta}}_{12}$ to $({C}^{\partial}_{1} \otimes {C}^{\partial}_{2})[{\ell}]$. Then we can construct a homotopy operator ${\Theta}$ on ${C}^{\partial}_{1} \otimes {C}^{\partial}_{2} \otimes {C}^{\partial}_{3}$ by
\begin{align*}
{v}_{1} \otimes {v}_{2} \otimes {v}_{3} \mapsto {\Theta}_{12}({v}_{1} \otimes {v}_{2}) \otimes {v}_{3},
\end{align*}
that is, ${\Theta} = {\Theta}_{12} \otimes {\text{id}}_{{C}^{\partial}_{3}}$, where ${v}_{i} \in {C}^{\partial}_{i}$ for $i = 1, 2, 3$.

\vskip 1mm

\begin{proposition}\label{prop: sec cohom desc homotopy equiv for non-rel}\rm
The two complexes ${C}^{{\frac{\infty}{2}} + {\bullet}}_{{\,\,} {\text{s}}}({\mathfrak{h}}, {M}) = {C}^{\partial}_{1} \otimes {C}^{\partial}_{2} \otimes {C}^{\partial}_{3}$ and ${C}^{\partial}_{3}$ are homotopy equivalent.
\end{proposition}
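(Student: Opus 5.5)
We prove that $P$ and $I$ are mutually inverse homotopy equivalences. The relation $P \circ I = {\text{id}}_{{C}^{\partial}_{3}}$ is \eqref{eq: sec cohom desc homotopy eq1}. So it remains to check two things: that $P$ and $I$ are cochain maps, and that ${\Theta}$ is a homotopy with
\begin{align*}
{\text{id}} - I \circ P = {d}_{\text{s}}{\Theta} + {\Theta}{d}_{\text{s}}
\end{align*}
on ${C}^{\partial}_{1} \otimes {C}^{\partial}_{2} \otimes {C}^{\partial}_{3}$.

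The first step is to describe ${d}_{\text{s}}$ under the identification of Lemma \ref{lem: sec cohom desc SUSY decomposition of SUSY complex into subcomplexes}. By Proposition \ref{prop: sec cohom desc SUSY H(M) to H(C3)}, each ${C}^{\partial}_{r}$ is ${d}_{\text{s}}$-stable, and the identification is an isomorphism of complexes. Since ${d}_{\text{s}} = (DQ)_{(0)}$ is an odd derivation of the normally ordered product and commutes with ${\partial}$, it acts as $d_{12} \otimes {\text{id}} + {\text{id}} \otimes d_{3}$ with the Koszul sign. On ${C}^{\partial}_{1} \otimes {C}^{\partial}_{2}$ it is the odd derivation determined by
\begin{align*}
{\partial}^{m}{A}^{r}_{i} \mapsto {\partial}^{m}{B}^{r}_{i}, \qquad {\partial}^{m}{B}^{r}_{i} \mapsto 0,
\end{align*}
where the second rule uses ${d}_{\text{s}}^{2} = 0$. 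Two points need care here. First, ${C}^{\partial}_{1}$ and ${C}^{\partial}_{2}$ must be genuinely supercommutative and free on these generators: the $\lambda$-brackets inside each ${\mathcal{B}}_{r}$ vanish, as stated. Second, the PBW identification must be compatible with the product, so that ${d}_{\text{s}}$ really acts factorwise. To make this rigorous I would choose the PBW ordering with the ${C}^{\partial}_{1}$ factor first, then ${C}^{\partial}_{2}$, then ${C}^{\partial}_{3}$. I would then use that the brackets between the ${\mathcal{B}}_{1}$, ${\mathcal{B}}_{2}$ generators and the ${\mathcal{B}}_{3}$ generators vanish, or at least land in a way compatible with the tensor decomposition. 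I expect this compatibility to be the main obstacle: checking these cross brackets is the only non-formal part of the argument. If they fail to vanish, I would fall back on a filtration by PBW length in the ${C}^{\partial}_{1} \otimes {C}^{\partial}_{2}$ factor, in which ${d}_{\text{s}}$ is factorwise on the associated graded. I would then perturb ${\Theta}$ using the homological perturbation lemma; the series terminates because of conformal weight bounds.

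Granting the factorwise form, $d_{12}$ and ${\widetilde{\Theta}}_{12}$ are odd derivations. Their anticommutator ${\widetilde{\Theta}}_{12}d_{12} + d_{12}{\widetilde{\Theta}}_{12}$ is therefore an even derivation. It fixes every generator ${\partial}^{m}{A}^{r}_{i}$ and ${\partial}^{m}{B}^{r}_{i}$, so it is the Euler operator: it acts on $({C}^{\partial}_{1} \otimes {C}^{\partial}_{2})[\ell]$ as multiplication by $\ell$. Both $d_{12}$ and ${\widetilde{\Theta}}_{12}$ preserve PBW length. Hence ${\Theta}_{12}$, which rescales by $1/\ell$ and vanishes on $\ell = 0$, satisfies
\begin{align*}
d_{12}{\Theta}_{12} + {\Theta}_{12}d_{12} = {\text{id}} - \pi_{0},
\end{align*}
where $\pi_{0}$ is the projection onto ${\mathbb{C}}({\ket{0}} \otimes {\ket{0}})$.

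Tensoring with ${C}^{\partial}_{3}$, set ${\Theta} = {\Theta}_{12} \otimes {\text{id}}$. The cross terms cancel: ${\Theta}_{12}$ is odd, so it anticommutes with ${\text{id}} \otimes d_{3}$ once signs are included. This gives ${d}_{\text{s}}{\Theta} + {\Theta}{d}_{\text{s}} = ({\text{id}} - \pi_{0}) \otimes {\text{id}} = {\text{id}} - I \circ P$.

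Finally, $I$ and $P$ are cochain maps. For $I$, $d_{12}$ kills ${\ket{0}} \otimes {\ket{0}}$. For $P$, $d_{12}$ preserves PBW length, so it maps length $\geq 1$ monomials to length $\geq 1$ monomials, which $P$ kills. Together with $P \circ I = {\text{id}}$, this gives the homotopy equivalence.
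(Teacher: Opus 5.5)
Your proposal is correct and follows the paper's proof: it uses the same maps $P$, $I$ and the same homotopy $\Theta = \Theta_{12}\otimes\mathrm{id}$. The only difference is in checking $d_{12}\Theta_{12}+\Theta_{12}d_{12}=\mathrm{id}$ on monomials of positive length. The paper does this by an explicit computation in which the $v_{0}$-terms cancel. You instead note that the anticommutator of the two odd derivations is an even derivation fixing all generators, hence the PBW-length operator, which is a cleaner route to the same identity.

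The obstacle you flag is not a real one, and your plan to use vanishing cross brackets would not work as stated. The brackets do not vanish: for instance $[\phi_{a}\,{}_{\lambda}\,X(h)]=\pm\phi_{[u_{a},h]}$, and the $D\phi^{a}$-term of the second generator in $\mathcal{B}_{2}$ has a nonzero constant bracket with $\phi_{a}$, so even $C^{\partial}_{1}$ and $C^{\partial}_{2}$ fail to commute inside the complex. This is irrelevant, however. Identify $v_{1}\otimes v_{2}\otimes v_{3}$ with the normally ordered product $v_{1}(v_{2}v_{3})$, which is a linear isomorphism by the PBW/triangularity argument. Then the facts that $d_{\text{s}}=(DQ)_{(0)}$ is an odd derivation of the $(-1)$-st product and preserves each $C^{\partial}_{r}$ already give the factorwise formula for $d_{\text{s}}$ exactly. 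No filtration or perturbation argument is needed.
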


\begin{proof}
It remains to show that
\begin{align}
{\text{id}}_{{C}^{\partial}_{1} \otimes {C}^{\partial}_{2} \otimes {C}^{\partial}_{3}} - {I} \circ {P} = {d}_{\text{s}} \circ {\Theta} + {\Theta} \circ {d}_{\text{s}},
\label{eq: sec cohom desc homotopy eq2}
\end{align}
since \eqref{eq: sec cohom desc homotopy eq1} has already been established. Suppose that
\begin{align*}
{A}_{j} \in {\{} {\partial}^{m}{A}^{1}_{i}, {\,} {\partial}^{m}{A}^{2}_{i} {\,} {|} {\,} m \in {\mathbb{Z}}_{\geq 0}, {\,} {i} \in H {\}},
\end{align*}
and set ${B}_{j} = {d}_{\text{s}}({A}_{j})$ for $j = 1, \cdots, r, {\widetilde{1}}, \cdots, {\widetilde{s}}$. Then, for the monomial
\begin{align*}
{v}_{12} = {A}_{1}{A}_{2} \cdots {A}_{r}{B}_{\widetilde{1}}{B}_{\widetilde{2}} \cdots {B}_{\widetilde{s}} \in {C}^{\partial}_{1} \otimes {C}^{\partial}_{2},
\end{align*}
a direct computation gives the following equalities:
\begin{align*}
({d}_{12} \circ {\Theta}_{12})({v}_{12}) 
&= {\frac{s}{r + s}}{v}_{12} + {\frac{1}{r + s}}(-1)^{{\overline{1}} + \cdots + {\overline{r}}}{v}_{0},
\\
({\Theta}_{12} \circ {d}_{12})({v}_{12}) 
&= {\frac{r}{r + s}}{v}_{12} - {\frac{1}{r + s}}(-1)^{{\overline{1}} + \cdots + {\overline{r}}}{v}_{0},
\end{align*}
where ${v}_{0}$ is given by
\begin{align*}
{v}_{0} = {\big(} {\sum}^{r}_{j = 1} &(-1)^{{\overline{1}} + \cdots + {\overline{j-1}}} {A}_{1} \cdots {A}_{j-1}{B}_{j}{A}_{j+1} \cdots {A}_{r} {\big)}
\\
& \times {\big(} {\sum}^{s}_{k = 1} (-1)^{{\widetilde{1}} + \cdots + {\widetilde{k-1}} + (k-1)}{B}_{\widetilde{1}} \cdots {B}_{\widetilde{k-1}}{A}_{\widetilde{k}}{B}_{\widetilde{k+1}} \cdots {B}_{\widetilde{s}} {\big)}.
\end{align*}
Here, ${\overline{i}}$ and ${\widetilde{i}}$ denote the parities of ${A}_{i}$ and ${A}_{\widetilde{i}}$, respectively. Consequently, we have
\begin{align*}
({d}_{12} \circ {\Theta}_{12} + {\Theta}_{12} \circ {d}_{12})(v) = v,
\end{align*}
for ${v} \in {\bigoplus}_{{\ell} \geq 1} ({C}^{\partial}_{1} \otimes {C}^{\partial}_{2})[{\ell}]$. On the other hand, we have:
\begin{align*}
{d}_{12} \circ {\Theta}_{12} + {\Theta}_{12} \circ {d}_{12} = 0
\end{align*}
on ${\mathbb{C}}({\ket{0}} \otimes {\ket{0}})$. Therefore, using the identities above, together with the fact that
\begin{align*}
{d}_{\text{s}}({v}_{1} \otimes {v}_{2} \otimes {v}_{3}) 
= {d}_{12}({v}_{1} \otimes {v}_{2}) \otimes {v}_{3} + (-1)^{{v}_{1} + {v}_{2}} ({v}_{1} \otimes {v}_{2}) \otimes {d}_{\text{s}}({v}_{3}),
\end{align*}
where ${d}_{12}$ on $({C}^{\partial}_{1} \otimes {C}^{\partial}_{2})$ is given by
\begin{align*}
{d}_{12}({v}_{1} \otimes {v}_{2}) = {d}_{\text{s}}({v}_{1}) \otimes {v}_{2} + (-1)^{{v}_{1}}{v}_{1} \otimes {d}_{\text{s}}({v}_{2}),
\end{align*}
we obtain the desired result, that is, \eqref{eq: sec cohom desc homotopy eq2} holds.
\end{proof}

\vskip 1mm

Now, we want to show that the homotopy equivalence in Proposition \ref{prop: sec cohom desc homotopy equiv for non-rel} induces a homotopy equivalence between the relative supersymmetric semi-infinite complex ${C}^{{\frac{\infty}{2}} + {\bullet}}_{{\,\,} {\text{s, rel}}}({\mathfrak{h}}, {M})$ and the subcomplex
\begin{align*}
{C}^{\partial}_{\text{3, rel}} 
:=& {\,} {C}^{{\frac{\infty}{2}} + {\bullet}}_{{\,\,} {\text{s, rel}}}({\mathfrak{h}}, {M}) \cap {C}^{\partial}_{3}
\\
=& {\,} {\{} v \in {C}^{\partial}_{3} {\,} {|} {\,} {J}_{{u}_{i} (0)}{v} = {D}{\phi}_{i (0)}{v} = 0 {\text{ for all }} {i} \in H {\}}.
\end{align*}
Note that the differential ${d}_{\text{s}}$ restricts to a differential on ${C}^{\partial}_{\text{3, rel}}$.

\vskip 1mm

\begin{lemma}\label{lem: sec cohom desc C_i closed to rel action and Theta commute with rel action}\rm
We have the following properties:
\begin{align}
&({J}_{u})_{(0)}({C}^{\partial}_{i}) \subset {C}^{\partial}_{i} {\text{ and }} ({D}{\phi}_{u})_{(0)}({C}^{\partial}_{i}) \subset {C}^{\partial}_{i} {\text{ for }} i = 1, 2, 3 {\text{ and }} u \in {\mathfrak{h}},
\label{eq: sec cohom desc C_i closed to rel action}
\\
&({J}_{u})_{(0)} \circ {\Theta} = (-1)^{u}{\Theta} \circ ({J}_{u})_{(0)} {\text{ and }} ({D}{\phi}_{u})_{(0)} \circ {\Theta} = (-1)^{u + 1}{\Theta} \circ ({D}{\phi}_{u})_{(0)} {\text{ for }} u \in {\mathfrak{h}}.
\label{eq: sec cohom desc Theta commute with rel action}
\end{align}
\end{lemma}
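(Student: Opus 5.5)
The two operators $({J}_{u})_{(0)}$ and $({D}{\phi}_{u})_{(0)}$ act as derivations of all $n$-th products on the complex, because they are zero modes of elements of ${V}^{0}_{{\,\,} {\text{s}}}({\mathfrak{h}}) \otimes {\textstyle{\bigwedge}}^{{\frac{\infty}{2}} + {\bullet}}_{{\,\,} {\text{s}}}({\mathfrak{h}})$. Both also supercommute with $\partial$. Hence for \eqref{eq: sec cohom desc C_i closed to rel action} it suffices to check that each operator sends the generating sets ${\mathcal{B}}_{1}$, ${\mathcal{B}}_{2}$, ${\mathcal{B}}_{3}$ into the corresponding subalgebra. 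We carry this out using only $\lambda$-brackets at $\lambda = 0$ with the generators.

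\textbf{Closure under $({D}{\phi}_{u})_{(0)}$.} The only nonzero bracket is $[{D}{\phi}_{u} {}_{\lambda} {\phi}^{j}]$, which is a constant. Consequently $({D}{\phi}_{u})_{(0)}$ kills ${\phi}_{a}$, all of ${\mathcal{B}}_{2}$ except the $D{\phi}^{a}$-type element, ${\widetilde{g}}$, ${\overline{h}}^{\perp}$, $X(h)$ (which contains only $D{\phi}^{i}$ and ${\phi}_{k}$ from the ghosts) and $D{\phi}_{i}$. It sends ${\phi}^{i}$ to a scalar. On the second element of ${\mathcal{B}}_{1}$, the term ${\phi}^{i}{\phi}_{[u_i,a]}$ gives $\pm{\phi}_{[u,a]} \in {C}^{\partial}_{1}$. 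On the second element of ${\mathcal{B}}_{2}$ it gives $\pm\frac{1}{2\kappa}({\overline{[u,a]}}\otimes\ket{0} - \ket{0}\otimes{\overline{[u,a]}}) \in {C}^{\partial}_{2}$.

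\textbf{Closure under $({J}_{u})_{(0)}$.} The zero mode of ${J}_{u}$ is the total (diagonal plus ghost) $\mathfrak{h}$-action. It acts on ${\phi}_{a}$, ${\phi}^{a}$, $D{\phi}_{a}$, $D{\phi}^{a}$, ${\overline{a}}$ and ${\widetilde{g}}$ by the adjoint action, with the appropriate signs; for ${\widetilde{g}}$ this uses $\mathfrak{h}$-equivariance of the Casimir-type correction $\sum_i (-1)^i {\overline{u}}^{i}{\overline{[u_i,g]}}$. Every element of ${\mathcal{B}}_{1}, {\mathcal{B}}_{2}, {\mathcal{B}}_{3}$ is built $\mathfrak{h}$-equivariantly from these data: it is linear in the label $a$, $h$ or $h^{\perp}$, with contractions over dual bases, which are invariant. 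Therefore $({J}_{u})_{(0)}$ maps the generator labelled by $a$ to the same generator labelled by $[u,a]$, up to sign. For ${\overline{h}}^{\perp}$ we need $[\mathfrak{h},\mathfrak{h}^{\perp}] \subset \mathfrak{h}^{\perp}$, which follows from invariance of the form.

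The main obstacle is the equivariance of $X(h)$. We need $({J}_{u})_{(0)} X(h) = \pm X([u,h])$; combining $({J}_{u})_{(0)} = ({\widetilde{u}}\otimes\ket{0} + X(u) + \dots)_{(0)}$-type identities with \eqref{eq: sec cohom desc d_s on C3} is messy. We would instead verify it term by term: $\ket{0}\otimes h$ transforms by the adjoint action, while $X_1(h)$ and $X_2(h)$ are $\mathfrak{h}$-equivariant expressions in ${\overline{u}}^{i}$ and in $(D{\phi}^{i}, {\phi}_{k})$ respectively. The key input is that the dual-basis sums over $H$ are invariant tensors.

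\textbf{Commutation with $\Theta$.} By \eqref{eq: sec cohom desc C_i closed to rel action} and the derivation property, both operators preserve the factors of ${C}^{\partial}_{1}\otimes{C}^{\partial}_{2}\otimes{C}^{\partial}_{3}$. Since ${\Theta} = {\Theta}_{12}\otimes{\text{id}}$, it suffices to work on ${C}^{\partial}_{1}\otimes{C}^{\partial}_{2}$ with the operator $\delta$ given by either zero mode. Both $\delta$ and ${\widetilde{\Theta}}_{12}$ are derivations, so their supercommutator is a derivation, and it suffices to check it on the generators $\partial^m A^{r}_{i}$ and $\partial^m B^{r}_{i}$.

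For $\delta = ({J}_{u})_{(0)}$, the maps $A^{r}_{i} \mapsto \pm A^{r}_{[u,u_i]}$ and $B^{r}_{i} \mapsto \pm B^{r}_{[u,u_i]}$ hold, the latter because $({J}_{u})_{(0)}$ commutes with ${d}_{\text{s}}$ up to sign. This gives the supercommutation relation with the sign $(-1)^{u}$.

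For $\delta = ({D}{\phi}_{u})_{(0)}$, the computation above shows $\delta A^{r}_{i} = 0$, while $\delta B^{r}_{i}$ is a linear combination of $A$'s with no $B$'s. Both sides of the claimed relation then vanish on generators, giving the relation with sign $(-1)^{u+1}$.

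Finally, $\delta$ preserves the PBW length filtration: it maps generators to linear combinations of generators, or to scalars only on ${\phi}^{i} \in {C}^{\partial}_{3}$, which is irrelevant here. Hence $\delta$ commutes with the $1/\ell$ normalization in ${\Theta}_{12}$, and the relation for ${\widetilde{\Theta}}_{12}$ passes to ${\Theta}_{12}$ and hence to $\Theta$.
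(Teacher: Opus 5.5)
Your proposal is correct and follows essentially the same route as the paper. Both arguments compute the zero-mode actions of $J_u$ and $D\phi_u$ on the generators of $C^{\partial}_1, C^{\partial}_2, C^{\partial}_3$; the paper lists these actions explicitly, while you get the $J_u$-action from $\mathfrak{h}$-equivariance. Both then check the supercommutation with $\widetilde{\Theta}_{12}$ on the generators $\partial^m A^r_i$ and $\partial^m B^r_i$ of $C^{\partial}_1\otimes C^{\partial}_2$. Your remark that both zero modes preserve PBW length, so the relation passes from $\widetilde{\Theta}_{12}$ to the $1/\ell$-normalized $\Theta_{12}$ and then to $\Theta=\Theta_{12}\otimes\mathrm{id}$, spells out a step the paper leaves implicit.
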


\begin{proof}
First, \eqref{eq: sec cohom desc C_i closed to rel action} follows directly from the following computations. Indeed, for ${u} \in {\mathfrak{h}}$, the operator $({J}_{u})_{(0)}$ acts on the generators as follows:
\begin{align}
&{\qquad} {\qquad} {\qquad} {\quad}
{\phi}_{h} \mapsto {\phi}_{[u, h]},
{\quad}
{d}_{\text{s}}({\phi}_{h}) \mapsto (-1)^{u}{d}_{\text{s}}({\phi}_{[u, h]}),
\notag
\\
&{\qquad} {\qquad}
{\overline{h}} \otimes {\ket{0}} - {\ket{0}} \otimes {\overline{h}} \mapsto (-1)^{u}({\overline{[{u}, {h}]}} \otimes {\ket{0}} - {\ket{0}} \otimes {\overline{[{u}, {h}]}}),
\notag
\\
&{\qquad} {\qquad} 
{d}_{\text{s}}({\overline{h}} \otimes {\ket{0}} - {\ket{0}} \otimes {\overline{h}}) \mapsto {d}_{\text{s}}({\overline{[{u}, {h}]}} \otimes {\ket{0}} - {\ket{0}} \otimes {\overline{[{u}, {h}]}}),
\label{eq: sec cohom desc J-action}
\\
&{\widetilde{g}} \otimes {\ket{0}} \mapsto {\widetilde{[u, g]}} \otimes {\ket{0}},
{\quad}
{X}(h) \mapsto {X}([u, h]),
{\quad}
{\overline{h}}^{\perp} \otimes {\ket{0}} \mapsto (-1)^{u}{\overline{[{u}, {h}^{\perp}]}} \otimes {\ket{0}},
{\quad}
\notag
\\
&{\qquad} {\qquad} {\qquad} {\qquad} {\quad}
{\phi}^{h} \mapsto {\phi}^{[u, h]},
{\quad}
{D}{\phi}_{h} \mapsto (-1)^{u}{D}{\phi}_{[u, h]},
\notag
\end{align}
for ${h} \in {\mathfrak{h}}$, ${h}^{\perp} \in {\mathfrak{h}}^{\perp}$, ${g} \in {\mathfrak{g}}$. The map $(D\phi_u)_{(0)}$ is given by 
\begin{align}
& {\qquad} {\qquad}
{\phi}_{h} \mapsto 0,
{\quad}
{d}_{\text{s}}({\phi}_{h}) \mapsto (-1)^{u}{\phi}_{[u, h]},
{\quad}
{\overline{h}} \otimes {\ket{0}} - {\ket{0}} \otimes {\overline{h}} \mapsto 0,
\notag
\\
&{d}_{\text{s}}({\overline{h}} \otimes {\ket{0}} - {\ket{0}} \otimes {\overline{h}}) \mapsto {\overline{[{u}, {h}]}} \otimes {\ket{0}} - {\ket{0}} \otimes {\overline{[{u}, {h}]}},
{\quad}
{\widetilde{g}} \otimes {\ket{0}} \mapsto 0,
{\quad}
{X}(h) \mapsto 0,
\label{eq: sec cohom desc Dphi-action}
\\
& {\qquad} {\qquad} {\qquad} {\qquad}
{\overline{h}}^{\perp} \otimes {\ket{0}} \mapsto 0,
{\quad}
{\phi}^{h} \mapsto ({u}, {h}),
{\quad}
{D}{\phi}_{h} \mapsto 0,
\notag
\end{align}
for ${h} \in {\mathfrak{h}}$, ${h}^{\perp} \in {\mathfrak{h}}^{\perp}$, ${g} \in {\mathfrak{g}}$.

\vskip 1mm

To prove the first identity in \eqref{eq: sec cohom desc Theta commute with rel action}, we can use the computations in \eqref{eq: sec cohom desc J-action} to verify that
\begin{align*}
({J}_{u})_{(0)} \circ {\Theta}_{12} = (-1)^{u} {\Theta}_{12} \circ ({J}_{u})_{(0)},
\end{align*}
on ${C}^{\partial}_{1} \otimes {C}^{\partial}_{2}$. Therefore, the same identity also holds on ${C}^{\partial}_{1} \otimes {C}^{\partial}_{2} \otimes {C}^{\partial}_{3}$. By the same argument, the operators $({D}{\phi}_{u})_{(0)}$ and ${\Theta}$ supercommute for every $u \in {\mathfrak{h}}$.
\end{proof}

\vskip 1mm

\begin{proposition}\label{prop: sec cohom desc rel SUSY H(M) to rel H(C3)}\rm
The relative supersymmetric semi-infinite complex ${C}^{{\frac{\infty}{2}} + {\bullet}}_{{\,\,} {\text{s, rel}}}({\mathfrak{h}}, {M})$ is homotopy equivalent to ${C}^{\partial}_{\text{3, rel}}$. Hence, we have:
\begin{align*}
{H}^{{\frac{\infty}{2}} + {\bullet}}_{{\,\,} {\text{s, rel}}}({\mathfrak{h}}, {M}) \cong {H}^{\bullet}({C}^{\partial}_{\text{3, rel}}, {d}_{\text{s}}).
\end{align*}
\end{proposition}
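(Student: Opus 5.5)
We restrict the homotopy equivalence of Proposition \ref{prop: sec cohom desc homotopy equiv for non-rel} to the relative subcomplexes. It suffices to check that the three maps $P$, $I$ and $\Theta$ preserve the relative subspaces, i.e. that they map elements killed by all $(J_{u_i})_{(0)}$ and $(D\phi_i)_{(0)}$ to elements with the same property. Once this is done, restricting \eqref{eq: sec cohom desc homotopy eq1} and \eqref{eq: sec cohom desc homotopy eq2} gives the homotopy equivalence between $C^{\frac{\infty}{2}+\bullet}_{\text{s, rel}}(\mathfrak{h},M)$ and $C^\partial_{\text{3, rel}}$. Here we use that $d_{\text{s}}$ preserves both relative complexes, which was shown before Definition \ref{def: rel SUSY semi-inf coh}. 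The isomorphism in cohomology then follows.

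For $\Theta$, the statement is immediate from \eqref{eq: sec cohom desc Theta commute with rel action}. If $(J_u)_{(0)}v=0$, then $(J_u)_{(0)}\Theta v=\pm\Theta (J_u)_{(0)}v=0$, and the same argument works for $(D\phi_u)_{(0)}$. For $I$ and $P$, we use \eqref{eq: sec cohom desc C_i closed to rel action}. Since $(J_u)_{(0)}$ and $(D\phi_u)_{(0)}$ are zero-modes, they act as (super)derivations of the normally ordered products. Because each factor $C^\partial_i$ is stable under them, on $C^\partial_1\otimes C^\partial_2\otimes C^\partial_3$ they act by the super Leibniz rule
\[
x(v_1\otimes v_2\otimes v_3)=xv_1\otimes v_2\otimes v_3\pm v_1\otimes xv_2\otimes v_3\pm v_1\otimes v_2\otimes xv_3 .
\]
This uses the vertex subalgebra structure and the identification in Lemma \ref{lem: sec cohom desc SUSY decomposition of SUSY complex into subcomplexes}; the subalgebras $C^\partial_1, C^\partial_2$ have trivial $\lambda$-brackets with each other up to terms that do not affect zero-mode actions. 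For $I$, the vacuum is annihilated by the zero-modes, so $x(\ket{0}\otimes\ket{0}\otimes v_3)=\ket{0}\otimes\ket{0}\otimes xv_3$, and $I$ intertwines $x$.

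For $P$, one more observation is needed. From \eqref{eq: sec cohom desc J-action} and \eqref{eq: sec cohom desc Dphi-action}, the operators $x$ map the span of the generators $\partial^m A^r_i, \partial^m B^r_i$ into itself. Hence $x$ preserves the length filtration; more precisely, it preserves each $(C^\partial_1\otimes C^\partial_2)[\ell]$, and in particular sends monomials of length $\geq 1$ to sums of monomials of length $\geq 1$. Therefore the projection onto the length-zero part commutes with $x$, and so $P\circ x=x\circ P$.

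The main obstacle is this last step: justifying that the zero-modes act factorwise and preserve PBW length. There is one exception, namely $(D\phi_u)_{(0)}$ applied to $\phi^h$, which produces the constant $(u,h)$ and lowers length. This constant occurs only in the $C^\partial_3$ factor, though, so the length in $C^\partial_1\otimes C^\partial_2$ is still preserved. We would verify this carefully using \eqref{eq: sec cohom desc Dphi-action}, and we would check that the sign conventions in \eqref{eq: sec cohom desc Theta commute with rel action} are compatible with the Leibniz rule.
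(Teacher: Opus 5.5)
Your proposal is correct and follows the paper's proof. Both restrict $I$, $P$ and $\Theta$ to the relative subcomplexes, handling $\Theta$ via Lemma \ref{lem: sec cohom desc C_i closed to rel action and Theta commute with rel action} and $P$ via the explicit actions \eqref{eq: sec cohom desc J-action} and \eqref{eq: sec cohom desc Dphi-action}, which preserve PBW length in $C^\partial_1\otimes C^\partial_2$, and then restrict the identities of Proposition \ref{prop: sec cohom desc homotopy equiv for non-rel}. Your aside that $C^\partial_1$ and $C^\partial_2$ have trivial mutual $\lambda$-brackets is unnecessary and not accurate as stated (for instance, $\phi_a$ against the second generator in $\mathcal{B}_2$ gives a nonzero constant), because the factorwise Leibniz rule needs only that zero-modes are derivations of the normally ordered product together with \eqref{eq: sec cohom desc C_i closed to rel action}.
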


\begin{proof}
Note that the cochain map ${I}$ in \eqref{eq: sec cohom desc def of map I} restricts to a cochain map between the corresponding relative complexes. Moreover, using \eqref{eq: sec cohom desc J-action} and \eqref{eq: sec cohom desc Dphi-action}, we can show that the cochain map ${P}$ defined in \eqref{eq: sec cohom desc def of map P} also restricts to the relative complexes. Hence, the desired result follows from Lemma \ref{lem: sec cohom desc C_i closed to rel action and Theta commute with rel action} and Proposition \ref{prop: sec cohom desc homotopy equiv for non-rel}.
\end{proof}

\vskip 1mm

In the remainder of this section, we adopt the notation for semi-infinite cohomology used in Section 3.1 of \cite{CLNS25}. We now look for a (non-supersymmetric) semi-infinite complex that can be identified with the complex ${C}^{\partial}_{3}$. Consider the following ${\widehat{\mathfrak{h}}}$-module of level $-{\overline{\kappa}}_{\mathfrak{h}}$:
\begin{align*}
{M}_{0} 
:= {V}^{{\kappa} - {\frac{1}{2}}{\overline{\kappa}}_{\mathfrak{g}}}({\mathfrak{g}}) 
\otimes 
{\mathcal{F}}({\overline{\mathfrak{h}}}^{\perp}) 
\otimes 
{V}^{- {\kappa} - {\frac{1}{2}}{\overline{\kappa}}_{\mathfrak{h}}}({\mathfrak{h}}).
\end{align*}
Indeed, if we set ${\overline{\kappa}}_{{\mathfrak{h}}^{\perp}} := {\overline{\kappa}}_{\mathfrak{g}} - {\overline{\kappa}}_{\mathfrak{h}}$, then there exists a vertex algebra homomorphism ${\nu}_{\mathfrak{h}}$ from the affine vertex algebra ${V}^{{\frac{1}{2}}{\overline{\kappa}}_{{\mathfrak{h}}^{\perp}}}({\mathfrak{h}})$ to ${\mathcal{F}}({\overline{\mathfrak{h}}}^{\perp})$ given by
\begin{align*}
{\nu}_{\mathfrak{h}} : {h} \mapsto {\frac{1}{2{\kappa}}}{\,}{\sum}_{i \in {G} \setminus {H}} (-1)^{i}{\overline{u}}^{i}{\overline{[{u}_{i}, {h}]}},
\end{align*}
for $h \in {\mathfrak{h}}$, where ${\{} {u}_i {\}}_{i \in {G} \setminus {H}}$ and ${\{} {u}^{i} {\}}_{i \in {G} \setminus {H}}$ are parity-homogeneous dual bases of ${\mathfrak{h}}^{\perp}$ with respect to $({\,}{\cdot}{\,}, {\,}{\cdot}{\,})$. More precisely, we have
\begin{align*}
[{\nu}_{\mathfrak{h}}(h) {\,} {}_{\lambda} {\,} {\overline{h}}^{\perp}] 
= (-1)^{h}{\overline{[{h}, {h}^{\perp}]}},
{\quad}
[{\nu}_{\mathfrak{h}}({h}_{1}) {\,} {}_{\lambda} {\,} {\nu}_{\mathfrak{h}}({h}_{2})] 
= {\nu}_{\mathfrak{h}}([{h}_{1}, {h}_{2}]) + {\frac{1}{2}}{\overline{\kappa}}_{{\mathfrak{h}}^{\perp}}({h}_{1}, {h}_{2}){\lambda},
\end{align*}
for ${h}, {h}_{1}, {h}_{2} \in {\mathfrak{h}}$ and ${h}^{\perp} \in {\mathfrak{h}}^{\perp}$.

\vskip 1mm

Denote by ${C}^{{\frac{\infty}{2}} + {\bullet}}({\mathfrak{h}}, {M}_{0})$ the complex of the semi-infinite cohomology of ${\widehat{\mathfrak{h}}}$ with coefficients in ${M}_{0}$:
\begin{align*}
{C}^{{\frac{\infty}{2}} + {\bullet}}({\mathfrak{h}}, {M}_{0}) 
= {M}_{0} \otimes {\textstyle{\bigwedge}}^{{\frac{\infty}{2}} + {\bullet}}({\mathfrak{h}}).
\end{align*}
Then its differential is given by
\begin{align*}
{d} = ({Q}_{0})^{{C}^{{\frac{\infty}{2}} + {\bullet}}({\mathfrak{h}}, {M}_{0})}_{(0)},
\end{align*}
as a ${V}^{-{\overline{\kappa}}_{\mathfrak{h}}}({\mathfrak{h}}) \otimes {\textstyle{\bigwedge}}^{{\frac{\infty}{2}} + {\bullet}}({\mathfrak{h}})$-module action, where
\begin{align*}
{Q}_{0} = {\sum}_{i \in H} (-1)^{i}{u}_{i} \otimes {\phi}^{i} - {\frac{1}{2}}{\,}{\sum}_{j, k \in H} (-1)^{jk + j}{\ket{0}} \otimes {D}{\phi}_{[{u}_{j}, {u}_{k}]}({\phi}^{j}{\phi}^{k}).
\end{align*}
In particular, the differential ${d}$ acts on the generators as follows:
\begin{align}
{d}({g} \otimes {\ket{0}} \otimes {\ket{0}}) 
=& {\,} {\sum}_{i \in H} (-1)^{ig + i + g}([{u}_{i}, {g}] \otimes {\ket{0}} \otimes {\ket{0}}){\phi}^{i} 
\notag
\\
&+ {\sum}_{i \in H} (-1)^{g}({\kappa} - {\frac{1}{2}}{\overline{\kappa}}_{\mathfrak{g}})({u}_{i}, {g}){\partial}{\phi}^{i},
\notag
\\
{d}({\ket{0}} \otimes {\overline{h}}^{\perp} \otimes {\ket{0}}) 
=& {\,} {\sum}_{i \in H} (-1)^{(i + 1)({h}^{\perp} + 1)} ({\ket{0}} \otimes {\overline{[{u}_{i}, {h}^{\perp}]}} \otimes {\ket{0}}){\phi}^{i},
\notag
\\
{d}({\ket{0}} \otimes {\ket{0}} \otimes {h}) 
=& {\,} {\sum}_{i \in H} (-1)^{ih + i + h}({\ket{0}} \otimes {\ket{0}} \otimes [{u}_{i}, {h}]){\phi}^{i}
\notag
\\
&+ {\sum}_{i \in H} (-1)^{h}(- {\kappa} - {\frac{1}{2}}{\overline{\kappa}}_{\mathfrak{h}})({u}_{i}, {h}){\partial}{\phi}^{i},
\label{eq: sec cohom desc d on non-SUSY complex}
\\
{d}({\phi}^{r}) 
=& {\,} {\frac{1}{2}}{\,}{\sum}_{i \in H} (-1)^{i}{\phi}^{i}{\phi}^{[{u}_{i}, {u}^{r}]},
\notag
\\
{d}({D}{\phi}_{r}) 
=& {\,} {u}_{r} \otimes {\ket{0}} \otimes {\ket{0}} + {\ket{0}} \otimes {\big(}{\frac{1}{2{\kappa}}}{\,}{\sum}_{i \in {G} \setminus {H}} (-1)^{i}{\overline{u}}^{i}{\overline{[{u}_{i}, {u}_{r}]}}{\big)} \otimes {\ket{0}}
\notag
\\
&+ {\ket{0}} \otimes {\ket{0}} \otimes {u}_{r} + {\sum}_{i \in H} {\phi}^{i}{D}{\phi}_{[{u}_{i}, {u}_{r}]},
\notag
\end{align}
for ${g} \in {\mathfrak{g}}$, ${h}^{\perp} \in {\mathfrak{h}}^{\perp}$, ${h} \in {\mathfrak{h}}$, $r \in H$. Then, the following proposition is a direct consequence of \eqref{eq: sec cohom desc d_s on C3} and \eqref{eq: sec cohom desc d on non-SUSY complex}:

\vskip 1mm

\begin{proposition}\label{prop: sec cohom desc SUSY H(M) to non-SUSY H(M0)}\rm
The supersymmetric semi-infinite cohomology of ${\widehat{\mathfrak{h}}}$ with coefficients in ${M}$ is isomorphic to the semi-infinite cohomology of ${\widehat{\mathfrak{h}}}$ with coefficients in ${M}_{0}$, that is,
\begin{align*}
{H}^{{\frac{\infty}{2}} + {\bullet}}_{{\,\,} {\text{s}}}({\mathfrak{h}}, {M}) \cong {H}^{{\frac{\infty}{2}} + {\bullet}}({\mathfrak{h}}, {M}_{0}).
\end{align*}
\end{proposition}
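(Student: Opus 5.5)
The plan is to combine Proposition \ref{prop: sec cohom desc SUSY H(M) to H(C3)} with an explicit isomorphism of complexes between $({C}^{\partial}_{3}, {d}_{\text{s}})$ and $({C}^{{\frac{\infty}{2}} + {\bullet}}({\mathfrak{h}}, {M}_{0}), {d})$. By Proposition \ref{prop: sec cohom desc SUSY H(M) to H(C3)}, ${H}^{{\frac{\infty}{2}} + {\bullet}}_{{\,\,} {\text{s}}}({\mathfrak{h}}, {M}) \cong {H}^{\bullet}({C}^{\partial}_{3}, {d}_{\text{s}})$. It therefore suffices to show that the vertex algebra isomorphism of Lemma \ref{lem: sec cohom desc VA iso of C3},
\begin{align*}
{\Psi} : {C}^{\partial}_{3} \xrightarrow{\ \sim\ } {M}_{0} \otimes {\textstyle{\bigwedge}}^{{\frac{\infty}{2}} + {\bullet}}({\mathfrak{h}}) = {C}^{{\frac{\infty}{2}} + {\bullet}}({\mathfrak{h}}, {M}_{0}),
\end{align*}
intertwines ${d}_{\text{s}}$ with ${d}$. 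On generators, ${\Psi}$ sends ${\widetilde{g}} \otimes {\ket{0}} \mapsto {g}$, $X(h) \mapsto h$ in the third factor, and ${\overline{h}}^{\perp} \otimes {\ket{0}} \mapsto {\overline{h}}^{\perp}$, and it is the identity on ${\phi}^{i}$ and ${D}{\phi}_{i}$. It also preserves the cohomological grading.

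Both ${d}_{\text{s}}|_{{C}^{\partial}_{3}}$ and ${d}$ are zero modes of vectors, namely $DQ$ and ${Q}_{0}$. By the Borcherds identity, each is therefore an odd derivation of all $n$-th products and commutes with ${\partial}$. Consequently ${\Psi} \circ {d}_{\text{s}} \circ {\Psi}^{-1}$ and ${d}$ are two odd derivations of the vertex algebra ${C}^{{\frac{\infty}{2}} + {\bullet}}({\mathfrak{h}}, {M}_{0})$ commuting with ${\partial}$. Such derivations agree as soon as they agree on a generating set, so the problem reduces to checking them on the generators in ${\mathcal{B}}_{3}$.

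For this check we compare \eqref{eq: sec cohom desc d_s on C3} with \eqref{eq: sec cohom desc d on non-SUSY complex} line by line.
\begin{itemize}
\item[(i)] For ${\widetilde{g}} \otimes {\ket{0}}$ the structure terms $({\widetilde{[{u}_{i}, g]}} \otimes {\ket{0}}){\phi}^{i}$ map to $([{u}_{i}, g] \otimes {\ket{0}} \otimes {\ket{0}}){\phi}^{i}$, and the level terms $({\kappa} - {\frac{1}{2}}{\overline{\kappa}}_{\mathfrak{g}})({u}_{i}, g){\partial}{\phi}^{i}$ coincide.
\item[(ii)] The formula for ${\overline{h}}^{\perp} \otimes {\ket{0}}$ matches verbatim.
\item[(iii)] The formula for $X(h)$ matches the third tensor factor, including the level $-{\kappa} - {\frac{1}{2}}{\overline{\kappa}}_{\mathfrak{h}}$.
\item[(iv)] The formula for ${\phi}^{h}$ is identical.
\item[(v)] For ${D}{\phi}_{h}$, the images ${\Psi}({\widetilde{h}} \otimes {\ket{0}}) = h \otimes {\ket{0}} \otimes {\ket{0}}$ and ${\Psi}(X(h)) = {\ket{0}} \otimes {\ket{0}} \otimes h$ match. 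The remaining quadratic term $\frac{1}{2\kappa}\sum_{i \in G\setminus H}(-1)^{i}{\overline{u}}^{i}{\overline{[{u}_{i}, h]}} \otimes {\ket{0}}$ is exactly ${\nu}_{\mathfrak{h}}(h)$ viewed in ${\mathcal{F}}({\overline{\mathfrak{h}}}^{\perp})$, so it maps to ${\ket{0}} \otimes {\nu}_{\mathfrak{h}}(h) \otimes {\ket{0}}$. The ghost term $\sum {\phi}^{i}{D}{\phi}_{[{u}_{i}, h]}$ matches as well.
\end{itemize}

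The main point needing care is step (v) together with the ${\widehat{\mathfrak{h}}}$-module structure on ${M}_{0}$. The action must be the sum of three pieces: the restriction to ${\mathfrak{h}}$ on ${V}^{{\kappa} - {\frac{1}{2}}{\overline{\kappa}}_{\mathfrak{g}}}({\mathfrak{g}})$, the homomorphism ${\nu}_{\mathfrak{h}}$ on ${\mathcal{F}}({\overline{\mathfrak{h}}}^{\perp})$, and the identity on the last factor. We also need the total level to be $({\kappa} - \frac12{\overline{\kappa}}_{\mathfrak{g}}) + \frac12({\overline{\kappa}}_{\mathfrak{g}} - {\overline{\kappa}}_{\mathfrak{h}}) + (-{\kappa} - \frac12{\overline{\kappa}}_{\mathfrak{h}}) = -{\overline{\kappa}}_{\mathfrak{h}}$, which is the critical-type level required for $d^2 = 0$ in the non-supersymmetric semi-infinite complex. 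With this action, the modes of $u_r \otimes {\ket{0}} \otimes {\ket{0}} + {\ket{0}} \otimes {\nu}_{\mathfrak{h}}(u_r) \otimes {\ket{0}} + {\ket{0}} \otimes {\ket{0}} \otimes u_r$ appearing in ${d}({D}{\phi}_{r})$ are exactly the images of the terms in ${d}_{\text{s}}({D}{\phi}_{r})$. The sign conventions $(-1)^{ig + i + g}$ must be matched after passing through ${\Psi}$, but these are literally the same expressions.

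Hence ${\Psi}$ is an isomorphism of complexes, so ${H}^{\bullet}({C}^{\partial}_{3}, {d}_{\text{s}}) \cong {H}^{{\frac{\infty}{2}} + {\bullet}}({\mathfrak{h}}, {M}_{0})$ degreewise, and composing with Proposition \ref{prop: sec cohom desc SUSY H(M) to H(C3)} yields the claim.
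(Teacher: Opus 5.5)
Your proposal is correct and follows the paper's proof: you reduce via Proposition~\ref{prop: sec cohom desc SUSY H(M) to H(C3)} to $({C}^{\partial}_{3}, {d}_{\text{s}})$, then check that the vertex algebra isomorphism of Lemma~\ref{lem: sec cohom desc VA iso of C3} intertwines ${d}_{\text{s}}$ and ${d}$ by comparing \eqref{eq: sec cohom desc d_s on C3} with \eqref{eq: sec cohom desc d on non-SUSY complex} on generators. The differences are only presentational: the paper writes $\Psi$ in the opposite direction, and you spell out the derivation argument that justifies checking only on generators (both differentials are zero modes, hence odd derivations commuting with $\partial$), which the paper leaves implicit.
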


\begin{proof}
Since Proposition \ref{prop: sec cohom desc SUSY H(M) to H(C3)} gives ${H}^{{\frac{\infty}{2}} + {\bullet}}_{{\,\,} {\text{s}}}({\mathfrak{h}}, {M}) \cong {H}^{\bullet}({C}^{\partial}_{3}, {d}_{\text{s}})$, it suffices to show that the complexes ${C}^{\partial}_{3}$ and ${C}^{{\frac{\infty}{2}} + {\bullet}}({\mathfrak{h}}, {M}_{0})$ are isomorphic. Consider the following vertex algebra isomorphism (see Lemma \ref{lem: sec cohom desc VA iso of C3}):
\begin{align*}
{\Psi} : {C}^{{\frac{\infty}{2}} + {\bullet}}({\mathfrak{h}}, {M}_{0}) \rightarrow {C}^{\partial}_{3},
\end{align*}
given by
\begin{gather*}
{g} \otimes {\ket{0}} \otimes {\ket{0}} \mapsto {\widetilde{g}} \otimes {\ket{0}},
{\quad}
{\ket{0}} \otimes {\overline{h}}^{\perp} \otimes {\ket{0}} \mapsto {\overline{h}}^{\perp} \otimes {\ket{0}},
{\quad}
{\ket{0}} \otimes {\ket{0}} \otimes {h} \mapsto {X}(h),
\\
{D}{\phi}_{i} \mapsto {D}{\phi}_{i},
{\quad}
{\phi}^{i} \mapsto {\phi}^{i},
\end{gather*}
for ${g} \in {\mathfrak{g}}$, ${h}^{\perp} \in {\mathfrak{h}}^{\perp}$, ${h} \in {\mathfrak{h}}$, $i \in H$. Then one can check that the following diagram commutes
\begin{equation}
\begin{tikzcd}
{C}^{{\frac{\infty}{2}} + {\bullet}}({\mathfrak{h}}, {M}_{0})
\arrow[r, "{d}"]
\arrow[d, swap, "{\Psi}"]
&
{C}^{{\frac{\infty}{2}} + {\bullet}}({\mathfrak{h}}, {M}_{0})
\arrow[d, "{\Psi}"]
\\
{C}^{\partial}_{3}
\arrow[r, swap, "{d}_{\text{s}}"]
&
{C}^{\partial}_{3}
\end{tikzcd}
\label{diagram: sec cohom desc Psi}
\end{equation}
by comparing \eqref{eq: sec cohom desc d_s on C3} and \eqref{eq: sec cohom desc d on non-SUSY complex}.
\end{proof}

\vskip 1mm

Now consider the vertex algebra homomorphism ${\widetilde{J}}_{\bigwedge} : {V}^{{\overline{\kappa}}_{\mathfrak{h}}}({\mathfrak{h}}) \rightarrow {\textstyle{\bigwedge}}^{{\frac{\infty}{2}} + {\bullet}}({\mathfrak{h}})$ given by
\begin{align*}
{\widetilde{J}}_{\bigwedge}({u}_{i}) = {\sum}_{j \in H} {\phi}^{j}{D}{\phi}_{[{u}_{j}, {u}_{i}]},
\end{align*}
for $i \in H$. This homomorphism is obtained by multiplying the image of ${u}_{i}$ under the vertex algebra homomorphism given in Section 3.1 of \cite{CLNS25} by $(-1)^{i}$. Moreover, let us introduce the following notation:
\begin{align*}
{\widetilde{J}}_{u} 
:=& {\,} {u} \otimes {\ket{0}} \otimes {\ket{0}} + {\ket{0}} \otimes {\nu}_{\mathfrak{h}}(u) \otimes {\ket{0}} 
\\
&+ {\ket{0}} \otimes {\ket{0}} \otimes {u} + {\widetilde{J}}_{\bigwedge}(u) \in {C}^{{\frac{\infty}{2}} + {\bullet}}({\mathfrak{h}}, {M}_{0}),
\end{align*}
for ${u} \in {\mathfrak{h}}$. Then the complex of the relative semi-infinite cohomology of ${\widehat{\mathfrak{h}}}$ with coefficients in ${M}_{0}$ is given by:
\begin{align*}
{C}^{{\frac{\infty}{2}} + {\bullet}}_{{\,\,} {\text{rel}}}({\mathfrak{h}}, {M}_{0}) = {\Big(}{\bigcap}_{i \in H}{\,}{\text{ker}}{\,}({\widetilde{J}}_{{u}_{i}})_{(0)}{\Big)} \cap {\Big(}{\bigcap}_{i \in H}{\,}{\text{ker}}{\,}({D}{\phi}_{i})_{(0)}{\Big)}.
\end{align*}
Moreover, a direct computation shows that, for ${u} \in {\mathfrak{h}}$, the operator $({\widetilde{J}}_{u})_{(0)}$ acts on the generators as follows:
\begin{align}
&{g} \otimes {\ket{0}} \otimes {\ket{0}} \mapsto {[u, g]} \otimes {\ket{0}} \otimes {\ket{0}},
{\quad}
{\ket{0}} \otimes {\overline{h}}^{\perp} \otimes {\ket{0}} \mapsto {\ket{0}} \otimes (-1)^{u}{\overline{[{u}, {h}^{\perp}]}} \otimes {\ket{0}},
\notag
\\
& {\quad}
{\ket{0}} \otimes {\ket{0}} \otimes {h} \mapsto {\ket{0}} \otimes {\ket{0}} \otimes {[u, h]},
{\quad}
{\phi}^{h} \mapsto {\phi}^{[u, h]},
{\quad}
{D}{\phi}_{h} \mapsto (-1)^{u}{D}{\phi}_{[u, h]},
\label{eq: sec cohom desc tildeJ action on non-SUSY complex}
\end{align}
and the operator $({D}{\phi}_{u})_{(0)}$ acts on the generators as follows:
\begin{align}
&{g} \otimes {\ket{0}} \otimes {\ket{0}} \mapsto 0,
{\quad}
{\ket{0}} \otimes {\overline{h}}^{\perp} \otimes {\ket{0}} \mapsto 0,
{\quad}
{\ket{0}} \otimes {\ket{0}} \otimes {h} \mapsto 0,
\notag
\\
& {\qquad} {\qquad} {\qquad} {\qquad}
{\phi}^{h} \mapsto ({u}, {h}),
{\quad}
{D}{\phi}_{h} \mapsto 0,
\label{eq: sec cohom desc Dphi action on non-SUSY complex}
\end{align}
for ${h} \in {\mathfrak{h}}$, ${h}^{\perp} \in {\mathfrak{h}}^{\perp}$,
${g} \in {\mathfrak{g}}$. Then we have the following proposition, which is an analogue of Proposition \ref{prop: sec cohom desc SUSY H(M) to non-SUSY H(M0)} for relative semi-infinite cohomology.

\vskip 1mm

\begin{proposition}\label{prop: sec cohom desc rel SUSY H(M) to rel non-SUSY H(M0)}\rm
The relative supersymmetric semi-infinite cohomology of ${\widehat{\mathfrak{h}}}$ with coefficients in ${M}$ is isomorphic to the relative semi-infinite cohomology of ${\widehat{\mathfrak{h}}}$ with coefficients in ${M}_{0}$, that is,
\begin{align*}
{H}^{{\frac{\infty}{2}} + {\bullet}}_{{\,\,} {\text{s, rel}}}({\mathfrak{h}}, {M}) \cong {H}^{{\frac{\infty}{2}} + {\bullet}}_{{\,\,} {\text{rel}}}({\mathfrak{h}}, {M}_{0}).
\end{align*}
\end{proposition}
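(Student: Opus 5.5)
By Proposition \ref{prop: sec cohom desc rel SUSY H(M) to rel H(C3)}, it suffices to identify the relative subcomplex ${C}^{\partial}_{\text{3, rel}} \subset {C}^{\partial}_{3}$ with ${C}^{{\frac{\infty}{2}} + {\bullet}}_{{\,\,} {\text{rel}}}({\mathfrak{h}}, {M}_{0})$. The tool is the isomorphism of complexes ${\Psi} : {C}^{{\frac{\infty}{2}} + {\bullet}}({\mathfrak{h}}, {M}_{0}) \rightarrow {C}^{\partial}_{3}$ from the proof of Proposition \ref{prop: sec cohom desc SUSY H(M) to non-SUSY H(M0)}, which intertwines ${d}$ and ${d}_{\text{s}}$. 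I will show that ${\Psi}$ intertwines the operators cutting out the two relative subcomplexes:
\begin{align*}
{\Psi} \circ ({\widetilde{J}}_{u})_{(0)} = ({J}_{u})_{(0)} \circ {\Psi},
{\quad}
{\Psi} \circ ({D}{\phi}_{u})_{(0)} = ({D}{\phi}_{u})_{(0)} \circ {\Psi},
\end{align*}
for all $u \in {\mathfrak{h}}$. Then ${\Psi}$ maps ${C}^{{\frac{\infty}{2}} + {\bullet}}_{{\,\,} {\text{rel}}}({\mathfrak{h}}, {M}_{0})$ bijectively onto ${C}^{\partial}_{\text{3, rel}}$ as a cochain isomorphism, and the claim follows.

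The zero modes $({J}_{u})_{(0)}$, $({D}{\phi}_{u})_{(0)}$ on ${C}^{\partial}_{3}$, and $({\widetilde{J}}_{u})_{(0)}$, $({D}{\phi}_{u})_{(0)}$ on the non-supersymmetric complex, are all (super)derivations of every $n$-th product and commute with ${\partial}$. The same holds for their conjugates under the vertex algebra isomorphism ${\Psi}$. So it suffices to check the intertwining on the generators $g \otimes {\ket{0}} \otimes {\ket{0}}$, ${\ket{0}} \otimes {\overline{h}}^{\perp} \otimes {\ket{0}}$, ${\ket{0}} \otimes {\ket{0}} \otimes h$, ${\phi}^{i}$, ${D}{\phi}_{i}$. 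Here I compare \eqref{eq: sec cohom desc J-action} with \eqref{eq: sec cohom desc tildeJ action on non-SUSY complex}, and \eqref{eq: sec cohom desc Dphi-action} with \eqref{eq: sec cohom desc Dphi action on non-SUSY complex}. For example, $({J}_{u})_{(0)}$ sends ${\widetilde{g}} \otimes {\ket{0}}$ to ${\widetilde{[u,g]}} \otimes {\ket{0}}$, which is the ${\Psi}$-image of $[u,g] \otimes {\ket{0}} \otimes {\ket{0}}$. Likewise $X(h) \mapsto X([u,h])$, ${\overline{h}}^{\perp} \mapsto (-1)^{u}{\overline{[u,h^{\perp}]}}$, ${\phi}^{h} \mapsto {\phi}^{[u,h]}$ and ${D}{\phi}_{h} \mapsto (-1)^{u}{D}{\phi}_{[u,h]}$ match term by term. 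For ${D}{\phi}_{u}$, both sides kill every generator except ${\phi}^{h} \mapsto (u,h)$.

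The main obstacle is checking that the formulas in \eqref{eq: sec cohom desc J-action} really hold on the composite generators ${\widetilde{g}}$ and $X(h)$. For instance, $({J}_{u})_{(0)}X(h) = X([u,h])$ needs the quadratic corrections ${X}_{1}(h)$, ${X}_{2}(h)$ to transform covariantly. This in turn requires the ${\mathfrak{h}}$-invariance of the Casimir-type sums ${\sum}_{i}(-1)^{i}{\overline{u}}^{i}{\overline{[{u}_{i},h]}}$ and ${\sum}_{i}(-1)^{i+1}({D}{\phi}^{i}){\phi}_{[{u}_{i},h]}$, which follows from invariance of $({\,}{\cdot}{\,},{\,}{\cdot}{\,})$ via a dual-basis computation. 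A similar point arises because the ${\mathfrak{h}}^{\perp}$-part of $({J}_{u})_{(0)}$ on ${\overline{\mathfrak{g}}}$ must match ${\nu}_{\mathfrak{h}}(u)_{(0)}$ inside ${\widetilde{J}}_{u}$. I would check this using the bracket $[{\nu}_{\mathfrak{h}}(h) {\,} {}_{\lambda} {\,} {\overline{h}}^{\perp}] = (-1)^{h}{\overline{[h,h^{\perp}]}}$ stated before the proposition. Once these are verified, the signs should agree with the conventions already fixed for ${\widetilde{J}}_{\bigwedge}$, whose sign twist $(-1)^{i}$ was chosen precisely for this matching.
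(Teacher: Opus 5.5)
Your proposal is correct and follows the paper's proof. You reduce to ${C}^{\partial}_{\text{3, rel}}$ via Proposition \ref{prop: sec cohom desc rel SUSY H(M) to rel H(C3)}, then show that the cochain isomorphism ${\Psi}$ intertwines $({\widetilde{J}}_{u})_{(0)}$ with $({J}_{u})_{(0)}$ and $({D}{\phi}_{u})_{(0)}$ with itself, by comparing \eqref{eq: sec cohom desc J-action} and \eqref{eq: sec cohom desc Dphi-action} with \eqref{eq: sec cohom desc tildeJ action on non-SUSY complex} and \eqref{eq: sec cohom desc Dphi action on non-SUSY complex}. Your extra remarks make explicit what the paper leaves to direct computation: it suffices to check on generators because these zero modes are derivations commuting with ${\partial}$, and the formulas on ${\widetilde{g}}$ and $X(h)$ rest on ${\mathfrak{h}}$-invariance of the quadratic correction terms.
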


\begin{proof}
From \eqref{eq: sec cohom desc J-action}, \eqref{eq: sec cohom desc Dphi-action}, \eqref{eq: sec cohom desc tildeJ action on non-SUSY complex}, and \eqref{eq: sec cohom desc Dphi action on non-SUSY complex}, we have:
\begin{align*}
{\Psi} \circ ({\widetilde{J}}_{{u}_{i}})_{(0)} = ({J}_{{u}_{i}})_{(0)} \circ {\Psi},
{\quad}
{\Psi} \circ ({D}{\phi}_{i})_{(0)} = ({D}{\phi}_{i})_{(0)} \circ {\Psi},
\end{align*}
for all $i \in H$. Hence, the result follows from \eqref{diagram: sec cohom desc Psi} and Proposition \ref{prop: sec cohom desc rel SUSY H(M) to rel H(C3)}.
\end{proof}

\vskip 1mm

Then we obtain the following theorem, which is the main result of this section:

\vskip 1mm

\begin{theorem}\label{theorem: sec cohom desc cohom and supercoset}\rm
Let ${\mathfrak{h}}$ be a direct sum of a reductive Lie algebra and Lie superalgebras of type ${\mathfrak{osp}}_{1|2m}$. Suppose that the restriction of ${\kappa}_{\mathfrak{h}} - {\frac{1}{2}}{\overline{\kappa}}_{\mathfrak{h}}$ to each nonabelian simple summand of ${\mathfrak{h}}$ is an irrational level. Then we have
\begin{align*}
{H}^{{\frac{\infty}{2}} + {n}}_{{\,\,} {\text{s, rel}}}({\mathfrak{h}}, {M}) \cong {\delta}_{n,0}{\,}{\text{Com}}({V}^{\kappa}_{{\,\,} {\text{s}}}({\mathfrak{h}}), {V}^{\kappa}_{{\,\,} {\text{s}}}({\mathfrak{g}})),
\end{align*}
for all $n \in {\mathbb{Z}}$.
\end{theorem}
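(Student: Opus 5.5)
By Proposition \ref{prop: sec cohom desc rel SUSY H(M) to rel non-SUSY H(M0)}, it suffices to compute the relative (non-supersymmetric) semi-infinite cohomology
\begin{align*}
{H}^{{\frac{\infty}{2}} + {\bullet}}_{{\,\,} {\text{rel}}}({\mathfrak{h}}, {M}_{0}),
{\quad}
{M}_{0} = {V}^{{\kappa} - {\frac{1}{2}}{\overline{\kappa}}_{\mathfrak{g}}}({\mathfrak{g}}) \otimes {\mathcal{F}}({\overline{\mathfrak{h}}}^{\perp}) \otimes {V}^{- {\kappa} - {\frac{1}{2}}{\overline{\kappa}}_{\mathfrak{h}}}({\mathfrak{h}}),
\end{align*}
and to identify the answer with the super coset. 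The plan is to regard ${M}_{0} = {N} \otimes {V}^{-{\kappa} - {\frac{1}{2}}{\overline{\kappa}}_{\mathfrak{h}}}({\mathfrak{h}})$, where ${N} := {V}^{{\kappa} - {\frac{1}{2}}{\overline{\kappa}}_{\mathfrak{g}}}({\mathfrak{g}}) \otimes {\mathcal{F}}({\overline{\mathfrak{h}}}^{\perp})$. Here ${N}$ is an $\widehat{\mathfrak{h}}$-module of level ${\kappa} - {\frac{1}{2}}{\overline{\kappa}}_{\mathfrak{g}} + {\frac{1}{2}}{\overline{\kappa}}_{{\mathfrak{h}}^{\perp}} = {\kappa} - {\frac{1}{2}}{\overline{\kappa}}_{\mathfrak{h}}$, via $h \mapsto h + {\nu}_{\mathfrak{h}}(h)$. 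The levels of ${N}$ and of the second factor then sum to $-{\overline{\kappa}}_{\mathfrak{h}}$, as required. We then apply the standard computation of relative semi-infinite cohomology of the form ${H}^{\frac{\infty}{2}+\bullet}_{\text{rel}}(\mathfrak{h}, N \otimes V^{-k - \overline\kappa_{\mathfrak h}}(\mathfrak h))$. This is the argument of Theorem 1.12 of \cite{FGZ86}, Section 2.5 of \cite{CFL22}, Proposition 4 of \cite{CGNS22}, and Theorem 3.3 of \cite{CLNS25}. For generic level $k$ it gives $\delta_{n,0}\, N^{\widehat{\mathfrak h}_{+}} = \delta_{n,0}\,\mathrm{Com}(V^{k}(\mathfrak h), N)$, provided ${N}$ is a direct sum of Weyl modules (or admits a Weyl filtration) and the level is irrational on each simple nonabelian summand. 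We use the hypothesis on ${\mathfrak{h}}$ (reductive plus ${\mathfrak{osp}}_{1|2m}$ summands) precisely to guarantee complete reducibility of finite-dimensional modules, so that these known results extend summand by summand. The abelian part is handled by the Heisenberg/Fock decomposition.

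The key steps, in order, are as follows.
\begin{enumerate}
\item Decompose ${N}$ as a $V^{{\kappa}-{\frac{1}{2}}{\overline{\kappa}}_{\mathfrak{h}}}({\mathfrak{h}})$-module into generalized Weyl modules. This uses irrationality together with the semisimplicity of finite-dimensional ${\mathfrak{h}}$-modules, so that Kazhdan--Lusztig-type category theory is semisimple at generic level.
\item For each Weyl module ${\mathbb{V}}_{\lambda}$, use the vanishing ${H}^{\frac{\infty}{2}+n}_{\text{rel}}(\mathfrak{h}, {\mathbb{V}}^{k}_{\lambda} \otimes {V}^{-k-\overline{\kappa}_{\mathfrak{h}}}(\mathfrak{h})) = \delta_{n,0}\delta_{\lambda,0}\,{\mathbb{C}}$. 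This holds because the dual-level vacuum module is the contragredient Weyl module at generic level, so the cohomology computes a relative $\mathrm{Ext}$.
\item Conclude ${H}^{n}_{\text{rel}} \cong \delta_{n,0}\,\mathrm{Com}(V^{{\kappa}-{\frac{1}{2}}{\overline{\kappa}}_{\mathfrak{h}}}({\mathfrak{h}}), {N})$.
\end{enumerate}

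It remains to identify $\mathrm{Com}(V^{{\kappa}-{\frac{1}{2}}{\overline{\kappa}}_{\mathfrak{h}}}({\mathfrak{h}}), {N})$ with $\mathrm{Com}({V}^{\kappa}_{\text{s}}({\mathfrak{h}}), {V}^{\kappa}_{\text{s}}({\mathfrak{g}}))$. Inside ${V}^{\kappa}_{\text{s}}({\mathfrak{g}}) \cong {V}^{{\kappa}-{\frac{1}{2}}{\overline{\kappa}}_{\mathfrak{g}}}({\mathfrak{g}}) \otimes {\mathcal{F}}(\overline{\mathfrak{h}}) \otimes {\mathcal{F}}(\overline{\mathfrak{h}}^{\perp})$, the super-affine subalgebra ${V}^{\kappa}_{\text{s}}({\mathfrak{h}})$ is generated by $\overline{h}$ and $h$. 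An element commuting with all $\overline{h}$ must have no ${\mathcal{F}}(\overline{\mathfrak{h}})$-dependence, since free fermions have trivial commutant in themselves. Modulo the $\overline{h}$-terms, the field $h$ equals $\widetilde{h} + {\nu}_{\mathfrak{h}}(h)$. Hence the commutant equals the commutant of $\{\widetilde{h} + {\nu}_{\mathfrak{h}}(h)\}$ in ${N}$, which is the claimed coset.

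The main obstacle is step 1 for the superalgebra summands ${\mathfrak{osp}}_{1|2m}$ and the non-semisimple (reductive) part. One must justify that the generic-level vanishing theorem from the cited references applies to ${N}$, which is not obviously a sum of Weyl modules because of the abelian center and the fermions ${\mathcal{F}}(\overline{\mathfrak{h}}^{\perp})$. I would first try to grade ${N}$ by conformal weight and by the Cartan/central charges, and then show that each graded piece is a finite-dimensional ${\mathfrak{h}}$-module. Semisimplicity of these pieces then yields the Weyl filtration. At this point I would invoke Theorem 3.3 of \cite{CLNS25} directly.
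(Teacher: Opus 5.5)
Your proposal is correct and follows the paper's own route. You reduce via Proposition~\ref{prop: sec cohom desc rel SUSY H(M) to rel non-SUSY H(M0)} to the relative semi-infinite cohomology with coefficients in ${M}_{0} = N \otimes V^{-\kappa - \frac{1}{2}\overline{\kappa}_{\mathfrak{h}}}(\mathfrak{h})$, use that $N$ lies in the Kazhdan--Lusztig category at level $\kappa - \frac{1}{2}\overline{\kappa}_{\mathfrak{h}}$ to invoke Theorem 3.3 of \cite{CLNS25}, and identify the commutant via $V^{\kappa}_{\text{s}}(\mathfrak{g}) \cong V^{\kappa - \frac{1}{2}\overline{\kappa}_{\mathfrak{g}}}(\mathfrak{g}) \otimes \mathcal{F}(\overline{\mathfrak{h}}^{\perp}) \otimes \mathcal{F}(\overline{\mathfrak{h}})$; the paper asserts the Kazhdan--Lusztig membership and the commutant identification without proof, whereas you sketch justifications for both (the grading argument and the splitting $h = \widetilde{h} + \nu_{\mathfrak{h}}(h) + X_{1}(h)$), which are consistent with it.
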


\begin{proof}
We know that ${V}^{{\kappa} - {\frac{1}{2}}{\overline{\kappa}}_{\mathfrak{g}}}({\mathfrak{g}}) \otimes {\mathcal{F}}({\overline{\mathfrak{h}}}^{\perp})$ belongs to the Kazhdan--Lusztig category of ${\widehat{\mathfrak{h}}}$-modules with level ${\kappa} - {\frac{1}{2}}{\overline{\kappa}}_{\mathfrak{h}}$, and hence the coefficient module ${M}_{0}$ belongs to the Kazhdan--Lusztig category of ${\widehat{\mathfrak{h}}}$-modules with level $-{\overline{\kappa}}_{\mathfrak{h}}$. Therefore, we have:
\begin{align*}
{H}^{{\frac{\infty}{2}} + {n}}_{{\,\,} {\text{s, rel}}}({\mathfrak{h}}, {M}) 
&\cong {H}^{{\frac{\infty}{2}} + {n}}_{{\,\,} {\text{rel}}}({\mathfrak{h}}, {M}_{0})
\\
&\cong {\delta}_{n,0}{\,}{\text{Com}}({V}^{{\kappa} - {\frac{1}{2}}{\overline{\kappa}}_{\mathfrak{h}}}({\mathfrak{h}}), {V}^{{\kappa} - {\frac{1}{2}}{\overline{\kappa}}_{\mathfrak{g}}}({\mathfrak{g}}) \otimes {\mathcal{F}}({\overline{\mathfrak{h}}}^{\perp})),
\end{align*}
where the first isomorphism follows from Proposition \ref{prop: sec cohom desc rel SUSY H(M) to rel non-SUSY H(M0)}, and the second follows from Theorem 3.3 of \cite{CLNS25} (see also \cites{FGZ86, CFL22, CGNS22}). Since we now know that
\begin{align*}
{V}^{\kappa}_{{\,\,} {\text{s}}}({\mathfrak{h}}) \cong {V}^{{\kappa} - {\frac{1}{2}}{\overline{\kappa}}_{\mathfrak{h}}}({\mathfrak{h}}) \otimes {\mathcal{F}}({\overline{\mathfrak{h}}}),
{\quad}
{V}^{\kappa}_{{\,\,} {\text{s}}}({\mathfrak{g}}) \cong {V}^{{\kappa} - {\frac{1}{2}}{\overline{\kappa}}_{\mathfrak{g}}}({\mathfrak{g}}) \otimes {\mathcal{F}}({\overline{\mathfrak{h}}}^{\perp}) \otimes {\mathcal{F}}({\overline{\mathfrak{h}}}),
\end{align*}
we obtain the following isomorphism:
\begin{align*}
{\text{Com}}({V}^{{\kappa} - {\frac{1}{2}}{\overline{\kappa}}_{\mathfrak{h}}}({\mathfrak{h}}), {V}^{{\kappa} - {\frac{1}{2}}{\overline{\kappa}}_{\mathfrak{g}}}({\mathfrak{g}}) \otimes {\mathcal{F}}({\overline{\mathfrak{h}}}^{\perp})) 
\cong {\text{Com}}({V}^{\kappa}_{{\,\,} {\text{s}}}({\mathfrak{h}}), {V}^{\kappa}_{{\,\,} {\text{s}}}({\mathfrak{g}})),
\end{align*}
and hence the theorem follows.
\end{proof}

\vskip 1mm

Combining Theorem \ref{theorem: N=3/2 of saff}, Corollary \ref{corollary: sec N=2 of SUSY semi inf N=2 on rel SUSY cohom}, and Theorem \ref{theorem: sec cohom desc cohom and supercoset}, we obtain the following corollary.

\vskip 1mm

\begin{corollary}\label{corollary: sec cohom desc lift of N=2 in scoset to swzw}\rm
Under Assumption \ref{assum: second assumption}, suppose that ${\mathfrak{h}}$ is a direct sum of a reductive Lie algebra and Lie superalgebras of type ${\mathfrak{osp}}_{1|2m}$, and that the restriction of ${\kappa}_{\mathfrak{h}} - {\frac{1}{2}}{\overline{\kappa}}_{\mathfrak{h}}$ to each nonabelian simple summand of ${\mathfrak{h}}$ is an irrational level. Then the $N = {\frac{3}{2}}$ superconformal structure on the product of super-affine vertex algebras
\begin{align*}
{V}^{\kappa}_{{\,\,} {\text{s}}}({\mathfrak{g}}) \otimes {V}^{-{\kappa}}_{{\,\,} {\text{s}}}({\mathfrak{h}})
\end{align*}
induces an $N = 2$ superconformal structure on ${\text{Com}}({V}^{\kappa}_{{\,\,} {\text{s}}}({\mathfrak{h}}), {V}^{\kappa}_{{\,\,} {\text{s}}}({\mathfrak{g}}))$ with central charge ${\frac{3}{2}}{\,}{\text{sdim}}{\,}{\mathfrak{g}} - {\frac{3}{2}}{\,}{\text{sdim}}{\,}{\mathfrak{h}} + {3}{\,}{\text{str}}({\mathcal{D}}_{{\mathfrak{g}} \oplus {\mathfrak{h}}})$.
\end{corollary}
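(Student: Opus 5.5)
The plan is to assemble the corollary from three earlier results. Theorem \ref{theorem: N=3/2 of saff} supplies the $N = {\frac{3}{2}}$ structure. Corollary \ref{corollary: sec N=2 of SUSY semi inf N=2 on rel SUSY cohom} shows that it induces an $N = 2$ structure on the relative cohomology. Theorem \ref{theorem: sec cohom desc cohom and supercoset} identifies that cohomology with the super coset model.

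First, I would check that the two settings are compatible. Under Assumption \ref{assum: second assumption} with ${\kappa}_{\mathfrak{g}} = {\kappa}({\,}{\cdot}{\,},{\,}{\cdot}{\,})$, the module ${M} = {V}^{\kappa}_{{\,\,}{\text{s}}}({\mathfrak{g}}) \otimes {V}^{-\kappa}_{{\,\,}{\text{s}}}({\mathfrak{h}})$ of Section \ref{section: cohom description of super coset model} is exactly ${V}^{{\kappa}_{{\mathfrak{g}} \oplus {\mathfrak{h}}}}_{{\,\,}{\text{s}}}({\mathfrak{g}} \oplus {\mathfrak{h}})$ with the diagonal action \eqref{eq: sec N=2 of SUSY semi inf diag homo}. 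Moreover, the differential ${d}_{\text{s}}$ in the two sections is the same operator \eqref{eq: sec SUSY semi inf cohom mode expression of differential}. I also need ${\mathfrak{h}}$ to be unimodular. A reductive Lie algebra is unimodular, since its center contributes no trace and its semisimple part is perfect. The algebras ${\mathfrak{osp}}_{1|2m}$ are simple, so the supertrace of $\text{ad}$ vanishes on them as well. Hence the hypotheses of Corollary \ref{corollary: sec N=2 of SUSY semi inf N=2 on rel SUSY cohom} hold. So ${G}_{\mathcal{C}}, {U}_{\mathcal{C}}$ lie in the relative complex and define classes $[{G}_{\mathcal{C}}], [{U}_{\mathcal{C}}]$ satisfying the conditions of Lemma \ref{lem: equiv def of N=2 sconf} on ${H}^{{\frac{\infty}{2}}+\bullet}_{{\,\,}{\text{s, rel}}}({\mathfrak{h}}, {M})$, with the stated central charge.

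Second, I would transport this structure along the isomorphism of Theorem \ref{theorem: sec cohom desc cohom and supercoset}. Its hypotheses are exactly the remaining assumptions, and it gives ${H}^{{\frac{\infty}{2}}+0}_{{\,\,}{\text{s, rel}}}({\mathfrak{h}}, {M}) \cong {\text{Com}}({V}^{\kappa}_{{\,\,}{\text{s}}}({\mathfrak{h}}), {V}^{\kappa}_{{\,\,}{\text{s}}}({\mathfrak{g}}))$, with all other degrees vanishing. The conditions of Lemma \ref{lem: equiv def of N=2 sconf} are all $\lambda$-bracket identities, so they are preserved by vertex algebra isomorphisms. Hence the images of $[{G}_{\mathcal{C}}]$ and $[{U}_{\mathcal{C}}]$ form an $N = 2$ superconformal structure on the coset with the same central charge. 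Here ${G}_{\mathcal{C}}$ and ${U}_{\mathcal{C}}$ have ghost number $0$, which is consistent with the cohomology being concentrated in degree $0$.

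The main obstacle is that the isomorphism of Theorem \ref{theorem: sec cohom desc cohom and supercoset} must be an isomorphism of vertex algebras, not just of graded vector spaces. To justify this I would trace its construction. The relative complex is a vertex subalgebra and ${d}_{\text{s}}$ is a derivation, so the cohomology is a vertex algebra. The map ${\Psi}$ of Proposition \ref{prop: sec cohom desc SUSY H(M) to non-SUSY H(M0)} is a vertex algebra isomorphism. The inclusion ${I}$ is a vertex algebra map, since ${C}^{\partial}_{3}$ is a subalgebra. The identification of Theorem 3.3 of \cite{CLNS25} is induced by the inclusion of the coset ${M}_{0}$-invariants into the degree-$0$ complex, which is again a vertex algebra map. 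Composing these, the cohomology in degree $0$ is identified with the coset as a vertex algebra.
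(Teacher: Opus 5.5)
Your proposal is correct and takes the same route as the paper, which obtains the corollary simply by combining Theorem \ref{theorem: N=3/2 of saff}, Corollary \ref{corollary: sec N=2 of SUSY semi inf N=2 on rel SUSY cohom}, and Theorem \ref{theorem: sec cohom desc cohom and supercoset}. Your extra checks fill in points the paper leaves implicit. These are the unimodularity of $\mathfrak{h}$, which follows from the reductive/$\mathfrak{osp}_{1|2m}$ hypothesis, and the fact that the degree-$0$ identification with the coset is a vertex algebra isomorphism, obtained via $\Psi$, the inclusion $I$, and \cite{CLNS25}.
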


\newpage
\begin{appendices}
\section{Proof of Proposition \ref{prop: equiv def of N=3/2 sconf}}\label{section: Pf of Prop in Sec N=2 and N=3/2 sconf VA}
\setcounter{equation}{0}
\vskip 2mm

\begin{proof}
We begin by computing the OPEs of ${G}_{(0)}{U}$ and ${U}_{(0)}({G}_{(0)}{U})$ with the conformal vector. From the relation
\begin{align}
\label{eqa: G and U}
[{G} {\,} {}_{\lambda} {\,} {U}] = {G}_{(0)}{U},
\end{align}
and the fact that ${G}$ and ${U}$ are primary vectors of conformal weights ${\frac{3}{2}}$ and $1$, respectively, we obtain
\begin{align}
[{T} {\,} {}_{\lambda} {\,} {G}_{(0)}{U}] 
&= [{T} {\,} {}_{\lambda} {\,} [{G} {\,} {}_{\gamma} {\,} {U}]] 
= [[{T} {\,} {}_{\lambda} {\,} {G}] {\,} {}_{{\lambda} + {\gamma}} {\,} {U}] + [{G} {\,} {}_{\gamma} {\,} [{T} {\,} {}_{\lambda} {\,} {U}]]
\notag
\\
&= [({\partial} + {\frac{3}{2}}{\lambda}){G} {\,} {}_{{\lambda} + {\gamma}} {\,} {U}] + [{G} {\,} {}_{\gamma} {\,} ({\partial} + {\lambda}){U}]
\label{eqa: T and GU}
\\
&= ({\frac{1}{2}}{\lambda} - {\gamma})[{G} {\,} {}_{{\lambda} + {\gamma}} {\,} {U}] + ({\partial} + {\lambda} + {\gamma})[{G} {\,} {}_{\gamma} {\,} {U}] 
= ({\partial} + {\frac{3}{2}}{\lambda}){G}_{(0)}{U}.
\notag
\end{align}
Using the equality \eqref{eqa: T and GU} together with the following relation
\begin{align}
\label{eqa: U and GU}
[{U} {\,} {}_{\lambda} {\,} {G}_{(0)}{U}] = {U}_{(0)}({G}_{(0)}{U}),
\end{align}
we further obtain
\begin{align*}
[{T} {\,} {}_{\lambda} {\,} {U}_{(0)}({G}_{(0)}{U})] 
&= [{T} {\,} {}_{\lambda} {\,} [{U} {\,} {}_{\gamma} {\,} {G}_{(0)}{U}]] 
= [[{T} {\,} {}_{\lambda} {\,} {U}] {\,} {}_{{\lambda} + {\gamma}} {\,} {G}_{(0)}{U}] + [{U} {\,} {}_{\gamma} {\,} [{T} {\,} {}_{\lambda} {\,} {G}_{(0)}{U}]]
\\
&= [({\partial} + {\lambda}){U} {\,} {}_{{\lambda} + {\gamma}} {\,} {G}_{(0)}{U}] + [{U} {\,} {}_{\gamma} {\,} ({\partial} + {\frac{3}{2}}{\lambda}){G}_{(0)}{U}]
\\
&= -{\gamma}[{U} {\,} {}_{{\lambda} + {\gamma}} {\,} {G}_{(0)}{U}] + ({\partial} + {\frac{3}{2}}{\lambda} +{\gamma})[{U} {\,} {}_{\gamma} {\,} {G}_{(0)}{U}]
\\
&= ({\partial} + {\frac{3}{2}}{\lambda}){U}_{(0)}({G}_{(0)}{U}).
\end{align*}
Hence, ${G}^{\pm}$ and ${\Phi}$ are all odd primary vectors of conformal weight ${\frac{3}{2}}$, defined by
\begin{align*}
{\Phi} := - {\frac{1}{8}}({G} + {U}_{(0)}({G}_{(0)}{U})),
{\quad}
{G}^{+} := {\frac{1}{2}}({G} - {G}_{(0)}{U}) + {\Phi},
{\quad}
{G}^{-} := {\frac{1}{2}}({G} + {G}_{(0)}{U}) - {2}{\Phi}.
\end{align*}

\vskip 1mm

Next, we compute the OPEs of ${G}^{\pm}$ with ${U}$. Using equations \eqref{eqa: G and U}, \eqref{eqa: U and GU}, together with
\begin{align}
\label{eqa: U and Phi}
[{U} {\,} {}_{\lambda} {\,} {\Phi}] = -3{\Phi},
\end{align}
we have:
\begin{align*}
[{U} {\,} {}_{\lambda} {\,} {G}^{+}] 
&= {\frac{1}{2}}[{U} {\,} {}_{\lambda} {\,} {G}] - {\frac{1}{2}}[{U} {\,} {}_{\lambda} {\,} {G}_{(0)}{U}] + [{U} {\,} {}_{\lambda} {\,} {\Phi}]
\\
&= - {\frac{1}{2}}{G}_{(0)}{U} - {\frac{1}{2}}{U}_{(0)}({G}_{(0)}{U}) - {3}{\Phi}
\\
&= - {\frac{1}{2}}{G}_{(0)}{U} - {\frac{1}{2}}(- {8}{\Phi} - {G})- {3}{\Phi} 
= {G}^{+},
\end{align*}
since ${U}_{(0)}({G}_{(0)}{U}) = - {8}{\Phi} - {G}$ by definition of ${\Phi}$. Similarly, using \eqref{eqa: G and U}, \eqref{eqa: U and GU}, and \eqref{eqa: U and Phi}, we also obtain
\begin{align}
\label{eqa: U and G-}
[{U} {\,} {}_{\lambda} {\,} {G}^{-}] = - {G}^{-}.
\end{align}

\vskip 1mm

On the other hand, using the Jacobi identity of vertex algebras together with \eqref{eqa: G and U}, we have
\begin{align}
\label{eqa: G and GU}
[{G} {\,} {}_{\lambda} {\,} {G}_{(0)}{U}] = ({\partial} + {2}{\lambda}){U}.
\end{align}
Hence, using \eqref{eqa: G and U}, \eqref{eqa: U and GU}, \eqref{eqa: G and GU}, and the following OPEs:
\begin{align*}
[{U} {\,} {}_{\lambda} {\,} {U}] = {\frac{c}{3}}{\lambda},
\end{align*}
\begin{align}
\label{eqa: G and Phi}
[{G} {\,} {}_{\lambda} {\,} {\Phi}] = {G}_{(0)}{\Phi},
\end{align}
we obtain
\begin{align}
[{G}_{(0)}{U} {\,} {}_{\lambda} {\,} {G}_{(0)}{U}] 
&= [{G}_{(0)}{U} {\,} {}_{\lambda} {\,} [{G} {\,} {}_{\gamma} {\,} {U}]] 
= [[{G}_{(0)}{U} {\,} {}_{\lambda} {\,} {G}] {\,} {}_{{\lambda} + {\gamma}} {\,} {U}] - [{G} {\,} {}_{\gamma} {\,} [{G}_{(0)}{U} {\,} {}_{\lambda} {\,} {U}]]
\notag
\\
&= ({\gamma} - {\lambda})[{U} {\,} {}_{{\lambda} + {\gamma}} {\,} {U}] - [{G} {\,} {}_{\gamma} {\,} {G}] - {8}[{G} {\,} {}_{\gamma} {\,} {\Phi}]
\label{eqa: GU and GU}
\\
&= {\frac{c}{3}}({\gamma}^{2} - {\lambda}^{2}) - ({2}{T} + {\frac{c}{3}}{\gamma}^{2}) - {8}{G}_{(0)}{\Phi} 
= - ({2}{T} + {\frac{c}{3}}{\lambda}^{2}) - {8}{F},
\notag
\end{align}
where ${F} := {G}_{(0)}{\Phi}$.

\vskip 1mm

Now, using \eqref{eqa: G and GU}, \eqref{eqa: G and Phi}, \eqref{eqa: GU and GU}, together with the following OPEs:
\begin{align*}
[{G}_{(0)}{U} {\,} {}_{\lambda} {\,} {\Phi}] = - {F},
{\quad}
[{\Phi} {\,} {}_{\lambda} {\,} {\Phi}] = 0,
\end{align*}
we have:
\begin{align*}
[{G}^{+} {\,} {}_{\lambda} {\,} {G}^{+}] 
= [{\frac{1}{2}}{G} - {\frac{1}{2}}{G}_{(0)}{U} + {\Phi} {\,} {}_{\lambda} {\,} {\frac{1}{2}}{G} - {\frac{1}{2}}{G}_{(0)}{U} + {\Phi}] = 0,
\end{align*}
by direct expansion. Similarly, we also have the following OPEs
\begin{align}
[{G}^{+} {\,} {}_{\lambda} {\,} {G}^{-}] &= {T} + {\frac{1}{2}}{\partial}{U} + {\lambda}{U} + {\frac{c}{6}}{\lambda}^{2},
\label{eqa: G+ and G-}
\\
[{G}^{+} {\,} {}_{\lambda} {\,} {\Phi}] &= F,
\label{eqa: G+ and Phi}
\\
[{G}^{-} {\,} {}_{\lambda} {\,} {G}^{-}] &= - {2}{F},
\label{eqa: G- and G-}
\\
[{G}^{-} {\,} {}_{\lambda} {\,} {\Phi}] &= 0.
\label{eqa: G- and Phi}
\end{align}

\vskip 1mm

To complete the proof, we will derive the remaining OPEs of the $N = {\frac{3}{2}}$ superconformal algebra from the OPEs obtained so far. First, using \eqref{eqa: G- and G-}, the OPE of ${F}$ with the conformal vector ${T}$ is given by
\begin{align*}
[{T} {\,} {}_{\lambda} {\,} {F}] 
&= - {\frac{1}{2}}[{T} {\,} {}_{\lambda} {\,} [{G}^{-} {\,} {}_{\gamma} {\,} {G}^{-}]] 
= - {\frac{1}{2}}[[{T} {\,} {}_{\lambda} {\,} {G}^{-}] {\,} {}_{{\lambda} + {\gamma}} {\,} {G}^{-}] - {\frac{1}{2}}[{G}^{-} {\,} {}_{\gamma} {\,} [{T} {\,} {}_{\lambda} {\,} {G}^{-}]]
\\
&= - {\frac{1}{2}}[({\partial} + {\frac{3}{2}}{\lambda}){G}^{-} {\,} {}_{{\lambda} + {\gamma}} {\,} {G}^{-}] - {\frac{1}{2}}[{G}^{-} {\,} {}_{\gamma} {\,} ({\partial} + {\frac{3}{2}}{\lambda}){G}^{-}] 
= ({\partial} + {2}{\lambda}){F}.
\end{align*}
The OPE of ${F}$ with ${G}^{+}$ is given by
\begin{align}
[{G}^{+} {\,} {}_{\lambda} {\,} {F}] 
=& {\,} - {\frac{1}{2}}[{G}^{+} {\,} {}_{\lambda} {\,} [{G}^{-} {\,} {}_{\gamma} {\,} {G}^{-}]] 
= - {\frac{1}{2}}[[{G}^{+} {\,} {}_{\lambda} {\,} {G}^{-}] {\,} {}_{{\lambda} + {\gamma}} {\,} {G}^{-}] + {\frac{1}{2}}[{G}^{-} {\,} {}_{\gamma} {\,} [{G}^{+} {\,} {}_{\lambda} {\,} {G}^{-}]]
\notag
\\
=& {\,} - {\frac{1}{2}}[{T} {\,} {}_{{\lambda} + {\gamma}} {\,} {G}^{-}] - {\frac{1}{2}}{\big(}{\frac{1}{2}}{\lambda} - {\frac{1}{2}}{\gamma}{\big)}[{U} {\,} {}_{{\lambda} + {\gamma}} {\,} {G}^{-}]
\label{eqa: G+ and F}
\\
&+ {\frac{1}{2}}[{G}^{-} {\,} {}_{\gamma} {\,} {T}] + {\frac{1}{2}}{\big(}{\frac{1}{2}}{\partial} + {\lambda} + {\frac{1}{2}}{\gamma}{\big)}[{G}^{-} {\,} {}_{\gamma} {\,} {U}] = 0,
\notag
\end{align}
which follows from \eqref{eqa: U and G-}, \eqref{eqa: G+ and G-}, and \eqref{eqa: G- and G-}. To compute the OPE of ${F}$ with ${G}^{-}$, we can use \eqref{eqa: U and Phi}, \eqref{eqa: G+ and G-}, \eqref{eqa: G+ and Phi}, and \eqref{eqa: G- and Phi}, from which we obtain:
\begin{align*}
[{G}^{-} {\,} {}_{\lambda} {\,} {F}] 
&= [{G}^{-} {\,} {}_{\lambda} {\,} [{G}^{+} {\,} {}_{\gamma} {\,} {\Phi}]] 
= [[{G}^{-} {\,} {}_{\lambda} {\,} {G}^{+}] {\,} {}_{{\lambda} + {\gamma}} {\,} {\Phi}] - [{G}^{+} {\,} {}_{\gamma} {\,} [{G}^{-} {\,} {}_{\lambda} {\,} {\Phi}]]
\\
&= [{T} - {\frac{1}{2}}{\partial}{U} - {\lambda}{U} {\,} {}_{{\lambda} + {\gamma}} {\,} {\Phi}] 
= [{T} {\,} {}_{{\lambda} + {\gamma}} {\,} {\Phi}] - {\frac{1}{2}}({\lambda} - {\gamma})[{U} {\,} {}_{{\lambda} + {\gamma}} {\,} {\Phi}] 
= ({\partial} + {3}{\lambda}){\Phi}.
\end{align*}

\vskip 1mm

We now compute the OPEs of ${F}$ with ${U}$ and ${\Phi}$. Using \eqref{eqa: U and G-}, \eqref{eqa: G- and G-}, and \eqref{eqa: G- and G-}, \eqref{eqa: G- and Phi}, respectively, we obtain:
\begin{align*}
[{U} {\,} {}_{\lambda} {\,} {F}] 
&= -{\frac{1}{2}}[{U} {\,} {}_{\lambda} {\,} [{G}^{-} {\,} {}_{\gamma} {\,} {G}^{-}]] 
= -{\frac{1}{2}}[[{U} {\,} {}_{\lambda} {\,} {G}^{-}] {\,} {}_{{\lambda} + {\gamma}} {\,} {G}^{-}] - {\frac{1}{2}}[{G}^{-} {\,} {}_{\gamma} {\,} [{U} {\,} {}_{\lambda} {\,} {G}^{-}]]
\\
&= {\frac{1}{2}}[{G}^{-} {\,} {}_{{\lambda} + {\gamma}} {\,} {G}^{-}] + {\frac{1}{2}}[{G}^{-} {\,} {}_{\gamma} {\,} {G}^{-}] = - {2}{F},
\end{align*}
and
\begin{align}
\begin{split}
[{\Phi} {\,} {}_{\lambda} {\,} {F}] 
&= - {\frac{1}{2}}[{\Phi} {\,} {}_{\lambda} {\,} [{G}^{-} {\,} {}_{\gamma} {\,} {G}^{-}]] 
\\
&= -{\frac{1}{2}}[[{\Phi} {\,} {}_{\lambda} {\,} {G}^{-}] {\,} {}_{{\lambda} + {\gamma}} {\,} {G}^{-}] + {\frac{1}{2}}[{G}^{-} {\,} {}_{\gamma} {\,} [{\Phi} {\,} {}_{\lambda} {\,} {G}^{-}]] = 0.
\end{split}
\label{eqa: Phi and F}
\end{align}
Finally, the OPE of ${F}$ with itself is given by
\begin{align*}
[{F} {\,} {}_{\lambda} {\,} {F}] 
= [{F} {\,} {}_{\lambda} {\,} [{G}^{+} {\,} {}_{\gamma} {\,} {\Phi}]] 
= [[{F} {\,} {}_{\lambda} {\,} {G}^{+}] {\,} {}_{{\lambda} + {\gamma}} {\,} {\Phi}] + [{G}^{+} {\,} {}_{\gamma} {\,} [{F} {\,} {}_{\lambda} {\,} {\Phi}]] = 0,
\end{align*}
which follows from \eqref{eqa: G+ and Phi}, \eqref{eqa: G+ and F}, and \eqref{eqa: Phi and F}. Therefore, we have completed the proof of Proposition \ref{prop: equiv def of N=3/2 sconf}.
\end{proof}
\end{appendices}


\end{document}